\newtheorem{theorem}{Theorem}
\newtheorem{prop}{Proposition}
\newtheorem{corollary}{Corollary}
\newtheorem{lemma}[theorem]{Lemma}
\newtheorem{definition}[theorem]{Definition}
\newcommand{\ket}[1]{\left| #1 \right\rangle}
\newcommand{\bra}[1]{\left\langle #1 \right|}
\newcommand{\beq}{\begin{equation}}
	\newcommand{\eeq}{\end{equation}}
\newcommand{\bea}{\begin{align}}
	\newcommand{\eea}{\end{align}}
\definecolor{googleblue}{RGB}{34, 0, 204}
\definecolor{panblue}{RGB}{0,24,150}
\definecolor{carmine}{RGB}{150, 0, 24}
\newcommand*{\vertbar}{\rule[-1ex]{0.5pt}{2.5ex}}
\newcommand{\ns}{\mkern-3mu} 
\newcommand{\backtildeacc}{%
  \raisebox{-1.35ex}[.3ex][0pt]{\reflectbox{\normalfont\char"7E}}%
}
\newcommand{\btilde}[1]{\accentset{\scalebox{2}{\backtildeacc}}{#1}}
\newcommand{\changelinkcolor}[1]{\hypersetup{linkcolor=#1}}   
\newcommand{\btp}{\begin{tikzpicture}}
\newcommand{\etp}{\end{tikzpicture}}
\newcommand{\tikzstate}[2]{\begin{pgfonlayer}{nodelayer}
		\node [style=Wsquareadj] (1) at (-0, -0.4) {$#1$};
		\node [style=none] (2) at (-0, 0.4) {};
		\node [style=right label] (3) at (0, 0.3) {$#2$};
	\end{pgfonlayer}
	\begin{pgfonlayer}{edgelayer}
		\draw [qWire] (2.center) to (1.center);
\end{pgfonlayer}}
\newcommand{\tikzeffect}[2]{\begin{pgfonlayer}{nodelayer}
		\node [style=none] (1) at (-0, -0.4) {};
		\node [style=Wsquare] (2) at (-0, 0.4) {$#1$};
		\node [style=right label] (3) at (0, -0.3) {$#2$};
	\end{pgfonlayer}
	\begin{pgfonlayer}{edgelayer}
		\draw [qWire] (2.center) to (1.center);
\end{pgfonlayer}}
\begin{document}
\title{Shadows and subsystems of generalized probabilistic theories:  when tomographic incompleteness is not a loophole for contextuality proofs}
    \author{David Schmid}
 \email{davidschmid10@gmail.com}
 \affiliation{International Centre for Theory of Quantum Technologies, University of Gda{\'n}sk, 80-309 Gda\'nsk, Poland}
 \affiliation{Perimeter Institute for Theoretical Physics,  N2L 2Y5 Waterloo, Canada}
 \author{John H.~Selby}
	\email{john.h.selby@gmail.com}
 \affiliation{International Centre for Theory of Quantum Technologies, University of Gda{\'n}sk, 80-309 Gda\'nsk, Poland}
	\author{Vinicius P.~Rossi}
 \affiliation{International Centre for Theory of Quantum Technologies, University of Gda{\'n}sk, 80-309 Gda\'nsk, Poland}
        \author{Roberto D.~Baldijão}
\affiliation{International Centre for Theory of Quantum Technologies, University of Gda{\'n}sk, 80-309 Gda\'nsk, Poland}
        \affiliation{Perimeter Institute for Theoretical Physics,  N2L 2Y5 Waterloo, Canada}
	\author{Ana Belén Sainz}
\affiliation{International Centre for Theory of Quantum Technologies, University of Gda{\'n}sk, 80-309 Gda\'nsk, Poland}
\affiliation{Basic Research Community for Physics e.V., Germany}

	\twocolumn[
\begin{@twocolumnfalse}
\begin{abstract}
It is commonly believed that failures of tomographic completeness undermine assessments of nonclassicality in noncontextuality experiments. In this work, we study how such failures can indeed lead to mistaken assessments of nonclassicality. We then show that proofs of the failure of noncontextuality are robust to a very broad class of failures of tomographic completeness, including the kinds of failures that are likely to occur in real experiments. We do so by showing that such proofs actually rely on a much weaker assumption that we term {\em relative tomographic completeness}: namely, that one's experimental procedures are tomographic {\em for each other}. Thus, the failure of noncontextuality can be established even with coarse-grained, effective, emergent, or virtual degrees of freedom. This also implies that the existence of a deeper theory of nature (beyond that being probed in one's experiment) does not in and of itself pose any challenge to proofs of nonclassicality.
To prove these results, we first introduce a number of useful new concepts within the framework of generalized probabilistic theories (GPTs). Most notably, we introduce the notion of a {\em GPT subsystem}, generalizing a range of preexisting notions of subsystems (including those arising from tensor products, direct sums, decoherence processes, virtual encodings, and more). We also introduce the notion of a {\em shadow} of a GPT fragment, which captures the information lost when one's states and effects are unknowingly not tomographic for one another. 
\end{abstract}
  \end{@twocolumnfalse}
]

	\maketitle
		\setcounter{tocdepth}{1}
\clearpage
\thispagestyle{empty}
\tableofcontents

\newpage

	\section{Introduction}

A gold standard universally applicable method for demonstrating that a given theory, experiment, or phenomenon resists classical explanation is to prove that it cannot be reproduced in any generalized noncontextual ontological model~\cite{gencontext}. This approach has been motivated extensively in the literature~\cite{gencontext,Leibniz,schmid2021guiding,negativity,Schmid2024structuretheorem,SchmidGPT}, most notably by a version of Leibniz's principle~\cite{Leibniz} and by its equivalence to the existence of a positive quasiprobabilistic representation~\cite{negativity,Schmid2024structuretheorem} and to the natural notion of classical-explainability in the framework of generalized probabilistic theories~\cite{SchmidGPT,Schmid2024structuretheorem}. Studies characterizing exactly which phenomena constitute proofs of nonclassicality---of the failure of generalized noncontextuality---
have been carried out in the contexts of computation~\cite{Schmid2022Stabilizer,shahandeh2021quantum}, state discrimination~\cite{schmid2018contextual,flatt2021contextual,mukherjee2021discriminating,Shin2021}, interference~\cite{Catani2023whyinterference,catani2022reply,catani2023aspects}, compatibility~\cite{selby2023incompatibility,selby2023accessible,PhysRevResearch.2.013011}, uncertainty relations~\cite{catani2022nonclassical}, metrology~\cite{contextmetrology}, thermodynamics~\cite{contextmetrology,comar2024contextuality}, weak values~\cite{AWV, KLP19}, coherence~\cite{rossi2023contextuality,Wagner2024coherence}, quantum Darwinism~\cite{baldijao2021noncontextuality}, information processing and communication~\cite{POM,RAC,RAC2,Saha_2019,Yadavalli2020,PhysRevLett.119.220402,fonseca2024robustness}, cloning~\cite{cloningcontext}, broadcasting~\cite{jokinen2024nobroadcasting}, and Bell~\cite{Wright2023invertible,schmid2020unscrambling} and Kochen-Specker scenarios~\cite{operationalks,kunjwal2018from,Kunjwal16,Kunjwal19,Kunjwal20,specker,Gonda2018almostquantum}. In each of these contexts, contextuality constitutes a resource for generating novel kinds of correlations, which in many cases translates into advantages for information processing and other tasks. 
Henceforth, we will refer to generalized noncontextuality simply as noncontextuality, and throughout this work we will use the terms \emph{contextuality} and \emph{non-classicality} interchangeably.

It is often claimed that assessments of noncontextuality rest on the assumption that one's laboratory procedures are {\em tomographically complete}~\cite{gencontext,mazurek2021experimentally,schmid2024addressing}: that is, that they allow one to fully characterize the properties of the physical system in question (the causal mediary between one's preparations and measurements). However, we will show here that assessments of noncontextuality  rest on a weaker assumption, namely that one's preparations and measurements are sufficient to fully characterize  {\em each other}, regardless of whether they are tomographic for all properties of the true physical system. We term this the {\em relative tomographic completeness} assumption.

Consequently, we prove that assessments of nonclassicality will be valid even if one's laboratory operations are far from tomographically complete for any given systems in one's experiment or theory, provided that the preparations and measurements used in one's proof of nonclassicality probe {\em the same degree of freedom}. 

To do this, we must first formalize what we mean by `degree of freedom'. 
Inspired by Refs.~\cite{SchmidGPT,muller2023testing}, we  do so within a framework for possible theories known as the framework of generalized probabilistic theories (GPTs)~\cite{Hardy,GPT_Barrett,janotta2013generalized}. 
This framework captures quantum theory, classical theory, and a vast array of other theories as special cases, and allows for a unified study of all of these. While the notion of a system within a GPT is completely standard, we will here introduce the notion of a GPT {\em subsystem}, and then will demonstrate how this captures the relevant notion of a `degree of freedom' for the study of noncontextuality. Our starting point will be a formalisation of GPTs which makes their representation more flexible, and that subsumes standard representations of GPTs (see Sec.~\ref{sec:subsuming}).

Our notion of GPT subsystems is extremely general. It captures tensor product and direct sum subsystems in the usual senses defined in quantum theory (and in arbitrary GPTs), but it also captures such wide-ranging examples as virtual subsystems, stabilizer subsystems, rebit subsystems, decohered subsystems, and so on. Indeed, according to our definition, any fragment of a GPT which itself satisfies all the mathematical properties of a GPT constitutes a GPT subsystem of that GPT. This notion aims to capture what it means for a GPT system to ``live inside'' another, or to be ``explainable by'' another---in both cases, with all its GPT structure intact: if $A$ is a GPT subsystem of $B$, then $B$ can explain $A$ through the notion of GPT explainability introduced by Ref.~\cite{garner2020characterization}.
The main concepts that we introduce pertaining to GPT systems and subsystems, and how they relate to each other, are summarised in Fig.~\ref{fig:sumfig} in Sec.~\ref{subse:sum}.

We then show that the only assumption needed for accurate assessments of nonclassicality is that the states and effects in one's experiment are tomographic for a GPT subsystem of the true GPT. This is simply another way of stating the relative tomographic completeness assumption.

If the states and effects do not satisfy this property, and so are not relatively tomographic, then one can indeed be led to incorrectly conclude that an experiment exhibits nonclassicality. Indeed, this is the sole nontrivial loophole in a state-of-the-art proof of the failure of generalized noncontextuality~\cite{mazurek2016experimental,mazurek2021experimentally}. (Note in particular that there are no loopholes due to detector inefficiencies or superdeterminism~\cite{selby2023incompatibility}.) Yet this assumption, and what happens when it fails, has only been studied explicitly in a single prior work, Ref.~\cite{PuseydelRio} (which we comment on further in the conclusions). 

In this work, we provide rigorous tools for studying this potential loophole, once again by introducing relevant concepts within the framework of generalized probabilistic theories.

Consider an experiment that accesses only some fragment of the preparations and measurements on one's physical system. If a given pair of distinct states (measurement effects) in one's experiment cannot be tomographically characterized by the effects (states) in the experiment, then those states (effects) will appear to be exactly identical within the context of the given experiment. This will then lead the experimenter to mistakenly treat distinct states (effects) as identical.
Consequently, the GPT description one will give to the experiment will be incorrect---it will in fact be given by a particular informationally lossy function of the true GPT fragment describing the scenario. We will call this informationally lossy map the {\em shadow map}. 

If (and only if) the assumption of relative tomography is not satisfied, the shadow map acts nontrivially, distorting the true GPT description of one's experiment. This distortion {\em can}, but does not necessarily, cause one to reach incorrect conclusions about classicality of the experiment. In the final portion of this work, we begin the project of characterizing under what conditions the failure of relative tomography does indeed lead to mistaken assessments of nonclassicality.

\section{Generalized probabilistic theory preliminaries}\label{se:gptp}
	
\subsection{State spaces, effect spaces, systems, and embeddings}

\emph{Generalised probabilistic theories}~\cite{Hardy,janotta2013generalized,GPT_Barrett} (GPTs) provide a framework for studying operational theories at the level of the operational statistics they can generate. When defined in full generality, GPTs have a rich compositional structure, but we will focus here on preparations and measurements on a single system within the theory. Our presentation is not entirely standard, but it has some advantages over prior presentations and introduces some useful new concepts. Like most presentations, we will here consider only finite-dimensional systems. 

To begin, we will consider the structure of states in GPTs.

\begin{definition}[GPT state spaces]\label{def:statespace} 
The states $s$ for a given GPT system $G$ are represented by vectors in a (finite dimensional) real vector space $U_G$. Among these state-vectors, there is
a privileged compact convex subset $\bar{\Omega}_G$ of normalised states, where it must be the case that 
$\mathbf{0}\not\in\mathsf{AffSpan}[\bar{\Omega}_G]$, where $\mathbf{0}$ is the null state (the zero vector). 
The \emph{state space} $\Omega_G$ of a GPT  also includes subnormalized states, and so is the convex hull of the normalised states with the null state, 
\beq\label{eq:normalisable} 
\Omega_G:=\mathsf{Conv}[\mathbf{0}\cup\bar{\Omega}_G].
\eeq
\end{definition}

Note that (unlike in many approaches to GPTs) we will not assume that $\mathsf{Span}[\Omega_G]=U_G$, as this assumption is naturally violated when one starts to consider GPT subsystems, as we will do here.

We depict state $s$ of GPT system $G$ diagrammatically by
\beq
\btp
			\tikzstate{s}{G} \etp \in \Omega_G.
\eeq

We are here viewing GPT state spaces as objects in their own right, rather than merely as part of the specification of a GPT. That is, they are not merely a set of labels for computing probabilities, but more importantly contain a representation of the operational identities among states. An {\em operational identity} among GPT processes is simply a linear equality; for states on a single system\footnote{In general, operational identities can have much more structure; see Refs.~\cite{schmid2024addressing,schmid2020unscrambling}.}, the general form of an operational identity is 
\begin{equation}
     \sum_{s\in\Omega_G} \alpha_s \ s =  \mathbf{0} 
\end{equation}
for $\alpha_s\in\mathds{R}$. The set of all such operational identities\footnote{This set can be easily computed~\cite{schmid2024noncontextuality} and can described as a subspace of $\mathds{R}^{\Omega_G}$.} encodes (and is encoded by)  the convex geometry of the state space, which in turn is a critical aspect of the structure of a given theory. As a simple example of the importance of this convex structure, note that mixed states in classical probability theory have a unique decomposition into pure states, while  mixed quantum states admit of infinitely many decompositions into pure states. 

We often want to consider more than one state space and maps between them. This may be because we are viewing them as state spaces belonging to two different systems within the same GPT and are considering physical transformations between them. Or, it could be that we are viewing them as state spaces belonging to two entirely different GPTs, in which case such maps would not be physical transformations, but rather representations of the states of one system within the state space of another.

\begin{definition}[GPT state maps]\label{def:statemaps}
A state map between two state spaces $\Omega_G$ and $\Omega_H$ is a linear map 
\beq \iota:\mathsf{Span}[\Omega_G] \to \mathsf{Span}[\Omega_H]
\eeq such that $\iota({\Omega}_G) \subseteq {\Omega}_H$
\footnote{It is clear that GPT state maps are closed under sequential composition and that the identity linear map is a state map, so one can define the category $\mathtt{State}$ that has  state spaces as objects and state maps as morphisms. 
}. 

A state map is said to be \emph{faithful} if and only if it is injective, to be \emph{full} (or surjective) if and only if it satisfies $\iota(\Omega_G)=\Omega_H$, and \emph{invertible} if and only if it has a linear inverse which is itself a state map, i.e., such that $\iota^{-1}({\Omega}_H)\subseteq{\Omega}_G$.  Note that if a state map is both full and faithful then it is necessarily invertible. 
\end{definition}

Diagrammatically we denote state maps as
\beq		\btp
		\begin{pgfonlayer}{nodelayer}
			\node [style=none] (1) at (-0, -0.8) {};
			\node [style=none] (2) at (-0,0.8) {};
			\node [style=right label] (4) at (0, -0.8) {$G$};
			\node [style=small box, inner sep=1pt, fill=quantumviolet!30] (5) at (0,0) {$\iota$};
			\node [style=right label] (6) at (0,0.8) {$H$};
		\end{pgfonlayer}
		\begin{pgfonlayer}{edgelayer}
			\draw [qWire] (2.center) to (5);
			\draw [qWire] (5) to (1.center);
		\end{pgfonlayer}
		\etp
\eeq
where we use the violet box to highlight the fact that we are not (necessarily) thinking of these as physical transformations. We represent their action on states via
\beq
\btp
		\begin{pgfonlayer}{nodelayer}
			\node [style=none] (1) at (-0,-0.8) {};
			\node [style=none] (2) at (-0,0.8) {};
			\node [style=right label] (4) at (0,-0.8) {$G$};
			\node [style=small box, inner sep=1pt, fill=quantumviolet!30] (5) at (0,0) {$\iota$};
			\node [style=right label] (6) at (0,0.8) {$H$};
		\end{pgfonlayer}
		\begin{pgfonlayer}{edgelayer}
			\draw [qWire] (2.center) to (5);
			\draw [qWire] (5) to (1.center);
		\end{pgfonlayer}
		\etp\ ::\ \btp
			\tikzstate{s}{G}
   \etp
   \ \mapsto\ \btp
		\begin{pgfonlayer}{nodelayer}
			\node [style=Wsquareadj] (1) at (-0, -1) {$s$};
			\node [style=none] (2) at (-0, 1) {};
			\node [style=right label] (4) at (0, -0.4) {$G$};
			\node [style=small box, inner sep=1pt, fill=quantumviolet!30] (5) at (0,0.2) {$\iota$};
			\node [style=right label] (6) at (0,1) {$H$};
		\end{pgfonlayer}
		\begin{pgfonlayer}{edgelayer}
			\draw [qWire] (2.center) to (5);
			\draw [qWire] (5) to (1.center);
		\end{pgfonlayer}
		\etp \in \Omega_H,
\eeq
where one reads this diagram from bottom to top as composition of linear maps.

Because state maps are linear, they preserve operational identities. In addition, new operational identities are introduced if and only if one's state map is not faithful; that is, for nonfaithful $\iota$, $\iota(\Omega_G)$ will  exhibit equalities that are not present in $\Omega_G$.  

Note also that invertibility of a state map $\iota$ is a strictly stronger condition than invertibility of $\iota$ as a linear map, due to the condition that the inverse of $\iota$ must also be a state map.  For instance, consider embedding a Bloch ball into another Bloch ball by a partially  depolarizing map. This state map  is invertible {\em as a linear map} (if it is not totally depolarizing), but its inverse is not a valid quantum channel (as it can map quantum states to outside the Bloch ball). Consequently the state map is not an invertible state map, since its linear inverse is not a state map. Similarly, note that surjectivity of a state map is a strictly stronger condition than surjectivity as a linear map, as the latter would merely demand that $\iota(\mathsf{Span}[\Omega_G])=\mathsf{Span}[\Omega_H]$.

Two state spaces are isomorphic if there is an invertible state map between them. This is the physically meaningful notion of equivalence for state spaces of GPTs---one that demands that the convex structure (encoding the operational identities) is the same in the two state spaces. It is this notion of equivalence that ensures that it is sensible to define GPT state spaces in a way that is independent of the vector space dimension, as we have done. This is because embedding some state space $\Omega_G\subset U_G$ within a higher dimensional vector space $W$ via some injective linear map $I_\Omega:U_G\to W$ gives a new state space $I_\Omega(\Omega_G)\subset W$ which is isomorphic to the original one. Up to this isomorphism, state spaces are insensitive to the vector space in which they are being represented. So, for example, the state space of a classical bit is the same whether or not one views it as living in a $2$-dimensional vector space, or as embedded in (say) the $4$-dimensional vector space of a qubit (we will see such an example later in Figure~\ref{fig:exSimpP}.)

Next we consider the structure of effects in GPTs.

\begin{definition}[GPT effect spaces]\label{def:effectspace}
The effects $e$ for a given GPT system $G$ are represented by vectors in a (finite dimensional) real vector space $V_G$. The \emph{effect space}, $\mathcal{E}_G$, is a set of such effect vectors that is necessarily convex and compact, that contains the null effect (the zero vector $\mathbf{0}$) and a privileged unit effect $u_G$, and that satisfies
\beq\label{eq:measurementinduced}
\forall e\in\mathcal{E}_G \ \exists e^\perp \in \mathcal{E}_G \ \text{s.t.} \ e+e^\perp = u_G.
\eeq
\end{definition}

Note that (unlike in many approaches to GPTs) we will not assume that $\mathsf{Span}[\mathcal{E}_G]=V_G$, as this assumption is naturally violated when one starts to consider GPT subsystems, as we will do here.

We depict effect $e$ of GPT system $G$ diagrammatically by
\beq
\btp
			\tikzeffect{e}{G} \etp \in \mathcal{E}_G.
\eeq

Just as for state spaces, GPT effect spaces are not merely a tool for computing probabilities; rather, they encode operational identities among effects, which in turn are critical to the structure of a given theory. For effects on a single system, the general form of an operational identity is
\begin{equation}
     \sum_{e\in\mathcal{E}_G} \alpha_e \ e = \mathbf{0} 
\end{equation}
for $\alpha_e\in\mathds{R}$. 
Just as for states, the set of all such operational identities\footnote{ This set can be easily computed~\cite{schmid2024noncontextuality} and can described as a subspace of $\mathds{R}^{\mathcal{E}_G}$.} encodes (and is encoded by)  the convex geometry of the effect space, which in turn is a critical aspect of the structure of a given theory. 

Like with states, we will often want to consider more than one effect space and maps between them, where again these may be physical maps (e.g., representing pre-composition with some physical transformation), or they may just be mathematical maps (e.g., giving a representation of the effects of one theory within another).

\begin{definition}[GPT effect maps]\label{def:effectmaps} Effect maps between two effect spaces $\mathcal{E}_G$ and $\mathcal{E}_H$ are linear maps 
\beq\kappa:\mathsf{Span}[\mathcal{E}_G]\to \mathsf{Span}[\mathcal{E}_H]\eeq
such that $\kappa(\mathcal{E}_G)\subseteq\mathcal{E}_H$  and that preserve the unit effect, so $\kappa(u_G)=u_H$\footnote{Like state maps, effect maps are closed under sequential composition, and the identity linear map is an effect map, so one can define the category $\mathtt{Effect}$ that has  effect spaces as objects and effect maps as morphisms. }.

An effect maps is said to be \emph{faithful} if and only if it is injective, to be \emph{full} (or surjective) if and only if it satisfies $\kappa(\mathcal{E}_G)=\mathcal{E}_H$,  and to be
\emph{invertible} if and only if it has a linear inverse which is itself an effect map, i.e., such that $\kappa^{-1}(\mathcal{E}_H)\subseteq \mathcal{E}_G$.   Note that a full and faithful effect map is necessarily invertible. 
\end{definition}

Diagrammatically, we denote effect maps as
\beq		\btp
		\begin{pgfonlayer}{nodelayer}
			\node [style=none] (1) at (-0, -0.8) {};
			\node [style=none] (2) at (-0,0.8) {};
			\node [style=right label] (4) at (0, -0.8) {$H$};
			\node [style=small box, inner sep=1pt, fill=quantumviolet!30] (5) at (0,0) {$\kappa$};
			\node [style=right label] (6) at (0, 0.8) {$G$};
		\end{pgfonlayer}
		\begin{pgfonlayer}{edgelayer}
			\draw [qWire] (2.center) to (5);
			\draw [qWire] (5) to (1.center);
		\end{pgfonlayer}
		\etp,
\eeq
and we represent their action on effects via
\beq
\btp
		\begin{pgfonlayer}{nodelayer}
			\node [style=none] (1) at (-0, -0.8) {};
			\node [style=none] (2) at (-0, 0.8) {};
			\node [style=right label] (4) at (0, -0.8) {$H$};
			\node [style=small box, inner sep=1pt, fill=quantumviolet!30] (5) at (0,0) {$\kappa$};
			\node [style=right label] (6) at (0, 0.8) {$G$};
		\end{pgfonlayer}
		\begin{pgfonlayer}{edgelayer}
			\draw [qWire] (2.center) to (5);
			\draw [qWire] (5) to (1.center);
		\end{pgfonlayer}
		\etp\ ::\ \btp
			\tikzeffect{e}{G}
   \etp
   \ \mapsto\ \btp\begin{pgfonlayer}{nodelayer}
			\node [style=none] (1) at (-0, -1) {};
			\node [style=Wsquare] (2) at (-0, 1) {$e$};
			\node [style=right label] (4) at (0, -0.9) {$H$};
			\node [style=small box, inner sep=1pt, fill=quantumviolet!30] (5) at (0,-0.2) {$\kappa$};
			\node [style=right label] (6) at (0, 0.5) {$G$};
		\end{pgfonlayer}
		\begin{pgfonlayer}{edgelayer}
			\draw [qWire] (2.center) to (5);
			\draw [qWire] (5) to (1.center);
		\end{pgfonlayer}
		\etp \in \mathcal{E}_H,
\eeq
where one reads these diagrams from top to bottom as composition of linear maps.

Just as for state maps, the linearity of effect maps implies that they preserve operational identities, new operational identities will be introduced if and only if one's effect map is not faithful, and invertibility and surjectivity of an effect map $\kappa$ are strictly stronger conditions than invertibility or surjectivity of $\kappa$ as a linear map.

Two effect spaces are equivalent (isomorphic) if and only if there is an invertible effect map between them. Like with state spaces, this means that embedding some effect space $\mathcal{E}_G\subset V_G$ within a higher dimensional vector space $W'$ via some injective linear map $I_\mathcal{E}:V_G\to W'$ gives a new effect space $I_\mathcal{E}(\mathcal{E}_G)\subset W'$ which is isomorphic to the original one, so effect spaces are also insensitive to which vector space they live in.

The final component of the GPT framework which we need is a probability rule which constitutes the empirical predictions of the theory.
\begin{definition}[Probability rules]\label{def:prob}
A probability rule for a GPT system $G$ can be represented by a bilinear map $p_G:\mathsf{Span}[\Omega_G]\times\mathsf{Span}[\mathcal{E}_G]\to \mathds{R}$ such that $p_G(\Omega_G,\mathcal{E}_G)\subseteq [0,1]$ and such that $p_G(\bar{\Omega}_G,u_G)=1$. 

Moreover, the probability rule is said to be \emph{tomographic} if and only if it perfectly characterises states and effects, so that
\begin{align}
    \label{eq:tomographic}
p_G(s,e)=p_G(s',e),\quad  \forall  e\in \mathcal{E}_G \implies s=s'\\ \nonumber \textrm{ and } p_G(s,e)=p_G(s,e'),\quad \forall s\in\Omega_G \implies e=e'.
\end{align}
\end{definition}

We represent the probability rule diagrammatically as
\beq
\btp
\begin{pgfonlayer}{nodelayer}
	\node [style=none] (1) at (-0, -1) {};
	\node [style=none] (2) at (-0, 1) {};
	\node [style=right label] (4) at (0, -1) {$G$};
        \node [style=empty circle,fill=white] (5) at (0,0) {$p_G$};
	\node [style=right label] (6) at (0, 1) {$G$};
\end{pgfonlayer}
\begin{pgfonlayer}{edgelayer}
	\draw [qWire] (2) to (1.center);
\end{pgfonlayer}
\etp.
\eeq
We represent it as a dashed circle because it should not (necessarily) be interpreted as 
a state map or an effect map, but a map from state-effect pairs to the reals:
\beq
\btp
\begin{pgfonlayer}{nodelayer}
	\node [style=none] (1) at (-0, -1) {};
	\node [style=none] (2) at (-0, 1) {};
	\node [style=right label] (4) at (0, -1) {$G$};
        \node [style=empty circle,fill=white] (5) at (0,0) {$p_G$};
	\node [style=right label] (6) at (0, 1) {$G$};
\end{pgfonlayer} 
\begin{pgfonlayer}{edgelayer}
	\draw [qWire] (2) to (1.center);
\end{pgfonlayer}
\etp ::\ \left(\btp
			\tikzstate{s}{G} \etp  , \btp
		\tikzeffect{e}{G}
		\etp\right)\ \  \mapsto\quad
\btp
		\begin{pgfonlayer}{nodelayer}
			\node [style=Wsquareadj] (1) at (-0, -1.5) {$s$};
			\node [style=Wsquare] (2) at (-0, 1.5) {$e$};
			\node [style=right label] (4) at (0, 1) {$G$};
			\node [style=empty circle, fill=white] (8) at (0,0) {$p_G$};
			\node [style=right label] (6) at (0, -0.8) {$G$};
		\end{pgfonlayer}
		\begin{pgfonlayer}{edgelayer}
			\draw [qWire] (2) to (1.center);
		\end{pgfonlayer}
		\etp \in [0,1]
\eeq

Putting the above together, we can define a system in a GPT as follows.
\begin{definition}[GPT system]\label{def:GPTsystem}
A GPT system $G$ is a triple of a state space $\Omega_G$, an effect space $\mathcal{E}_G$, and a tomographic probability rule $p_G$:
\begin{equation}
 \label{eq:GPT}
		G:=\left(\left\{\btp
			\tikzstate{s}{G}
		\etp\right\}_{s\in\Omega_G},
		\left\{\btp
		\tikzeffect{e}{G}
		\etp\right\}_{e\in\mathcal{E}_G},\quad
\btp
\begin{pgfonlayer}{nodelayer}
	\node [style=none] (1) at (-0, -1) {};
	\node [style=none] (2) at (-0, 1) {};
	\node [style=right label] (4) at (0, -1) {$G$};
        \node [style=empty circle,fill=white] (5) at (0,0) {$p_G$};
	\node [style=right label] (6) at (0, 1) {$G$};
\end{pgfonlayer}
\begin{pgfonlayer}{edgelayer}
	\draw [qWire] (2) to (1.center);
\end{pgfonlayer}
\etp\right) \,.
	\end{equation}
\end{definition}

We can then define the notion of a GPT system embedding as follows.

\begin{definition}[GPT system embedding \cite{muller2023testing}]\label{defn:GPTembedding}
A \emph{GPT system embedding}\footnote{ It is straightforward to see that we can sequentially compose such embeddings by element-wise composition of the pair, and that any GPT system embeds into itself by the identity state and effect maps. We can therefore define the category  $\mathtt{GPT-System}$ that has  GPT systems as objects and GPT embeddings as morphisms. } from $G$ to $H$ is a pair of a state map $\iota:\Omega_G\to\Omega_H$ and an effect map $\kappa:\mathcal{E}_G\to\mathcal{E}_H$
which taken together preserve probabilities, so that $p_G(e,s)=p_H(\iota(s),\kappa(e))$ for all $s\in\Omega_G$ and $e\in\mathcal{E}_H$. Diagrammatically, it is a state and effect map satisfying
\beq
\btp
		\begin{pgfonlayer}{nodelayer}
			\node [style=none] (1) at (-0, -1.5) {};
			\node [style=none] (2) at (-0, 1.5) {};
			\node [style=right label] (4) at (0, 1) {$G$};
			\node [style=empty circle, fill=white] (8) at (0,0) {$p_G$};
			\node [style=right label] (6) at (0, -0.8) {$G$};
		\end{pgfonlayer}
		\begin{pgfonlayer}{edgelayer}
			\draw [qWire] (2) to (1.center);
		\end{pgfonlayer}
		\etp \ =\ 
		\btp
		\begin{pgfonlayer}{nodelayer}
			\node [style=none] (1) at (-0, -2.6) {};
			\node [style=none] (2) at (-0, 2.6) {};
			\node [style=right label] (4) at (0, 2.4) {$G$};
			\node [style=empty circle,fill=white] (8) at (0,0) {$p_H$};
			\node [style=right label] (6) at (0, -2.2) {$G$};
              \node [style=small box,inner sep=1pt,fill=quantumviolet!30] (10) at (0,1.65) {$\kappa$};
              \node [style=right label] (11) at (0,1.1) {$H$};
              \node [style=right label] (12) at (0,-0.8) {$H$};
              \node [style=small box,inner sep=1pt,fill=quantumviolet!30] (13) at (0,-1.55) {$\iota$};
		\end{pgfonlayer}
		\begin{pgfonlayer}{edgelayer}
			\draw [qWire] (2) to (1.center);
		\end{pgfonlayer}
		\etp.
\eeq
Note that both $\iota$ and $\kappa$ must be faithful as a consequence of the fact that the GPT system has a tomographic probability rule. The embedding is said to be \emph{invertible} if both $\iota$ and $\kappa$ are moreover invertible. 
\end{definition}

In the language of Ref.~\cite{muller2023testing}, this is an exact unital embedding of one GPT into another.

If $\iota$ and $\kappa$ are invertible state and effect maps from $G$ to $H$, then (by definition) their inverses are state and effect maps from $H$ to $G$. It follows that $(\iota^{-1},\kappa^{-1})$ gives an embedding of $H$ into $G$, since these maps necessarily also preserve probabilities: 
\beq \label{twodirections}
    \btp
		\begin{pgfonlayer}{nodelayer}
			\node [style=none] (1) at (-0, -1.5) {};
			\node [style=none] (2) at (-0, 1.5) {};
			\node [style=right label] (4) at (0, 1) {$G$};
			\node [style=empty circle, fill=white] (8) at (0,0) {$p_G$};
			\node [style=right label] (6) at (0, -0.9) {$G$};
		\end{pgfonlayer}
		\begin{pgfonlayer}{edgelayer}
			\draw [qWire] (2) to (1.center);
		\end{pgfonlayer}
		\etp
  \ =\ 
  \btp
		\begin{pgfonlayer}{nodelayer}
			\node [style=none] (1) at (-0, -3) {};
			\node [style=none] (2) at (-0, 3) {};
			\node [style=right label] (4) at (0, 2.4) {$G$};
			\node [style=empty circle, fill=white] (8) at (0,0) {$p_H$};
			\node [style=right label] (6) at (0, -2.2) {$G$};
              \node [style=small box,inner sep=1pt,fill=quantumviolet!30] (10) at (0,1.65) {$\kappa$};
              \node [style=right label] (11) at (0,0.9) {$H$};
              \node [style=right label] (12) at (0,-0.9) {$H$};
              \node [style=small box,inner sep=1pt,fill=quantumviolet!30] (13) at (0,-1.55) {$\iota$};
		\end{pgfonlayer}
		\begin{pgfonlayer}{edgelayer}
			\draw [qWire] (2) to (1.center);
		\end{pgfonlayer}
		\etp
  \ \implies\ 
  \btp
		\begin{pgfonlayer}{nodelayer}
			\node [style=none] (1) at (-0, -3) {};
			\node [style=none] (2) at (-0, 3) {};
			\node [style=right label] (4) at (0, 2.4) {$H$};
			\node [style=empty circle, fill=white] (8) at (0,0) {$p_G$};
			\node [style=right label] (6) at (0, -2.2) {$H$};
              \node [style=small box,inner sep=1.2pt,fill=quantumviolet!30] (10) at (0,1.65) {$\kappa^{-1}$};
              \node [style=right label] (11) at (0,0.9) {$G$};
              \node [style=right label] (12) at (0,-0.9) {$G$};
              \node [style=small box,inner sep=1pt,fill=quantumviolet!30] (13) at (0,-1.55) {$\iota^{-1}$};
		\end{pgfonlayer}
		\begin{pgfonlayer}{edgelayer}
			\draw [qWire] (2) to (1.center);
		\end{pgfonlayer}
		\etp
  \ =\ 
  \btp
		\begin{pgfonlayer}{nodelayer}
			\node [style=none] (1) at (-0, -1.5) {};
			\node [style=none] (2) at (-0, 1.5) {};
			\node [style=right label] (4) at (0, 1) {$H$};
			\node [style=empty circle, fill=white] (8) at (0,0) {$p_H$};
			\node [style=right label] (6) at (0, -0.9) {$H$};
		\end{pgfonlayer}
		\begin{pgfonlayer}{edgelayer}
			\draw [qWire] (2) to (1.center);
		\end{pgfonlayer}
		\etp.
\eeq
Consequently for any invertible GPT embedding from $G$ into $H$ with state and effect maps $(\iota,\kappa)$, the inverse maps $(\iota^{-1},\kappa^{-1})$ give a GPT embedding of $H$ into $G$.

It is straightforward to check that GPT embeddings are transitive~\cite{muller2023testing}: if we have an embedding of $G$ into $H$ by $(\iota,\kappa)$ and $H$ into $K$ by $(\delta,\gamma)$, then the sequential composition of these, $(\delta\circ \iota, \gamma\circ \kappa)$, will be an embedding of $G$ into $K$. 

Two GPT systems are said to be isomorphic if and only if there is an invertible embedding between them.
\begin{definition}[Equivalence of GPT systems]\label{defn:GPTsystemequivalence}
Two GPT systems are equivalent (isomorphic) if and only if there is an invertible GPT system embedding (as in Definition~\ref{defn:GPTembedding}) between them.
\end{definition}

This definition of equivalence subsumes the standard one found in the literature, as we show in the next section. 

Note that if we simply embed the states and effects of a GPT system within some higher dimensional vector spaces via injective linear maps $I_\Omega:U_G\to W$ and $I_\mathcal{E}:V_G\to W'$, this defines an isomorphic GPT system, $(I_\Omega(\Omega_G),I_\mathcal{E}(\mathcal{E}_G), p_G(P_\Omega( \cdot ),P_\mathcal{E}( \cdot )))$ where $P_\Omega$ is any left inverse of $I_\Omega$ and similarly for $P_\mathcal{E}$. This shows once again how our definitions are insensitive to the particular vector spaces in which one chooses to represent the states and effects.

\subsection{Subsuming the standard notion of a GPT}\label{sec:subsuming}

 Unlike in standard presentations of GPTs, our definition of a GPT system involves an explicit probability rule; moreover (as mentioned earlier), we do not assume that the states and effects on the system span the vector spaces in which they live. 
Our definition of a GPT system is not essentially distinct from that found in the literature, but at the representational level it is more flexible. For instance, it subsumes two common but distinct kinds of representations of GPTs: those that represent effects as linear functionals on states, and those that represent effects as living in the same vector space as states. This is convenient, as it enables one to switch freely between these two descriptions, each of which is useful in certain contexts. We will see further advantages of our definitions when we come to the representation of GPT fragments (see the next section and the comment about accessible GPT fragments therein).

\begin{theorem}
Every GPT system in our framework is isomorphic to one in each of the standard forms found in the literature. Conversely, every GPT in one of the standard forms found in the literature is an instance of a GPT system in our framework. 
\end{theorem}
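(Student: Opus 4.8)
\emph{Plan.} I would prove the two implications separately, since they are of quite different character.

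\emph{Forward direction: every GPT system in our sense is isomorphic to one in each standard form.} Given $G=(\Omega_G,\mathcal{E}_G,p_G)$, the first step is to reduce to the case $\mathsf{Span}[\Omega_G]=U_G$ and $\mathsf{Span}[\mathcal{E}_G]=V_G$. This is legitimate by the remark following Definition~\ref{defn:GPTsystemequivalence}: the GPT system obtained by replacing $U_G,V_G$ with $\mathsf{Span}[\Omega_G],\mathsf{Span}[\mathcal{E}_G]$ and restricting $p_G$ is isomorphic to $G$ (it embeds into $G$ via the inclusions $\mathsf{Span}[\Omega_G]\hookrightarrow U_G$, $\mathsf{Span}[\mathcal{E}_G]\hookrightarrow V_G$, and tomography of the restricted rule is unaffected since the condition~\eqref{eq:tomographic} only quantifies over $\Omega_G$ and $\mathcal{E}_G$). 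So assume the spanning conditions. Now the bilinear rule $p_G$ induces linear maps $U_G\to V_G^{*}$ and $R\colon V_G\to U_G^{*}$, and~\eqref{eq:tomographic} says precisely that both are injective; hence $\dim U_G=\dim V_G$ and both are linear isomorphisms. Put $\kappa:=R$, $\mathcal{E}_G':=\kappa(\mathcal{E}_G)\subseteq U_G^{*}$, $u_G':=\kappa(u_G)$, and let $p_G'\colon U_G\times U_G^{*}\to\mathds{R}$ be the canonical evaluation pairing. Then $G':=(\Omega_G,\mathcal{E}_G',p_G')$ is a GPT system: compactness, convexity, the null and unit effects, and the complementation property~\eqref{eq:measurementinduced} are transported along the linear isomorphism $\kappa$, and $p_G'(s,\kappa(e))=p_G(s,e)$ makes $p_G'$ tomographic. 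By construction $G'$ is in the ``effects as linear functionals on states'' form, with $\mathcal{E}_G'$ spanning $U_G^{*}$ since $\kappa$ is surjective; and $(\mathrm{id}_{\Omega_G},\kappa)$ is an invertible GPT system embedding $G\to G'$ in the sense of Definition~\ref{defn:GPTembedding}, since it preserves probabilities by construction, $\mathrm{id}_{\Omega_G}$ is trivially an invertible state map, and $\kappa$ is an invertible effect map whose linear inverse carries $\mathcal{E}_G'$ bijectively back onto $\mathcal{E}_G$. To land in the other standard form --- effects in the same vector space as states, with the probability rule an inner product --- I would postcompose with any linear isomorphism $U_G^{*}\cong U_G$ (equivalently: fix a basis of $U_G$, write states as column vectors and effects as row vectors in $\mathds{R}^{\dim U_G}$, take the dot product as $p$); this is again an invertible GPT embedding.

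\emph{Converse: every standard-form GPT is a GPT system in our sense.} Let $G$ have states spanning a real vector space $U$, effects in $U^{*}$ (or in $U$), a unit effect $u$, and probability rule given by evaluation (or the dot product). I would simply check Definitions~\ref{def:statespace}, \ref{def:effectspace}, \ref{def:prob}, and~\ref{def:GPTsystem}. The state-space axioms are immediate, and $\mathbf{0}\notin\mathsf{AffSpan}[\bar{\Omega}_G]$ holds because $u$ takes value $1$ on $\bar{\Omega}_G$, hence on its affine span, while $u(\mathbf{0})=0$. The effect-space axioms hold using that standard effect spaces are closed under $e\mapsto u-e$, which supplies the complement in~\eqref{eq:measurementinduced}. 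The probability-rule axioms are part of the standard data, and tomography holds because standard presentations either assume it or pass to the quotient of states and effects that enforces it. Hence $G$ is a GPT system in our framework.

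\emph{Main obstacle.} The only genuinely delicate point is ensuring the constructed maps are invertible in the \emph{strong} sense of Definitions~\ref{def:statemaps} and~\ref{def:effectmaps} --- that the linear inverse is itself a state/effect map --- rather than merely invertible as linear maps; here this is automatic because the maps are linear bijections onto their literal images, but it is exactly the place where the subtlety emphasised in the text (invertibility of a state map being strictly stronger than linear invertibility) has to be applied correctly. A secondary care point: one must invoke tomography to conclude $\dim U_G=\dim V_G$ \emph{before} asserting that $R$ is an isomorphism onto all of $U_G^{*}$; without it, the effects would only inject into, and need not span, the dual, and the resulting representation would fail to be in standard form.
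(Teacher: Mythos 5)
Your proof is correct and follows essentially the same route as the paper: reduce to the case where states and effects span their ambient spaces, convert to the ``effects as linear functionals'' form, convert that to the inner-product form by a choice of isomorphism $U_G^*\cong U_G$ (the paper invokes Riesz via an arbitrary inner product, which is the same move), and then observe that the converse is just a matter of checking the axioms. Where you deviate is in making the functional conversion explicit as a bilinear-form-induced map $R:V_G\to U_G^*$ and deriving $\dim U_G=\dim V_G$ from the two injectivities supplied by tomography. This buys you a precise surjectivity statement for $\kappa$, and your ``main obstacle'' paragraph correctly flags the state-map-vs-linear-map invertibility issue the paper leaves implicit. Two small remarks. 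First, your dimension-equality argument is sound, but be aware that it relies silently on the fact that injectivity of $p_G(\cdot,e)$ on $\Omega_G$ (which is what Eq.~\eqref{eq:tomographic} literally states) promotes to injectivity of the induced linear map on $\mathsf{Span}[\Omega_G]$; this holds because $\Omega_G$ has nonempty relative interior, so any kernel vector $v\neq 0$ would give two distinct interior points $s_0\pm\lambda v$ with equal image --- worth stating if you want a fully self-contained argument. Second, the ``secondary care point'' is slightly overstated: the ``effects as functionals'' representation does not in general require the effect cone to span the full dual, so the dimension count is a nice clean-up but not strictly load-bearing; the paper's version gets by without it.
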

\begin{proof}
First, let us show how to convert any given GPT system in our framework to a GPT in the standard forms found in the literature~\cite{Hardy,GPT_Barrett,Masanes_2011,plavala2021general}. If one is given a GPT system whose states do not span the vector space $U_G$ in which they live and whose effects do not span the vector space $V_G$ in which they live, one first transforms this into an isomorphic GPT system in which the states live in a smaller vector space which they {\em do} span, and in which the effects live in a smaller vector space which they {\em do} span.
One can do this, for instance, using the projection maps $P_\Omega$ and $P_\mathcal{E}$, where $P_\Omega$ is a projection $U_G\to \mathsf{Span}[\Omega_G]$ and where $P_\mathcal{E}$ is a projection $V_G \to \mathsf{Span}[\mathcal{E}_G]$; these give a GPT isomorphism between the two GPT systems. 

One can then lump the probability rule together with the effects to transform them into linear functionals on $\mathsf{Span}[\Omega_G]$, i.e., for each $e\in \mathcal{E}_G$ we can define a linear functional
\begin{align}
  \btp
        \begin{pgfonlayer}{nodelayer}
            \node [style=Wsquare] (0) at (0,1) {$e$};
            \node [style=none] (1) at (0,-1) {};
            \node [style=empty circle,fill=white] (2) at (0,-0.2) {$p_G$};
            \node [style=right label] (3) at (0,0.5) {$G$};
            \node [style=right label] (4) at (0,-1) {$G$};
        \end{pgfonlayer}
        \begin{pgfonlayer}{edgelayer}
            \draw [qWire] (0) to (1.center);
        \end{pgfonlayer}
    \etp  : \Omega_G \to [0,1]
    \ \ :: \btp \tikzstate{s}{G} \etp \ \  \mapsto\quad
\btp
		\begin{pgfonlayer}{nodelayer}
			\node [style=Wsquareadj] (1) at (-0, -1.5) {$s$};
			\node [style=Wsquare] (2) at (-0, 1.5) {$e$};
			\node [style=right label] (4) at (0, 1) {$G$};
			\node [style=empty circle, fill=white] (8) at (0,0) {$p_G$};
			\node [style=right label] (6) at (0, -0.8) {$G$};
		\end{pgfonlayer}
		\begin{pgfonlayer}{edgelayer}
			\draw [qWire] (2) to (1.center);
		\end{pgfonlayer}
		\etp.
\end{align} 
This recovers the common view~\cite{Masanes_2011,plavala2021general} of effects as living in the vector space dual to that for states---i.e., the view wherein effects are {\em defined} as linear functionals on states. 

One can then transform this representation into the other common representation~\cite{Hardy,GPT_Barrett} of GPTs, where states and effects are vectors in the same vector space (as opposed to dual vector spaces). The equivalence between these two kinds of representations is well known, and relies on choosing an (arbitrary) inner product and using the Riesz representation theorem to get an inner product representation. 

Conversely, a given GPT in either of these common forms is already an instance of a GPT system as we have defined it. 
A GPT wherein effects are viewed as linear functionals is a special case of a GPT system as in our Definition~\ref{def:GPTsystem}, where the probability rule is given by the evaluation map (i.e., by directly applying the linear functional to the state: $p(s,e)=e(s)$). A GPT wherein states and effects live in the same vector space is also a special case of a GPT system as in our Definition~\ref{def:GPTsystem}, where the probability rule is given by the inner product: $p(s,e)=\left<e,s\right>$. 
\end{proof}

Our notion of equivalence for GPT systems (Definition~\ref{defn:GPTsystemequivalence}) also subsumes the standard one in the literature: that two GPTs are equivalent if one is generated from the other by acting an invertible linear map on its state space and the inverse of that map on the effect space, while keeping the probability rule fixed.

\subsection{Fragments and fragment embeddings}\label{sec:frags}

A GPT, viewed as a theory, is understood as describing every physically possible process in some (possibly hypothetical) world. However, sometimes one is interested in giving a GPT description of some particular experiment or particular processes within the world. In such cases, one needs a more general notion which allows, for example, for the possibility that the set of states and set of effects that are of interest might not be tomographic for each other. 
This leads to the notion of a {\em GPT fragment}. A fragment that is not itself a GPT system is typically defined relative to some background GPT system.

\begin{definition}[Fragment of a GPT]\label{def:GPTfragment}
A {\em fragment} $f$ of a GPT system $G:=(\Omega_G,\mathcal{E}_G, p_G)$ has a GPT state space defined by 
a subset $\bar{\Omega}_{f}\subseteq \bar{\Omega}_G$ of the normalized states in $G$. As with all GPT state spaces, the full state space is
$\Omega_f:=\mathsf{Conv}[\mathbf{0}\cup\bar{\Omega}_f]$.
It has a GPT effect space defined by a subset $\mathcal{E}_{f}\subseteq \mathcal{E}_G$ of effects from the full GPT. As with all effect spaces, $\mathcal{E}_{f}$ is convex and compact, must include the null effect $\mathbf{0}$ and the unit effect $u_f:=u_G$, and satisfy
$\forall e\in\mathcal{E}_f \ \exists e^\perp \in \mathcal{E}_f \ \text{s.t.} \ e+e^\perp = u_f$.
Probabilities in the fragment are computed using the probability rule $p_G$, but with its domain restricted to (the span of) the states and effects in the fragment; that is, one defines the probability rule by defining $p_f(s,e):=p_G(s,e)$ for all $s\in \Omega_f$ and $e\in \mathcal{E}_f$ and extending this to a bilinear map $p_f:\mathsf{Span}[\Omega_f]\times \mathsf{Span}[\mathcal{E}_f]\to \mathds{R}$. A GPT fragment is therefore a triple $f:=(\Omega_f,\mathcal{E}_f,p_f)$. A fragment is said to be relatively tomographic if its probability rule is tomographic.
 \end{definition}

In the present work, the {\em only} difference between a GPT fragment and a proper GPT is that for the former, we do not require that the probability rule $p_f$ is tomographic (i.e., it does not necessarily satisfy Eq.~\eqref{eq:tomographic}). Thus, every GPT fragment that is relatively tomographic satisfies all the properties of a proper GPT. Note that in this special case, the span of the states is isomorphic to the span of the effects:
\begin{align}
    \mathsf{Span}[\Omega_f] \cong \mathsf{Span}[\mathcal{E}_f].
\end{align} 
For generic fragments, however, this may not be the case.

It is sometimes useful to consider a broader notion of a fragment that does not demand that every state in the fragment has a normalized counterpart that is also in the fragment~\cite{selby2023accessible}, or in which the effect space does not contain the unit or complementary effects, or in which the set of states and set of effects are not convexly closed. Such generality is necessary to describe a real experiment, where (for example) one can only implement a finite number of states and effects.
However, we do not consider this extra generality here, as it obfuscates our theoretical analyses without having any meaningful impact on any of our results.
(Moreover, any fragment in this broader sense can be uniquely closed into a fragment in the sense in our work, and one can infer the exact probabilities generated by the closed fragment from the original fragment.) 

Clearly, GPT systems are the special case of GPT fragments that are relatively tomographic. Moreover, one can lift the definition of GPT system embeddings to the more general case of GPT fragment embeddings. The only difference is that when the fragment to be embedded is not relatively tomographic, it does not necessarily follow that the embedding maps $\iota$ and $\kappa$ are faithful.

\begin{definition}[GPT fragment embeddings]\label{def:fragembeddings}
A \emph{GPT fragment embedding} from GPT fragment $g$ to GPT fragment $h$ is a pair of a state map $\iota:\Omega_g\to\Omega_h$ and an effect map $\kappa:\mathcal{E}_g\to\mathcal{E}_h$ which together preserve probabilities: that is, $p_g(e,s)=p_h(\iota(s),\kappa(e))$ for all $s\in\Omega_g$ and $e\in\mathcal{E}_h$\footnote{Like with GPT systems, we can form a category of GPT fragments and GPT fragment embeddings which we call $\mathtt{GPT-Fragment}$. Clearly, $\mathtt{GPT-System}$ is a full subcategory of $\mathtt{GPT-Fragment}$.}. Diagrammatically, it is a pair satisfying
\beq
\btp
		\begin{pgfonlayer}{nodelayer}
			\node [style=none] (1) at (-0, -1.5) {};
			\node [style=none] (2) at (-0, 1.5) {};
			\node [style=right label] (4) at (0, 1) {$g$};
			\node [style=empty circle, fill=white] (8) at (0,0) {$p_g$};
			\node [style=right label] (6) at (0, -0.8) {$g$};
		\end{pgfonlayer}
		\begin{pgfonlayer}{edgelayer}
			\draw [qWire] (2) to (1.center);
		\end{pgfonlayer}
		\etp \ =\ 
		\btp
		\begin{pgfonlayer}{nodelayer}
			\node [style=none] (1) at (-0, -2.6) {};
			\node [style=none] (2) at (-0, 2.6) {};
			\node [style=right label] (4) at (0, 2.4) {$g$};
			\node [style=empty circle,fill=white] (8) at (0,0) {$p_h$};
			\node [style=right label] (6) at (0, -2.2) {$h$};
              \node [style=small box,inner sep=1pt,fill=quantumviolet!30] (10) at (0,1.65) {$\kappa$};
              \node [style=right label] (11) at (0,1.1) {$h$};
              \node [style=right label] (12) at (0,-0.8) {$h$};
              \node [style=small box,inner sep=1pt,fill=quantumviolet!30] (13) at (0,-1.55) {$\iota$};
		\end{pgfonlayer}
		\begin{pgfonlayer}{edgelayer}
			\draw [qWire] (2) to (1.center);
		\end{pgfonlayer}
		\etp.
\eeq
The embedding is said to be \emph{faithful} if and only if both $\iota$ and $\kappa$ are faithful \footnote{One might consider changing the definition of embedding to also demand injectivity. For embeddings of GPT subsystems, as we will see, injectivity follows from the existence of an embedding, and need not be assumed. For fragments of GPTs, however, it does not follow automatically, and a strong motivation for not including injectivity in the definition of embedding is that an ontological model of a GPT fragment need not be injective, and we take existence of an ontological model to coincide with simplex-embeddability.}, is said to be \emph{full} if and only if both $\iota$ and $\kappa$ are full, and is said to be \emph{invertible} if and only if both $\iota$ and $\kappa$ are  invertible. 
\end{definition}

This generalizes the notion of an exact unital embedding of Ref.~\cite{muller2023testing} to apply to GPT fragments that are not standard GPTs; additionally, it generalizes the notion of simplex-embeddability of Ref.~\cite{SchmidGPT} to apply to embeddings into GPTs that are not simplicial.

For relatively tomographic fragments (i.e., for proper GPT systems), embeddings are necessarily faithful, as there is no way to introduce new operational identities while reproducing the empirical probabilities. For embeddings of fragments that are not relatively tomographic, however, both faithful and nonfaithful embeddings are possible. 
Embeddings are linear maps, so they always preserve operational identities. 
Faithful embeddings moreover do not introduce new operational identities, so that
\begin{align}
\sum_i\alpha_i{s_i}=\mathbf{0}\iff \sum_i\alpha_i\iota({s_i)}=\mathbf{0}\\
\sum_i\beta_i{e_i}=\mathbf{0}\iff \sum_i\beta_i\kappa({e_i)}=\mathbf{0}
\end{align}
Nonfaithful embeddings necessarily {\em do} introduce new operational identities, however. If it is the embedding map for states that fails to be injective, then one will have at least one pair of distinct states, $s\neq s'$, for which $\iota(s)=\iota(s')$.

We emphasize this point because these facts will be crucial for understanding shadow maps and their implications for assessing nonclassicality. As an extreme example, every fragment with a trivial probability rule 
(i.e., one which is constant on normalized states) is embeddable into a trivial one with only one state (the embedding can simply map every normalized state in the original fragment to this unique state).

Finally, the notion of equivalence for GPT fragments is the following  (analogous to that for GPT systems).

\begin{definition}[Equivalence of GPT fragments]\label{defn:GPTfragequivalence}
Two GPT fragments are equivalent (isomorphic) if and only if there is an invertible GPT fragment embedding (as defined in Definition~\ref{def:fragembeddings}) between them.
\end{definition}

Finally, note that a GPT fragment embedding necessarily takes normalized states to normalized states, as a consequence of probability preservation together with the preservation of the unit effect.
\bigskip 

In earlier works~\cite{selby2023accessible,selby2024linear}, two related concepts were introduced: {\em fragments} of a GPT and {\em accessible GPT fragments}. 
The distinction arises because the states and effects in a fragment of a given GPT need not span the vector space of the GPT, in which case one has a choice: one can either represent the processes in the fragment as living within the space they span, or one can represent them in the (larger) vector space of the original GPT. The former view was introduced in Ref.~\cite{selby2023accessible} under the name accessible GPT fragments, whereas the latter is simply called a fragment.

Because we have represented GPT state spaces and effects spaces in a way that is insensitive to the vector space in which they are represented, it follows that our approach (e.g., in Definition~\ref{def:GPTfragment})  subsumes both representations and treats them as equivalent (in the sense of Definition~\ref{defn:GPTfragequivalence}).

	\subsection{Nonclassicality as simplex-embeddability}

Assessing the noncontextuality~\cite{gencontext} of an operational prepare-measure scenario is equivalent to assessing whether or not the GPT representation of that scenario can be embedded into a simplicial GPT, as was shown in Ref.~\cite{SchmidGPT}.
Consequently, we can take simplex-embeddability as our notion of classical-explainability for a GPT. In doing so, we inherit all of the prior foundational motivations that have been given for generalized noncontextuality~\cite{gencontext,Leibniz,schmid2021guiding}.

To formalize this notion of classical explainability, we must first formalize the notion of a simplicial GPT system.

\begin{definition}[Simplicial GPT systems]
    A {\em simplicial GPT system} is defined by the tuple $\Lambda:=(\Delta,\Delta^*,p_\Lambda)$, where $\Delta$ is a state space which forms a simplex living in a finite-dimensional vector space $\mathds{R}^d$; $\Delta^*$ is the effect space which forms the logical dual of the simplex (that is, the set of vectors whose inner product with vectors in the simplex is between 0 and 1); and $p_\Lambda$ is the Euclidean inner product in $\mathds{R}^d$.
\end{definition}

Simplicial GPT systems represent strictly classical systems---those for which every state has a unique interpretation as a state of knowledge about some fundamental ontic states (which can be associated with the extremal states---the vertices of the simplex), and for which every logically possible measurement is allowed (including a single maximally informative measurement). They also embody the notion of classicality introduced by Leggett and Garg under the name macroscopic realism~\cite{LG,Schmid2024reviewreformulation}.

A given GPT system or GPT fragment is classically explainable if and only if it can be embedded into a simplicial GPT system.
\begin{definition}[Simplex embedding for  a GPT fragment] A GPT fragment $f$ admits of a simplex embedding if there exists a simplicial GPT system $\Lambda$ and a GPT fragment embedding $(\iota,\kappa):f\to \Lambda$. 
\end{definition} 

It follows from this definition that every fragment of a simplex-embeddable GPT system will also admit of a simplex embedding. On the other hand, impossibility of simplex embedding for a fragment necessarily implies impossibility of simplex embedding for the full GPT system of which the fragment is part.

This is the natural notion of classical explainability within the framework of GPTs.

	\section{GPT subsystems} \label{ncofcomp} 

The notion of GPT system embeddings leads to a natural and very general notion of {\em  subsystems} within the framework of GPTs. This aims to capture what it means for a GPT system to ``live inside'' another, or to be ``explainable by'' another---in both cases, with all its GPT structure intact.

\begin{definition}[GPT subsystems]
 \label{def:subsystem}
		A subsystem of a GPT system $G$ is a GPT system $F$, denoted $F\subseteq G$, such that there exists a GPT embedding (in the sense of Definition~\ref{def:fragembeddings}) of $F$ into $G$. 
\end{definition}

By definition, every GPT subsystem is itself a GPT system, and so constitutes a relatively tomographic fragment.

Two obvious but useful facts about GPT subsystems are the following, the first of which follows from transitivity of GPT embeddings:

	\begin{prop}
 \label{prop: transitivity}
		The notion of a GPT subsystem is transitive, so that $F\subseteq G$ and $G\subseteq H$ implies $F\subseteq H$.
	\end{prop}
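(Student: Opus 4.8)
The plan is to unpack the definition of a GPT subsystem (Definition~\ref{def:subsystem}) on both hypotheses, and then invoke the already-established transitivity of GPT (fragment) embeddings to compose the two witnessing embeddings. Concretely, first I would use $F\subseteq G$ to extract a GPT embedding $(\iota,\kappa)\colon F\to G$, i.e. a state map $\iota\colon\Omega_F\to\Omega_G$ and an effect map $\kappa\colon\mathcal{E}_F\to\mathcal{E}_G$ with $p_F(s,e)=p_G(\iota(s),\kappa(e))$ for all $s\in\Omega_F$, $e\in\mathcal{E}_F$; and I would use $G\subseteq H$ to extract a GPT embedding $(\delta,\gamma)\colon G\to H$ with $p_G(s',e')=p_H(\delta(s'),\gamma(e'))$ for all $s'\in\Omega_G$, $e'\in\mathcal{E}_G$.

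Next I would form the pair $(\delta\circ\iota,\ \gamma\circ\kappa)$. Since state maps are closed under sequential composition and likewise for effect maps (the footnotes to Definitions~\ref{def:statemaps} and~\ref{def:effectmaps}), $\delta\circ\iota$ is a state map $\Omega_F\to\Omega_H$ and $\gamma\circ\kappa$ is an effect map $\mathcal{E}_F\to\mathcal{E}_H$. Probability preservation then follows by chaining the two identities: $p_F(s,e)=p_G(\iota(s),\kappa(e))=p_H(\delta(\iota(s)),\gamma(\kappa(e)))$ for all $s\in\Omega_F$, $e\in\mathcal{E}_F$. This is exactly the statement that $(\delta\circ\iota,\gamma\circ\kappa)$ is a GPT fragment embedding of $F$ into $H$ in the sense of Definition~\ref{def:fragembeddings} — indeed it is precisely the transitivity of GPT embeddings already noted in the text just before Definition~\ref{defn:GPTsystemequivalence}. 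One should also remark that $F$ and $H$ are genuine GPT systems (tomographic fragments) by hypothesis, so the embedding we have produced is an embedding between proper GPTs, which is all that Definition~\ref{def:subsystem} requires. Hence $F\subseteq H$.

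There is essentially no obstacle here: the only thing to be careful about is bookkeeping — checking that composition really does land in the right categories ($\mathtt{State}$, $\mathtt{Effect}$, and hence $\mathtt{GPT\text{-}Fragment}$) and that the composite still preserves the unit effect (which it does, since $\gamma\circ\kappa(u_F)=\gamma(u_G)=u_H$) and normalized states. All of these are immediate from the corresponding closure properties already recorded in the excerpt, so the proof is a one-line appeal to transitivity of embeddings once the definitions are unwound.
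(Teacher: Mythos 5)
Your proof is correct and takes exactly the route the paper intends: Proposition~\ref{prop: transitivity} is stated as an immediate consequence of the transitivity of GPT embeddings established just before Definition~\ref{defn:GPTsystemequivalence}, and you simply unpack Definition~\ref{def:subsystem}, compose the two witnessing embeddings, and verify the chained probability-preservation identity. The additional bookkeeping remarks (closure of state/effect maps under composition, preservation of the unit effect) are consistent with what the paper's footnotes already record, so nothing is missing or superfluous.
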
 
	
Second, from the definition of simplex-embedding, it follows that:

	\begin{prop}
A GPT system $F$ is simplex embeddable if and only if it is a GPT subsystem of a simplicial system; that is, if and only if $F\subseteq \Lambda$ for some simplicial GPT system $\Lambda$. 
	\end{prop}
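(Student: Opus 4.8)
The plan is to prove the equivalence by simply unpacking the two definitions and observing that, under the standing hypothesis that $F$ is a GPT, they say literally the same thing. By the definition of simplex embedding, $F$ is simplex embeddable if and only if there exist a simplicial GPT system $\Lambda$ and a GPT fragment embedding $(\iota,\kappa)\colon F\to\Lambda$ in the sense of Definition~\ref{def:fragembeddings}. By Definition~\ref{def:subsystem}, $F\subseteq\Lambda$ holds if and only if $F$ is a GPT \emph{and} there exists a GPT fragment embedding of $F$ into $\Lambda$. Since $F$ is assumed to be a GPT, the clause ``$F$ is a GPT'' comes for free, and the two remaining conditions are verbatim identical (using that a simplicial GPT system is, in particular, a GPT system).

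Concretely, for the forward implication I would assume $F$ is simplex embeddable, fix a witnessing simplicial $\Lambda$ together with the embedding $(\iota,\kappa)$, note that $\Lambda$ is a GPT and that $F$ is a GPT by hypothesis, and then read off $F\subseteq\Lambda$ directly from Definition~\ref{def:subsystem}. For the converse I would assume $F\subseteq\Lambda$ for some simplicial $\Lambda$; Definition~\ref{def:subsystem} then supplies a GPT fragment embedding $F\to\Lambda$, which is by the definition of simplex embedding precisely a simplex embedding of $F$, so $F$ is simplex embeddable.

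There is no real obstacle here: the entire content has been front-loaded into the definitions, which were set up precisely so that ``simplex-embeddable GPT'' and ``GPT subsystem of a simplicial GPT'' become manifestly the same notion. The one point worth making explicit in the writeup is that the hypothesis ``$F$ is a GPT'' (equivalently, that $F$ is relatively tomographic) is exactly what makes the subsystem statement well-typed, since Definition~\ref{def:subsystem} requires a subsystem to itself be a GPT; no appeal to faithfulness, invertibility, or the structure theorem of Ref.~\cite{SchmidGPT} is needed.
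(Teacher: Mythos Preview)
Your proposal is correct and matches the paper's approach exactly: the paper simply states that the proposition follows ``from the definition of simplex-embedding,'' without giving any further argument. You have merely spelled out the definitional unpacking that the paper leaves implicit.
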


This definition of a GPT subsystem is general enough to subsume a variety of interesting cases: 
\begin{enumerate}
\item a component $F$ of a composite system $F\otimes G$ is a GPT subsystem of $F\otimes G$,
\item a component $F$ of a direct sum system $F\oplus G$ is a GPT subsystem of $F\oplus G$,
\item virtual quantum systems (so that, e.g., a logical qubit is a GPT subsystem of the physical qubits it is encoded in),
\item stabilizer quantum systems~\cite{gottesman1997stabilizer,gottesman1998heisenberg} are GPT subsystems of quantum systems of the same dimension (so that, e.g., a stabilizer qubit is a GPT subsystem of a qubit),
\item restricting the states and effects on a GPT system to those living in a linear subspace defines a GPT subsystem (so that, e.g., a rebit~\cite{stueckelberg1960quantum,caves2001entanglement} system is a GPT subsystem of a qubit),
\item the set of all processes that are the image of a decoherence process on a GPT system (so that, e.g., a classical bit is a GPT subsystem of a quantum bit).
\item any fragment of a GPT system $G$ which is itself a valid GPT system constitutes a GPT subsystem of $G$. 
\end{enumerate}

We expand on these examples and prove that they are special cases of GPT subsystems below. Many of these examples are pictured in Figure~\ref{subsysExs}.

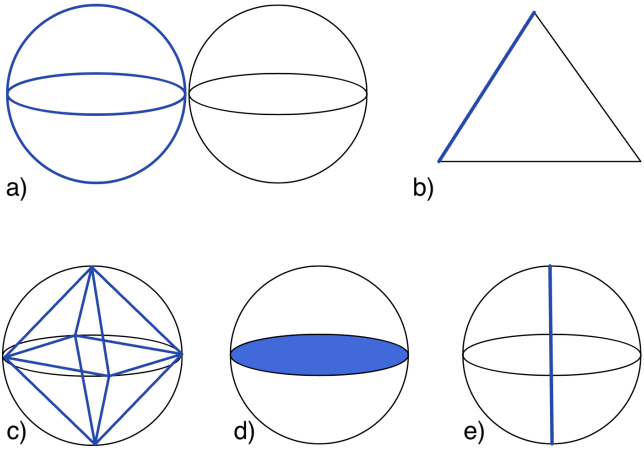
\begin{figure*}[htb!]
   \centering

\tikzset{every picture/.style={line width=0.5pt}} 

\begin{tikzpicture}[x=0.75pt,y=0.75pt,yscale=-1,xscale=1]
	
	\draw   (262.13,54.89) .. controls (262.13,34.96) and (278.34,18.8) .. (298.33,18.8) .. controls (318.33,18.8) and (334.54,34.96) .. (334.54,54.89) .. controls (334.54,74.82) and (318.33,90.97) .. (298.33,90.97) .. controls (278.34,90.97) and (262.13,74.82) .. (262.13,54.89) -- cycle ;
	\draw  [dash pattern={on 0.84pt off 2.51pt}]  (262.13,54.89) .. controls (264.7,42.18) and (330.94,41.16) .. (334.54,54.89) ;
	\draw    (262.13,54.89) .. controls (265.21,67.86) and (331.97,69.4) .. (334.54,54.89) ;
	\draw  [fill={rgb, 255:red, 74; green, 144; blue, 226 }  ,fill opacity=0.75 ] (185.61,55.91) .. controls (185.61,35.98) and (201.82,19.83) .. (221.81,19.83) .. controls (241.81,19.83) and (258.02,35.98) .. (258.02,55.91) .. controls (258.02,75.84) and (241.81,92) .. (221.81,92) .. controls (201.82,92) and (185.61,75.84) .. (185.61,55.91) -- cycle ;
	\draw  [dash pattern={on 0.84pt off 2.51pt}]  (185.61,55.91) .. controls (188.18,43.21) and (254.42,42.18) .. (258.02,55.91) ;
	\draw    (185.61,55.91) .. controls (188.69,68.89) and (255.45,70.43) .. (258.02,55.91) ;
	\draw  [fill={rgb, 255:red, 74; green, 144; blue, 226 }  ,fill opacity=0.75 ] (63.33,55.91) .. controls (63.33,35.98) and (79.54,19.83) .. (99.54,19.83) .. controls (119.54,19.83) and (135.74,35.98) .. (135.74,55.91) .. controls (135.74,75.84) and (119.54,92) .. (99.54,92) .. controls (79.54,92) and (63.33,75.84) .. (63.33,55.91) -- cycle ;
	\draw  [dash pattern={on 0.84pt off 2.51pt}]  (63.33,55.91) .. controls (65.9,43.21) and (132.15,42.18) .. (135.74,55.91) ;
	\draw    (63.33,55.91) .. controls (66.41,68.89) and (133.18,70.43) .. (135.74,55.91) ;
	\draw [line width=0.75]    (157.04,61.01) .. controls (149.74,61.25) and (149.73,53.21) .. (157.04,53.95) ;
	\draw [line width=0.75]    (157.04,61.01) -- (169.65,61.02) ;
	\draw [shift={(171.65,61.02)}, rotate = 180.03] [color={rgb, 255:red, 0; green, 0; blue, 0 }  ][line width=0.75]    (10.93,-3.29) .. controls (6.95,-1.4) and (3.31,-0.3) .. (0,0) .. controls (3.31,0.3) and (6.95,1.4) .. (10.93,3.29)   ;
	
	\draw  [draw opacity=0][fill={rgb, 255:red, 74; green, 144; blue, 226 }  ,fill opacity=0.75 ] (184.47,127) -- (220.19,162.02) -- (183.86,199.06) -- (148.15,164.04) -- cycle ;
	\draw [fill={rgb, 255:red, 74; green, 144; blue, 226 }  ,fill opacity=0.75 ] [dash pattern={on 0.84pt off 2.51pt}]  (207.26,154.55) -- (220.33,163.03) ;
	\draw [fill={rgb, 255:red, 74; green, 144; blue, 226 }  ,fill opacity=0.75 ]   (148,163.03) -- (161.07,170.7) ;
	\draw [fill={rgb, 255:red, 74; green, 144; blue, 226 }  ,fill opacity=0.75 ]   (161.07,170.7) -- (220.33,163.03) ;
	\draw [fill={rgb, 255:red, 74; green, 144; blue, 226 }  ,fill opacity=0.75 ] [dash pattern={on 0.84pt off 2.51pt}]  (148,163.03) -- (207.26,154.55) ;
	\draw [fill={rgb, 255:red, 74; green, 144; blue, 226 }  ,fill opacity=0.75 ]   (161.07,170.7) -- (184.47,127) ;
	\draw [fill={rgb, 255:red, 74; green, 144; blue, 226 }  ,fill opacity=0.75 ] [dash pattern={on 0.84pt off 2.51pt}]  (183.86,199.06) -- (207.26,154.55) ;
	\draw [fill={rgb, 255:red, 74; green, 144; blue, 226 }  ,fill opacity=0.75 ]   (161.07,170.7) -- (183.86,199.06) ;
	\draw [fill={rgb, 255:red, 74; green, 144; blue, 226 }  ,fill opacity=0.75 ]   (148,163.03) -- (184.47,127) ;
	\draw [fill={rgb, 255:red, 74; green, 144; blue, 226 }  ,fill opacity=0.75 ]   (184.47,127) -- (220.33,163.03) ;
	\draw [fill={rgb, 255:red, 74; green, 144; blue, 226 }  ,fill opacity=0.75 ] [dash pattern={on 0.84pt off 2.51pt}]  (184.47,127) -- (207.26,154.55) ;
	\draw [fill={rgb, 255:red, 74; green, 144; blue, 226 }  ,fill opacity=0.75 ]   (220.33,163.03) -- (183.86,199.06) ;
	\draw [fill={rgb, 255:red, 74; green, 144; blue, 226 }  ,fill opacity=0.75 ]   (148,163.03) -- (183.86,199.06) ;
	
	\draw   (148.13,162.89) .. controls (148.13,142.96) and (164.34,126.8) .. (184.33,126.8) .. controls (204.33,126.8) and (220.54,142.96) .. (220.54,162.89) .. controls (220.54,182.82) and (204.33,198.97) .. (184.33,198.97) .. controls (164.34,198.97) and (148.13,182.82) .. (148.13,162.89) -- cycle ;
	\draw  [dash pattern={on 0.84pt off 2.51pt}]  (148.13,162.89) .. controls (150.7,150.18) and (216.94,149.16) .. (220.54,162.89) ;
	\draw    (148.13,162.89) .. controls (151.21,175.86) and (217.97,177.4) .. (220.54,162.89) ;
	\draw  [draw opacity=0][fill={rgb, 255:red, 74; green, 144; blue, 226 }  ,fill opacity=0.75 ] (62.47,126) -- (98.19,161.02) -- (61.86,198.06) -- (26.15,163.04) -- cycle ;
	\draw [fill={rgb, 255:red, 74; green, 144; blue, 226 }  ,fill opacity=0.75 ] [dash pattern={on 0.84pt off 2.51pt}]  (85.26,153.55) -- (98.33,162.03) ;
	\draw [fill={rgb, 255:red, 74; green, 144; blue, 226 }  ,fill opacity=0.75 ]   (26,162.03) -- (39.07,169.7) ;
	\draw [fill={rgb, 255:red, 74; green, 144; blue, 226 }  ,fill opacity=0.75 ]   (39.07,169.7) -- (98.33,162.03) ;
	\draw [fill={rgb, 255:red, 74; green, 144; blue, 226 }  ,fill opacity=0.75 ] [dash pattern={on 0.84pt off 2.51pt}]  (26,162.03) -- (85.26,153.55) ;
	\draw [fill={rgb, 255:red, 74; green, 144; blue, 226 }  ,fill opacity=0.75 ]   (39.07,169.7) -- (62.47,126) ;
	\draw [fill={rgb, 255:red, 74; green, 144; blue, 226 }  ,fill opacity=0.75 ] [dash pattern={on 0.84pt off 2.51pt}]  (61.86,198.06) -- (85.26,153.55) ;
	\draw [fill={rgb, 255:red, 74; green, 144; blue, 226 }  ,fill opacity=0.75 ]   (39.07,169.7) -- (61.86,198.06) ;
	\draw [fill={rgb, 255:red, 74; green, 144; blue, 226 }  ,fill opacity=0.75 ]   (26,162.03) -- (62.47,126) ;
	\draw [fill={rgb, 255:red, 74; green, 144; blue, 226 }  ,fill opacity=0.75 ]   (62.47,126) -- (98.33,162.03) ;
	\draw [fill={rgb, 255:red, 74; green, 144; blue, 226 }  ,fill opacity=0.75 ] [dash pattern={on 0.84pt off 2.51pt}]  (62.47,126) -- (85.26,153.55) ;
	\draw [fill={rgb, 255:red, 74; green, 144; blue, 226 }  ,fill opacity=0.75 ]   (98.33,162.03) -- (61.86,198.06) ;
	\draw [fill={rgb, 255:red, 74; green, 144; blue, 226 }  ,fill opacity=0.75 ]   (26,162.03) -- (61.86,198.06) ;
	
	\draw [line width=0.75]    (120.04,168.01) .. controls (112.74,168.25) and (112.73,160.21) .. (120.04,160.95) ;
	\draw [line width=0.75]    (120.04,168.01) -- (132.65,168.02) ;
	\draw [shift={(134.65,168.02)}, rotate = 180.03] [color={rgb, 255:red, 0; green, 0; blue, 0 }  ][line width=0.75]    (10.93,-3.29) .. controls (6.95,-1.4) and (3.31,-0.3) .. (0,0) .. controls (3.31,0.3) and (6.95,1.4) .. (10.93,3.29)   ;
	
	\draw   (368.13,162.89) .. controls (368.13,142.96) and (384.34,126.8) .. (404.33,126.8) .. controls (424.33,126.8) and (440.54,142.96) .. (440.54,162.89) .. controls (440.54,182.82) and (424.33,198.97) .. (404.33,198.97) .. controls (384.34,198.97) and (368.13,182.82) .. (368.13,162.89) -- cycle ;
	\draw [fill={rgb, 255:red, 74; green, 144; blue, 226 }  ,fill opacity=0.75 ] [dash pattern={on 0.84pt off 2.51pt}]  (368.13,162.89) .. controls (370.7,150.18) and (436.94,149.16) .. (440.54,162.89) ;
	\draw [fill={rgb, 255:red, 74; green, 144; blue, 226 }  ,fill opacity=0.75 ]   (368.13,162.89) .. controls (371.21,175.86) and (437.97,177.4) .. (440.54,162.89) ;
	\draw [fill={rgb, 255:red, 74; green, 144; blue, 226 }  ,fill opacity=0.75 ]   (248.13,162.89) .. controls (251.21,175.86) and (317.97,177.4) .. (320.54,162.89) ;
	\draw [fill={rgb, 255:red, 74; green, 144; blue, 226 }  ,fill opacity=0.75 ]   (248.13,162.89) .. controls (250.7,150.18) and (316.94,149.16) .. (320.54,162.89) ;
	\draw [line width=0.75]    (339.04,167.01) .. controls (331.74,167.25) and (331.73,159.21) .. (339.04,159.95) ;
	\draw [line width=0.75]    (339.04,167.01) -- (351.65,167.02) ;
	\draw [shift={(353.65,167.02)}, rotate = 180.03] [color={rgb, 255:red, 0; green, 0; blue, 0 }  ][line width=0.75]    (10.93,-3.29) .. controls (6.95,-1.4) and (3.31,-0.3) .. (0,0) .. controls (3.31,0.3) and (6.95,1.4) .. (10.93,3.29)   ;
	
	\draw [color={rgb, 255:red, 74; green, 144; blue, 226 }  ,draw opacity=1 ][line width=1.5]    (494.33,126.8) -- (494.33,198.97) ;
	\draw  [fill={rgb, 255:red, 74; green, 144; blue, 226 }  ,fill opacity=1 ] (491.33,198.97) .. controls (491.33,197.32) and (492.68,195.97) .. (494.33,195.97) .. controls (495.99,195.97) and (497.33,197.32) .. (497.33,198.97) .. controls (497.33,200.63) and (495.99,201.97) .. (494.33,201.97) .. controls (492.68,201.97) and (491.33,200.63) .. (491.33,198.97) -- cycle ;
	\draw  [fill={rgb, 255:red, 74; green, 144; blue, 226 }  ,fill opacity=1 ] (491.33,128.97) .. controls (491.33,127.32) and (492.68,125.97) .. (494.33,125.97) .. controls (495.99,125.97) and (497.33,127.32) .. (497.33,128.97) .. controls (497.33,130.63) and (495.99,131.97) .. (494.33,131.97) .. controls (492.68,131.97) and (491.33,130.63) .. (491.33,128.97) -- cycle ;
	\draw   (568.13,162.89) .. controls (568.13,142.96) and (584.34,126.8) .. (604.33,126.8) .. controls (624.33,126.8) and (640.54,142.96) .. (640.54,162.89) .. controls (640.54,182.82) and (624.33,198.97) .. (604.33,198.97) .. controls (584.34,198.97) and (568.13,182.82) .. (568.13,162.89) -- cycle ;
	\draw  [dash pattern={on 0.84pt off 2.51pt}]  (568.13,162.89) .. controls (570.7,150.18) and (636.94,149.16) .. (640.54,162.89) ;
	\draw    (568.13,162.89) .. controls (571.21,175.86) and (637.97,177.4) .. (640.54,162.89) ;
	\draw [line width=0.75]    (532.04,168.01) .. controls (524.74,168.25) and (524.73,160.21) .. (532.04,160.95) ;
	\draw [line width=0.75]    (532.04,168.01) -- (544.65,168.02) ;
	\draw [shift={(546.65,168.02)}, rotate = 180.03] [color={rgb, 255:red, 0; green, 0; blue, 0 }  ][line width=0.75]    (10.93,-3.29) .. controls (6.95,-1.4) and (3.31,-0.3) .. (0,0) .. controls (3.31,0.3) and (6.95,1.4) .. (10.93,3.29)   ;
	
	\draw [color={rgb, 255:red, 74; green, 144; blue, 226 }  ,draw opacity=1 ][line width=1.5]    (604.33,127.97) -- (604.33,200.14) ;
	\draw  [fill={rgb, 255:red, 74; green, 144; blue, 226 }  ,fill opacity=1 ] (601.33,198.97) .. controls (601.33,197.32) and (602.68,195.97) .. (604.33,195.97) .. controls (605.99,195.97) and (607.33,197.32) .. (607.33,198.97) .. controls (607.33,200.63) and (605.99,201.97) .. (604.33,201.97) .. controls (602.68,201.97) and (601.33,200.63) .. (601.33,198.97) -- cycle ;
	\draw  [fill={rgb, 255:red, 74; green, 144; blue, 226 }  ,fill opacity=1 ] (601.33,127.97) .. controls (601.33,126.32) and (602.68,124.97) .. (604.33,124.97) .. controls (605.99,124.97) and (607.33,126.32) .. (607.33,127.97) .. controls (607.33,129.63) and (605.99,130.97) .. (604.33,130.97) .. controls (602.68,130.97) and (601.33,129.63) .. (601.33,127.97) -- cycle ;
	\draw [color={rgb, 255:red, 74; green, 144; blue, 226 }  ,draw opacity=1 ][line width=1.5]    (438.33,18.8) -- (438.33,90.97) ;
	\draw  [fill={rgb, 255:red, 74; green, 144; blue, 226 }  ,fill opacity=1 ] (435.33,90.97) .. controls (435.33,89.32) and (436.68,87.97) .. (438.33,87.97) .. controls (439.99,87.97) and (441.33,89.32) .. (441.33,90.97) .. controls (441.33,92.63) and (439.99,93.97) .. (438.33,93.97) .. controls (436.68,93.97) and (435.33,92.63) .. (435.33,90.97) -- cycle ;
	\draw  [fill={rgb, 255:red, 74; green, 144; blue, 226 }  ,fill opacity=1 ] (435.33,20.97) .. controls (435.33,19.32) and (436.68,17.97) .. (438.33,17.97) .. controls (439.99,17.97) and (441.33,19.32) .. (441.33,20.97) .. controls (441.33,22.63) and (439.99,23.97) .. (438.33,23.97) .. controls (436.68,23.97) and (435.33,22.63) .. (435.33,20.97) -- cycle ;
	\draw   (528.33,20.97) -- (589.53,51.46) -- (528.36,89.34) -- cycle ;
	\draw [color={rgb, 255:red, 74; green, 144; blue, 226 }  ,draw opacity=1 ][line width=1.5]    (528.33,18.8) -- (528.33,90.97) ;
	\draw  [fill={rgb, 255:red, 74; green, 144; blue, 226 }  ,fill opacity=1 ] (525.33,90.97) .. controls (525.33,89.32) and (526.68,87.97) .. (528.33,87.97) .. controls (529.99,87.97) and (531.33,89.32) .. (531.33,90.97) .. controls (531.33,92.63) and (529.99,93.97) .. (528.33,93.97) .. controls (526.68,93.97) and (525.33,92.63) .. (525.33,90.97) -- cycle ;
	\draw  [fill={rgb, 255:red, 74; green, 144; blue, 226 }  ,fill opacity=1 ] (525.33,20.97) .. controls (525.33,19.32) and (526.68,17.97) .. (528.33,17.97) .. controls (529.99,17.97) and (531.33,19.32) .. (531.33,20.97) .. controls (531.33,22.63) and (529.99,23.97) .. (528.33,23.97) .. controls (526.68,23.97) and (525.33,22.63) .. (525.33,20.97) -- cycle ;
	\draw [line width=0.75]    (475.04,59.01) .. controls (467.74,59.25) and (467.73,51.21) .. (475.04,51.95) ;
	\draw [line width=0.75]    (475.04,59.01) -- (487.65,59.02) ;
	\draw [shift={(489.65,59.02)}, rotate = 180.03] [color={rgb, 255:red, 0; green, 0; blue, 0 }  ][line width=0.75]    (10.93,-3.29) .. controls (6.95,-1.4) and (3.31,-0.3) .. (0,0) .. controls (3.31,0.3) and (6.95,1.4) .. (10.93,3.29)   ;

	\draw (47,8) node [anchor=north west][inner sep=0.75pt]   [align=left] {a)};
	\draw (407,8) node [anchor=north west][inner sep=0.75pt]   [align=left] {b)};
	\draw (17,118) node [anchor=north west][inner sep=0.75pt]   [align=left] {c)};
	\draw (242,118) node [anchor=north west][inner sep=0.75pt]   [align=left] {d)};
	\draw (462,118) node [anchor=north west][inner sep=0.75pt]   [align=left] {e)};

\end{tikzpicture}
\bigskip
   \caption{Five examples of GPT subsystems (in blue) of larger systems (in black). a) One qubit is a GPT subsystem of two qubits. b) A two-level classical system is a GPT subsystem of a three-level classical system. c) The convex hull of the stabilizer states and effects forms a GPT subsystem of a qubit. d) The rebit is a GPT subsystem of a qubit. e) A classical bit is a GPT subsystem of a qubit.   }
   \label{subsysExs}
\end{figure*}

One might wonder if the notion we have defined is {\em too} general, so much so that the terminology of `subsystem' might be inappropriate. For instance, only examples 1, 2, 3, and (arguably) 6 above are conventionally considered subsystems. In the context of quantum theory, for example, the notion of a subsystem typically comes with algebraic structure. However, in the context of GPTs, a system generally has no structure beyond convex structure, and so it is reasonable to require that a ``GPT subsystem'' require only linear embeddability but nothing specific beyond that (such as preservation of any particular algebraic structure that may be present in special cases, like in quantum theory). 
As an example (that we return to later in this section), the rebit is typically not considered to be a {\em quantum} subsystem of a qubit, but it makes sense to consider it a {\em GPT} subsystem---and according to our definition, it is.

This notion of GPT subsystem is also motivated by its usefulness in the study of noncontextuality, as we will see in later sections.
\bigskip 

\textbf{Example 1.} To see that a component GPT system $F$ of a given composite GPT system $F\otimes G$ is always a GPT subsystem of $F\otimes G$, it suffices to note that the composition operation in any GPT, $\otimes$, is a bilinear map on states $\otimes:\Omega_F\times\Omega_G \to \Omega_{F\otimes G}$ and on effects $\otimes:\mathcal{E}_F\times\mathcal{E}_G\to \mathcal{E}_{F\otimes G}$, where $\otimes::(u_F,u_G)\mapsto u_F\otimes u_G = u_{F\otimes G}$, and where $(e_F\otimes e_G)[(s_F\otimes s_b)] = e_F[s_F]e_G[s_G]$. (In this and the other examples in this section, we view effects as linear functionals on states.) These  properties hold for the composition rule in arbitrary GPTs, including quantum theory, the minimal and maximal tensor products~\cite{barnum2011information}, composition rules between the minimal and maximal tensor products, and even for composition rules in tomographically nonlocal GPTs. So as we now prove, it follows that $F$ is always a GPT subsystem of $F\otimes G$.
One can define a GPT embedding of $F$ into $F\otimes G$ given by the linear map $\iota : \Omega_F \to \Omega_{FG} :: s \mapsto s\otimes s'$ for any fixed normalized $s'\in \bar{\Omega}_G$, and the linear map $\kappa:\mathcal{E}_F\to \mathcal{E}_{FG}:: e \mapsto e\otimes u_G$. These maps preserve the probabilistic predictions, since they satisfy ${e[s] = \kappa(e)[\iota(s)]}$ as  ${\kappa(e)[\iota(s)]= e\otimes u[s\otimes s']=e[s]u[s']=e[s]}$. They also act on the unit effect appropriately, since they satisfy $\kappa(u_F)=u_{FG}$, as $\kappa(u_F)=u_F\otimes u_G = u_{FG}$.
\bigskip

\textbf{Example 2.} Similarly, we now show that a direct sum factor $F$ of a given GPT system $F\oplus G$ is indeed a GPT subsystem of $F\oplus G$. 
The properties of the $\oplus$ operation that we need are merely that ${(e_F\oplus e_G)[(s_F\oplus s_G)]=e_F(s_F)+ e_G(s_G)}$
and $u_F\oplus u_G = u_{F\oplus G}$
for effects.
Hence, we can define a GPT embedding of $F$ into $F\oplus G$ given by the linear map $\iota : \Omega_F\to \Omega_{F\oplus G} :: s \mapsto s\oplus 0$ and the linear map $\kappa : \mathcal{E}_F \to \mathcal{E}_{F\oplus G} :: e \mapsto e\oplus u_G$. This is a valid embedding, since these  properties of the direct sum give $e[s] = \kappa(e)[\iota(s)]$ as  $\kappa(e)[\iota(s)]= e\oplus u[s\oplus 0]=e[s]+u[0]=e[s]$ and $\kappa[u_F]=u_F\oplus u_G = u_{F\oplus G}$.

That a tensor factor of some larger system is a subsystem is completely standard. That components of a direct sum system should be considered subsystems is also standard. The direct sum structure $F\oplus G$ entails that the system is either described by a state in $F$ {\em or} by a state in $G$. As $F$ forms a space of states, each of which is a possible description of the system $F\oplus G$, it makes sense to consider $F$ as a subsystem of $F\oplus G$. Consider for example a classical $d$-level system and a classical $D$-level system with $D> d$. A $d$-level system can be viewed as living inside the $D$-level system simply by ensuring that only $d$ of the possible states are ever accessed in one's experiment. Since classical systems can always be decomposed as direct sums, $\Delta_D=\Delta_d\oplus\Delta_{D-d}$, this is an instance of $F$ (the $d$-level system) being a subsystem of $F \oplus G$ (the $D$-level system). Note that the direct sum structure also arises in superselected quantum systems~\cite{bartlett2007reference}, where the components are not necessarily classical.
\bigskip

\textbf{Example 3.} That our notion of a GPT subsystem also deems virtual quantum systems~\cite{zanardi2001virtual} to be subsystems is quite immediate, as these are isomorphic to any other quantum systems of the same dimension, and so trivially admit a linear embedding in our sense.
So, our notion subsumes coarse-grained degrees of freedom encoded in some (potentially highly delocalized) larger set of fundamental systems. This includes quantum codes~\cite{Schumacher1995},  decoherence-free subsystems~\cite{Lidar2003}, and time-delocalized subsystems~\cite{Oreshkov2019timedelocalized}, as these are all virtual subsystems~\cite{zanardi2001virtual}. Consider for example the simple quantum code that encodes a logical qubit into three physical qubits via the linear map taking  $\ket{0} \rightarrow \ket{0}\otimes\ket{0}\otimes\ket{0}$ and $\ket{1} \rightarrow \ket{1}\otimes\ket{1}\otimes\ket{1}$. This defines a qubit on the Hilbert space defined by $\mathsf{Span}[\ket{0}\otimes\ket{0}\otimes\ket{0},\ket{1}\otimes\ket{1}\otimes\ket{1}]$. This qubit is a GPT subsystem of the three-qubit Hilbert space $\mathbb{C}^8$. Indeed, the state map for the GPT subsystem embedding can simply be taken to be the encoding map itself, and the effect map can be taken to be the decoding map (acting contravariantly on effects).
\bigskip

\textbf{Example 4.} Another example of a GPT subsystem arises when considering the stabilizer subtheory of quantum theory. Consider a stabilizer bit~\cite{gottesman1997stabilizer,gottesman1998heisenberg}, defined as the convex hull of the six eigenstates of Pauli operators on a qubit, together with the convex hull of the six effects that are projectors onto these (together with null and unit effects). According to our definition, the stabilizer qubit is a GPT subsystem of a standard quantum bit. Similarly, every stabilizer system of dimension $d$ is a GPT subsystem of every quantum system of dimension $d$.
\bigskip

\textbf{Example 5.} An example of a GPT subsystem arises when considering real quantum theory. Consider a real bit, or rebit~\cite{stueckelberg1960quantum,caves2001entanglement}, defined by the set of density matrices and POVM elements which can be written as linear combinations of the Pauli $I,X,$ and $Z$ operators (with no component of the Pauli $Y$ operator). 
The rebit is a GPT subsystem of a qubit. To see this, consider the natural representation of a given state (or effect) $O$ as a 3-dimensional real-valued vector 
$(1,\mathsf{Tr}[OX]),\mathsf{Tr}[OZ])$; then, one can embed the states and effects of the rebit back into the full qubit simply by associating to each the Bloch vector $(1,\mathsf{Tr}[OX]),0,\mathsf{Tr}[OZ])$. This embedding clearly satisfies all the necessary properties.
Similarly, any $d$-dimensional system in real quantum theory is a GPT subtheory of any $d$-dimensional system in quantum theory. 
\bigskip

\textbf{Example 6.} Another example of a GPT subsystem is that of a decohered system, which can naturally be taken to live within the fundamental (pre-decohered) system, and consequently viewed as a subsystem thereof. The notion of decoherence (relative to some fixed basis) in quantum theory is well established, and corresponds to destroying all off-diagonal terms (relative to that basis) in the matrix representation of a state or effect. Within the context of general GPTs, one can generalize this idea to define hyperdecoherence processes~\cite{zyczkowski2008quartic,Selby2017Entanglement,lee2018no,hefford2020hyper}, which correspond to idempotent maps taking GPT processes in the fundamental theory to GPT processes in the decohered theory. The fact that hyperdcoherence processes always leads to GPT subsystems of the fundamental theory follows from Ref.~\cite{muller2023testing}, which proved that there is always a linear embedding from a hyperdecohered system into the fundamental system that preserves the probabilities. We will expand on all of this in Section~\ref{beyondqt}, where we also draw out the important consequences of this fact for the study of noncontextuality.
\bigskip

\textbf{Example 7.} Finally, any fragment of a GPT $G$ which satisfies all the mathematical properties of a GPT constitutes a GPT subsystem of $G$, since the one can always take the identity embedding of a relatively tomographic GPT fragment into itself as a valid GPT embedding.

\subsection{Two GPT systems which are subsystems of each other are equivalent}

Any good notion of subsystem should satisfy the condition that if two systems are subsystems of each other, then they are the same system (at least up to isomorphism). Here we show that our notion of subsystem does indeed satisfy this property. 

 For the proof of the main proposition in this Section, we will need the following Lemma:
\begin{restatable}{lemma}{AutomorphismsAreIso}\label{lem:AutomorphismsAreIso}
Any embedding $(\iota,\kappa)$ of a GPT system $(\Omega,\mathcal{E},p)$ into itself is
    a GPT isomorphism. 
\end{restatable}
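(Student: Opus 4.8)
The plan is to show that the two component maps $\iota$ and $\kappa$ are each invertible (as a state map and an effect map respectively); once that is done, $(\iota,\kappa)$ is by definition an invertible GPT embedding of $G$ into itself, i.e.\ a GPT isomorphism. First I would record that, since the GPT is tomographic, both maps are injective: if $\iota(s)=\iota(s')$ then $p(s,e)=p(\iota(s),\kappa(e))=p(\iota(s'),\kappa(e))=p(s',e)$ for all $e\in\mathcal{E}$, so $s=s'$ by tomography, and dually for $\kappa$ using $\kappa(e)=\kappa(e')$. Writing $W:=\mathsf{Span}[\Omega]$ and $W':=\mathsf{Span}[\mathcal{E}]$, tomography makes $p$ a nondegenerate bilinear pairing $W\times W'\to\mathds{R}$, so $\dim W=\dim W'$, and hence the injective linear endomorphisms $\iota\colon W\to W$ and $\kappa\colon W'\to W'$ are in fact linear automorphisms.

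Next I would extract two determinant inequalities from the ``into'' conditions. The set $\Omega$ is a convex body in $W$: since $\mathbf 0\notin\mathsf{AffSpan}[\bar\Omega]$, the affine span of $\bar\Omega$ is a hyperplane of $W$ missing the origin, and adjoining $\mathbf 0$ yields a full-dimensional convex hull, so $\mathrm{vol}(\Omega)$ is positive and finite. From $\iota(\Omega)\subseteq\Omega$ and $\mathrm{vol}(\iota(\Omega))=|\det\iota|\,\mathrm{vol}(\Omega)$ we get $|\det\iota|\le 1$. Likewise $\mathcal{E}$ is a compact convex set containing $\mathbf 0$, so $\mathsf{AffSpan}[\mathcal{E}]=\mathsf{Span}[\mathcal{E}]=W'$ and $\mathcal{E}$ is a convex body in $W'$; from $\kappa(\mathcal{E})\subseteq\mathcal{E}$ we get $|\det\kappa|\le 1$.

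Then I would use probability preservation to pin these down. Identifying $W'\cong W^*$ via $e\mapsto p(\,\cdot\,,e)$, the identity $p(s,e)=p(\iota(s),\kappa(e))$ for all $s,e$ says exactly that $\kappa$ is the inverse transpose (inverse adjoint) of $\iota$ with respect to the pairing $p$; in particular $\det\kappa=(\det\iota)^{-1}$, so $|\det\iota|\cdot|\det\kappa|=1$. Together with the two inequalities this forces $|\det\iota|=|\det\kappa|=1$. Consequently $\mathrm{vol}(\iota(\Omega))=\mathrm{vol}(\Omega)$; since $\iota(\Omega)$ is compact (hence closed), is contained in the convex body $\Omega$, and a closed full-measure subset of a convex body must contain the interior of that body and hence (on taking closures) equal it, we conclude $\iota(\Omega)=\Omega$. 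The identical argument gives $\kappa(\mathcal{E})=\mathcal{E}$. Therefore $\iota^{-1}(\Omega)=\Omega\subseteq\Omega$, so $\iota^{-1}$ is a state map, and $\kappa^{-1}(\mathcal{E})=\mathcal{E}$ with $\kappa^{-1}(u)=u$ (since $\kappa(u)=u$), so $\kappa^{-1}$ is an effect map. Hence $\iota$ and $\kappa$ are invertible and $(\iota,\kappa)$ is a GPT isomorphism.

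I expect the main obstacle to be exactly the step from ``injective and into'' to ``onto'': injectivity alone does not suffice (e.g.\ $x\mapsto x/2$ on $[0,1]$), and the essential leverage comes from the coupling between $\iota$ and $\kappa$ forced by probability- and unit-preservation, which the determinant/volume bookkeeping is designed to exploit. A secondary, routine point to state carefully is the elementary convex-geometry fact that a closed subset of a convex body of equal volume is the whole body.
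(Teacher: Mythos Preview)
Your proof is correct and takes a genuinely different route from the paper's. The paper defines an operational distance $D_\mathcal{E}(s_1,s_2):=\max_{e\in\mathcal{E}}|p(e,s_1)-p(e,s_2)|$, argues that state maps contract it and that shrinking the effect set can only decrease it, and then uses probability preservation in the form $D_\mathcal{E}(s_1,s_2)=D_{\kappa(\mathcal{E})}(\iota(s_1),\iota(s_2))$ to force all inequalities to be equalities, yielding $\iota(\Omega)=\Omega$ and $\kappa(\mathcal{E})=\mathcal{E}$. You instead pass through linear algebra and convex geometry: the tomographic pairing gives $\dim\mathsf{Span}[\Omega]=\dim\mathsf{Span}[\mathcal{E}]$ so the injective $\iota,\kappa$ are linear automorphisms; the inclusion conditions give $|\det\iota|,|\det\kappa|\le 1$; and probability preservation, read as $\kappa$ being the inverse transpose of $\iota$ under the pairing, gives $|\det\iota|\cdot|\det\kappa|=1$, pinning both to $1$ and forcing equality of volumes. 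Both arguments exploit exactly the same coupling between $\iota$ and $\kappa$ that probability preservation provides; the paper encodes it in a metric identity, you in a determinant identity. Your version has the advantage of being entirely elementary (no need to verify contraction properties of $D_\mathcal{E}$ for arbitrary state maps in possibly restricted GPTs), while the paper's has a more operational flavour and avoids the small convex-geometry lemma about closed full-measure subsets of convex bodies.
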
 

The proof of this is in Appendix~\ref{sec:lemmaproof}   

We can now prove the main result of this section. 

\begin{prop}\label{equivalenceAssubsystems}
        Two GPT systems are equivalent if and only if they are GPT subsystems of each other.
    \end{prop}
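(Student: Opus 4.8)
The plan is to prove the two implications separately; the ``only if'' direction is essentially immediate, and all the content sits in the ``if'' direction, where Lemma~\ref{lem:AutomorphismsAreIso} does the heavy lifting.

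For the ``only if'' direction: if $F$ and $G$ are equivalent then by Definition~\ref{defn:GPTsystemequivalence} there is an invertible GPT embedding $(\iota,\kappa)\colon F\to G$. This is in particular a GPT embedding, so $F\subseteq G$; and since $\iota,\kappa$ are invertible state/effect maps, their linear inverses $(\iota^{-1},\kappa^{-1})$ are again state/effect maps which preserve probabilities (as spelled out in the discussion around Eq.~\eqref{twodirections}), hence form a GPT embedding $G\to F$, giving $G\subseteq F$. There is no obstacle here.

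For the ``if'' direction, suppose $F\subseteq G$ and $G\subseteq F$, witnessed by GPT embeddings $(\iota,\kappa)\colon F\to G$ and $(\delta,\gamma)\colon G\to F$. The idea is to extract an invertible embedding from these. First I would form the two round-trip composites: by transitivity of GPT embeddings, $(\delta\circ\iota,\,\gamma\circ\kappa)$ is a GPT embedding of $F$ into itself and $(\iota\circ\delta,\,\kappa\circ\gamma)$ is a GPT embedding of $G$ into itself, and by Lemma~\ref{lem:AutomorphismsAreIso} both are GPT isomorphisms. In particular $\iota\circ\delta$ is a \emph{full} state map, so $(\iota\circ\delta)(\Omega_G)=\Omega_G$; since $\delta(\Omega_G)\subseteq\Omega_F$ and $\iota(\Omega_F)\subseteq\Omega_G$, this sandwiches $\iota(\Omega_F)=\Omega_G$, i.e.\ $\iota$ is full. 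And $\delta\circ\iota$ being an isomorphism (hence injective) forces $\iota$ to be faithful. By the closing remark of Definition~\ref{def:statemaps}, a full and faithful state map is invertible, so $\iota$ is an invertible state map; the symmetric argument, using Definition~\ref{def:effectmaps} in place of Definition~\ref{def:statemaps}, shows $\kappa$ is an invertible effect map. Hence $(\iota,\kappa)$ is itself an invertible GPT embedding, and $F$ and $G$ are equivalent by Definition~\ref{defn:GPTsystemequivalence}.

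The step requiring the most care — and the reason Lemma~\ref{lem:AutomorphismsAreIso} is indispensable — is passing from ``$\iota$ is a linear bijection'' to ``$\iota$ is an invertible \emph{state map}'': in the GPT setting the inverse must itself be a valid state map, a strictly stronger requirement (as the partially-depolarizing-channel example in the text illustrates). One therefore really needs fullness of $\iota$ \emph{as a state map}, i.e.\ $\iota(\Omega_F)=\Omega_G$, not merely linear surjectivity onto the span — and it is precisely the fact that the round-trip composite $\iota\circ\delta$ is a genuine GPT automorphism, hence full, that delivers this. (Faithfulness of $\iota$ is comparatively cheap: it already follows from $G$ being a proper, tomographic GPT, or directly from injectivity of $\delta\circ\iota$.) As a stylistic alternative, the same conclusion can be reached by the purely categorical two-sided-inverse argument — noting that $(\delta\circ\iota)^{-1}\circ\delta$ is a left inverse and $\delta\circ(\iota\circ\delta)^{-1}$ a right inverse of $\iota$, both composites of state maps — but the ``full and faithful'' route is shorter and stays closer to the notions already set up in the excerpt.
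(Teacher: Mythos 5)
Your proof is correct and takes essentially the same approach as the paper's own proof: form the round-trip composites, apply Lemma~\ref{lem:AutomorphismsAreIso} to conclude they are isomorphisms, and then deduce that the individual embedding maps are full and faithful (hence invertible). The only difference is presentational---you spell out the sandwiching $\Omega_G=(\iota\circ\delta)(\Omega_G)\subseteq\iota(\Omega_F)\subseteq\Omega_G$ explicitly, whereas the paper invokes it more tersely via ``similarly''---but the argument and the key lemma are identical.
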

  
\proof 
Suppose that the GPT systems ${F:=(\Omega_F,\mathcal{E}_F, p_F)}$ and ${G:=(\Omega_G,\mathcal{E}_G, p_G)}$ are equivalent. Then, by definition, we have a GPT isomorphism which embeds $F$ into $G$ and $G$ into $F$. Therefore $F$ is a GPT subsystem of $G$ and $G$ is a GPT subsystem of $F$.

Now, let us prove the converse. That $F$ is a GPT subsystem of $G$ means there is a (faithful) embedding $(\iota_1,\kappa_1): F\to G$, and that $G$ is a subsystem of $F$ means that there is a (faithful) embedding $(\iota_2,\kappa_2): G\to F$. We want to show that these facts give us an isomorphism between the two GPT systems.
    
We know that the $\iota_i$ and $\kappa_i$ are injective, since GPT embeddings are necessarily faithful. What remains to be shown is that $\iota_i$ and $\kappa_i$ are surjective state and effect maps, respectively. 

To see this, consider the composite embedding of $F$ into itself given by $(\iota_2\circ\iota_1,\kappa_2\circ\kappa_1):F\to F$. By Lemma~\ref{lem:AutomorphismsAreIso}, this self-embedding is in fact an isomorphism. In particular, this tells us that $\iota_2\circ\iota_1$ and $\kappa_2\circ\kappa_1$ are surjective, which implies that $\iota_2$ and $\kappa_2$ are surjective. Similarly, we can consider the composite embedding of $G$ into itself to show that $\iota_1$ and $\kappa_1$ are surjective. 

It immediately follows that $(\iota_1,\kappa_1)$ and $(\iota_2,\kappa_2)$ are (possibly distinct) isomorphisms between $F$ and $G$. Hence, $F$ and $G$ are equivalent GPT systems.
\endproof

\section{Shadows of GPT fragments}\label{se:shadow}

For any particular experiment on a given system, there is some true GPT describing the system, whether or not one knows what this GPT is. The laboratory operations one implements in the experiment will then be represented by a fragment of this true GPT. The states (effects) in this fragment will generally not be tomographically complete for the true GPT system.

If a given pair of distinct states (effects) in one's experiment cannot be discriminated by any of the effects (states) in the experiment, then those states (effects) will appear to be exactly identical within the context of the given experiment. In this case, the states (effects) will appear to exhibit operational identities that are not genuine---i.e., that do not hold in the true GPT.
If one blindly trusts these apparent operational equivalences, then the GPT description one would give to the experiment will be incorrect---it will in fact be given by a particular informationally lossy function of the true GPT fragment describing the scenario. We will call this informationally lossy map the {\em shadow map}. The image of this map on the fragment in one's experiment captures how the GPT states and effects {\em appear} relative to the effects and states (respectively) one has implemented in one's experiment---although, as just described, this apparent characterization will necessarily be incorrect if the states and/or effects are not tomographic for each other. 

If the states and effects are tomographic for each other, then this kind of mischaracterization cannot occur. We say that such states and effects are {\em relatively tomographic}, or simply that the fragment containing them is relatively tomographic.

Before formalizing these ideas fully, we present an illustrative example.

\subsection{Example 1: a failure of relative tomographic completeness} \label{example1}

To get some intuition for what happens when the states and effects in one's experiment are not relatively tomographic, consider a first simple example.

Imagine that one's experimental apparatus can prepare all quantum states, but that one's measurement devices can only measure the Pauli $Z$ observable. The set of states one can access in such an experiment is the full Bloch ball, while the set of effects contains only $\ket{0}\bra{0}$ and $\ket{1}\bra{1}$, as well as the effects in the convex hull of these two together with the null effect and unit effect (as these can be obtained by post-processing). This fragment of states and effects are represented in the Bloch sphere in Figure~\ref{fig:exSimpP}. 

\begin{figure}[htb!]
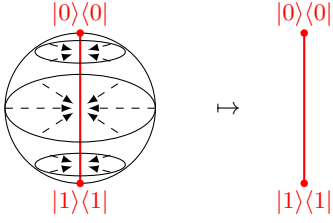

   \centering
   \begin{tabular}{ccc}
    \btp
    \begin{pgfonlayer}{nodelayer}
        \node[shape=circle, fill=red, inner sep =1pt] (0) at (0,2) {};
        \node[shape=circle, fill=red, inner sep =1pt] (1) at (0,-2) {};
        \node [style=none,font=\color{red}] (label0) at (0,2.5) {$\ket{0}\!\!\bra{0}$};
        \node [style=none,font=\color{red}] (label1) at (0,-2.5) {$\ket{1}\!\!\bra{1}$};
        \node [style=none] (aux1) at (2,0) {};
        \node [style=none] (aux2) at (-2,0) {};
        \node[style=none] (aux3) at (1,-0.65) {};
        \node[style=none] (aux4) at (1,0.65) {};
        \node [style=none] (aux5) at (-1,0.65) {};
        \node [style=none] (aux6) at (-1,-0.65) {};
        \node [style=circle,fill=none,inner sep=2pt] (centre) at (0,0) {};
        \node [style=circle,fill=none,inner sep=2pt] (centre2) at (0,1.5) {};
        \node [style=none] (aux3-2) at (0.8,1.25) {};
        \node [style=none] (aux4-2) at (-0.8,1.25) {};
        \node [style=none] (aux5-2) at (0.8,1.7) {};
        \node [style=none] (aux6-2) at (-0.8,1.7) {};
        \node [style=circle,fill=none,inner sep=2pt] (centre3) at (0,-1.5) {};
        \node [style=none] (aux3-3) at (0.8,-1.25) {};
        \node [style=none] (aux4-3) at (-0.8,-1.25) {};
        \node [style=none] (aux5-3) at (0.8,-1.7) {};
        \node [style=none] (aux6-3) at (-0.8,-1.7) {};
    \end{pgfonlayer}
    \begin{pgfonlayer}{edgelayer}
        \draw [fill=none](0,0) circle (2.0) node [black] {};
        \draw [thick,color=red] (0.center) to (1.center) {};
        \draw [fill=none] (0,0) ellipse (2cm and 0.9cm) {};
        \draw [fill=none] (0,1.5) ellipse (1.2cm and 0.3cm) {};
        \draw [fill=none] (0,-1.5) ellipse (1.2cm and 0.3cm) {};
        \draw [dashed,->, >=latex] (aux1.center) to (centre);
        \draw [dashed,->, >=latex] (aux2.center) to (centre);
        \draw [dashed,->,>=latex] (aux3.center) to (centre);
        \draw [dashed,->,>=latex] (aux4.center) to (centre);
        \draw [dashed,->,>=latex] (aux5.center) to (centre);
        \draw [dashed,->,>=latex] (aux6.center) to (centre);
        \draw [dashed,->,>=latex] (aux3-2.center) to (centre2);
        \draw [dashed,->,>=latex] (aux4-2.center) to (centre2);
        \draw [dashed,->,>=latex] (aux5-2.center) to (centre2);
        \draw [dashed,->,>=latex] (aux6-2.center) to (centre2);
        \draw [dashed,->,>=latex] (aux3-3.center) to (centre3);
        \draw [dashed,->,>=latex] (aux4-3.center) to (centre3);
        \draw [dashed,->,>=latex] (aux5-3.center) to (centre3);
        \draw [dashed,->,>=latex] (aux6-3.center) to (centre3);
    \end{pgfonlayer}
\etp & \quad\quad$\mapsto$\quad\quad &
\btp
\begin{pgfonlayer}{nodelayer}
    \node[shape=circle, fill=red, inner sep =1pt] (0) at (0,2) {};
        \node[shape=circle, fill=red, inner sep =1pt] (1) at (0,-2) {};
        \node [style=none,font=\color{red}] (label0) at (0,2.5) {$\ket{0}\!\!\bra{0}$};
        \node [style=none,font=\color{red}] (label1) at (0,-2.5) {$\ket{1}\!\!\bra{1}$};
\end{pgfonlayer}
\begin{pgfonlayer}{edgelayer}
    \draw [thick,color=red] (0.center) to (1.center) {};
\end{pgfonlayer}
\etp\\
   \end{tabular}
   \vspace{5pt}
\caption{If one restricts attention to only measurements of the $Z$ component of a quantum state, then any two quantum states with the same $Z$ component are informationally equivalent, regardless of their $X$ and $Y$ component. Consequently, the entire Bloch ball of states appears informationally equivalent to a single classical bit of states (whose extremal states correspond to eigenstates of the Pauli $Z$ operator).}
   \vspace{10pt}
   \label{fig:exSimpP}
\end{figure}

The statistics one can observe in any such experiment will necessarily be identical for any two states whose $Z$ component are the same, regardless of their $X$ and $Y$ components, since no measurements in the experiment are sensitive to the latter. Consequently, this $Z$ component of the state is the only feature of a given state that is relevant for the empirical data one can observe. One can then define the equivalence class of states which are identical relative to these measurements---one such class for every possible $Z$ value. Identifying all states with the same value of $Z$ results in a new `apparent' state space which is simply a line segment containing the convex hull of the states $\ket{0}\bra{0}$ and $\ket{1}\bra{1}$. Geometrically, these equivalence classes of states form disks at a fixed height, and one can choose a canonical representative of each equivalence class by orthogonally projecting the Bloch ball onto the line segment defined by the effects, as shown in Figure~\ref{fig:exSimpP}. This is an example of a shadow map, as defined formally in the next section.

In short, the shadow map applied to the {\em true} GPT states and effects in one's experiment describes how those states and effects will appear, provided that one does not have access to any information aside from the data generated by combinations of those particular states and effects. 
 
\subsection{From GPT fragments to their shadows}

In this subsection, we will define the notion of a shadow map and a GPT fragment shadow more formally. We begin by formalizing the notion of equivalence exemplified in the previous subsection and defining quotienting maps relative to it. This provides a first definition of GPT shadows. We then give a simpler definition  that highlights the key abstract features of a shadow, and then show that the two definitions are essentially equivalent.

As motivated in the previous subsection, the idea of a shadow is to  construct a proper GPT that reproduces the predictions of the fragment by taking the original fragment and identifying states (and effects) which cannot be distinguished by any of the effects (resp. states) in the theory. 
Formally, this means that we define an equivalence relation on the states (effects) and quotient with respect to this equivalence relation. That is, for a fragment $f=(\Omega_f,\mathcal{E}_f,p_f)$ we define an equivalence relation on states as
\begin{equation}\label{equivstates}
s_1   \backsim s_2 \iff p_f(s_1,e) = p_f(s_2,e)\ \ \forall e\in \mathcal{E}_f.
\end{equation}
Since the right domain of $p_f$ is $\mathsf{Span}[\mathcal{E}_f]$, this condition is equivalent to the condition that $p_f(s_1-s_2,\cdot )=0$, or in other words, the condition that $s_1-s_2 \in \mathsf{Ker}_L[p_f]$, where this denotes the left kernel of $p_f$. This mathematical reframing is useful, because it is now clear that what we are doing is in fact a linear quotient with respect to the subspace $\mathsf{Ker}_L[p_f]$. 
We denote the linear map defined by quotienting in this manner by $\backsim:\Omega_f \to \Omega_f\ns/\ns\backsim$. 
(A more explicit notation would be $\sim_{\mathsf{Ker}_L[p_f]}$, but as we only have a single notion of quotienting for states in this work, we leave the subscript implicit. Also, the use of the upside down symbol $\backsim$ is to distinguish from the usual notion of quotienting $\sim$ that takes unquotiented operational theories to GPTs~\cite{chiribella2010probabilistic,Schmid2024structuretheorem}.)

Similarly, we define an equivalence relation (and an equivalent linear algebraic reframing of it) for the effects in the fragment as
\begin{align}\label{equiveffects}
e_1 \backsim e_2 &\iff p_f(s,e_1)=p_f(s,e_2)\ \ \forall s\in \Omega_f \\
&\iff e_1-e_2 \in \mathsf{Ker}_R[p_f],
\end{align}
where we have introduced the obvious notation for the right Kernel of $p_f$.
Identifying effects which cannot be distinguished by states in the fragment is done by quotienting relative to this equivalence relation; this quotienting defines a linear map that we also denote $\backsim:\mathcal{E}_f \to \mathcal{E}_f\ns/\ns\backsim$. 
(A more explicit notation one could also use would be $\sim_{\mathsf{Ker}_R[p_f]}$.)

It is critical to note that this quotienting is distinct from the one that takes unquotiented operational theories to quotiented operational theories~\cite{chiribella2010probabilistic,Schmid2024structuretheorem} and that is also critically important for the study of noncontextuality~\cite{gencontext,SchmidGPT,schmid2020unscrambling}. Here, the equivalence relation one quotients relative to is defined by the specific experimental scenario, {\em not} by the set of all possible preparation and measurement procedures one could in principle have done on the system in question. Moreover, the quotienting we consider in this work maps one between two quotiented operational theories: from the true quotiented operational theory (the true GPT) to an {\em apparent} quotiented operational theory (the apparent GPT).

\sloppy The two quotients defined above map the states (effects) of the fragment to the states (effects) in its shadow, but it remains to specify the probability rule for the shadow. We define the probability rule simply by demanding that it assigns the same probability for the equivalence classes as the original rule assigned for the elements of those classes. That is, if we explicitly denote the equivalence classes of states and effects as $\btilde{s}$ and $\btilde{e}$, respectively, then the probability rule $\btilde{p_f}$ for the shadow is defined by ${\btilde{p_f}(\btilde{e},\btilde{s}):=p_f(e,s)}$ for all $\btilde{s} \in \Omega_f\ns/\ns\backsim$  and all $\btilde{e} \in \mathcal{E}_f\ns/\ns\backsim$. That $\btilde{p_f}(\btilde{e},\btilde{s})$ depends only on the equivalence classes follows from the fact that  $p_f(e,s)$ is the same for all $e\in\btilde{e}$ and is the same for all $s\in\btilde{s}$; that it actually constitutes a valid map---one which is linear and unique---follows from the universal property of quotients~\cite[Sec.~13]{gamelin2013introduction}.  That is, one can define $\btilde{p_f}$ as the unique map satisfying
\beq \label{quotientprobrule}
		\btp
		\begin{pgfonlayer}{nodelayer}
			\node [style=none] (1) at (-0, -2.8){};
			\node [style=none] (2) at (-0, 2.8){};
                \node [style=right label] (12) at (0,2.3) {$f$};
                \node [style=right label] (13) at (0,-2.2) {$f$};
			\node [style=empty circle, fill=gray!30] (8) at (0,0) {$\btilde{p_f}$};
			\node [style=right label] (8) at (0, -0.7) {$f\ns/\ns\backsim$};
			\node [style=right label] (9) at (0, 0.9) {$f\ns/\ns\backsim$};
                \node [style=coinc,fill=gray!30] (10) at (0,-1.5) {$\backsim$};
                \node [style=inc,fill=gray!30] (11) at (0,1.5) {$\backsim$};
		\end{pgfonlayer}
		\begin{pgfonlayer}{edgelayer}
			\draw [qWire] (2) to (11);
                \draw [qWire] (10) to (1);
                \draw [sWire] (10) to (11);
		\end{pgfonlayer}
		\etp
		\ =\ 
		\btp
		\begin{pgfonlayer}{nodelayer}
			\node [style=none] (1) at (-0, -1.4){};
			\node [style=none] (2) at (-0, 1.4){};
			\node [style=right label] (4) at (0, 1) {$f$};
			\node [style=empty circle, fill=white] (8) at (0,0) {$p_f$};
			\node [style=right label] (6) at (0, -0.8) {$f$};
		\end{pgfonlayer}
		\begin{pgfonlayer}{edgelayer}
			\draw [qWire] (2) to (1.center);
		\end{pgfonlayer}
		\etp.
		\eeq

\begin{definition}[GPT fragment shadow] \label{defn:method4}
	 Given a GPT fragment $(\Omega_f,\mathcal{E}_f,p_f)$,
  the GPT shadow $f\ns/\ns\backsim$ is the GPT system given by
\beq
f\ns/\ns\backsim:=\left(
\left\{\btp
\begin{pgfonlayer}{nodelayer}
	\node [style=Wsquareadj] (1) at (-0, -1) {$s$};
	\node [style=none] (2) at (-0, 1) {};
	\node [style=right label] (3) at (0, 1) {$f\ns/\ns\backsim$};
        \node [style=coinc,fill=gray!30] (4) at (0,0.2) {$\backsim$};
        \node [style=right label] (5) at (0,-0.4) {$f$};
\end{pgfonlayer}
\begin{pgfonlayer}{edgelayer}
	\draw [sWire] (2.center) to (4);
        \draw [qWire] (4) to (1);
\end{pgfonlayer}
\etp\right\}_{s\in\Omega_f},
\left\{\btp
\begin{pgfonlayer}{nodelayer}
	\node [style=none] (1) at (-0, -1) {};
	\node [style=Wsquare] (2) at (-0, 1) {$e$};
	\node [style=right label] (3) at (0, -1) {$f\ns/\ns\backsim$};
        \node [style=inc,fill=gray!30] (4) at (0,-0.2) {$\backsim$};
        \node [style=right label] (5) at (0,0.5) {$f$};
\end{pgfonlayer}
\begin{pgfonlayer}{edgelayer}
	\draw [sWire] (1.center) to (4);
        \draw [qWire] (4) to (2);
\end{pgfonlayer}
\etp\right\}_{e\in\mathcal{E}_f}
,\quad\btp
\begin{pgfonlayer}{nodelayer}
	\node [style=none] (1) at (-0, -1) {};
	\node [style=none] (2) at (-0, 1) {};
	\node [style=right label] (4) at (0, -1) {$f\ns/\ns\backsim$};
	\node [style=empty circle, fill=gray!30] (5) at (0,0) {$\btilde{p_f}$};
	\node [style=right label] (6) at (0, 1) {$f\ns/\ns\backsim$};
\end{pgfonlayer}
\begin{pgfonlayer}{edgelayer}
	\draw [sWire] (2) to (1.center);
\end{pgfonlayer}
\etp
\right),
\eeq
where \quad\btp
\begin{pgfonlayer}{nodelayer}
	\node [style=none] (1) at (-0, -0.9) {};
	\node [style=none] (2) at (-0, 0.9) {};
	\node [style=right label] (3) at (0, 0.6) {$f\ns/\ns\backsim$};
        \node [style=coinc,fill=gray!30] (4) at (0,0) {$\backsim$};
        \node [style=right label] (5) at (0,-0.5) {$f$};
\end{pgfonlayer}
\begin{pgfonlayer}{edgelayer}
	\draw [sWire] (2.center) to (4);
        \draw [qWire] (4) to (1);
\end{pgfonlayer}
\etp and \quad\btp
\begin{pgfonlayer}{nodelayer}
	\node [style=none] (1) at (-0, 0.9) {};
	\node [style=none] (2) at (-0, -0.9) {};
	\node [style=right label] (3) at (0,-0.5) {$f\ns/\ns\backsim$};
        \node [style=inc,fill=gray!30] (4) at (0,0) {$\backsim$};
        \node [style=right label] (5) at (0,0.6) {$f$};
\end{pgfonlayer}
\begin{pgfonlayer}{edgelayer}
	\draw [sWire] (2.center) to (4);
        \draw [qWire] (4) to (1);
\end{pgfonlayer}
\etp  are defined as quotienting relative to the equivalence relations we defined for states (Eq.~\eqref{equivstates}) and for effects (Eq.~\eqref{equiveffects}), respectively,
and the probability rule $\btilde{p_f}$ is defined as in Eq.~\eqref{quotientprobrule}. \footnote{Note that the shadow can also be defined (up to unique isomorphism) by the universal property of quotients in the category $\mathtt{GPT-Fragment}$.}
\end{definition}

Note that a GPT shadow of any fragment is always a valid GPT system (even if the fragment is not a valid GPT system---i.e., is not relatively tomographic).\footnote{ We can therefore think of the shadow as a map from GPT fragments to GPT systems, that is, some $\mathcal{S}:|\mathtt{GPT-Fragment}|\to|\mathtt{GPT-System}|$.}

Thus, quotienting defines a map from an arbitrary fragment to its shadow. This {\em shadow map} is a particular GPT fragment embedding of $f$ into $f\ns/\ns\backsim$. This follows from the fact that the shadow reproduces the empirical probabilities of the original fragment, since
\beq\label{probrulefrag}
		\btp
		\begin{pgfonlayer}{nodelayer}
			\node [style=Wsquareadj] (1) at (-0, -2.8) {$s$};
			\node [style=Wsquare] (2) at (-0, 2.8) {$e$};
                \node [style=right label] (12) at (0,2.3) {$f$};
                \node [style=right label] (13) at (0,-2.2) {$f$};
			\node [style=empty circle, fill=gray!30] (8) at (0,0) {$\btilde{p_f}$};
			\node [style=right label] (8) at (0, -0.85) {$f\ns/\ns\backsim$};
			\node [style=right label] (9) at (0, 0.9) {$f\ns/\ns\backsim$};
                \node [style=coinc,fill=gray!30] (10) at (0,-1.6) {$\backsim$};
                \node [style=inc,fill=gray!30] (11) at (0,1.5) {$\backsim$};
		\end{pgfonlayer}
		\begin{pgfonlayer}{edgelayer}
			\draw [qWire] (2) to (11);
                \draw [qWire] (10) to (1);
                \draw [sWire] (10) to (11);
		\end{pgfonlayer}
		\etp
		\ =\ 
		\btp
		\begin{pgfonlayer}{nodelayer}
			\node [style=Wsquareadj] (1) at (-0, -1.4) {$s$};
			\node [style=Wsquare] (2) at (-0, 1.4) {$e$};
			\node [style=right label] (4) at (0, 0.9) {$f$};
			\node [style=empty circle, fill=white] (8) at (0,0) {$p_f$};
			\node [style=right label] (6) at (0, -0.8) {$f$};
		\end{pgfonlayer}
		\begin{pgfonlayer}{edgelayer}
			\draw [qWire] (2) to (1.center);
		\end{pgfonlayer}
		\etp,\quad\forall s\in\Omega_f,e\in\mathcal{E}_f,
		\eeq
together with the facts that the quotienting map on states is a state map (from $\Omega_f$ to $\Omega_f\ns/\ns\backsim$) and that the quotienting map on effects is an effect map (from $\mathcal{E}_f$ to $\mathcal{E}_f\ns/\ns\backsim$).

At its core, the GPT shadow of a fragment is simply {\em a proper GPT system that reproduces the probabilities of the fragment}.\footnote{If one defines GPT fragments in a slightly more general way that allows for subnormalized states whose normalized counterparts are not in the fragment~\cite{selby2023accessible}, then the shadow of such a fragment is not strictly a GPT system, but rather is a relatively tomographic fragment.} 
That is, it is a {\em relatively tomographic} fragment that satisfies Eq.~\eqref{probrulefrag}. 

That the shadow is a relatively tomographic fragment is clear from the way that quotienting is defined. Explicitly: for any $\btilde{s}\in\Omega_f\ns/\ns\backsim$ one can define a linear functional $\btilde{p_f}(\btilde{s},\cdot)$ taking $\mathcal{E}^A/\backsim$ to $\mathds{R}$. By the definition of the equivalence relation, it follows that the set of all such linear functionals are separating for $\mathcal{E}_f\ns/\ns\backsim$. A similar argument holds for effects.

These two facts are indeed the {\em only} essential features of a GPT shadow. 

Consequently, we can give an alternative but equivalent definition of GPT shadows as follows.

\addtocounter{definition}{-1} 
\renewcommand{\thedefinition}{\arabic{definition}$'$} 

	\begin{definition}[GPT fragment shadow]\label{defn:shadow}
  
Given a GPT fragment $(\Omega_f,\mathcal{E}_f,p_f)$,
  a GPT shadow $\mathcal{S}(f):=(\sigma(\Omega_f),\tau(\mathcal{E}_f),p_\mathcal{S})$ of $f$ is a GPT system defined by two maps $\sigma: \mathsf{Span}(\Omega_f)\to U$ and $\tau: \mathsf{Span}(\mathcal{E}_f) \to V$ and a \emph{tomographic} probability rule $p_\mathcal{S}:\mathsf{Span}[\sigma(\Omega_f)]\times\mathsf{Span}[\tau(\mathcal{E}_f)]\to \mathds{R}$ such that 
\beq \label{probruleshadow}
\btp
		\begin{pgfonlayer}{nodelayer}
			\node [style=none] (1) at (-0, -1.5) {};
			\node [style=none] (2) at (-0, 1.5) {};
			\node [style=right label] (4) at (0, 1) {$f$};
			\node [style=empty circle, fill=white] (8) at (0,0) {$p_f$};
			\node [style=right label] (6) at (0, -0.8) {$f$};
		\end{pgfonlayer}
		\begin{pgfonlayer}{edgelayer}
			\draw [qWire] (2) to (1.center);
		\end{pgfonlayer}
		\etp \ =\ 
		\btp
		\begin{pgfonlayer}{nodelayer}
			\node [style=none] (1) at (-0, -2.6) {};
			\node [style=none] (2) at (-0, 2.6) {};
			\node [style=right label] (4) at (0, 2.4) {$f$};
			\node [style=empty circle,fill=gray!30] (8) at (0,0) {$p_\mathcal{S}$};
			\node [style=right label] (6) at (0, -2.2) {$f$};
              \node [style=small box,inner sep=1pt,fill=gray!30] (10) at (0,1.65) {$\tau$};
              \node [style=right label] (11) at (0,1.0) {$\mathcal{S}(f)$};
              \node [style=right label] (12) at (0,-0.8) {$\mathcal{S}(f)$};
              \node [style=small box,inner sep=1pt,fill=gray!30] (13) at (0,-1.55) {$\sigma$};
		\end{pgfonlayer}
		\begin{pgfonlayer}{edgelayer}
			\draw [qWire] (13) to (1.center);
                \draw [qWire] (10) to (2.center);
                \draw [sWire] (10) to (13);
		\end{pgfonlayer}
		\etp.
\eeq
Any such map from $f$ to $\mathcal{S}(f)$ is a GPT fragment embedding that we term a {\em shadow map}.
\end{definition}
\addtocounter{definition}{-1} 
\renewcommand{\thedefinition}{\arabic{definition}} 

In the strictest sense, this is more general than Definition~\ref{defn:method4}; for example, the states and effects of the shadow in this definition are not assumed to span the space they live in, while the states and effects in a quotiented fragment as in Definition~\ref{defn:method4} will always do so. However, as we have already established that GPT state (effect) spaces are independent (up to isomorphism) of the vector spaces they are taken to live in, this does not constitute a meaningful distinction between the two definitions. Indeed, the two definitions are equivalent up to this representational distinction, as is demonstrated by the following theorem. 

\begin{restatable}{theorem}{shadowequiv}\label{shadow-equiv}
    Any two shadows of a given GPT fragment are equivalent to each other (no matter how they are constructed).  Moreover,  any GPT system equivalent to a shadow is itself a shadow.
 \end{restatable}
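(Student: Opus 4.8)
The plan is to prove the two assertions in turn, using the first definition (Definition~\ref{defn:method4}) of the shadow as a concrete reference point, since we have already shown that the quotient construction always yields a tomographic fragment reproducing the probabilities of $f$. First I would record the following \emph{universal characterization} of any shadow map: if $(\sigma,\tau)\colon f\to \mathcal{S}(f)$ is any shadow map in the sense of Definition~\ref{defn:shadow}, then because $p_\mathcal{S}$ reproduces $p_f$ (Eq.~\eqref{probruleshadow}), we have $\sigma(s_1)=\sigma(s_2)$ whenever $p_f(s_1,e)=p_f(s_2,e)$ for all $e$ fails to be distinguished --- wait, more carefully: if $s_1\backsim s_2$ then $p_f(s_1,e)=p_f(s_2,e)$ for all $e\in\mathcal{E}_f$, hence $p_\mathcal{S}(\sigma(s_1),\tau(e))=p_\mathcal{S}(\sigma(s_2),\tau(e))$ for all $e$; since $p_\mathcal{S}$ is tomographic on $\mathcal{S}(f)$ and $\tau(\mathcal{E}_f)$ spans the relevant effect space, this forces $\sigma(s_1)=\sigma(s_2)$. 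Thus $\sigma$ factors through the state quotient $\backsim\colon\Omega_f\to\Omega_f\ns/\ns\backsim$, and symmetrically $\tau$ factors through the effect quotient. By the universal property of linear quotients, there are unique linear maps $\bar\sigma\colon\Omega_f\ns/\ns\backsim\to\sigma(\Omega_f)$ and $\bar\tau\colon\mathcal{E}_f\ns/\ns\backsim\to\tau(\mathcal{E}_f)$ with $\sigma=\bar\sigma\circ\backsim$ and $\tau=\bar\tau\circ\backsim$; these are visibly surjective, and they carry state/effect spaces into state/effect spaces because $\sigma,\tau$ do.

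The key remaining point for the first assertion is that $(\bar\sigma,\bar\tau)$ is also \emph{faithful}, and hence an invertible GPT embedding from $f\ns/\ns\backsim$ to $\mathcal{S}(f)$. Faithfulness of $\bar\sigma$: if $\bar\sigma([s_1])=\bar\sigma([s_2])$ then $\sigma(s_1)=\sigma(s_2)$, so by Eq.~\eqref{probruleshadow} $p_f(s_1,e)=p_f(s_2,e)$ for all $e\in\mathcal{E}_f$, i.e.\ $s_1\backsim s_2$, i.e.\ $[s_1]=[s_2]$; symmetrically for $\bar\tau$. One then checks that $(\bar\sigma,\bar\tau)$ preserves probabilities --- this is immediate by comparing Eq.~\eqref{probruleshadow} with Eq.~\eqref{quotientprobrule}, both of which equal $p_f$ when precomposed with the respective quotient maps, together with uniqueness in the universal property --- and preserves the unit effect (since $\tau$ does and $u_{\mathcal{S}(f)}=\tau(u_f)$). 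By Definition~\ref{def:fragembeddings}, a full and faithful embedding is invertible, so $f\ns/\ns\backsim\cong\mathcal{S}(f)$ as GPTs (Definition~\ref{defn:GPTfragequivalence}); since any two shadows are each equivalent to $f\ns/\ns\backsim$, they are equivalent to each other by transitivity of GPT equivalence (which follows from transitivity/compositionality of invertible embeddings, as already noted in the excerpt).

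For the second assertion, suppose $K$ is a GPT with an invertible GPT embedding $(\mu,\nu)\colon f\ns/\ns\backsim\,\to K$ (equivalently, $K$ is equivalent to the canonical shadow $f\ns/\ns\backsim$). Composing with the shadow map $(\backsim,\backsim)\colon f\to f\ns/\ns\backsim$ --- which, as the excerpt establishes after Definition~\ref{defn:method4}, is itself a GPT fragment embedding satisfying Eq.~\eqref{probrulefrag} --- gives a GPT fragment embedding $(\mu\circ\backsim,\nu\circ\backsim)\colon f\to K$. Its probability rule $p_K$ is tomographic (because $K$ is a GPT), and it reproduces $p_f$ (compose Eq.~\eqref{probrulefrag} with the isomorphism $(\mu,\nu)$, which preserves probabilities in both directions by Eq.~\eqref{twodirections}). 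Hence $(\mu\circ\backsim,\nu\circ\backsim)$ is exactly a shadow map in the sense of Definition~\ref{defn:shadow} with $\mathcal{S}(f)=K$, so $K$ is a shadow of $f$. I expect the only slightly delicate step to be the bookkeeping in the first part --- verifying that the factored maps $\bar\sigma,\bar\tau$ really are \emph{state} and \emph{effect} maps (not merely linear maps) and that probability preservation transfers cleanly through the quotient via the universal property; everything else is formal manipulation of the universal property of quotients and transitivity of invertible embeddings, both already available in the excerpt.
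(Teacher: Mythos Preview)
Your approach is essentially the paper's: factor an arbitrary shadow map $(\sigma,\tau)$ through the canonical quotient $f\ns/\ns\backsim$ via the universal property, show the factored maps $(\bar\sigma,\bar\tau)$ constitute a full and faithful (hence invertible) GPT embedding, and conclude equivalence of any two shadows by transitivity through $f\ns/\ns\backsim$; your treatment of the second assertion (compose the isomorphism with the quotient shadow map) is correct and in fact more explicit than what the paper writes out.

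The one substantive point you skip is linearity of $\sigma$ and $\tau$. Definition~\ref{defn:shadow} only posits that $\sigma,\tau$ are \emph{maps} satisfying Eq.~\eqref{probruleshadow} with a tomographic $p_{\mathcal{S}}$; it does not assume linearity. Your invocation of ``the universal property of linear quotients'' to produce \emph{linear} $\bar\sigma,\bar\tau$ presupposes that $\sigma,\tau$ are already linear, and the ``bookkeeping'' you flag (that $\bar\sigma,\bar\tau$ are state/effect maps, not merely linear) is downstream of this. The paper isolates exactly this step as Lemma~\ref{lem:LinearityFromEmpAd}: linearity is forced by bilinearity of $p_f$ together with tomographicity of $p_{\mathcal{S}}$. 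Once that lemma is in hand, your argument and the paper's coincide (the paper packages the kernel computation and the factorization as Lemmas~\ref{lemma:kerShadowmaps}--\ref{lemma:AllShadowsEquivQuotientShadow}, but the content is the same).
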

The full proof is given in Appendix~\ref{secshadowequiv}, but we summarize the proof here.
\proof[Proof sketch] 
The only variability in the shadow is the choice of the $\sigma$ and $\tau$, which are necessarily linear maps (Lemma~\ref{lem:LinearityFromEmpAd} in Appendix~\ref{secshadowequiv}); once these are chosen,  there is a unique probability rule which reproduces the correct probabilities. Consequently, we sometimes refer to the shadow map as a tuple $(\sigma,\tau)$. Moreover, it is clear that the properties that $\sigma$ and $\tau$ must satisfy in order that there does exist a  probability rule which reproduces the probabilities is that the kernels of $\sigma$ and $\tau$ need to be contained in the left and right kernels of $p_f$ respectively. Moreover, as we demand that the probability rule be tomographic then this means that the kernels of $\sigma$ and $\tau$ are actually equal to the left and right kernels of $p_f$ respectively. The universal property of quotients then implies that $\sigma$ and $\tau$ factor through the quotienting maps as $\sigma = \btilde{\sigma}\circ \backsim$ and $\tau=\btilde{\tau}\circ \backsim$ for some injective $\btilde{\sigma}$ and $\btilde{\tau}$. It is then easy to see that $(\btilde{\sigma},\btilde{\tau})$ defines a GPT isomorphism between $f\ns/\ns\backsim$ and $\mathcal{S}(f)$, and we can then compose these isomorphisms to give an isomorphism between any two shadows $\mathcal{S}(f)$ and $\mathcal{S}'(f)$. 
\endproof

In other words, all methods for constructing a shadow lead to the same end result\footnote{ The results of Ref.~\cite{gitton2022solvable} prove that all methods for computing shadow maps are equivalent for assessing simplex-embeddability; our result is stronger as it refers to full equivalence as GPTs.}. Thus, we sometimes refer to `the' shadow of a fragment, since all shadows are equivalent.

Yet another equivalent characterization of GPT shadows is the following: a GPT shadow of a fragment $f$ is any GPT system $G$ such that there is a full GPT embedding of $f$ into $G$.
This is very useful, as it means that one can verify that a given embedding defines a shadow map simply by checking that 1) its codomain is a valid GPT system, and 2) it is a full embedding.

In Appendix~\ref{LinearityFromEmpAd}, we also prove the following:
\begin{restatable}{lemma}{fragsub}\label{fragsub}
The shadow map for a given fragment is faithful if and only if the fragment is relatively tomographic.
\end{restatable}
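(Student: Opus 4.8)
The plan is to simply line up the definitions: a shadow map is, up to isomorphism, the pair of linear quotient maps by the left and right kernels of $p_f$, and ``faithful'' means that both of these maps are injective; meanwhile the tomographic condition for the fragment is, verbatim, the statement that exactly these two maps are injective on $\Omega_f$ and $\mathcal{E}_f$ respectively. So the equivalence should fall out almost immediately once everything is unpacked, and I expect no genuine obstacle.

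First I would reduce to the canonical quotient shadow $f\ns/\ns\backsim$ of Definition~\ref{defn:method4}. This is harmless: by Theorem~\ref{shadow-equiv} any shadow of $f$ is equivalent to $f\ns/\ns\backsim$, i.e.\ differs from it by an invertible GPT fragment embedding; since an isomorphism is in particular faithful and pre/post-composition with isomorphisms preserves injectivity of both the state component and the effect component, the shadow map into one shadow is faithful iff the shadow map into any other shadow is faithful. Thus it suffices to analyze the embedding $(\backsim,\backsim):f\to f\ns/\ns\backsim$.

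Next I would record the key translation. By Eq.~\eqref{equivstates} together with the linear-algebraic reformulation stated just after it, for $s,s'\in\Omega_f$ one has $\backsim(s)=\backsim(s')$ if and only if $s-s'\in\mathsf{Ker}_L[p_f]$, i.e.\ if and only if $p_f(s,e)=p_f(s',e)$ for all $e\in\mathcal{E}_f$. Hence the state-quotient map is injective on $\Omega_f$ --- that is, is a faithful state map --- precisely when the first implication of Eq.~\eqref{eq:tomographic} holds. Dually, using Eq.~\eqref{equiveffects}, the effect-quotient map is a faithful effect map precisely when the second implication of Eq.~\eqref{eq:tomographic} holds. Finally, by Definition~\ref{def:fragembeddings} the embedding $(\backsim,\backsim)$ is faithful iff both of its component maps are faithful; combining with the previous two sentences, this holds iff both implications of Eq.~\eqref{eq:tomographic} hold, i.e.\ iff $f$ is a tomographic fragment. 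This establishes both directions at once.

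The only point that needs a little care is the relationship between injectivity of $\backsim$ \emph{as a state (effect) map on} $\Omega_f$ ($\mathcal{E}_f$) and injectivity of $\backsim$ \emph{as a linear map} on the spanning vector space; for a quotient map these coincide because $\Omega_f-\Omega_f$ has nonempty interior in $\mathsf{Span}[\Omega_f]$, so any nonzero element of $\mathsf{Ker}_L[p_f]$ is realized as a difference of two elements of $\Omega_f$. However, the statement of the lemma (via the tomographic condition, which quantifies over $s,s'\in\Omega_f$) only requires injectivity on $\Omega_f$, so this subtlety can be sidestepped entirely and need not appear in the final write-up.
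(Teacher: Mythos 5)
Your proof is correct, and it takes a genuinely different route from the one in the paper. The paper's proof works directly with an arbitrary shadow map $(\sigma,\tau)$: it chains together the probability-preservation equation $p_f(s,e)=p_{\mathcal{S}}(\sigma(s),\tau(e))$ with the tomographic property of $p_{\mathcal{S}}$ to show that $\sigma(s_1)=\sigma(s_2)$ holds precisely when $p_f(s_1,\cdot)=p_f(s_2,\cdot)$ on $\mathcal{E}_f$, and likewise for effects, giving both directions in one string of $\iff$'s. Your route instead reduces via Theorem~\ref{shadow-equiv} (or, more precisely, Lemma~\ref{lemma:ShadowMapsSplit}, which factors any shadow map as $\sigma=\tilde\sigma\circ\backsim$ with $\tilde\sigma$ injective) to the canonical quotient shadow $f\ns/\ns\backsim$, and then reads the result straight off the kernel characterizations $\mathsf{Ker}[\backsim]=\mathsf{Ker}_L[p_f]$ and $\mathsf{Ker}[\backsim]=\mathsf{Ker}_R[p_f]$. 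Both are valid; yours leans on the machinery already proved in Appendix~\ref{secshadowequiv} and is arguably shorter, while the paper's is self-contained and does not need the splitting lemma.

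One small correction to your final paragraph: the subtlety you flag --- the distinction between injectivity of the quotient map restricted to $\Omega_f$ (resp.\ $\mathcal{E}_f$) and injectivity of the quotient map as a linear map on the span --- cannot actually be ``sidestepped,'' because the lemma's conclusion is phrased in terms of \emph{faithfulness}, which per Definitions~\ref{def:statemaps} and~\ref{def:effectmaps} means injectivity as a linear map, while the tomographic condition quantifies only over the convex sets. You correctly observe that the two notions coincide here (any nonzero $v\in\mathsf{Span}[\Omega_f]$ can, after rescaling, be written as a difference $s-s'$ with $s,s'\in\Omega_f$, since $\Omega_f$ contains $\mathbf{0}$ and is closed under downscaling), and that observation \emph{is} used in going from injectivity-on-$\Omega_f$ to genuine faithfulness in the forward direction. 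It should stay in the write-up, or at least be noted; the paper's own proof glosses over exactly the same point, so this is not a defect unique to your argument, but the reason you give for dropping it is not quite right.
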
 

For example, the shadow map in the example of Subsection~\ref{example1} is a (nontrivial) projection on the nontomographic fragment in question, and so fails to be faithful. 

The idea of GPT fragment shadows has appeared (at least indirectly) a number of times in the literature~\cite{Holevo1982,gitton2022solvable,muller2023testing}. Most notably, the notion of an ``alternative reduced space'' representation introduced in Ref.~\cite{gitton2022solvable} is an instance of Definition~\ref{defn:shadow} when one takes the probability rule to be an inner product (and specializes to quantum theory). Indeed, Ref.~\cite{gitton2022solvable} presents two different ways of computing such a shadow via projections. 
These two methods are illustrated in Example~2 in Subsection~\ref{twodisk} below.

\subsection{The shadow map acts trivially on GPT subsystems}

We now note a simple but critical fact: that shadow maps act essentially trivially on GPT fragments that are themselves GPT systems---that is, on GPT subsystems (fragments that are relatively tomographic).

\begin{theorem}\label{tomogshadows}
Every GPT subsystem is equivalent to its own shadow.  
\end{theorem}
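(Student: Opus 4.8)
The plan is to exploit the characterization of a GPT shadow noted just before the theorem: a shadow of a fragment $f$ is any GPT $G$ such that there is a \emph{full} GPT fragment embedding of $f$ into $G$. Since a GPT subsystem $F$ is by definition a relatively tomographic fragment (it is itself a proper GPT), it suffices to show that $F$ admits a full embedding into some GPT and that this GPT is equivalent to $F$ itself. The obvious candidate is $F$ embedded into $F$ by the identity maps $(\mathrm{id}_{\Omega_F},\mathrm{id}_{\mathcal{E}_F})$. First I would check that this pair is a GPT fragment embedding in the sense of Definition~\ref{def:fragembeddings}: the identity is trivially a state map and an effect map (it preserves the unit effect and maps states to states, effects to effects), and probability preservation $p_F(s,e)=p_F(\mathrm{id}(s),\mathrm{id}(e))$ is immediate. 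It is manifestly full, since $\mathrm{id}(\Omega_F)=\Omega_F$ and $\mathrm{id}(\mathcal{E}_F)=\mathcal{E}_F$. Hence $F$ is a GPT that admits a full embedding of $F$ into it, so $F$ is (a) shadow of $F$.

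Next I would invoke Theorem~\ref{shadow-equiv}, which states that any two shadows of a given fragment are equivalent and that any GPT equivalent to a shadow is itself a shadow. Combined with the previous paragraph, this tells us that \emph{the} shadow of $F$ (unique up to equivalence) is equivalent to $F$. This is precisely the claim. Alternatively, if one prefers to argue directly from Definition~\ref{defn:method4} rather than via the "full embedding" characterization: because $F$ is tomographic, the right kernel of $p_F$ is trivial, so $\mathsf{Ker}_L[p_F]=\{\mathbf 0\}$ and $\mathsf{Ker}_R[p_F]=\{\mathbf 0\}$; therefore the quotient maps $\backsim$ on states and on effects are injective (and, being quotient maps, surjective onto their codomains), hence invertible as linear maps. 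One then checks that their inverses are state and effect maps respectively — which holds because quotienting by the zero subspace changes nothing about the convex geometry — so $(\backsim,\backsim)$ is an invertible GPT fragment embedding of $F$ onto $F\ns/\ns\backsim$, giving the desired isomorphism via Definition~\ref{defn:GPTfragequivalence}.

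There is essentially no hard step here; the content of the theorem is really a sanity check that the shadow construction is idempotent on the subcategory of proper GPTs, and it follows almost immediately from the definitions together with Theorem~\ref{shadow-equiv}. The only point requiring a moment's care is the one flagged in Definitions~\ref{def:statemaps} and~\ref{def:effectmaps}: that invertibility of a state/effect map is strictly stronger than invertibility as a linear map, because the inverse must itself be a valid state/effect map. In the direct argument of the previous paragraph one must therefore confirm that the linear inverse of the quotient map (equivalently, the identity, up to the representational isomorphism identifying $\mathsf{Span}[\Omega_F]$ with $\mathsf{Span}[\Omega_F]/\{\mathbf 0\}$) really does send $\Omega_{F/\backsim}$ into $\Omega_F$ and $\mathcal{E}_{F/\backsim}$ into $\mathcal{E}_F$; this is automatic since quotienting by the trivial subspace is a bijection that identifies the two state spaces and the two effect spaces set-theoretically, not merely linearly. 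Using instead the "full embedding via identity" route together with Lemma~\ref{fragsub} (which says the shadow map is faithful iff the fragment is tomographic, so for a subsystem the shadow map is both faithful and full, hence invertible) sidesteps even this minor subtlety, and that is the cleanest presentation I would adopt.
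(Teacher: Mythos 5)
Your proposal is correct. Your primary argument — that the identity self-embedding $(\mathrm{id}_{\Omega_F},\mathrm{id}_{\mathcal{E}_F}):F\to F$ is manifestly a full embedding into a valid GPT, hence $F$ is a shadow of itself, and then Theorem~\ref{shadow-equiv} gives that any shadow of $F$ is equivalent to $F$ — is a genuinely different route from the paper's. The paper instead argues directly from the quotient construction of Definition~\ref{defn:method4}: since $F$ is tomographic, $\mathsf{Ker}_L[p_F]$ and $\mathsf{Ker}_R[p_F]$ are trivial, so the equivalence relation $\backsim$ is trivial and the quotient maps are (the identity, hence) isomorphisms. That is precisely what you offer as your ``alternative'' second paragraph. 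The two routes use the same underlying fact (triviality of the kernels for a tomographic fragment); the paper's version unpacks it concretely via Definition~\ref{defn:method4}, whereas your primary version hides the kernel computation inside the already-proved uniqueness result Theorem~\ref{shadow-equiv} and so reads as a one-liner. Your version arguably buys a cleaner presentation at the cost of invoking a heavier theorem; the paper's is more self-contained. Your third route via Lemma~\ref{fragsub} is also valid and closest in spirit to the paper's.
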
  
\begin{proof}
    This follows immediately from the definition of shadows in terms of quotienting (Definition~\ref{defn:method4}). As the states and effects in a GPT subsystem are  tomographic for each other, it follows that the equivalence relation is trivial, and the maps $\backsim$ (for both states and effects) are the identity.  
\end{proof}

In other words, a GPT fragment whose states and effects are relatively tomographic is isomorphic to its shadow, with the isomorphism given by the shadow map itself. 

We illustrate this by an example in the next section.

\subsection{Example 2: shadows of a GPT subsystem}\label{twodisk}

We can get some more intuition for the shadow map by considering a simple example within quantum theory. In this example, the fragment in question is relatively tomographic, and hence the shadow map acts trivially in the sense that its image and preimage are equivalent as GPT systems (by Theorem~\ref{tomogshadows}).
This example was suggested by Markus M\"uller.

Consider a fragment of a 2-dimensional quantum system comprised of an equatorial disk of states and an equatorial disk of effects, but where these disks are neither identical nor orthogonal, as shown in Figure~\ref{fig:twodisk}. 

\begin{figure}[htb!]
   \centering
\tikzset{every picture/.style={line width=0.75pt}} 
\begin{tabular}{ccc}
\begin{tikzpicture}[x=0.5pt,y=0.5pt,yscale=-1,xscale=1]

\draw   (160.33,100.07) .. controls (160.33,61.26) and (191.9,29.8) .. (230.83,29.8) .. controls (269.77,29.8) and (301.33,61.26) .. (301.33,100.07) .. controls (301.33,138.87) and (269.77,170.33) .. (230.83,170.33) .. controls (191.9,170.33) and (160.33,138.87) .. (160.33,100.07) -- cycle ;
\draw  [fill={rgb, 255:red, 0; green, 134; blue, 0 }  ,fill opacity=0.3 ] (200.14,101.09) .. controls (200.14,90.18) and (213.92,81.33) .. (230.92,81.33) .. controls (247.92,81.33) and (261.7,90.18) .. (261.7,101.09) .. controls (261.7,112) and (247.92,120.85) .. (230.92,120.85) .. controls (213.92,120.85) and (200.14,112) .. (200.14,101.09) -- cycle ;
\draw  [dash pattern={on 0.84pt off 2.51pt}]  (160.33,100.07) .. controls (165.33,75.33) and (294.33,73.33) .. (301.33,100.07) ;
\draw    (160.33,100.07) .. controls (166.33,125.33) and (296.33,128.33) .. (301.33,100.07) ;
\draw  [dash pattern={on 0.84pt off 2.51pt}]  (261.53,36.6) .. controls (235.33,29.33) and (181.33,149.33) .. (200.14,163.53) ;
\draw    (200.14,163.53) .. controls (226.33,171.33) and (282.33,51.33) .. (261.53,36.6) ;
\draw [line width=0.75]  [dash pattern={on 4.5pt off 4.5pt}]  (261.53,36.6) -- (261.69,98.09) ;
\draw [shift={(261.7,101.09)}, rotate = 269.85] [fill={rgb, 255:red, 0; green, 0; blue, 0 }  ][line width=0.08]  [draw opacity=0] (10.72,-5.15) -- (0,0) -- (10.72,5.15) -- (7.12,0) -- cycle    ;
\draw [color={rgb, 255:red, 0; green, 0; blue, 0 }  ,draw opacity=1 ][line width=0.75]  [dash pattern={on 4.5pt off 4.5pt}]  (200.14,163.53) -- (200.14,104.09) ;
\draw [shift={(200.14,101.09)}, rotate = 90] [fill={rgb, 255:red, 0; green, 0; blue, 0 }  ,fill opacity=1 ][line width=0.08]  [draw opacity=0] (10.72,-5.15) -- (0,0) -- (10.72,5.15) -- (7.12,0) -- cycle    ;

\draw (210,91.4) node [anchor=north west][inner sep=0.75pt]    {$\tau (\mathcal{E})$};
\draw (264,18.4) node [anchor=north west][inner sep=0.75pt]    {$\mathcal{E}$};
\draw (307,94.4) node [anchor=north west][inner sep=0.75pt]    {$\Omega $};
\end{tikzpicture}
& \hspace{30pt}&
\begin{tikzpicture}[x=0.5pt,y=0.5pt,yscale=-1,xscale=1]

\draw   (160.33,100.07) .. controls (160.33,61.26) and (191.9,29.8) .. (230.83,29.8) .. controls (269.77,29.8) and (301.33,61.26) .. (301.33,100.07) .. controls (301.33,138.87) and (269.77,170.33) .. (230.83,170.33) .. controls (191.9,170.33) and (160.33,138.87) .. (160.33,100.07) -- cycle ;
\draw  [fill={rgb, 255:red, 0; green, 134; blue, 0 }  ,fill opacity=0.3 ] (218.39,129.21) .. controls (208.43,124.77) and (205.95,108.58) .. (212.87,93.05) .. controls (219.79,77.52) and (233.47,68.53) .. (243.44,72.97) .. controls (253.41,77.41) and (255.88,93.6) .. (248.97,109.13) .. controls (242.05,124.66) and (228.36,133.65) .. (218.39,129.21) -- cycle ;
\draw  [dash pattern={on 0.84pt off 2.51pt}]  (160.33,100.07) .. controls (165.33,75.33) and (294.33,73.33) .. (301.33,100.07) ;
\draw    (160.33,100.07) .. controls (166.33,125.33) and (296.33,128.33) .. (301.33,100.07) ;
\draw  [dash pattern={on 0.84pt off 2.51pt}]  (261.53,36.6) .. controls (235.33,29.33) and (181.33,149.33) .. (200.14,163.53) ;
\draw    (200.14,163.53) .. controls (226.33,171.33) and (282.33,51.33) .. (261.53,36.6) ;
\draw [line width=0.75]  [dash pattern={on 4.5pt off 4.5pt}]  (301.33,100.07) -- (246.16,74.25) ;
\draw [shift={(243.44,72.97)}, rotate = 25.08] [fill={rgb, 255:red, 0; green, 0; blue, 0 }  ][line width=0.08]  [draw opacity=0] (10.72,-5.15) -- (0,0) -- (10.72,5.15) -- (7.12,0) -- cycle    ;
\draw [color={rgb, 255:red, 0; green, 0; blue, 0 }  ,draw opacity=1 ][line width=0.75]  [dash pattern={on 4.5pt off 4.5pt}]  (160.33,100.07) -- (215.71,127.86) ;
\draw [shift={(218.39,129.21)}, rotate = 206.65] [fill={rgb, 255:red, 0; green, 0; blue, 0 }  ,fill opacity=1 ][line width=0.08]  [draw opacity=0] (10.72,-5.15) -- (0,0) -- (10.72,5.15) -- (7.12,0) -- cycle    ;

\draw (210,91.4) node [anchor=north west][inner sep=0.75pt]    {$\sigma ( \Omega )$};
\draw (264,18.4) node [anchor=north west][inner sep=0.75pt]    {$\mathcal{E}$};
\draw (307,94.4) node [anchor=north west][inner sep=0.75pt]    {$\Omega $};
\end{tikzpicture}
\end{tabular}
   \vspace{10pt}
   \caption{Consider the fragment of quantum theory whose state space is the disk of states shown here as $\Omega$ and whose effect space is the disk of effects shown here as $\mathcal{E}$(strictly speaking, the effect space is the convex hull of this disc with the zero and unit effects). Two natural GPT shadows are constructed by projecting one of these onto the other, as shown in (a) and (b) respectively. These two shadows are equivalent as GPT systems, in accordance with Theorem~\ref{shadow-equiv}. they are also equivalent to the original GPT fragment, in accordance with Theorem~\ref{tomogshadows}. }
   \label{fig:twodisk}
\end{figure}
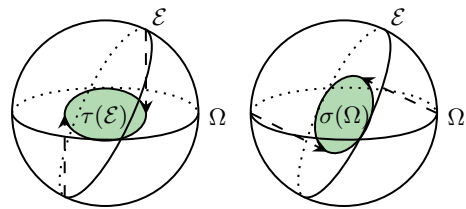

That is, the set of states in the fragment corresponds to the set of density matrices associated with points in the disk labeled by $\Omega$, and the set of effects in the fragment corresponds to the set of POVM elements associated with the points in the convex hull of the disk labeled by $\mathcal{E}$, the zero effect and the unit effect.
Denoting a generic density operator in $\Omega$ by $s$ and a generic POVM element in $\mathcal{E}$ by $E$, we assign probabilities to state-effect pairs following the Born rule, as $p(E,s)= \mathsf{Tr}[Es]$. 

We describe two different methods of constructing shadows of this fragment. These are depicted in Fig.~\ref{fig:twodisk}(a) and (b), respectively, and are essentially the same as the two methods for computing a `reduced space representation' introduced in Ref.~\cite{gitton2022solvable}.

The first shadow map we consider is simply the orthogonal projection of the effect space onto the state space. This is depicted in Fig.~\ref{fig:twodisk}(a), and results in a new effect space which constitutes a smaller disk lying in the same plane as the state space (which is unchanged). Note that the probabilities assigned by the Born rule to any given state-effect pair in the original fragment are unchanged by this projection, since the projection only changes the y-component of the effects and since the y-component of every state in the fragment is already zero.

This is an instance of the shadow map in Definition~\ref{defn:shadow}, where the map $\sigma$ is taken to be trivial and $\tau$ is taken to be the projection map just described; since these maps together preserve the probabilities while mapping the states and effects to a valid GPT system, they define a valid shadow map.

The second method is the orthogonal projection of the state space onto the effect space. This is depicted in Fig.~\ref{fig:twodisk}(b), and results in a new state space which constitutes a smaller disk lying in the same plane as the effect space (which is unchanged). (Here, it is $\tau$ that is trivial and $\sigma$ that is defined by a projection map.)

The two resulting GPT systems are equivalent GPT systems, in accordance with Theorem~\ref{shadow-equiv}. (This is evident visually by the fact that they are identical subsets of the Bloch sphere up to a rotation.) 
Moreover, they are also equivalent as GPT systems to the original fragment itself, as a consequence of the fact that the latter constitutes a GPT system together with Theorem~\ref{tomogshadows}. This is the sense in which the shadow map is said to act trivially (despite the fact that it does change the precise representation of the fragment in question).

This fragment of the qubit has states and effects which are tomographic for one another, but which do not span the same vector space. Not all papers in the literature would consider such an example to be a valid GPT, since many papers demand that the states and effects of a GPT span the same vector space. But such more general cases are no more or less expressive (in terms of the probabilities they generate as well as the operational identities they encode) as those where the states and effects span the same space, so we consider them to be valid GPTs as well. Indeed, as just shown, the original fragment (whose states and effects are tomographic for each other but not spanning the same space) is equivalent to the two shadows of it we constructed (whose states and effects are both tomographic {\em and} span the same space). This again demonstrates how our definitions are more flexible than standard definitions.

\subsection{Summary of concepts}\label{subse:sum}

We finish this section by presenting a pictorial depiction of the main concepts introduced from Sections \ref{se:gptp} to \ref{se:shadow}, and how they relate to each other (see Fig.~\ref{fig:sumfig}). In the remaining of the paper we discuss their impact on assessments of nonclassicality. 

\begin{figure*}
    \centering    
\tikzset{every picture/.style={line width=0.75pt}} 
\begin{tikzpicture}[x=0.75pt,y=0.75pt,yscale=-1,xscale=1]

\draw  [dash pattern={on 4.5pt off 4.5pt}]  (12.33,120) -- (644.33,120) ;
\draw    (160,74) -- (160,271) ;
\draw [shift={(160,273)}, rotate = 270] [color={rgb, 255:red, 0; green, 0; blue, 0 }  ][line width=0.75]    (10.93,-3.29) .. controls (6.95,-1.4) and (3.31,-0.3) .. (0,0) .. controls (3.31,0.3) and (6.95,1.4) .. (10.93,3.29)   ;
\draw    (197.33,70) -- (308.57,130.05) ;
\draw [shift={(310.33,131)}, rotate = 208.36] [color={rgb, 255:red, 0; green, 0; blue, 0 }  ][line width=0.75]    (10.93,-3.29) .. controls (6.95,-1.4) and (3.31,-0.3) .. (0,0) .. controls (3.31,0.3) and (6.95,1.4) .. (10.93,3.29)   ;
\draw    (346.77,277) -- (346.77,178.02) ;
\draw [shift={(346.77,176.02)}, rotate = 90] [color={rgb, 255:red, 0; green, 0; blue, 0 }  ][line width=0.75]    (10.93,-3.29) .. controls (6.95,-1.4) and (3.31,-0.3) .. (0,0) .. controls (3.31,0.3) and (6.95,1.4) .. (10.93,3.29)   ;
\draw    (355.33,176.36) -- (355.33,265.58) -- (355.33,275.76) ;
\draw [shift={(355.33,277.76)}, rotate = 270] [color={rgb, 255:red, 0; green, 0; blue, 0 }  ][line width=0.75]    (10.93,-3.29) .. controls (6.95,-1.4) and (3.31,-0.3) .. (0,0) .. controls (3.31,0.3) and (6.95,1.4) .. (10.93,3.29)   ;
\draw   [color={rgb, 255: red, 125; green, 0; blue, 0}] (188.33,271) -- (299.92,159.41) (192.58,261.1) -- (198.23,266.76)(199.65,254.03) -- (205.3,259.69)(206.72,246.96) -- (212.37,252.62)(213.79,239.89) -- (219.45,245.54)(220.86,232.82) -- (226.52,238.47)(227.93,225.75) -- (233.59,231.4)(235,218.67) -- (240.66,224.33)(242.07,211.6) -- (247.73,217.26)(249.14,204.53) -- (254.8,210.19)(256.22,197.46) -- (261.87,203.12)(263.29,190.39) -- (268.94,196.05)(270.36,183.32) -- (276.01,188.98)(277.43,176.25) -- (283.09,181.9)(284.5,169.18) -- (290.16,174.83);
\draw [shift={(301.33,158)}, rotate = 135] [color={rgb, 255:red, 125; green, 0; blue, 0 }  ][line width=0.75]    (10.93,-3.29) .. controls (6.95,-1.4) and (3.31,-0.3) .. (0,0) .. controls (3.31,0.3) and (6.95,1.4) .. (10.93,3.29)   ;
\draw    (513.33,301.52) -- (422.33,301.52) ;
\draw [shift={(420.33,301.52)}, rotate = 360] [color={rgb, 255:red, 0; green, 0; blue, 0 }  ][line width=0.75]    (10.93,-3.29) .. controls (6.95,-1.4) and (3.31,-0.3) .. (0,0) .. controls (3.31,0.3) and (6.95,1.4) .. (10.93,3.29)   ;
\draw    (576.77,277) -- (576.77,178.02) ;
\draw [shift={(576.77,176.02)}, rotate = 90] [color={rgb, 255:red, 0; green, 0; blue, 0 }  ][line width=0.75]    (10.93,-3.29) .. controls (6.95,-1.4) and (3.31,-0.3) .. (0,0) .. controls (3.31,0.3) and (6.95,1.4) .. (10.93,3.29)   ;
\draw    (585.33,176.36) -- (585.33,265.58) -- (585.33,275.76) ;
\draw [shift={(585.33,277.76)}, rotate = 270] [color={rgb, 255:red, 0; green, 0; blue, 0 }  ][line width=0.75]    (10.93,-3.29) .. controls (6.95,-1.4) and (3.31,-0.3) .. (0,0) .. controls (3.31,0.3) and (6.95,1.4) .. (10.93,3.29)   ;
\draw    (513.33,141.52) -- (422.33,141.52) ;
\draw [shift={(420.33,141.52)}, rotate = 360] [color={rgb, 255:red, 0; green, 0; blue, 0 }  ][line width=0.75]    (10.93,-3.29) .. controls (6.95,-1.4) and (3.31,-0.3) .. (0,0) .. controls (3.31,0.3) and (6.95,1.4) .. (10.93,3.29)   ;

\draw (21.33,125) node [anchor=north west][inner sep=0.75pt]   [align=left] {\begin{minipage}[lt]{55.34pt}\setlength\topsep{0pt}
\begin{center}
\textit{{\footnotesize (Relatively) Tomographic}}
\end{center}

\end{minipage}};
\draw (17,101) node [anchor=north west][inner sep=0.75pt]   [align=left] {\begin{minipage}[lt]{70pt}\setlength\topsep{0pt}
\begin{center}
{\footnotesize \textit{Non-tomographic}}
\end{center}

\end{minipage}};
\draw  [draw opacity=0][fill={quantumviolet}  ,fill opacity=0.3]  (115,22) -- (208,22) -- (208,67) -- (115,67) -- cycle  ;
\draw (118,26) node [anchor=north west][inner sep=0.75pt]   [align=left] {\begin{minipage}[lt]{61.18pt}\setlength\topsep{0pt}
\begin{center}
{\footnotesize GPT fragment $f$ (Def.~\ref{def:GPTfragment})}
\end{center}

\end{minipage}};
\draw  [draw opacity=0][fill={rgb, 255:red, 134; green, 0; blue, 0 }  ,fill opacity=0.3]  (119,275.62) -- (206,275.62) -- (206,335) -- (119,335) -- cycle  ;
\draw (122,279.62) node [anchor=north west][inner sep=0.75pt]   [align=left] {\begin{minipage}[lt]{59pt}\setlength\topsep{0pt}
\begin{center}
{\footnotesize GPT fragment  shadow $\mathcal{S}( f)$ (Def.~\ref{defn:method4}/\ref{defn:shadow})}
\end{center}

\end{minipage}};
\draw  [draw opacity=0][fill={quantumviolet}  ,fill opacity=0.3]  (316,126) -- (392,126) -- (392,171) -- (316,171) -- cycle  ;
\draw (319,130) node [anchor=north west][inner sep=0.75pt]   [align=left] {\begin{minipage}[lt]{50pt}\setlength\topsep{0pt}
\begin{center}
{\footnotesize GPT system $\mathcal{G}$ (Def.~\ref{def:GPTsystem})}
\end{center}

\end{minipage}};
\draw  [draw opacity=0][fill={quantumviolet}  ,fill opacity=0.3]  (291,288.87) -- (412,288.87) -- (412,333.87) -- (291,333.87) -- cycle  ;
\draw (294,292.87) node [anchor=north west][inner sep=0.75pt]   [align=left] {\begin{minipage}[lt]{80.12pt}\setlength\topsep{0pt}
\begin{center}
{\footnotesize GPT shadow $\mathcal{S}(\mathcal{G})$ (Def.~\ref{defn:method4}/\ref{defn:shadow})}
\end{center}

\end{minipage}};
\draw  [draw opacity=0][fill={quantumviolet}  ,fill opacity=0.3]  (535,127) -- (631,127) -- (631,172) -- (535,172) -- cycle  ;
\draw (538,131) node [anchor=north west][inner sep=0.75pt]   [align=left] {\begin{minipage}[lt]{65pt}\setlength\topsep{0pt}
\begin{center}
{\footnotesize GPT subsystem   $\mathcal{F}$ (Def.~\ref{def:subsystem})}
\end{center}

\end{minipage}};
\draw  [draw opacity=0][fill={quantumviolet}  ,fill opacity=0.3]  (525,288.49) -- (639,288.49) -- (639,333.49) -- (525,333.49) -- cycle  ;
\draw (528,292.49) node [anchor=north west][inner sep=0.75pt]   [align=left] {\begin{minipage}[lt]{75.13pt}\setlength\topsep{0pt}
\begin{center}
{\footnotesize GPT shadow $\mathcal{S}(\mathcal{F})$ (Def.~\ref{defn:method4}/\ref{defn:shadow})}
\end{center}

\end{minipage}};
\draw (210,50) node [anchor=north west][inner sep=0.75pt]  [rotate=-30.16] [align=left] {\begin{minipage}[lt]{90pt}\setlength\topsep{0pt}
\begin{center}
{\footnotesize Inclusion (Def.~\ref{def:GPTfragment})}
\end{center}

\end{minipage}};
\draw (120,252.13) node [anchor=north west][inner sep=0.75pt]  [rotate=-270] [align=left] {\begin{minipage}[lt]{95pt}\setlength\topsep{0pt}
\begin{center}
{\footnotesize Shadow maps (Def.~\ref{defn:method4}/\ref{defn:shadow})}
\end{center}

\end{minipage}};
\draw (365.32,282.26) node [anchor=north west][inner sep=0.75pt]  [rotate=-269.61] [align=left] {\begin{minipage}[lt]{75.74pt}\setlength\topsep{0pt}
\begin{center}
{\footnotesize Equivalence (Thm.~\ref{tomogshadows})}
\end{center}

\end{minipage}};
\draw (412,306) node [anchor=north west][inner sep=0.75pt]   [align=left] {{\footnotesize Embedding (Def.~\ref{def:subsystem})}};
\draw (174,226) node [anchor=north west][inner sep=0.75pt]  [rotate=-315.89] [align=left] {\begin{minipage}[lt]{92.08pt}\setlength\topsep{0pt}
\begin{center}
{\footnotesize Not always embeddable (e.g. Thm.~\ref{thm:Holevo})}
\end{center}
\end{minipage}};
\draw (595.32,282.26) node [anchor=north west][inner sep=0.75pt]  [rotate=-269.61] [align=left] {\begin{minipage}[lt]{75.74pt}\setlength\topsep{0pt}
\begin{center}
{\footnotesize Equivalence (Thm.~\ref{tomogshadows})}
\end{center}

\end{minipage}};
\draw (415,146) node [anchor=north west][inner sep=0.75pt]   [align=left] {{\footnotesize Embedding (Def.~\ref{def:subsystem})}};
\end{tikzpicture}
\bigskip
    \caption{Summary of concepts introduced from Sections \ref{se:gptp} to \ref{se:shadow}, and how they relate to each other. \textbf{Concepts:} GPT system $\mathcal{G}$, GPT subsystem $\mathcal{F}$, fragment $f$ of a GPT system $\mathcal{G}$, shadow $\mathcal{S}(f)$ of a GPT fragment $f$, shadow $\mathcal{S}(\mathcal{G})$ of a GPT system $\mathcal{G}$, and shadow $\mathcal{S}(\mathcal{F})$ of a GPT subsystem $\mathcal{F}$. \textbf{Relations:} $\mathcal{G}$ and $\mathcal{S}(\mathcal{G})$ are equivalent, just as $\mathcal{F}$ and $\mathcal{S}(\mathcal{F})$ are also equivalent. Subsystem $\mathcal{F}$ of $\mathcal{G}$ can be embedded into $\mathcal{G}$, just as $\mathcal{S}(\mathcal{F})$ can be embedded into $\mathcal{S}(\mathcal{G})$. The GPT fragment $f$ can be included into $\mathcal{G}$ via an inclusion map, and mapped into the shadow $S(f)$ via the shadow maps. The shadow $S(f)$ cannot always be embedded into $\mathcal{G}$. As we demonstrate in the coming sections, only GPT shadows that do not constitute subsystems (red box) impose challenges to assessments of nonclassicality, while all other objects (purple boxes) yield trustworthy assessments.}
    \label{fig:sumfig}
\end{figure*}

\section{The effect of shadow maps on simplex embeddability} 

In this section, we consider the effect of shadow maps on simplex-embeddability of a given fragment.

\subsection{Shadow maps can break simplex embeddability}\label{sec:ProbFailureTomog}
	
We now prove that the shadow of a GPT fragment may not be simplex-embeddable, even if the fragment itself is simplex-embeddable. This follows from a general fact: that {\em every possible} polytopic GPT system can be recovered as the shadow of some fragment of a simplicial GPT system.

\begin{restatable}{theorem}{holevo}\label{thm:Holevo}
Every polytopic GPT system is the shadow of some fragment of a strictly classical (i.e., simplicial) GPT system of some dimension.
\end{restatable}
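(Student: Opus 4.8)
The plan is to exhibit, for any polytopic GPT $P = (\Omega_P, \mathcal{E}_P, p_P)$, a simplicial GPT $\Lambda$ together with a fragment $f$ of $\Lambda$ whose shadow is (equivalent to) $P$. By Theorem~\ref{shadow-equiv} it suffices to produce a \emph{full} GPT fragment embedding $(\iota,\kappa): f \to P$; equivalently, recalling the remark after Theorem~\ref{shadow-equiv} that ``a GPT shadow of $f$ is any GPT $G$ such that there is a full embedding of $f$ into $G$'', we need $f \subseteq \Lambda$ as a fragment and a full embedding of $f$ onto $P$. The natural choice of $\Lambda$ is the classical system $\Delta_N$ on $N$ outcomes, where $N$ is the number of extreme points (vertices) of the normalized state space $\bar\Omega_P$ — or, if a symmetric construction is wanted, $N$ large enough to also accommodate the vertices of the effect space. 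The idea is that a classical system is ``maximally big'' in the relevant sense: every convex body arises as a linear image of a simplex of sufficiently high dimension.

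The key steps, in order: \textbf{(1)} List the vertices $v_1,\dots,v_N$ of $\bar\Omega_P$ and define the state map $\sigma: \mathsf{Span}[\Delta_N] \to \mathsf{Span}[\Omega_P]$ to send the $j$-th extreme point (vertex) of the simplex $\Delta_N$ to $v_j$, extended linearly; since $\bar\Omega_P = \mathsf{Conv}[v_1,\dots,v_N]$ and the simplex vertices map onto a generating set of vertices, $\sigma$ restricted to $\Delta_N$ is \emph{surjective} onto $\Omega_P$ (using that both state spaces include the null state and are convex hulls with $\mathbf 0$). \textbf{(2)} Choose the fragment $f$ of $\Delta_N$: its states are \emph{all} of $\Delta_N$, and its effects are the preimages (or rather, a choice of effect-space lift) of $\mathcal{E}_P$ under the dual construction — concretely, define $\tau$ going the other way and instead pick, for each effect $e \in \mathcal{E}_P$, \emph{some} classical effect $\hat e \in \Delta_N^*$ with $\hat e(\delta_j) = p_P(v_j, e)$ for each vertex $\delta_j$ of $\Delta_N$; such $\hat e$ exists and lies in $\Delta_N^*$ precisely because $p_P(v_j,e) \in [0,1]$, and it is a valid classical effect (the logical dual of a simplex is exactly the set of vectors whose inner products with the vertices lie in $[0,1]$). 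Let $\mathcal{E}_f$ be the convex hull of $\{\hat e : e \in \mathcal{E}_P\}$ together with $\mathbf 0$ and $u_{\Delta_N}$; one checks $\mathcal{E}_f$ satisfies the complementation axiom inside $\Delta_N^*$ by choosing the lifts consistently, i.e., $\widehat{e^\perp} := u_{\Delta_N} - \hat e$. \textbf{(3)} Verify that $f := (\Delta_N, \mathcal{E}_f, p_{\Delta_N}|_f)$ is a genuine GPT fragment of $\Delta_N$ in the sense of Definition~\ref{def:GPTfragment}. \textbf{(4)} Define $\kappa: \mathcal{E}_f \to \mathcal{E}_P$ by $\hat e \mapsto e$ (well-defined on the chosen lifts and extended linearly) and check that $(\sigma|_f, \kappa)$ is a GPT fragment embedding of $f$ into $P$: probability preservation is exactly $p_{\Delta_N}(\delta_j, \hat e) = p_P(v_j, e)$, which holds by construction on vertices and hence everywhere by bilinearity; the unit effect is preserved since $u_{\Delta_N}$ was included and maps to $u_P$. \textbf{(5)} Check \emph{fullness}: $\sigma|_f$ is surjective onto $\Omega_P$ by step (1), and $\kappa$ is surjective onto $\mathcal{E}_P$ since every $e \in \mathcal{E}_P$ has a preimage $\hat e$ by construction. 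By the characterization via full embeddings, $P$ is therefore the shadow of $f$, a fragment of the simplicial GPT $\Delta_N$.

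The main obstacle I anticipate is the bookkeeping around the effect space — specifically, ensuring that the chosen lifts $\hat e$ can simultaneously be made to (i) lie in $\Delta_N^*$, (ii) close up into a convex, compact effect space containing $\mathbf 0$ and $u_{\Delta_N}$, and (iii) respect complementation, all while keeping $\kappa$ well-defined and linear on the whole span. The cleanest way around this is probably \emph{not} to lift effect-by-effect but to take $\mathcal{E}_f := (\sigma|_f)^{-1}$-transpose applied to $\mathcal{E}_P$ — i.e., to define things purely via the quotient picture of Definition~\ref{defn:method4}: let $\mathcal{E}_f$ be \emph{all} classical effects $\hat e \in \Delta_N^*$ such that $\hat e$ factors through $\sigma$ (equivalently, $\hat e$ is constant on $\sigma$-fibers) and whose induced effect on $\Omega_P$ lies in $\mathcal{E}_P$. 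Then $\mathcal{E}_f / \!\backsim \,\cong \mathcal{E}_P$ automatically, the complementation axiom for $\mathcal{E}_f$ is inherited from that of $\mathcal{E}_P$ (pull back a complement and it still factors through $\sigma$), and Theorem~\ref{shadow-equiv} identifies the shadow of $f$ with $P$ without any ad hoc choices. A secondary subtlety worth a sentence in the write-up: one must confirm $\mathbf 0 \notin \mathsf{AffSpan}[\bar\Omega_P]$ is inherited so that $P$ really is a legitimate GPT state space — but this is an assumption built into $P$ being a GPT, so it is free. The dimension $N$ can be taken finite since $\bar\Omega_P$ is a polytope, which is the only place polytopicity (as opposed to general compact convexity) is essential.
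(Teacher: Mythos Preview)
Your proposal is correct and is essentially the paper's construction (the ``Holevo fragment''): the paper too takes $N$ to be the number of extreme states, takes the fragment's state space to be the full simplex $\Delta_N$, and defines the fragment's effects as $\xi_e := e\circ L$ where $L$ is your $\sigma$ sending the $j$-th simplex vertex to $v_j$; your $\hat e$ is exactly their $\xi_e$ since $\hat e(\delta_j)=p_P(v_j,e)=e(L\delta_j)$. The only cosmetic difference is that the paper presents the effect-side shadow map as the concrete linear map $\tau:=L_r^{-1}$ (a right inverse of $L$, which exists because $L$ is surjective), whereas you define $\kappa$ by $\hat e\mapsto e$ and appeal to well-definedness; these agree on $\mathcal{E}_f$ since $\xi_e\circ L_r^{-1}=e\circ L\circ L_r^{-1}=e$.

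Your anticipated obstacle is a non-issue: because the lift $e\mapsto \hat e=e\circ L$ is linear and injective (injectivity since the $v_j$ span, as $P$ is tomographic), the set $\{\hat e: e\in\mathcal{E}_P\}$ is already convex, already contains $\mathbf 0=\widehat{\mathbf 0}$ and $u_{\Delta_N}=\widehat{u_P}$, and already satisfies complementation via $\widehat{e^\perp}=u_{\Delta_N}-\hat e$, so no extra convex closure or ad hoc choices are needed, and the quotient-based alternative you sketch is unnecessary.
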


We prove this in Appendix~\ref{sec:Holevogeneral}.  This theorem dates back to Holevo~\cite{Holevo1982} and Beltrametti-Bugajski~\cite{Beltrametti_1995}, and an analogous construction works for arbitrarily close approximations of nonpolytopic GPTs~\cite{muller2023testing}. Our proof aims to give as clear and explicit a construction of the fragment as possible. 

The idea of the construction is to simply imagine that all extremal preparations in one's experiment are perfectly discriminable---the vertices of a simplicial theory, and that one merely has not implemented sufficient measurements to see this fact. Under this assumption, it is easy to construct a simplicial fragment that reproduces one's data table. All that remains is to figure out which subset of the logically possible measurements are the ones that reproduce the observed data. This is done simply by associating to each effect $e$ the functional on vertices of the simplex defined by $\xi_e(\cdot):=p(e,\cdot)$.

We illustrate this theorem by a simple example: by demonstrating how the gbit is the shadow of a fragment of a simplicial theory. This example has also been discussed previously in the literature~\cite{Holevo1982,muller2023testing,selby2024linear}. (Here, we give some extra details relative to those earlier presentations.)

\subsection{ Example 3: the gbit as the shadow of a fragment of a simplicial theory }\label{HolevoSec}

We now repeat a well-known example that was originally due to Holevo~\cite[Sec.~1.5, p.~18]{Holevo1982}. 

The gbit $\mathcal{G}$ has four pure states which form a regular square in a plane of $\mathds{R}^3$ that does not contain the zero vector. We can write these as
\begin{align}\label{eq:statesG}
   s_1= \begin{bmatrix}
        1\\
        1\\
        0
    \end{bmatrix}\ns,\ 
    s_2 = \begin{bmatrix}
        1\\
        -1\\
        0
    \end{bmatrix}\ns,\ 
    s_3 = \begin{bmatrix}
        1\\
        0\\
        1
    \end{bmatrix}\ns,\ 
    s_4 = \begin{bmatrix}
        1\\
        0\\
        -1
    \end{bmatrix}\ns.
\end{align}
The normalized state space $\bar{\Omega}_\mathcal{G}$ of the gbit is the convex hull of those extremal states. The effects in the gbit include all of the logically possible effects (so that the GPT is said to satisfy the no-restriction hypothesis~\cite{chiribella2010probabilistic}). These effects constitute the convex hull $\mathcal{E}_\mathcal{G}=\mathsf{Conv}\left(\{0,u,e_{13},e_{14},e_{23},e_{24}\}\right)$, where $e_{ij}$ are the effects that assign probability one to all states on the edge of the square that contains states $s_i$ and $s_j$. By defining $e(s) = e\circ s$, where $\circ$ is the usual matrix product\footnote{Note that the effects are naturally described as row vectors, as they are elements of the dual. Here we write their transposes (as column vectors) to spare space.}, these vectors take the explicit form
\begin{align}\label{eq:effectsG}
    e^T_{13}=\frac{1}{2}\begin{bmatrix}
        1\\
        1\\
        1
    \end{bmatrix}\ns,\ 
    e^T_{14}=\frac{1}{2}\begin{bmatrix}
        1\\
        1\\
        -1
    \end{bmatrix}\ns,\ 
    e^T_{23} = \frac{1}{2}\begin{bmatrix}
        1\\
        -1\\
        1,
    \end{bmatrix}\ns,\ 
    e^T_{24} = \frac{1}{2}\begin{bmatrix}
        1\\
        -1\\
        -1
    \end{bmatrix}\ns.
\end{align}
The states and effects in Eq.~\eqref{eq:statesG} and Eq.~\eqref{eq:effectsG} are plotted in purple in Figure~\ref{fig:H1}. 
Defining the unit effect as
\begin{align}
    u^T = \begin{bmatrix}
        1\\
        0\\
        0
    \end{bmatrix}\ns,
\end{align}
 \sloppy one can check that $u(\omega)=1$ for all $\omega\in\bar{\Omega}_\mathcal{G}$ and that ${e_{13}+e_{24}=e_{14}+e_{23}=u}$. Thus, the gbit is taken to have two binary outcome measurements, defined by $\{e_{13},e_{24}\}$ and $\{e_{14},e_{23}\}$.
  
Now we will construct a fragment of a simplicial GPT system whose shadow reproduces the gbit, which we hereby denote as the \emph{Holevo fragment} $\mathscr{h}$. This construction is an example of the general construction in Appendix~\ref{sec:Holevogeneral} that is used to prove Theorem~\ref{thm:Holevo}. 
 
There are $n=4$ extremal states of the gbit, so the classical system in question has a 4-vertex simplex as its normalized state space---i.e.,  a tetrahedron $\Delta_3$ in three dimensions. We can take each of the extremal states in $\mathbbm{R}^4$ to be unit basis vectors
\begin{align}
   \mu_1= \begin{bmatrix}
        1\\
        0\\
        0\\
        0
    \end{bmatrix}\ns,\ 
    \mu_2 = \begin{bmatrix}
        0\\
        1\\
        0\\
        0
    \end{bmatrix}\ns,\ 
    \mu_3 = \begin{bmatrix}
        0\\
        0\\
        1\\
        0
    \end{bmatrix}\ns,\
    \mu_4 = \begin{bmatrix}
        0\\
        0\\
        0\\
        1
    \end{bmatrix}\ns, 
\end{align}
and we take the set of normalized states in the fragment to be $\bar{\Omega}_{\mathscr{h}}:=\mathsf{Conv}(\{\mu_1,\mu_2,\mu_3,\mu_4\})$.

To construct the effects in the Holevo fragment, we first define the matrix $L$ in which each column is a vector $s_i$ of the gbit, namely 
 \begin{align}
     L = \begin{bmatrix}
         1 & 1 &1 &1\\
         1& -1 &0& 0\\
         0& 0& 1& -1 
     \end{bmatrix}.
 \end{align}
We then act this matrix on the right of each gbit effect $e\in\mathcal{E}_\mathcal{G}$ to get the effects of the Holevo fragment, defining ${\mathcal{E}_\mathscr{h}:=\mathsf{Conv}\left(\{\xi_0,\xi_{13},\xi_{14},\xi_{23},\xi_{24},\xi_u\}\right)}$ where $\xi_e= e\circ L$. Explicitly, these are the effects $\xi_0 = \vec{0}$, $\xi_u=\begin{bmatrix}
	1 & 1 & 1 & 1
\end{bmatrix}$, and
\begin{align}
    \xi^T_{e_{13}}=\begin{bmatrix}
        1\\
        0\\
        1\\
        0
    \end{bmatrix}\ns,\ 
    \xi^T_{e_{14}}=\begin{bmatrix}
        1\\
        0\\
        0\\
        1
    \end{bmatrix}\ns,\ 
    \xi^T_{e_{23}}=\begin{bmatrix}
        0\\
        1\\
        1\\
        0
    \end{bmatrix}\ns,\ 
    \xi^T_{e_{24}}=\begin{bmatrix}
        0\\
        1\\
        0\\
        1
    \end{bmatrix}\ns.
\end{align}

In summary, the Holevo fragment $\mathscr{h}$ is defined by $\Omega_\mathscr{h}:=\mathsf{Conv}(0\cup\bar{\Omega}_\mathscr{h})$, $\mathcal{E}_\mathscr{h}$, and the probability rule given by the evaluation map (i.e. by simply composing the effect with the state).

The states in this fragment span $\mathds{R}^{4}$. However, the effects do not span the whole dual space $(\mathds{R}^{4})^*$, since the linear dependence $\xi_{13}+\xi_{24}=\xi_{14}+\xi_{23}$ between them implies that only three of them are linearly independent. Consequently, $\mathsf{dim}\left[\mathsf{Span}\left(\mathcal{E}_\mathscr{h}\right)\right]<\mathsf{dim}\left[\mathsf{Span}\left(\Omega_\mathscr{h}\right)\right]$, and so this Holevo fragment is not relatively tomographic. More specifically, the effects cannot separate the states.

Finally, let us show that the shadow $\mathcal{S}(\mathscr{h})$ of this Holevo fragment is the the gbit $\mathcal{G}$. This fragment and its shadow are depicted in Figure~\ref{fig:H1}.

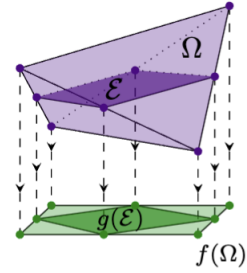
\begin{figure}[t]
   \centering
\tikzset{every picture/.style={line width=0.5pt}} 

\begin{tikzpicture}[x=0.6pt,y=0.6pt,yscale=-1,xscale=1]
	
	\draw  [fill={quantumviolet}  ,fill opacity=0.3 ] (308.52,42.59) -- (284.48,158.02) -- (167.85,138.78) -- (145.01,93.09) -- cycle ;
	\draw  [draw opacity=0][fill={quantumviolet}  ,fill opacity=0.5 ] (236.58,93.09) -- (297.1,99.1) -- (214.74,125.55) -- (156.43,115.93) -- cycle ;
	\draw  [fill={rgb, 255:red, 0; green, 134; blue, 0}  ,fill opacity=0.3 ] (167.85,197.69) -- (308.52,197.69) -- (284.48,221.74) -- (145.01,220.54) -- cycle ;
	\draw  [fill={rgb, 255:red, 0; green, 134; blue, 0}  ,fill opacity=0.5 ] (233.98,198.29) -- (295.3,209.11) -- (214.74,221.14) -- (157.03,209.11) -- cycle ;
	\draw    (145.01,93.09) -- (284.48,158.02) ;
	\draw  [dash pattern={on 0.84pt off 2.51pt}]  (308.52,42.59) -- (173.86,133.97) ;
	\draw    (156.43,115.93) -- (214.74,125.55) ;
	\draw    (214.74,125.55) -- (296.5,100.3) ;
	\draw  [dash pattern={on 0.84pt off 2.51pt}]  (236.58,93.09) -- (296.5,100.3) ;
	\draw  [dash pattern={on 0.84pt off 2.51pt}]  (156.43,115.93) -- (236.58,93.09) ;
	\draw  [dash pattern={on 4.5pt off 4.5pt}]  (145.01,93.09) -- (145.01,220.54) ;
	\draw  [dash pattern={on 4.5pt off 4.5pt}]  (167.85,138.78) -- (167.85,197.69) ;
	\draw  [dash pattern={on 4.5pt off 4.5pt}]  (284.48,158.02) -- (284.48,221.74) ;
	\draw  [dash pattern={on 4.5pt off 4.5pt}]  (308.52,42.59) -- (308.52,197.69) ;
	\draw  [dash pattern={on 4.5pt off 4.5pt}]  (214.74,125.55) -- (214.74,221.14) ;
	\draw  [dash pattern={on 4.5pt off 4.5pt}]  (236.58,93.09) -- (236.58,198.29) ;
	\draw  [dash pattern={on 4.5pt off 4.5pt}]  (156.43,115.93) -- (156.43,209.11) ;
	\draw  [dash pattern={on 4.5pt off 4.5pt}]  (296.5,100.3) -- (296.5,209.72) ;
	\draw  [fill={rgb, 255:red, 0; green, 0; blue, 0 }  ,fill opacity=1 ] (149.82,180.86) -- (145.41,189.68) -- (141,180.86) -- (145.41,185.27) -- cycle ;
	\draw  [fill={rgb, 255:red, 0; green, 0; blue, 0 }  ,fill opacity=1 ] (160.64,168.84) -- (156.23,177.65) -- (151.82,168.84) -- (156.23,173.24) -- cycle ;
	\draw  [fill={rgb, 255:red, 0; green, 0; blue, 0 }  ,fill opacity=1 ] (171.46,156.81) -- (167.05,165.63) -- (162.64,156.81) -- (167.05,161.22) -- cycle ;
	\draw  [fill={rgb, 255:red, 0; green, 0; blue, 0 }  ,fill opacity=1 ] (219.55,180.86) -- (215.14,189.68) -- (210.73,180.86) -- (215.14,185.27) -- cycle ;
	\draw  [fill={rgb, 255:red, 0; green, 0; blue, 0 }  ,fill opacity=1 ] (289.29,180.86) -- (284.88,189.68) -- (280.47,180.86) -- (284.88,185.27) -- cycle ;
	\draw  [fill={rgb, 255:red, 0; green, 0; blue, 0 }  ,fill opacity=1 ] (301.31,168.84) -- (296.9,177.65) -- (292.49,168.84) -- (296.9,173.24) -- cycle ;
	\draw  [fill={rgb, 255:red, 0; green, 0; blue, 0 }  ,fill opacity=1 ] (313.33,156.81) -- (308.92,165.63) -- (304.52,156.81) -- (308.92,161.22) -- cycle ;
	\draw  [fill={rgb, 255:red, 0; green, 0; blue, 0 }  ,fill opacity=1 ] (241.19,156.81) -- (236.79,165.63) -- (232.38,156.81) -- (236.79,161.22) -- cycle ;
	\draw  [draw opacity=0][fill={quantumviolet}  ,fill opacity=1 ] (141.49,93.09) .. controls (141.49,91.15) and (143.07,89.57) .. (145.01,89.57) .. controls (146.95,89.57) and (148.52,91.15) .. (148.52,93.09) .. controls (148.52,95.03) and (146.95,96.61) .. (145.01,96.61) .. controls (143.07,96.61) and (141.49,95.03) .. (141.49,93.09) -- cycle ;
	\draw  [draw opacity=0][fill={quantumviolet}  ,fill opacity=1 ] (164.34,138.78) .. controls (164.34,136.84) and (165.91,135.26) .. (167.85,135.26) .. controls (169.79,135.26) and (171.37,136.84) .. (171.37,138.78) .. controls (171.37,140.72) and (169.79,142.29) .. (167.85,142.29) .. controls (165.91,142.29) and (164.34,140.72) .. (164.34,138.78) -- cycle ;
	\draw  [draw opacity=0][fill={quantumviolet}  ,fill opacity=1 ] (280.96,158.02) .. controls (280.96,156.07) and (282.54,154.5) .. (284.48,154.5) .. controls (286.42,154.5) and (287.99,156.07) .. (287.99,158.02) .. controls (287.99,159.96) and (286.42,161.53) .. (284.48,161.53) .. controls (282.54,161.53) and (280.96,159.96) .. (280.96,158.02) -- cycle ;
	\draw  [draw opacity=0][fill={quantumviolet}  ,fill opacity=1 ] (305.01,42.59) .. controls (305.01,40.65) and (306.58,39.07) .. (308.52,39.07) .. controls (310.47,39.07) and (312.04,40.65) .. (312.04,42.59) .. controls (312.04,44.53) and (310.47,46.11) .. (308.52,46.11) .. controls (306.58,46.11) and (305.01,44.53) .. (305.01,42.59) -- cycle ;
	\draw  [draw opacity=0][fill={quantumviolet}  ,fill opacity=1 ] (211.23,125.55) .. controls (211.23,123.61) and (212.8,122.04) .. (214.74,122.04) .. controls (216.68,122.04) and (218.26,123.61) .. (218.26,125.55) .. controls (218.26,127.49) and (216.68,129.07) .. (214.74,129.07) .. controls (212.8,129.07) and (211.23,127.49) .. (211.23,125.55) -- cycle ;
	\draw  [draw opacity=0][fill={quantumviolet}  ,fill opacity=1 ] (152.91,115.93) .. controls (152.91,113.99) and (154.49,112.42) .. (156.43,112.42) .. controls (158.37,112.42) and (159.95,113.99) .. (159.95,115.93) .. controls (159.95,117.88) and (158.37,119.45) .. (156.43,119.45) .. controls (154.49,119.45) and (152.91,117.88) .. (152.91,115.93) -- cycle ;
	\draw  [draw opacity=0][fill={quantumviolet}  ,fill opacity=1 ] (292.98,100.3) .. controls (292.98,98.36) and (294.56,96.79) .. (296.5,96.79) .. controls (298.44,96.79) and (300.02,98.36) .. (300.02,100.3) .. controls (300.02,102.25) and (298.44,103.82) .. (296.5,103.82) .. controls (294.56,103.82) and (292.98,102.25) .. (292.98,100.3) -- cycle ;
	\draw  [draw opacity=0][fill={quantumviolet}  ,fill opacity=1 ] (233.07,93.09) .. controls (233.07,91.15) and (234.64,89.57) .. (236.58,89.57) .. controls (238.53,89.57) and (240.1,91.15) .. (240.1,93.09) .. controls (240.1,95.03) and (238.53,96.61) .. (236.58,96.61) .. controls (234.64,96.61) and (233.07,95.03) .. (233.07,93.09) -- cycle ;
	\draw  [draw opacity=0][fill={rgb, 255:red, 0; green, 134; blue, 0}  ,fill opacity=1 ] (153.51,209.11) .. controls (153.51,207.17) and (155.09,205.6) .. (157.03,205.6) .. controls (158.97,205.6) and (160.55,207.17) .. (160.55,209.11) .. controls (160.55,211.06) and (158.97,212.63) .. (157.03,212.63) .. controls (155.09,212.63) and (153.51,211.06) .. (153.51,209.11) -- cycle ;
	\draw  [draw opacity=0][fill={rgb, 255:red, 0; green, 134; blue, 0}  ,fill opacity=1 ] (141.49,220.54) .. controls (141.49,218.59) and (143.07,217.02) .. (145.01,217.02) .. controls (146.95,217.02) and (148.52,218.59) .. (148.52,220.54) .. controls (148.52,222.48) and (146.95,224.05) .. (145.01,224.05) .. controls (143.07,224.05) and (141.49,222.48) .. (141.49,220.54) -- cycle ;
	\draw  [draw opacity=0][fill={rgb, 255:red, 0; green, 134; blue, 0}  ,fill opacity=1 ] (164.34,197.69) .. controls (164.34,195.75) and (165.91,194.18) .. (167.85,194.18) .. controls (169.79,194.18) and (171.37,195.75) .. (171.37,197.69) .. controls (171.37,199.63) and (169.79,201.21) .. (167.85,201.21) .. controls (165.91,201.21) and (164.34,199.63) .. (164.34,197.69) -- cycle ;
	\draw  [draw opacity=0][fill={rgb, 255:red, 0; green, 134; blue, 0}  ,fill opacity=1 ] (211.23,221.14) .. controls (211.23,219.19) and (212.8,217.62) .. (214.74,217.62) .. controls (216.68,217.62) and (218.26,219.19) .. (218.26,221.14) .. controls (218.26,223.08) and (216.68,224.65) .. (214.74,224.65) .. controls (212.8,224.65) and (211.23,223.08) .. (211.23,221.14) -- cycle ;
	\draw  [draw opacity=0][fill={rgb, 255:red, 0; green, 134; blue, 0}  ,fill opacity=1 ] (280.96,221.74) .. controls (280.96,219.8) and (282.54,218.22) .. (284.48,218.22) .. controls (286.42,218.22) and (287.99,219.8) .. (287.99,221.74) .. controls (287.99,223.68) and (286.42,225.26) .. (284.48,225.26) .. controls (282.54,225.26) and (280.96,223.68) .. (280.96,221.74) -- cycle ;
	\draw  [draw opacity=0][fill={rgb, 255:red, 0; green, 134; blue, 0}  ,fill opacity=1 ] (233.07,198.29) .. controls (233.07,196.35) and (234.64,194.78) .. (236.58,194.78) .. controls (238.53,194.78) and (240.1,196.35) .. (240.1,198.29) .. controls (240.1,200.24) and (238.53,201.81) .. (236.58,201.81) .. controls (234.64,201.81) and (233.07,200.24) .. (233.07,198.29) -- cycle ;
	\draw  [draw opacity=0][fill={rgb, 255:red, 0; green, 134; blue, 0}  ,fill opacity=1 ] (305.01,197.69) .. controls (305.01,195.75) and (306.58,194.18) .. (308.52,194.18) .. controls (310.47,194.18) and (312.04,195.75) .. (312.04,197.69) .. controls (312.04,199.63) and (310.47,201.21) .. (308.52,201.21) .. controls (306.58,201.21) and (305.01,199.63) .. (305.01,197.69) -- cycle ;
	\draw  [draw opacity=0][fill={rgb, 255:red, 0; green, 134; blue, 0}  ,fill opacity=1 ] (291.78,209.11) .. controls (291.78,207.17) and (293.36,205.6) .. (295.3,205.6) .. controls (297.24,205.6) and (298.82,207.17) .. (298.82,209.11) .. controls (298.82,211.06) and (297.24,212.63) .. (295.3,212.63) .. controls (293.36,212.63) and (291.78,211.06) .. (291.78,209.11) -- cycle ;
	
	\draw (215,200) node [anchor=north west][inner sep=0.75pt]   [align=left] {\small $\displaystyle {\tau(\mathcal{E}_{\mathscr{h}})}$};
	\draw (303,206) node [anchor=north west][inner sep=0.75pt]   [align=left] {\small $\displaystyle{\sigma(\Omega _{\mathscr{h}}})$};
	\draw (269,66) node [anchor=north west][inner sep=0.75pt]   [align=left] {\small $\displaystyle {\Omega }_{\mathscr{h}}$};
	\draw (207,101) node [anchor=north west][inner sep=0.75pt]   [align=left] {\small $\displaystyle {\mathcal{E}}_{\mathscr{h}}$};

\end{tikzpicture}
\bigskip
   \caption{Here we depict the Holevo fragment: states and effects from a simplicial theory whose shadow map (depicted as downward arrows) gives the gbit state and effect space, $\sigma(\Omega_\mathscr{h})=\Omega_\mathcal{G}$ and $\tau(\mathcal{E}_\mathscr{h})=\mathcal{E}_\mathcal{G}$, respectively. Note that we have only shown the (convex hull of the) four extremal effects that are not the unit and null effect, since depicting the full effect space would require four dimensions. }
   \label{fig:H1}
\end{figure}

As we prove in Appendix~\ref{sec:Holevogeneral}, the shadow of a fragment of a simplicial system constructed in this manner is given by taking the shadow maps to be $\tau=L$ and $\sigma=L_r^{-1}$, where $L_r^{-1}$ is a right inverse of $L$. 
 That this defines a shadow map will be justified below (by verifying that it leads to valid GPT system and that $\tau=L$ and $\sigma=L_r^{-1}$ are full).

The matrix representation of $L_r^{-1}$ is
\begin{align}
    L^{-1}_r = \frac{1}{4}\begin{bmatrix}
        1& 2& 0\\
        1& -2& 0\\
        1& 0& 2\\
        1& 0& -2\\
    \end{bmatrix},
\end{align}
since $L\circ L^{-1}_r=\mathds{1}_3$, as one can check. Now, we get the shadow states by applying $L$ to the states of $\Omega_\mathscr{h}$, i.e., $\sigma(\mu_i) = L\circ \mu_i\in\Omega_{\mathcal{S}(\mathscr{h})}$, $i=1,...,4$ (recall that, given our choice of basis, each $\mu_i$ is a member of the standard basis of $\mathds{R}^4$).
Finally,  we apply $L^{-1}_r$ (via pre-composition) to the effects in the Holevo fragment to get the effects $\tau(\xi_j)\in\mathcal{E}_{\mathcal{S}(\mathscr{h})}$, $j\in\{0,e_{13},e_{14},e_{23},e_{24},u\}$. Working this multiplication out for the pure objects gives 
\begin{align}\label{stateimage}
    \sigma(\mu_1) = \begin{bmatrix}
        1\\
        1\\
        0
    \end{bmatrix}\ns,\
    \sigma(\mu_2) = \begin{bmatrix}
        1\\
        -1\\
        0
    \end{bmatrix}\ns,\\
    \sigma(\mu_3)=\begin{bmatrix}
        1\\
        0\\
        1
    \end{bmatrix}\ns,\ 
    \sigma(\mu_4) = \begin{bmatrix}
        1\\
        0\\
        -1
    \end{bmatrix}\ns,\ 
\end{align}
for states, and gives
 \begin{align}\label{effectimage}
    [\tau(\xi_{e_{13}})]^T=\frac{1}{2}\begin{bmatrix}
        1\\
        1\\
        1
    \end{bmatrix}\ns,\ 
   [\tau(\xi_{e_{14}})]^T=\frac{1}{2}\begin{bmatrix}
        1\\
        1\\
        -1
    \end{bmatrix}\ns,\nonumber \\
    [\tau(\xi_{e_{23}})]^T = \frac{1}{2}\begin{bmatrix}
        1\\
        -1\\
        1,
    \end{bmatrix}\ns,\ 
    [\tau(\xi_{e_{24}})]^T = \frac{1}{2}\begin{bmatrix}
        1\\
        -1\\
        -1
    \end{bmatrix},
\end{align}
for effects, as well as $[\tau(\xi_0)]^T =\vec{0}$ and $[\tau(u)]^T=u$. Comparing these with Eq.~\eqref{eq:statesG} and Eq.~\eqref{eq:effectsG}, respectively, one sees that indeed $\Omega_{\mathcal{S}(\mathscr{h})}=\Omega_\mathcal{G}$ and $\mathcal{E}_{\mathcal{S}(\mathscr{h})}=\mathcal{E}_{\mathcal{G}}$, and the same probability rule is being used (namely the evaluation map, given by standard matrix multiplication), so the shadow $\mathcal{S}(\mathscr{h})$ is indeed $\mathcal{G}$.

One can now easily verify that taking $\tau=L$ and $\sigma=L_r^{-1}$ does indeed define a shadow map. It suffices to check that the image of these maps is a valid GPT system (which it is, as we just showed the image is $\mathcal{G}$), and that $\tau=L$ and $\sigma=L_r^{-1}$ define a full embedding. For $\sigma$ to be full means that $\sigma(\Omega_{\mathscr{h}})=\Omega_\mathcal{G}$, which is the case, as Eq.~\eqref{stateimage} shows that the image of the extremal states of $\Omega_{\mathscr{h}}$ coincides with the extremal points of $\Omega_\mathcal{G}$. Similarly, one can check that $\tau=L$ is full via Eq.~\eqref{effectimage}.

Because the Holevo fragment is not relatively tomographic, its shadow is not faithful and so introduces new operational identities. We can see this explicitly from the fact that a generic state in the tetrahedron $\Delta_3$ gets mapped under the shadow map as 
\begin{equation}
    \begin{bmatrix}
        p_1\\
        p_2\\
        p_3\\
        p_4
    \end{bmatrix} 
    \ \ \rightarrow \ \
    \begin{bmatrix}
        1\\
        p_1-p_2\\
        p_3-p_4
    \end{bmatrix}.
\end{equation}
So, two different states $p$ and $q$ in $\Delta_3$ that happen to have components obeying $p_1-p_2=q_1-q_2$ and $p_3-p_4=q_3-q_4$ will get mapped to the same state in $\Omega_{\mathcal{S}(\mathscr{h})}$. This identification of different states occurs because the effects in $\mathcal{E}_\mathcal{G}$ that are able to distinguish these distributions are absent from $\mathcal{E}_\mathscr{h}$. 
Thus, experiments that fail to have access to these distinguishing effects lead one to the mistaken identification of fundamentally distinct states, which in turn leads one to believe that the GPT system in question is a gbit, even though it is by assumption a four-level classical system.
This is an extreme example, in that gbits are known to be maximally contextual and compose to reach maximal violations of Bell-CHSH inequalities.

\subsection{Shadow maps cannot introduce simplex-embeddability}

One might also like to see an example of a fragment that is not simplex-embeddable, but whose shadow is simplex-embeddable. However, this is impossible.

\begin{prop}\label{classicalityprop}
    Consider {\em any} fragment $f$ of a GPT system $G$ (even one whose states and effects are not tomographic for each other). If the shadow of $f$ is simplex-embeddable, then $f$ is simplex-embeddable.
\end{prop}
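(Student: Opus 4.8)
The plan is to reduce the claim to a statement about composing embeddings, using the characterization of the shadow as a \emph{full} GPT fragment embedding. Recall from the discussion after Theorem~\ref{shadow-equiv} that a GPT shadow of $f$ is (up to equivalence) any valid GPT $\mathcal{S}(f)$ together with a full embedding $(\sigma,\tau):f\to\mathcal{S}(f)$; and recall that the shadow map is always a GPT fragment embedding (it reproduces the empirical probabilities, as in Eq.~\eqref{probrulefrag}, and its two component maps are a state map and an effect map respectively). The hypothesis gives us a simplicial GPT $\Lambda$ and a GPT fragment embedding $(\iota,\kappa):\mathcal{S}(f)\to\Lambda$ witnessing simplex-embeddability of the shadow.

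First I would simply compose: let $(\sigma,\tau):f\to\mathcal{S}(f)$ be the shadow map and $(\iota,\kappa):\mathcal{S}(f)\to\Lambda$ the simplex embedding of the shadow. Since state maps are closed under sequential composition and effect maps are closed under sequential composition (see the footnotes to Definitions~\ref{def:statemaps} and~\ref{def:effectmaps}), the pair $(\iota\circ\sigma,\;\kappa\circ\tau)$ is again a pair of a state map $\Omega_f\to\Delta$ and an effect map $\mathcal{E}_f\to\Delta^*$. Next I would check that this composite preserves probabilities: for all $s\in\Omega_f$ and $e\in\mathcal{E}_f$ we have $p_f(s,e)=p_{\mathcal{S}(f)}(\sigma(s),\tau(e))=p_\Lambda(\iota(\sigma(s)),\kappa(\tau(e)))$, where the first equality is the defining property of the shadow map (Eq.~\eqref{probruleshadow}) and the second is the defining property of the embedding $(\iota,\kappa)$. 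Hence $(\iota\circ\sigma,\kappa\circ\tau)$ is a GPT fragment embedding of $f$ into the simplicial GPT $\Lambda$, which is exactly a simplex embedding of $f$ in the sense of Definition~\ref{defn:method4}'s surrounding text. Therefore $f$ is simplex-embeddable.

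This is essentially just transitivity of fragment embeddings specialized to the case where the middle object is a shadow and the target is simplicial; the paper already notes (after Definition~\ref{def:GPTsystem}) that GPT embeddings are transitive, and the only thing to verify is that nothing in the definitions obstructs composing a \emph{fragment} embedding after a \emph{shadow} embedding — but since both are just probability-preserving pairs of a state map and an effect map, composition is immediate. The main (minor) subtlety to be careful about is purely bookkeeping: making sure the domains and codomains line up, i.e.\ that the shadow map indeed lands in $\mathcal{S}(f)$ and that $\mathcal{S}(f)$ is the object being embedded into $\Lambda$ by hypothesis, so that the composite is well-defined. There is no real obstacle here; the content of the proposition is conceptual (the shadow cannot create classicality) rather than technical, and the formal proof is a one-line composition argument. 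I would close by remarking, as the paper's narrative demands, that this is the precise sense in which failures of relative tomographic completeness can only ever make an experiment \emph{look} more nonclassical than it is, never less.
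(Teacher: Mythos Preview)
Your proof is correct and follows exactly the same approach as the paper: both argue that the shadow map is a GPT fragment embedding of $f$ into $\mathcal{S}(f)$, and composing it with the assumed simplex embedding of $\mathcal{S}(f)$ yields a simplex embedding of $f$ by transitivity of embeddings. The paper's version is simply the one-line summary of what you have spelled out in detail.
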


The proof follows immediately from transitivity of GPT embeddings:
since every fragment embeds into its shadow, and since the shadow by assumption embeds into a simplicial GPT system, the fragment itself embeds into a simplicial theory. 

We give an example of a shadow map that dramatically distorts the fragment in question while nonetheless preserving simplex-embeddability in the next section.

Theorem~\ref{thm:Holevo} and Proposition~\ref{classicalityprop} (respectively) can be summarized as follows: {\em Shadow maps can break simplex-embeddability, but cannot introduce it.}

\subsection{Example 1 redux: a nontrivial shadow map that preserves simplex embeddability}\label{sec:redux}

Finally, reconsider the example from Section~\ref{example1}, where the shadow map acts in a highly nontrivial manner, so that (like in the Holevo example) the shadow is very different from the original GPT fragment. But despite this (and in contrast to the Holevo example), simplex-embeddability is nonetheless preserved by the shadow map in this example.  

Consider again the fragment of a qubit containing all possible states in the Bloch ball and containing only those effects in the convex hull of the projectors $\ket{0}\bra{0}$ and $\ket{1}\bra{1}$ (besides the zero and unit effects). This fragment is depicted in Figure~\ref{fig:exSimpP2} (which is simply a repeat of Figure~\ref{fig:exSimpP}). 

\begin{figure}[htb!]
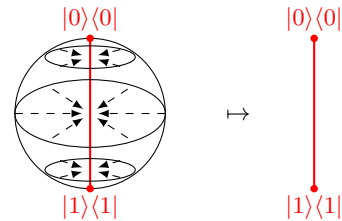

   \centering
   \begin{tabular}{ccc}
    \btp
    \begin{pgfonlayer}{nodelayer}
        \node[shape=circle, fill=red, inner sep =1pt] (0) at (0,2) {};
        \node[shape=circle, fill=red, inner sep =1pt] (1) at (0,-2) {};
        \node [style=none,font=\color{red}] (label0) at (0,2.5) {$\ket{0}\!\!\bra{0}$};
        \node [style=none,font=\color{red}] (label1) at (0,-2.5) {$\ket{1}\!\!\bra{1}$};
        \node [style=none] (aux1) at (2,0) {};
        \node [style=none] (aux2) at (-2,0) {};
        \node[style=none] (aux3) at (1,-0.65) {};
        \node[style=none] (aux4) at (1,0.65) {};
        \node [style=none] (aux5) at (-1,0.65) {};
        \node [style=none] (aux6) at (-1,-0.65) {};
        \node [style=circle,fill=none,inner sep=2pt] (centre) at (0,0) {};
        \node [style=circle,fill=none,inner sep=2pt] (centre2) at (0,1.5) {};
        \node [style=none] (aux3-2) at (0.8,1.25) {};
        \node [style=none] (aux4-2) at (-0.8,1.25) {};
        \node [style=none] (aux5-2) at (0.8,1.7) {};
        \node [style=none] (aux6-2) at (-0.8,1.7) {};
        \node [style=circle,fill=none,inner sep=2pt] (centre3) at (0,-1.5) {};
        \node [style=none] (aux3-3) at (0.8,-1.25) {};
        \node [style=none] (aux4-3) at (-0.8,-1.25) {};
        \node [style=none] (aux5-3) at (0.8,-1.7) {};
        \node [style=none] (aux6-3) at (-0.8,-1.7) {};
    \end{pgfonlayer}
    \begin{pgfonlayer}{edgelayer}
        \draw [fill=none](0,0) circle (2.0) node [black] {};
        \draw [thick,color=red] (0.center) to (1.center) {};
        \draw [fill=none] (0,0) ellipse (2cm and 0.9cm) {};
        \draw [fill=none] (0,1.5) ellipse (1.2cm and 0.3cm) {};
        \draw [fill=none] (0,-1.5) ellipse (1.2cm and 0.3cm) {};
        \draw [dashed,->, >=latex] (aux1.center) to (centre);
        \draw [dashed,->, >=latex] (aux2.center) to (centre);
        \draw [dashed,->,>=latex] (aux3.center) to (centre);
        \draw [dashed,->,>=latex] (aux4.center) to (centre);
        \draw [dashed,->,>=latex] (aux5.center) to (centre);
        \draw [dashed,->,>=latex] (aux6.center) to (centre);
        \draw [dashed,->,>=latex] (aux3-2.center) to (centre2);
        \draw [dashed,->,>=latex] (aux4-2.center) to (centre2);
        \draw [dashed,->,>=latex] (aux5-2.center) to (centre2);
        \draw [dashed,->,>=latex] (aux6-2.center) to (centre2);
        \draw [dashed,->,>=latex] (aux3-3.center) to (centre3);
        \draw [dashed,->,>=latex] (aux4-3.center) to (centre3);
        \draw [dashed,->,>=latex] (aux5-3.center) to (centre3);
        \draw [dashed,->,>=latex] (aux6-3.center) to (centre3);
    \end{pgfonlayer}
\etp & \quad\quad$\mapsto$\quad\quad &
\btp
\begin{pgfonlayer}{nodelayer}
    \node[shape=circle, fill=red, inner sep =1pt] (0) at (0,2) {};
        \node[shape=circle, fill=red, inner sep =1pt] (1) at (0,-2) {};
        \node [style=none,font=\color{red}] (label0) at (0,2.5) {$\ket{0}\!\!\bra{0}$};
        \node [style=none,font=\color{red}] (label1) at (0,-2.5) {$\ket{1}\!\!\bra{1}$};
\end{pgfonlayer}
\begin{pgfonlayer}{edgelayer}
    \draw [thick,color=red] (0.center) to (1.center) {};
\end{pgfonlayer}
\etp\\
   \end{tabular}
   \vspace{10pt}
   \caption{Consider the fragment of quantum theory whose state space contains all states in the Bloch ball and whose effect space is the set of all effects in the line segment in red, besides the zero and unit effects. A GPT shadow of this fragment can be constructed by orthogonally projecting the set of states onto the line segment. This results in a shadow which is simply the GPT representing a classical bit.}
   \label{fig:exSimpP2}
\end{figure}

One can compute a shadow of this fragment by orthogonally projecting the state space onto the line segment defined by the effects, as shown in the figure. Like in the projections in the previous section, this linearly maps the states and effects into a proper GPT system (here, the classical bit), while preserving the probabilities assigned (by the Born rule) to any state-effect pair. Consequently, it constitutes a valid shadow map. 

The shadow constitutes all and only the states and effects of a classical bit. Therefore it is simplex-embeddable (and indeed is itself simplicial). 

As one can check explicitly (e.g., via a linear program~\cite{selby2024linear}), the original GPT fragment in this example is simplex-embeddable.  So this constitutes an example where the shadow map takes a simplex-embeddable fragment to a simplex embeddable GPT system, despite dramatically altering the fragment.

\section{Implications for noncontextuality}\label{sec:NonProblematicFailureTomography}	

At this stage, readers familiar with the notion of simplex-embeddability may already see how the above results are relevant to the study of nonclassicality. However, we will emphasize and make more explicit these connections in this section.

In the literature, it is often claimed that assessments of noncontextuality rest on the assumption of tomographic completeness. However, our work implies that they rest on a weaker assumption, namely that the states and effects in the experimentally realized fragment are {\em relatively tomographic},  regardless of whether they are tomographic for the full set of effects and states of the true GPT system. We term this the {\em relative tomographic completeness} assumption.

Recall from Section~\ref{example1} that when the states and effects in one's experiment are not relatively tomographic, the statistics one observes in any experiment with them will be insufficient to correctly characterize them. More specifically, some states and/or effects that are distinct in the true GPT governing the experiment will appear to be equivalent for all the data obtains in that experiment. 
Thus, in any experiment that makes use only of this data (as opposed to, for example, appealing to some theory-dependent arguments concerning the inner workings of one's laboratory apparatuses), any attempt to characterize the GPT state and effect vectors describing one's preparations and measurements in the experiment will in fact characterize a shadow of the fragment, rather than the fragment itself. We expand on this idea in the next subsection.

If the fragment in one's experiment is relatively tomographic, then this cannot lead to any problems, since characterizing such a fragment's shadow is equivalent to characterizing the true fragment. However, when relative tomography does fail, the mischaracterizations of one's GPT states and effects (and equivalently, of the operational identities that they obey) described by the shadow map can lead to incorrect assessments of nonclassicality (simplex-embeddability).
However, it does not {\em necessarily} imply that one will reach the wrong conclusion about nonclassicality; the specific details of the mischaracterization---the details of how the shadow map distorts the true fragment---will determine whether or not this happens. Later subsections will expand on the different possibilities, which are summarized in Table~\ref{EmbedTable}.

\begin{table}[h!]
\centering

\tikzset{every picture/.style={line width=0.75pt}} 

\begin{tikzpicture}[x=0.75pt,y=0.75pt,yscale=-1,xscale=1]
	
	\draw  [fill=
	{quantumgray}  ,fill opacity=0.5] (126.21,124.41) -- (196.09,124.41) -- (196.09,202) -- (126.21,202) -- cycle ;
	\draw  [fill={quantumgray}  ,fill opacity=0.5 ] (126.21,70.3) -- (196.09,70.3) -- (196.09,124.41) -- (126.21,124.41) -- cycle ;
	\draw  [fill={quantumgray}  ,fill opacity=0.3] (196.09,70.3) -- (318.45,70.3) -- (318.45,124.41) -- (196.09,124.41) -- cycle ;
	\draw  [fill={quantumgray}  ,fill opacity=0.3] (196.09,124.41) -- (318.45,124.41) -- (318.45,202) -- (196.09,202) -- cycle ;
	\draw  [fill={quantumgray}  ,fill opacity=0.3] (318.45,124.41) -- (427.33,124.41) -- (427.33,202) -- (318.45,202) -- cycle ;
	\draw  [fill={black}  ,fill opacity=1] (318.45,70) -- (427.33,70) -- (427.33,124.41) -- (318.45,124.41) -- cycle ;
	\draw  [fill={quantumgray}  ,fill opacity=0.5] (318.45,44) -- (427.33,44) -- (427.33,70) -- (318.45,70) -- cycle ;
	\draw  [fill={quantumgray}  ,fill opacity=0.5] (196.09,44) -- (318.45,44) -- (318.45,70) -- (196.09,70) -- cycle ;
	
	\draw (203,46) node [anchor=north west][inner sep=0.75pt]  [font=\scriptsize] [align=left] {\begin{minipage}[lt]{80pt}\setlength\topsep{0pt}
			\begin{center}
			\textbf{Shadow is embeddable}
		\end{center}
	\end{minipage}};
	\draw (320,45) node [anchor=north west][inner sep=0.75pt]  [font=\scriptsize] [align=left] {\begin{minipage}[lt]{80pt}\setlength\topsep{0pt}
			\begin{center}
				\textbf{Shadow is non-embeddable}
			\end{center}
	\end{minipage}};
	\draw (128,83) node [anchor=north west][inner sep=0.75pt]  [font=\scriptsize] [align=left] {\begin{minipage}[lt]{42pt}\setlength\topsep{0pt}
			\begin{center}
				\textbf{Fragment is embeddable}
			\end{center}
	\end{minipage}};
	\draw (127,139.87) node [anchor=north west][inner sep=0.75pt]  [font=\scriptsize] [align=left] {\begin{minipage}[ct]{42pt}\setlength\topsep{0pt}
			\begin{center}
				\textbf{Fragment is non-\\embeddable}
			\end{center}
	\end{minipage}};
	\draw (200.02,74) node [anchor=north west][inner sep=0.75pt]  [font=\scriptsize] [align=left] {\begin{minipage}[lt]{84pt}\setlength\topsep{0pt}
	- Any tomographic fragment of a simplicial theory;\\ 
	- Our example in Fig.~\ref{fig:exSimpP2}.
	\end{minipage}};
	\draw (217.99,142.57) node [anchor=north west][inner sep=0.75pt]  [font=\scriptsize] [align=left] {\begin{minipage}[lt]{48.35pt}\setlength\topsep{0pt}
			\begin{center}
				IMPOSSIBLE\\(Proposition~\ref{classicalityprop})
			\end{center}
			
	\end{minipage}};
	\draw (320,125) node [anchor=north west][inner sep=0.75pt]  [font=\scriptsize] [align=left] {\begin{minipage}[lt]{75pt}\setlength\topsep{0pt}
		- A qubit;\\
		- Any tomographic non-embeddable fragment;\\
		- Our example in Fig~\ref{fig:twodisk}.
	\end{minipage}};
	\draw (323.09,82) node [anchor=north west][inner sep=0.75pt]  [font=\scriptsize] [align=left] {\textcolor[rgb]{1,1,1}{- Holevo's example }\\\textcolor[rgb]{1,1,1}{\changelinkcolor{SkyBlue} (Sec.~\ref{HolevoSec}).\changelinkcolor{carmine}}};

\end{tikzpicture}

\vspace{1mm}
\caption{The different possible cases for how a shadow map affects the simplex-embeddability properties of a given GPT fragment, and examples of each. Mistaken assessments of nonclassicality occur {\em only} in the top right case (the cell with a black background).}
\label{EmbedTable}
\end{table}

\subsection{Theory-agnostic tomography}\label{sec:TheoryAgnosticTomography}

We can exemplify and formalize the fact that theory-independent experiments characterize the shadow of one's true GPT fragment (rather than the true fragment itself) by considering theory-agnostic tomography~\cite{mazurek2021experimentally,grabowecky2021experimentally}. This is the state-of-the-art method for determining the characterization of GPT states and effects in a prepare-measure experiment. 

We formalize theory-agnostic tomography using a novel 
diagrammatic representation. Note that in this section (and the associated Appendix~\ref{sec:TheoryAgnosticTomographyProof}), we consider linear maps whose inputs (outputs) are not associated with a GPT system, and so are labeled by the domain (codomain) of the map (whereas in the rest of the paper, systems were labeled by the associated GPT system).

Imagine one has a fragment $f=(\Omega_f,\mathcal{E}_f,p_f)$ of some GPT system $G$ representing some states and effects which are performed in an experiment. As discussed in Section~\ref{sec:frags}, no real experiment will actually implement every state and effect in the fragment, as this set is infinite; however, an experiment involving the convexly extremal states and effects of a fragment is sufficient to uniquely infer the existence of---and probabilities generated by---any states and effects in the convex hull of these. (Moreover, in a real, lossy experiment one will not have exactly normalized states, so a noise-robust analysis must be done to contend with this.) 

Given a fragment $f$, we can construct a data table $D_f$ in which we index columns by convexly extremal states and rows by convexly extremal effects, so that the entries in the data table are the probability of obtaining the given effect on the given state. We denote the entries of the data table
\begin{equation} \label{datatabfrag}
    [D_f]_{i,j}:=p_f(s_i,e_j),
\end{equation}
where the $s_i$ are the $n$ convexly-extremal GPT states in the fragment, the $e_j$ are the $m$ convexly-extremal GPT effects in the fragment, and $[D_f]_{i,j}$ is a matrix element of $D_f\in M_{m\times n}(\mathds{R})$, where $M_{m\times n}(\mathds{R})$ is the set of real $m \times n$ matrices.
We refer to this as the data table generated by the fragment. \footnote{This can be viewed as a map from fragments to data tables, which we denote by $\mathcal{D}:|\mathtt{GPT-Fragment}|\to |\mathtt{DataTable}|::f\mapsto D_f$. }

\begin{definition}[Theory-agnostic tomography] 
\label{def:TheoryAgnosticTomography}
Given a data table $D\in M_{m\times n}(\mathds{R})$, consider the following procedure:
 \begin{enumerate}
     \item Find the smallest $k$ such that\footnote{For data tables generated by real (noisy) experiments, this matrix factorization will not be exact, and one must take a more sophisticated approach to find the best-fit factorization~\cite{mazurek2021experimentally,grabowecky2021experimentally}.}
     	\beq
\begin{tikzpicture}
	\begin{pgfonlayer}{nodelayer}
		\node [style=small map] (0) at (-0, 0) {$D$};
		\node [style=none] (1) at (-0, 1) {};
		\node [style=none] (2) at (-0, -1) {};
		\node [style=right label] (0) at (-0, 1) {$\mathbb{R}^m$};
		\node [style=right label] (4) at (-0, -1) {$\mathbb{R}^n$};
	\end{pgfonlayer}
	\begin{pgfonlayer}{edgelayer}
		\draw (1.center) to (2.center);
	\end{pgfonlayer}
\end{tikzpicture}}  \quad = \quad %
\begin{tikzpicture}
	\begin{pgfonlayer}{nodelayer}
		\node [style=proj,fill=gray!30] (0) at (0, -0.75) {$S$};
		\node [style=none] (1) at (0, 1.5) {};
		\node [style=none] (2) at (0, -1.5) {};
		\node [style=right label] (3) at (0, 1.5) {$\mathbb{R}^m$};
		\node [style=right label] (4) at (0, -1.5) {$\mathbb{R}^n$};
		\node [style=inc,fill=gray!30] (5) at (0,0.75) {$E$};
		\node [style=right label] (6) at (0,0) {$\mathbb{R}^k$};
	\end{pgfonlayer}
	\begin{pgfonlayer}{edgelayer}
		\draw (1.center) to (5);
		\draw (5) to (0);
		\draw (0) to (2.center);
	\end{pgfonlayer}
\end{tikzpicture}},
		\eeq
      where $E\in M_{m\times k}(\mathds{R})$ and $S\in M_{k\times n}(\mathds{R})$;
     
     \item Consider the basis of unit vectors $\left\{%
\begin{tikzpicture}
	\begin{pgfonlayer}{nodelayer}
		\node [style=none] (1) at (-0, 0.5) {};
		\node [style=Wsquareadj] (2) at (-0, -0.5) {$i$};
		\node [style=right label] (3) at (-0, 0.5) {$\mathbb{R}^n$};
	\end{pgfonlayer}
	\begin{pgfonlayer}{edgelayer}
		\draw (1.center) to (2.center);
	\end{pgfonlayer}
\end{tikzpicture}}\right\}_i$ 
     of $\mathds{R}^n$. Then, for each $i\in \{1,\ldots,n\}$, denote by $\bar{s}_i$ the $i$th column of matrix S, namely

     \beq
     \btp
        \begin{pgfonlayer}{nodelayer}
        \node [style=Wsquareadj] (0) at (0,-0.6) {$\bar{s}_i$};
        \node [style=none] (1) at (0,0.6) {};
        \node [style=right label] (2) at (0,0.4) {$\mathbbm{R}^k$};
    \end{pgfonlayer}
    \begin{pgfonlayer}{edgelayer}
        \draw (0) to (1.center);
    \end{pgfonlayer}
\etp
\quad:=\quad
\btp
    \begin{pgfonlayer}{nodelayer}
        \node [style=none] (0) at (0,1.2) {};
        \node [style=proj,fill=gray!30] (1) at (0,0.3) {$S$};
        \node [style=Wsquareadj] (2) at (0,-1.2) {$i$};
        \node [style=right label] (3) at (0,1.2) {$\mathbbm{R}^k$};
        \node [style=right label] (4) at (0,-0.6) {$\mathds{R}^n$};
    \end{pgfonlayer}
    \begin{pgfonlayer}{edgelayer}
        \draw (0.center) to (1);
        \draw (1) to (2);
    \end{pgfonlayer}
\etp.
\eeq

     Similarly, consider the basis of unit covectors  $\left\{\btp
	\begin{pgfonlayer}{nodelayer}
		\node [style=none] (1) at (-0, -0.5) {};
		\node [style=Wsquare] (2) at (-0, 0.5) {$j$};
		\node [style=right label] (3) at (-0, -0.5) {$\mathbbm{R}^m$};
	\end{pgfonlayer}
	\begin{pgfonlayer}{edgelayer}
		\draw (1.center) to (2.center);
	\end{pgfonlayer}
\etp\right\}_j$
for $\mathbbm{R}^m$. For each $j\in\{1,\ldots,m\}$, denote by $\bar{e}_j$ the $j$th row of matrix E, i.e.,
\beq
\btp
        \begin{pgfonlayer}{nodelayer}
            \node [style=none] (0) at (0,-0.6) {};
            \node [style=Wsquare] (1) at (0,0.6) {$\bar{e}_j$};
            \node [style=right label] (2) at (0,-0.4) {$\mathds{R}^k$};
        \end{pgfonlayer}
        \begin{pgfonlayer}{edgelayer}
            \draw (0.center) to (1);
        \end{pgfonlayer}
    \etp
    \quad:=\quad
    \btp
    \begin{pgfonlayer}{nodelayer}
        \node [style=Wsquare] (0) at (0,1.2) {$j$};
        \node [style=inc, fill=gray!30] (1) at (0,-0.1) {$E$};
        \node [style=none] (2) at (0,-1.2) {};
        \node [style=right label] (3) at (0,0.6) {$\mathbbm{R}^m$};
        \node [style=right label] (4) at (0,-0.9) {$\mathds{R}^k$};
    \end{pgfonlayer}
    \begin{pgfonlayer}{edgelayer}
        \draw (0.center) to (1);
        \draw (1) to (2);
    \end{pgfonlayer}
    \etp;
\eeq
     \item Define a GPT ${G_D:=(\Omega_{G_D},\mathcal{E}_{G_D},p_{G_D})}$ by 
\begin{align}
		\bar{\Omega}_{G_D}:=\mathsf{Conv}\left[\left\{\btp
        \begin{pgfonlayer}{nodelayer}
        \node [style=Wsquareadj] (0) at (0,-0.6) {$\bar{s}_i$};
        \node [style=none] (1) at (0,0.6) {};
        \node [style=right label] (2) at (0,0.4) {$\mathbbm{R}^k$};
    \end{pgfonlayer}
    \begin{pgfonlayer}{edgelayer}
        \draw (0) to (1.center);
    \end{pgfonlayer}
\etp\right\}_{i=1}^n\right];\ \\
		\mathcal{E}_{G_D}:=\mathsf{Conv}\left[\left\{\btp
        \begin{pgfonlayer}{nodelayer}
            \node [style=none] (0) at (0,-0.6) {};
            \node [style=Wsquare] (1) at (0,0.6) {$\bar{e}_j$};
            \node [style=right label] (2) at (0,-0.4) {$\mathds{R}^k$};
        \end{pgfonlayer}
        \begin{pgfonlayer}{edgelayer}
            \draw (0.center) to (1);
        \end{pgfonlayer}
    \etp\right\}_{j=1}^m\right],
  \end{align}
 \end{enumerate}
where the probability rule $p_{G_D}$ is the  evaluation map. 

This procedure of theory-agnostic tomography takes as input a data table $D$ and constructs the GPT representations of the states and effects in the GPT system $G_D$ of minimal dimension compatible with $D$.\footnote{This defines a map from data tables to GPT systems which we denote by $\mathcal{T}:|\mathtt{DataTable}|\to|\mathtt{GPT-System}|::D\mapsto G_D$.} 
 \end{definition} 

Note that $\Omega_{G_D}$ will necessarily be a valid state space, since (by construction) the data table was generated from a collection of normalised states. Similarly $\mathcal{E}_{G_D}$ will necessarily be a valid effect space, since (by construction) the data table was generated from a collection of effects containing the complement to every effects in the collection, the null effect, and the unit effect.\footnote{If these latter conditions were not satisfied, then in the last stage of theory-agnostic tomography, one would need to add the unit effect and null effect to $\mathcal{E}_G$ and then close under coarse-graining in order to obtain a valid effect space.}

We now prove that theory-agnostic tomography does not directly characterize the GPT fragment $f$ that generated one's data table $D_f$, but rather the {\em shadow} of that fragment.  That is, the output $\mathcal{S}^{\text{TA}}(f):=G_{D_f}$ of theory-agnostic tomography, applied to a data table generated by fragment $f$, is a shadow for every fragment $f$. \footnote{That is, the composite map $\mathcal{S}^{\text{TA}}:=\mathcal{T}\circ\mathcal{D}:|\mathtt{GPT-Fragment}|\to|\mathtt{GPT-System}|$ acts as a shadow map for every fragment.} 
This was not explicitly recognized previously, but is critical to understanding when theory-agnostic tomography does and does not work as expected.  Prior works did recognize that theory-agnostic tomography is only guaranteed to be accurate under an assumption of tomographic completeness~\cite{mazurek2016experimental,mazurek2021experimentally,schmid2024addressing}; however, our work demonstrates that what is really required is {\em relative} tomographic completeness, and moreover formalizes what precisely goes wrong when this assumption fails. (Incidentally, the following theorem also constitutes a formal proof that the output of GPT tomography is always a valid GPT system.)

\begin{restatable}{theorem}{theoryagnostictomography} \label{thm:TheoryAgnosticTomography}
Consider the data table generated by some GPT fragment. The output of theory-agnostic tomography (as in Definition~\ref{def:TheoryAgnosticTomography}) applied to this data table is the GPT shadow of the GPT fragment.  That is, for all fragments $f$ we have that $\mathcal{S}^{\textrm{TA}}(f)=G_{D_f}$ is a valid shadow of $f$. 
\end{restatable}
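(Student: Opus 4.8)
The plan is to show that the output $G_{D_f}$ of theory-agnostic tomography, applied to the data table $D_f$ generated by a fragment $f$, satisfies the characterization of a shadow established just before the theorem statement: namely, that a GPT $G$ is a shadow of $f$ if and only if there is a \emph{full} GPT fragment embedding of $f$ into $G$ whose codomain is a valid GPT. I would exploit Theorem~\ref{shadow-equiv} (all shadows are equivalent, and anything equivalent to a shadow is a shadow), so that it suffices to exhibit one full embedding $(\sigma,\tau)\colon f\to G_{D_f}$; I do not need to identify $G_{D_f}$ with the quotient construction of Definition~\ref{defn:method4} directly.

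First I would set up notation: let $s_1,\dots,s_n$ be the convexly extremal states and $e_1,\dots,e_m$ the convexly extremal effects of $f$, so $[D_f]_{i,j}=p_f(s_i,e_j)$, and let $D_f = E\, S$ be the minimal-rank factorization from step~1 of Definition~\ref{def:TheoryAgnosticTomography}, with $k=\mathrm{rank}(D_f)$, $S\in M_{k\times n}(\mathds{R})$, $E\in M_{m\times k}(\mathds{R})$. The columns $\bar s_i$ of $S$ and rows $\bar e_j$ of $E$ are the extremal states/effects of $G_{D_f}$, with the evaluation map as probability rule, so $p_{G_{D_f}}(\bar s_i,\bar e_j) = \bar e_j\,\bar s_i = [ES]_{j,i} = [D_f]_{j,i} = p_f(s_i,e_j)$. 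Now I would define the candidate shadow maps. Since $k=\mathrm{rank}(D_f)$, the $n\times k$ matrix $S^{T}$ (equivalently the map $\mathsf{Span}[\Omega_f]\to \mathds{R}^k$) and the map $E$ have full rank $k$; one checks that the assignment $s_i\mapsto \bar s_i$ extends to a well-defined linear map $\sigma\colon \mathsf{Span}[\Omega_f]\to\mathds{R}^k$ and $e_j\mapsto \bar e_j$ to a well-defined linear map $\tau\colon\mathsf{Span}[\mathcal{E}_f]\to\mathds{R}^k$. Well-definedness is the first nontrivial point: one must show that any linear relation $\sum_i\alpha_i s_i = 0$ among states forces $\sum_i\alpha_i\bar s_i=0$; this follows because $\bar s_i$ is $S$ applied to the $i$th unit vector and, crucially, $\mathrm{Ker}$ of the map $(\alpha_i)\mapsto\sum_i\alpha_i s_i$ is contained in $\mathrm{Ker}$ of $(\alpha_i)\mapsto\sum_i\alpha_i\bar s_i$ --- which in turn follows from $D_f=ES$ together with the fact that the rows of $E$ (the $\bar e_j$) separate the columns of $S$ (because $\mathrm{rank}(S)=\mathrm{rank}(D_f)=k$ and $E$ has rank $k$, so $ES$ and $S$ have the same kernel as maps out of $\mathds{R}^n$). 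The analogous statement for effects uses that the columns of $S$ separate the rows of $E$. I would also verify $\sigma$ maps $\Omega_f$ into $\Omega_{G_{D_f}}$ and $\tau$ maps $\mathcal{E}_f$ into $\mathcal{E}_{G_{D_f}}$ (immediate from convexity, since extremal elements map to the generating extremal elements and the maps are linear), that $\tau(u_f)=u_{G_{D_f}}$ (the unit effect of $f$ is a convex/linear combination of the $e_j$, and $\tau$ respects this), and that $(\sigma,\tau)$ preserves probabilities: $p_{G_{D_f}}(\sigma(s_i),\tau(e_j)) = \bar e_j\bar s_i = [D_f]_{j,i} = p_f(s_i,e_j)$, and bilinearity plus spanning extends this to all of $\Omega_f\times\mathcal{E}_f$. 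Hence $(\sigma,\tau)$ is a GPT fragment embedding of $f$ into $G_{D_f}$.

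Next I would check the two remaining conditions of the ``full embedding into a valid GPT'' characterization. That $G_{D_f}$ is a valid GPT is stated in the paragraph following Definition~\ref{def:TheoryAgnosticTomography} (state space is valid because the $\bar s_i$ come from a factorization of a data table of normalized states, effect space is valid because the original effect collection is closed under complements and contains null and unit), and that its probability rule is tomographic follows from the minimality of $k$: if some nonzero vector in $\mathrm{Span}[\{\bar s_i\}]$ were annihilated by all $\bar e_j$ then $E$ would not have full column rank $k$, contradicting $k=\mathrm{rank}(ES)$; symmetrically for effects. Fullness of $(\sigma,\tau)$ is essentially by construction: the extremal states of $G_{D_f}$ are exactly the $\bar s_i=\sigma(s_i)$ and the extremal effects are exactly the $\bar e_j=\tau(e_j)$ (together with the generated unit/null and convex hull), so $\sigma(\Omega_f)=\Omega_{G_{D_f}}$ and $\tau(\mathcal{E}_f)=\mathcal{E}_{G_{D_f}}$. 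By the cited characterization (a GPT admitting a full embedding of $f$ is a shadow of $f$), this shows $G_{D_f}=\mathcal{S}^{\mathrm{TA}}(f)$ is a shadow of $f$, completing the proof.

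The main obstacle I anticipate is the bookkeeping around \emph{well-definedness and minimality}: one must be careful that the minimal-rank factorization $D_f=ES$ --- which is only unique up to $GL_k(\mathds{R})$ --- produces maps $\sigma,\tau$ that are genuinely well-defined linear maps on the \emph{spans} $\mathsf{Span}[\Omega_f]$ and $\mathsf{Span}[\mathcal{E}_f]$ (not just set maps on the finite generating sets), and that the kernels of $\sigma$ and $\tau$ coincide exactly with the left/right kernels of $p_f$ rather than merely being contained in them; this exact-equality is what makes the resulting probability rule tomographic and is where the minimality of $k$ is essential. All of this is linear algebra, but it is the part where an imprecise argument would slip. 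By contrast, the probability-preservation and unit-preservation checks are routine once the maps are known to be well-defined, and the appeal to Theorem~\ref{shadow-equiv} lets me avoid any direct comparison with the quotient construction.
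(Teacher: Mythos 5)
Your proposal is correct and reaches the same conclusion as the paper, but by a genuinely different route on the one subtle step: the definition and linearity of the maps $\sigma,\tau$. The paper defines $\sigma$ on $\mathsf{Span}[\Theta]$ by fixing an \emph{arbitrary} linear decomposition of each vector in terms of the extremal $s_i$, concedes that the resulting function could a priori depend on those arbitrary choices (and hence be non-linear), then verifies the probability-reproduction equation and invokes Lemma~\ref{lem:LinearityFromEmpAd} (``shadow maps are necessarily linear'') to conclude a posteriori that $\sigma$ was well-defined and linear all along; the same is done for $\tau$. You instead prove well-definedness directly: a relation $\sum_i\alpha_i s_i=0$ is detected by every $e_j$ via bilinearity of $p_f$, yielding $D_f\alpha=0$; since $D_f=ES$ with $E$ of full column rank $k$ (by minimality of $k$), $\mathrm{Ker}(ES)=\mathrm{Ker}(S)$, hence $S\alpha=0$, i.e.\ $\sum_i\alpha_i\bar s_i=0$; and symmetrically for effects via the full row rank of $S$. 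This sidesteps the arbitrary-choice device and the appeal to Lemma~\ref{lem:LinearityFromEmpAd} entirely, giving a self-contained linear-algebra argument. From that point the two arguments converge: both check probability preservation (immediate from $\bar e_j\bar s_i=[D_f]_{j,i}=p_f(s_i,e_j)$) and both derive tomographic-ness of the probability rule on $\mathds{R}^k$ from minimality of $k$. The paper then reads off the conclusion directly from Definition~\ref{defn:shadow} (the shadow is $(\sigma(\Omega_f),\tau(\mathcal{E}_f),p_\mathcal{S})$ with $p_\mathcal{S}$ tomographic), while you invoke the stated equivalent characterization ``full embedding into a valid GPT''; these are the same check, since the shadow's state and effect spaces are precisely the images $\sigma(\Omega_f)$ and $\tau(\mathcal{E}_f)$ so fullness holds by construction. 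Your approach is arguably a bit cleaner and more elementary; the paper's buys uniformity with its other lemmas (Lemma~\ref{lem:LinearityFromEmpAd} is reused elsewhere) at the cost of a slightly roundabout ``define, then retroactively justify'' structure.
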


We prove this theorem in Appendix~\ref{sec:TheoryAgnosticTomographyProof}.

Consequently, correct assessments of noncontextuality occur if and only if the simplex-embeddability property of the GPT fragment describing one's experiment is identical to the simplex-embeddability property of it shadow.

\subsection{Assessments of (non)contextuality for GPT subsystems of the true GPT are always accurate}\label{sec:mainresult}

A critical result is the following.
\begin{theorem}\label{keyresult}
Consider a fragment $f$ of the true GPT system $G$. If $f$ is a GPT subsystem of $G$, then the shadow of $f$ is nonclassical if and only if $f$ is nonclassical.
\end{theorem}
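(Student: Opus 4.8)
The plan is to reduce Theorem~\ref{keyresult} directly to Theorem~\ref{tomogshadows}. The key observation is that a fragment $f$ which is a GPT subsystem of $G$ is, by Definition~\ref{def:subsystem}, itself a GPT --- in particular, a \emph{relatively tomographic} fragment. Theorem~\ref{tomogshadows} then applies verbatim: every GPT subsystem is equivalent to its own shadow, with the shadow map itself furnishing the isomorphism. So the shadow $\mathcal{S}(f)$ and the fragment $f$ are isomorphic as GPTs.

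The remaining step is to note that isomorphic GPTs have the same simplex-embeddability status. This follows from transitivity of GPT embeddings (Proposition~\ref{prop: transitivity}, or rather the underlying transitivity statement for GPT fragment embeddings): if $f \cong \mathcal{S}(f)$ via invertible embeddings in both directions, and $\mathcal{S}(f)$ embeds into a simplicial GPT $\Lambda$, then composing gives an embedding $f \to \Lambda$; conversely, if $f$ embeds into some $\Lambda$, composing with the invertible embedding $\mathcal{S}(f) \to f$ gives an embedding $\mathcal{S}(f) \to \Lambda$. Hence $f$ is simplex-embeddable if and only if $\mathcal{S}(f)$ is, i.e., $f$ is classical if and only if its shadow is classical --- equivalently, $f$ is nonclassical if and only if its shadow is nonclassical.

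Concretely, I would write: First, recall that since $f$ is a GPT subsystem of $G$, it is itself a valid GPT, hence a relatively tomographic fragment. By Theorem~\ref{tomogshadows}, $f$ is equivalent to its shadow $\mathcal{S}(f)$, meaning there is an invertible GPT fragment embedding $(\iota,\kappa): f \to \mathcal{S}(f)$ with invertible inverse $(\iota^{-1},\kappa^{-1}): \mathcal{S}(f) \to f$. Second, suppose $\mathcal{S}(f)$ is nonclassical, i.e., admits no simplex embedding. If $f$ admitted a simplex embedding $(\delta,\gamma): f \to \Lambda$ into some simplicial $\Lambda$, then by transitivity of embeddings $(\delta\circ\iota^{-1}, \gamma\circ\kappa^{-1}): \mathcal{S}(f) \to \Lambda$ would be a simplex embedding of $\mathcal{S}(f)$, a contradiction; so $f$ is nonclassical. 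Conversely, if $f$ is nonclassical and $(\delta',\gamma'):\mathcal{S}(f)\to\Lambda'$ were a simplex embedding, then $(\delta'\circ\iota,\gamma'\circ\kappa): f\to\Lambda'$ would be a simplex embedding of $f$, again a contradiction. Hence $\mathcal{S}(f)$ is nonclassical if and only if $f$ is.

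\textbf{Main obstacle.} Honestly, there is no serious obstacle here: all the work was done in establishing Theorem~\ref{tomogshadows} (the triviality of shadow maps on tomographic fragments) and in setting up transitivity of embeddings. The only thing requiring a moment's care is making sure one invokes the \emph{fragment} embedding versions of these facts (Definition~\ref{def:fragembeddings} and its transitivity) rather than only the GPT-system versions, since $\Lambda$ is a GPT system but the generic intermediate objects are fragments --- though since $f$ and $\mathcal{S}(f)$ are both in fact proper GPTs, even this subtlety evaporates. The statement is essentially a corollary of Theorems~\ref{tomogshadows} and the transitivity results, repackaged in the language of nonclassicality.
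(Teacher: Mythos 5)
Your proof is correct and follows the paper's own argument exactly: note $f$ is a GPT (hence relatively tomographic), invoke Theorem~\ref{tomogshadows} to get $f \cong \mathcal{S}(f)$, and conclude they share the same simplex-embeddability status. You spell out the last step (via transitivity of fragment embeddings) in slightly more detail than the paper does, but the route is identical.
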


The proof is simple. Since $f$ is a GPT subsystem, its states and effects must be relatively tomographic; consequently, $f$ is equivalent to its shadow (by Lemma~\ref{tomogshadows}), and so either both are classical (simplex-embeddable), or neither is.

Together with the transitivity of GPT subsystems (Proposition~\ref{prop: transitivity}), this implies that proving nonclassicality for a GPT subsystem is sufficient to establish nonclassicality of the full GPT system containing it. That such a result should hold for subsystems in the sense of the parallel composition operation $\otimes$ is a basic consistency constraint that any sensible notion of classicality must satisfy, as it means that GPT tensor product subsystems of classically explainable (simplex-embeddable) GPTs are classically explainable (simplex-embeddable).\footnote{This also helps one understand the lab notebook argument of Ref.~\cite{schmid2024addressing}: there, establishing nonclassicality for a physical system is sufficient for establishing nonclassicality for that system taken together with any other physical systems, including those keeping classical records of what experimental procedures were implemented in the experiment.}  However, these results also apply to the wide class of other GPT subsystems defined and discussed in Section~\ref{ncofcomp}, including superselection sectors of a composite, virtual subsystems, stabilizer subsystems, and so on. 

In summary: in noncontextuality experiments, what matters is not tomographic completeness of states and effects in some absolute sense, but tomographic completeness of states and effects {\em relative to each other}. They need not be tomographically complete for the fundamental system that causally mediates correlations between the preparations and measurements.

Thus, for example, one can genuinely prove nonclassicality using states and effects of a qubit even without ever accessing the Pauli-$Y$ axis of the Bloch ball. Problems could only arise if either the states or the effects---{\em but not both}---had nontrivial expectation value on $Y$. One could similarly prove nonclassicality using some subset of the modes of an oscillator. 
In more extreme examples, one might even be probing some collective (coarse-grained) degree of freedom of a massive collection of more fundamental systems. This too poses no problem whatsoever to the goal of accurately assessing nonclassicality, as long as one has the ability to single out some set of states and effects on the {\em same} GPT subsystem---i.e., that are relatively tomographic.

Most importantly, this explains why we expect assessments of noncontextuality to hold up even if our current theories (e.g., quantum theory) are emergent rather than fundamental. We return to this point in Section~\ref{beyondqt}.

\subsection{Proofs of classical-explainability are robust to arbitrary failures of tomographic completeness}

Recall from Proposition~\ref{classicalityprop} that shadow maps cannot take simplex-embeddable fragments to non-simplex-embeddable fragments.

Thus, for {\em any} set of states and effects, a proof of classical-explainability is robust. So if one does theory-agnostic tomography and finds a classically-explainable GPT, one can be certain that the actual experiment is indeed classically-explainable. This is typically much less useful than, e.g., Theorem~\ref{keyresult}, since one typically wishes to certify {\em non}classicality rather than classicality. And it may still be the case that the {\em full} GPT system in question is nonclassical, but that this can only be witnessed by a more comprehensive experiment.

\subsection{Necessary conditions for proofs of nonclassicality to be robust to failures of tomographic completeness?}

Given this, the natural question becomes to characterize {\em exactly} when shadow maps do and do not cause simplex-embeddable fragments to become non-simplex-embeddable.\footnote{A still more ambitious question would be to characterize how taking the shadow map {\em quantitatively}~\cite{selby2024linear} affects simplex-embedding. } Can one give necessary and sufficient conditions (presumably geometric ones) for a GPT fragment to be simplex-embeddable if and only if its shadow is? This would constitute necessary and sufficient conditions for one's assessments of nonclassicality to be correct.

By Theorem~\ref{keyresult}, a {\em sufficient} condition for simplex-embedding to be preserved by a shadow map is that the fragment in question is relatively tomographic---a GPT subsystem of the true GPT system. But this is not a necessary condition, as evidenced by the example in Section~\ref{sec:redux}, where we exhibited a nontomographic fragment that was simplex-embeddable and whose shadow was also simplex-embeddable (despite being quite different from the fragment itself). 

We suspect that the results of Ref.~\cite{PuseydelRio} (showing that one can certify proofs of nonclassicality even in the presence of spurious operational equivalences) are also closely connected to the question raised in this section. We leave this connection to future work.

\subsection{Another useful perspective: operational identities for GPT subsystems are always genuine}

In this section, we essentially repeat the conclusions of Section~\ref{sec:mainresult}, but in a language that may be useful for readers who are more familiar with noncontextuality no-go theorems based on operational identities (or operational equivalences) than those based on simplex-embedding of GPTs.

Imagine one attempts to witness the failure of noncontextuality in an experiment on some particular physical system whose true GPT description is $G$. In any real experiment, one will only access a fragment of the states and effects of the system. From the statistics observed in the experiment, one must infer some operational identities.   If one wishes to use the statistics of such an experiment to prove the failure of noncontextuality, then it is often said that one must implement a set of states and effects that are tomographically complete for the system. 

But in fact, one can base such proofs on a strictly weaker assumption---that the operational identities used in one's proof are valid identities in the true GPT ($G$).  This is simply because Leibniz's principle (in the methodological form introduced by Spekkens~\cite{Leibniz}) can be applied to any processes that are indistinguishable in principle. Showing the indistinguishability of two processes relative to a set of procedures that are tomographically complete for the physical system in one's experiment is a {\em sufficient but not necessary} condition for in principle indistinguishability. 

For instance, consider any proof of nonclassicality built on top of operational equivalences holding among states of a single quantum bit. Such proofs do not suddenly fail to be valid simply because one implements them experimentally using a physical  qubit that is a tensor subsystem of a pair of qubits. This  is an obvious prerequisite to proofs of nonclassicality being meaningful in the first place, since all systems are subsystems of some larger system (e.g., of the universe). But there are less trivial examples as well. One could equally well implement a proof of nonclassicality using a virtual qubit defined as a coarse-grained degree of freedom of some large collection of physical qubits. 
Or, one could use only the states and effects lying in a plane of the Bloch sphere (i.e., using a rebit). 

Since the operational identities used in one's proof are necessarily extracted from one's experimental data, the question becomes, under what conditions are the apparent operational identities actually valid in the true GPT? Using the notion of a GPT subsystem,  one can give a simple necessary and sufficient condition.

\begin{lemma}
Consider a fragment $f$ of some fundamental GPT system $G$. If $f$ is a GPT subsystem, then all operational identities that hold for its states (effects) relative to its effects (states) are {\em genuine}. That is, they hold also relative to the full set of effects (states) in the true GPT system $G$.
\end{lemma}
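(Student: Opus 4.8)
The plan is to reduce the statement directly to Theorem~\ref{tomogshadows} together with the characterization of faithful embeddings in terms of operational identities. First I would recall the precise meaning of the claim: a fragment $f$ is a GPT subsystem means (Definition~\ref{def:subsystem}) that $f$ is itself a proper GPT---hence relatively tomographic---and that there is a GPT embedding $(\iota,\kappa): f \to G$. The ``apparent'' operational identities among states of $f$ relative to its effects are exactly the linear relations $\sum_i \alpha_i s_i = \mathbf{0}$ that are detected by the shadow construction, i.e. the kernel of the probability rule $p_f$ restricted to $\mathsf{Span}[\Omega_f]$; dually for effects. The identities that are ``genuine'' are those that continue to hold when one views the states $s_i$ as living in $G$ and pairs them against \emph{all} effects of $G$---equivalently, the linear relations holding among the $\iota(s_i)$ in $\mathsf{Span}[\Omega_G]$ (and dually $\kappa(e_i)$ in $\mathsf{Span}[\mathcal{E}_G]$).

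The key steps, in order, would be: (1) Since $f$ is a GPT subsystem, it is a proper GPT, so its probability rule $p_f$ is tomographic (Definition~\ref{def:prob}). Hence the equivalence relations $\backsim$ on states and on effects (Eqs.~\eqref{equivstates}, \eqref{equiveffects}) are trivial, so the left and right kernels of $p_f$ are zero. This means the only operational identities ``apparent'' in $f$ are the ones already holding as genuine linear relations in $\mathsf{Span}[\Omega_f]$ and $\mathsf{Span}[\mathcal{E}_f]$. (2) The embedding $(\iota,\kappa): f\to G$ is faithful: this was noted in Definition~\ref{defn:GPTembedding} as a consequence of $G$ being tomographic, and is restated for fragment embeddings when the source fragment is relatively tomographic (``For relatively tomographic fragments \ldots\ embeddings are necessarily faithful''). (3) Faithful embeddings preserve and reflect operational identities exactly: by linearity they preserve them, and by injectivity they introduce no new ones, as captured by the displayed biconditionals $\sum_i\alpha_i s_i = \mathbf{0} \iff \sum_i\alpha_i\iota(s_i) = \mathbf{0}$ and $\sum_i\beta_i e_i = \mathbf{0} \iff \sum_i\beta_i\kappa(e_i) = \mathbf{0}$ in the text following Definition~\ref{def:fragembeddings}. (4) Conclude: an identity among the $s_i$ relative to the effects of $f$ holds iff $\sum_i\alpha_i s_i = \mathbf{0}$ in $\mathsf{Span}[\Omega_f]$ (by step 1) iff $\sum_i\alpha_i\iota(s_i) = \mathbf{0}$ in $\mathsf{Span}[\Omega_G]$ (by step 3), which is precisely the statement that the identity is genuine in $G$; dually for effects. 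This also gives the ``necessary and sufficient'' flavor flagged in the surrounding text, since each implication is an equivalence.

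I do not expect a real obstacle here---the lemma is essentially a corollary of machinery already established. The one point requiring a little care is being explicit that ``genuine in $G$'' should be read as ``holds relative to all effects (states) of $G$,'' and that this is equivalent to the linear relation holding in $\mathsf{Span}[\Omega_G]$ (resp.\ $\mathsf{Span}[\mathcal{E}_G]$); this uses that $G$ itself is tomographic, so that a state (effect) of $G$ is determined by its pairings, hence a linear combination of states equals $\mathbf{0}$ iff it is annihilated by all effects of $G$. Once that identification is made, the argument is just the chain of biconditionals above. Alternatively, one can phrase the entire proof in one line via Theorem~\ref{tomogshadows}: $f$ is equivalent to its shadow via the shadow map, which is therefore a \emph{faithful} embedding (Lemma~\ref{fragsub}), and faithful embeddings neither create nor destroy operational identities; composing with the faithful embedding $f\hookrightarrow G$ finishes it. I would present the short version in the main text and relegate the bookkeeping about ``genuine = holds against all of $G$'' to a sentence of setup.
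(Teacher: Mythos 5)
Your proof is correct and follows the same idea as the paper's own (very brief) argument: since $f$ is a GPT subsystem it is relatively tomographic, so its apparent operational identities are exactly the genuine linear relations among its state (effect) vectors, and these are automatically identities in $G$ because $f$ is a fragment of $G$. The paper compresses your steps (1)–(4) into one sentence; the detour through the faithful-embedding biconditionals is harmless but unnecessary here, since for a fragment the embedding into $G$ is literally the inclusion.
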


This is because $f$ being a GPT subsystem implies that its states and effects are relatively tomographic, and so can be discriminated with {\em or without} the effects and states in $G$ that are not also in $f$.

For instance, consider again an experiment on a qubit that only accesses states and effects in the rebit. Although these states and effects would not be tomographically complete for the physical system (the qubit), this would not lead to any invalid operational equivalences, since the rebit states and rebit effects are relatively tomographic.

\subsection{Theories that hyperdecohere to quantum theory are nonclassical}\label{beyondqt}
 
State-of-the-art proofs of the failure of noncontextuality are theory-independent, in the sense that they do not assume the validity of quantum theory. Thus, like Bell's theorem, they provide a constraint on nature, independently of whether or not some future theory supersedes quantum theory. (See Section IV.B of Ref.~\cite{schmid2024addressing} for a more detailed discussion of the similarities between these two kinds of arguments.)

However, one might be concerned that this theory-independence is undermined by the fact that proofs of noncontextuality (unlike proofs of Bell nonclassicality) rely on a characterization of the GPT states and effects governing one's experiment---or equivalently, rely on particular operational identities that are known to hold among the GPT processes in one's experiment. If quantum theory is superseded by some future theory, then might it be that one's characterization of one's laboratory procedures as GPT processes are mistaken (and consequently, that the operational identities one believed held among them were also mistaken)? If this did happen, then current assessments of nonclassicality would not necessarily hold up as our theories of physics develop over time.

A competing intuition, however, is that any empirically successful future theory must contain quantum theory in some sense, and so must also exhibit the same forms of nonclassicality (and potentially more).  
In the following, we use the notion of GPT subsystems to show that this intuition can be made precise, and we show that as a consequence, current assessments of nonclassicality {\em will} remain unchanged even if our best theories of nature evolve in the future, if a well-motivated assumption holds.

The key argument was identified already by M\"uller and Garner in Lemma~11 of Ref.~\cite{muller2023testing}.

Any future successor to quantum theory will need to reproduce quantum theory in an appropriate limit, in order to explain why we have not seen beyond-quantum phenomena in any of our current experiments. Just as quantum phenomena are suppressed by decoherence in appropriate regimes, leading to the appearance of classical theory, so too we expect that such beyond-quantum phenomena would be suppressed (in appropriate regimes) by some generalized kind of decoherence process which leads to the appearance of quantum theory. A general notion of this kind, described within the framework of GPTs, was introduced in Refs.~\cite{zyczkowski2008quartic,Selby2017Entanglement,lee2018no,hefford2020hyper}, and is known as hyperdecoherence. 

Formally, a hyperdecoherence process on a GPT $G$ is an idempotent, discard-preserving linear map $H$ that takes states/effects in the GPT to states/effects in the GPT. Recall that idempotence of $H$ simply means that $H\circ H=H$. All of these features of a hyperdecoherence map are of course satisfied by a standard decoherence process taking quantum theory to classical theory.

Given some initial GPT system $G$ (where for simplicity we work with the standard represenation where effects live in the dual vector space), the hyperdecohered GPT system can be defined as the set of all states/effects in the image of the hyperdecoherence map $H$, namely
\beq\label{eq:hyperdecGPT1}
\left\{\btp
		\begin{pgfonlayer}{nodelayer}
			\node [style=Wsquareadj] (1) at (-0, -1) {$s$};
			\node [style=none] (2) at (-0, 1.2) {};
			\node [style=right label] (4) at (0, -0.5) {$G$};
			\node [style=small box] (5) at (0,0.3) {$H$};
			\node [style=right label] (6) at (0, 1.2) {$G$};
		\end{pgfonlayer}
		\begin{pgfonlayer}{edgelayer}
			\draw [qWire] (2.center) to (5);
			\draw [qWire] (5) to (1.center);
		\end{pgfonlayer}
		\etp\right\}_{s\in\Omega_G}
		\textrm{and}\quad
		\left\{\btp
		\begin{pgfonlayer}{nodelayer}
			\node [style=none] (1) at (-0, -1.2) {};
			\node [style=Wsquare] (2) at (-0, 1) {$e$};
			\node [style=right label] (4) at (0, -1.1) {$G$};
			\node [style=small box] (5) at (0,-0.3) {$H$};
			\node [style=right label] (6) at (0, 0.5) {$G$};
		\end{pgfonlayer}
		\begin{pgfonlayer}{edgelayer}
			\draw [qWire] (2.center) to (5);
			\draw [qWire] (5) to (1.center);
		\end{pgfonlayer}
		\etp\right\}_{e\in\mathcal{E}_G}.
\eeq
These hyperdecohered states and effects are always tomographic for each other, and so form a valid GPT system, as was proven, for example, in Appendix~D.2 of Ref.~\cite{centeno2024twirledworldssymmetryinducedfailures}.

Having defined the hyperdecohered theory, we now prove the key result of this section.
 \begin{theorem}\label{thm:DecoherenceIsSubsystem}
Consider a GPT system $G$ and the hyperdecohered GPT system $H(G)$ defined by any given hyperdecoherence process on it. Then $H(G)$ is a GPT subsystem of the original GPT system $G$.
\end{theorem}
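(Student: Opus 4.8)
The plan is to exhibit an explicit GPT fragment embedding of $H(G)$ into $G$ and check that it has the required properties (Definition~\ref{def:fragembeddings}), which — since $H(G)$ is already known to be a valid GPT — will establish that $H(G)$ is a GPT subsystem by Definition~\ref{def:subsystem}. The natural candidate is to take the state map $\iota$ and the effect map $\kappa$ to both be the hyperdecoherence map $H$ itself (restricted to the relevant state and effect spaces), viewing $H$ as acting contravariantly on effects in the standard dual representation. Concretely, $\iota: \Omega_{H(G)} \to \Omega_G$ sends a hyperdecohered state $H(s)$ to $H(s)$ regarded as a state of $G$ (this is well-defined since $H(s)\in\Omega_G$ as $H$ takes states to states), and similarly $\kappa: \mathcal{E}_{H(G)}\to\mathcal{E}_G$ is the inclusion of hyperdecohered effects back into $\mathcal{E}_G$.

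First I would verify that $\iota$ and $\kappa$ are genuine state and effect maps: linearity is immediate, $\iota(\Omega_{H(G)})\subseteq\Omega_G$ and $\kappa(\mathcal{E}_{H(G)})\subseteq\mathcal{E}_G$ hold by construction (the hyperdecohered states/effects are a subset of the original ones), and $\kappa$ preserves the unit effect because $H$ is discard-preserving, i.e. $H(u_G)=u_G$, so the unit effect of $H(G)$ is just $u_G$ itself. Second, I would check probability preservation: for $H(s)\in\Omega_{H(G)}$ and $H(e)\in\mathcal{E}_{H(G)}$, the probability rule of $H(G)$ is inherited from $G$ (the hyperdecohered GPT's probability rule is just the restriction of $p_G$ to the image of $H$), so $p_{H(G)}(H(s),H(e)) = p_G(H(s),H(e)) = p_G(\iota(H(s)),\kappa(H(e)))$ trivially. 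This gives the required diagrammatic identity of Definition~\ref{def:fragembeddings}.

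The one genuinely substantive point — and I expect it to be the main obstacle — is the role of \emph{idempotence}. The subtlety is whether the probability rule $p_{H(G)}$ on the hyperdecohered GPT is literally the restriction of $p_G$, or whether it involves an extra application of $H$. Because $H\circ H = H$, applying $H$ once more to an already-hyperdecohered state or effect does nothing, so $p_{H(G)}(H(s), H(e))$ computed "within" the hyperdecohered theory coincides with $p_G(H(s),H(e))$ computed in $G$; idempotence is exactly what makes $\iota$ and $\kappa$ well-defined as maps out of the hyperdecohered spaces rather than merely out of $G$. I would also note that this is essentially Lemma~11 of Ref.~\cite{muller2023testing} and that the fact that hyperdecohered states and effects are tomographic for each other (cited from Ref.~\cite{centeno2024twirledworldssymmetryinducedfailures}, or Appendix~D.2 thereof) is what guarantees $H(G)$ is a bona fide GPT system and not merely a fragment, so that Definition~\ref{def:subsystem} genuinely applies.

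Finally, I would remark that, as an immediate corollary via Theorem~\ref{keyresult} and Proposition~\ref{prop: transitivity}, if $H(G)$ is nonclassical (e.g. $H(G)$ is quantum theory, which is not simplex-embeddable) then $G$ is nonclassical — which is the payoff stated in Section~\ref{beyondqt}.
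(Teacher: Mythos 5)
Your proof is correct and arrives at the same embedding as the paper: the identity inclusion of the hyperdecohered states and effects back into $G$, which works because $H$ is a physical transformation in $G$ and hence $H(\Omega_G)\subseteq\Omega_G$ and $H^*(\mathcal{E}_G)\subseteq\mathcal{E}_G$. The one presentational detour worth flagging is your framing of the embedding as ``$H$ restricted to the hyperdecohered spaces,'' followed by an appeal to idempotence to argue that this restriction is the inclusion. The paper sidesteps this entirely by taking $\iota$ and $\kappa$ to be the identity inclusions from the outset, so that the embedding step itself never uses $H\circ H = H$ at all: idempotence is needed to make the hyperdecohered theory a well-defined GPT in the first place (and, together with the tomographicity result you cite, a proper GPT system), but once that is granted, the containment $\Omega_{H(G)}\subseteq\Omega_G$ already gives the embedding with no further work. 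So idempotence is not ``the main obstacle'' in this proof; it has already been discharged by the time you reach the embedding. Everything else — discard-preservation giving $\kappa(u_{H(G)})=u_G$, probability preservation being trivial because $p_{H(G)}$ is a restriction of $p_G$, and the corollary via transitivity — matches the paper.
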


The proof is simply that each process in the theory is already itself a process in $G$, by the assumption that $H$ is a physical transformation in the theory\footnote{The fact that a hyperdecoherence process takes states/effects in the GPT system to states/effects in the GPT system is motivated by the idea that the process must itself be a physical transformation in the GPT.}. Consequently, one can simply take the state map to be identity and the effect map to be identity in order to define the embedding of $H(G)$ into $G$.

As mentioned, the canonical example of hyperdecoherence is standard decoherence in quantum theory, e.g., taking a qubit to a classical bit. And as mentioned in Section~\ref{ncofcomp}, the classical bit is indeed a GPT subsystem of the qubit, in keeping with Theorem~\ref{thm:DecoherenceIsSubsystem}.

An immediate corollary of 
Theorem~\ref{thm:DecoherenceIsSubsystem}, first proved in Ref.~\cite{muller2023testing}, is the following:	
\begin{corollary}\label{cor:ContextualityofBeyondQuantum}
		Any theory which hyperdecoheres to quantum theory necessarily fails to admit of a generalised noncontextual ontological model.
	\end{corollary}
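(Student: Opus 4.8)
\textbf{Proof proposal for Corollary~\ref{cor:ContextualityofBeyondQuantum}.}

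The plan is to combine Theorem~\ref{thm:DecoherenceIsSubsystem}, the transitivity of the GPT-subsystem relation (Proposition~\ref{prop: transitivity}), and the fact that quantum theory is known to be non-simplex-embeddable. First I would let $G$ be any GPT that hyperdecoheres to quantum theory, meaning there is a hyperdecoherence process $H$ on $G$ with $H(G) = Q$, where $Q$ denotes (the relevant system of) quantum theory. By Theorem~\ref{thm:DecoherenceIsSubsystem}, $H(G) = Q$ is a GPT subsystem of $G$, i.e.\ $Q \subseteq G$. The key observation is then that simplex-embeddability is inherited downward along subsystems but, contrapositively, non-simplex-embeddability is inherited \emph{upward}: by the second Proposition in Section~\ref{ncofcomp}, a GPT is simplex-embeddable iff it is a GPT subsystem of a simplicial GPT, so if $G$ were simplex-embeddable we would have $G \subseteq \Lambda$ for some simplicial $\Lambda$, and then transitivity (Proposition~\ref{prop: transitivity}) applied to $Q \subseteq G \subseteq \Lambda$ would give $Q \subseteq \Lambda$, making $Q$ simplex-embeddable.

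Next I would invoke the established fact that quantum theory (for any system of dimension $\geq 2$, where contextuality proofs exist) is \emph{not} simplex-embeddable --- this is precisely the GPT restatement, via Ref.~\cite{SchmidGPT}, of the existence of proofs of the failure of generalized noncontextuality in quantum theory (e.g.\ the qubit already suffices). This contradicts the conclusion $Q \subseteq \Lambda$ derived above, so $G$ cannot be simplex-embeddable. Finally, using once more the equivalence (Ref.~\cite{SchmidGPT}) between simplex-embeddability of the GPT and the existence of a generalized noncontextual ontological model, the failure of simplex-embeddability for $G$ is exactly the statement that $G$ admits no generalized noncontextual ontological model, which is the Corollary.

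I do not expect any serious obstacle here: every ingredient is either stated earlier in the excerpt (Theorem~\ref{thm:DecoherenceIsSubsystem}, Proposition~\ref{prop: transitivity}, the simplex-embedding characterizations) or is a standard external input (non-classicality of quantum theory in the simplex-embedding sense). The only point requiring a word of care is making sure the hyperdecohered system $H(G)$ is genuinely a \emph{GPT} (so that ``GPT subsystem'' applies rather than merely ``fragment''): this is guaranteed because hyperdecohered states and effects are tomographic for one another, as noted just before Theorem~\ref{thm:DecoherenceIsSubsystem}. So the proof is essentially a two-line chain: $Q \subseteq G$ by Theorem~\ref{thm:DecoherenceIsSubsystem}; $Q$ nonclassical $\Rightarrow$ $G$ nonclassical by contrapositive of transitivity of simplex-embeddability; done.
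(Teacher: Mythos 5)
Your proposal is correct and matches the paper's intended argument; the paper states this is ``an immediate corollary'' of Theorem~\ref{thm:DecoherenceIsSubsystem} without spelling out the chain, and you have filled in exactly that chain: $Q\subseteq G$ by Theorem~\ref{thm:DecoherenceIsSubsystem}, non-simplex-embeddability propagates upward via transitivity (Proposition~\ref{prop: transitivity}) together with the characterization of simplex-embeddability as being a subsystem of a simplicial GPT, quantum theory is known not to be simplex-embeddable, and simplex-embeddability is equivalent to admitting a generalized noncontextual ontological model by Ref.~\cite{SchmidGPT}. Your remark that the hyperdecohered system is a genuine GPT (so ``subsystem'' rather than merely ``fragment'' applies) is also the correct caveat and is addressed in the paper just before Theorem~\ref{thm:DecoherenceIsSubsystem}.
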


This is the sense in which every proof of the failure of noncontextuality that assumes quantum theory will remain valid even if quantum theory itself is superseded by a later successor theory.

If the manner by which quantum theory is recovered from some future successor theory is too exotic to fit the (quite general) mold of a hyperdecoherence process, then one would need to reassess whether such a beyond-quantum theory necessarily retains the same nonclassical features as quantum theory. This is analogous to how if one considers an exotic enough future theory (e.g., one which modified the nature of space and time), one would need to reassess the question of whether such a theory indeed was nonclassical in the sense seemingly implied by Bell's theorem. But in both cases, the accuracy of the conclusions one draws about nonclassicality are quite theory-independent---backed up by general assumptions that plausibly must hold in any reasonable future theory.

\section{Future directions}

It would be interesting to see whether GPT subsystems subsume still other notions of systems that have been considered. Perhaps of special interest are attempts to define a notion of a system (or subsystem) that is operationally defined rather than a primitive notion~\cite{Giuliosubsystems,kramer2018operational,gitton2022loopholegeneralizednoncontextuality}.

In forthcoming work, we generalize the idea of a GPT subsystem to the broader notion of a {\em GPT subtheory}, capturing the idea of what it means for a {\em theory} to live inside another. The definition is quite analogous to that given herein for GPT subsystems: one GPT is a GPT subtheory of another if there exists a structure-preserving embedding from the former into the latter. The novelty is that once one considers GPTs as full theories, there is  compositional structure in addition to convex structure to be preserved.

Another important direction for future work is to consider how the shadow map {\em quantitatively} affects simplex-embeddability. In this work, we focused merely on the qualitative divide between simplex-embeddable or not.
We also suspect that combining the current tools with those of Ref.~\cite{PuseydelRio} could lead to further progress on understanding when and how failures of relative tomography leads to mistaken assessments of nonclassicality. Note that our approach in this work was to investigate what happens to a true GPT description in the presence of limited experimental tests; Ref.~\cite{PuseydelRio}, in contrast, can be viewed as characterizing what can be said about the scope of all possible true GPT descriptions of an experiment, given a constraint on how far off one's set of processes is from being relatively tomographic.

Although we have given a simple sufficient condition (relative tomography of one's preparations and measurements) for assessments of noncontextuality to be correct, this condition is not something one can experimentally guarantee (although one can certainly gather experimental evidence for it~\cite{mazurek2021experimentally,grabowecky2021experimentally}). 
It remains an open question whether one can find other sufficient conditions, and which of these conditions might be easiest to test experimentally. Or, might there even be physical conditions under which relative tomography is guaranteed?
Very speculatively, one might obtain such a guarantee if one had a means to generate (all possible) physical symmetry transformations in some GPT.

Finally, we note that our conclusions regarding the relevant assumptions for witnessing the failure of generalized noncontextuality are also relevant to operational proofs of the failure of Kochen-Specker noncontextuality~\cite{KS}, since the Kochen-Specker assumptions are best motivated by the same considerations, namely appeal to Leibniz's principle~\cite{Leibniz}.

\section*{Acknowledgements}	
DS thanks Rob Spekkens and Elie Wolfe for useful discussions regarding when proofs of nonclassicality may or may not be robust. Early inspiration for this work came from discussions with Rob Spekkens relating to proofs of nonclassicality on a rebit embedded in a qubit. We thank Markus P.~M\"uller, Thomas Galley, Caroline L.~Jones, and Howard Barnum for a delightful and insightful visit to IQOQI-Vienna where many ideas related to this manuscript were discussed. We thank Markus P.~M\"uller especially for helping us recognize that relative tomography of states and effects cannot be captured by simply looking at the spans of their vector spaces, and in particular for coming up with the example in Section~\ref{twodisk}. We thank also Y{\`i}l{\`e} Y{\=\i}ng, Victor Gitton, and Marco Erba for relevant discussions. Finally, we thank the referees for carefully reading this work and providing helpful feedback.
JHS and DS were supported by the National Science Centre, Poland (Opus project, Categorical Foundations of the Non-Classicality of Nature, project no.~2021/41/B/ST2/03149). 
VR and RB acknowledge support by the Digital Horizon Europe project FoQaCiA, Foundations of quantum computational advantage, GA No.~101070558, funded by the European Union, NSERC (Canada), and UKRI (UK).
This work is partially carried out under IRA Programme, project no.~FENG.02.01-IP.05-0006/23, financed by the FENG program 2021-2027, Priority FENG.02, Measure FENG.02.01., with the support of the FNP. Research at Perimeter Institute is supported in part by the Government of Canada through the Department of Innovation, Science and Economic Development and by the Province of Ontario through the Ministry of Colleges and Universities.
Some diagrams were prepared using TikZit and Mathcha.

    \

 \bibliographystyle{quantum}
	\bibliography{bib.bib}
	
	\appendix

\section{Proof of Lemma~\ref{lem:AutomorphismsAreIso}}\label{sec:lemmaproof}

\AutomorphismsAreIso*  

 \proof  To begin, recall that every GPT embedding is faithful (Def.~\ref{defn:GPTembedding}), which means that $\iota$ and $\kappa$ are injective state and effect maps. Moreover, they are injective as linear maps $\iota:\mathsf{Span}[\Omega]\to\mathsf{Span}[\Omega]$ and $\kappa:\mathsf{Span}[\mathcal{E}]\to\mathsf{Span}[\mathcal{E}]$, so they automatically have a linear left inverse. All that remains to be shown is that they are also surjective (and hence invertible) as state and effect maps. In other words, $\iota(\Omega)=\Omega$ and $\kappa(\mathcal{E})=\mathcal{E}$.

First, we show that $\iota$ and $\kappa$ cannot change the volume of $\Omega$ and $\mathcal{E}$. Fix a Lebesgue measure on $\mathsf{Span}[\Omega]$ and write $\mathsf{vol}()$ for the corresponding volume (and analogously for $\mathsf{Span}[\mathcal{E}]$). Note that 
\begin{align}
    \mathsf{vol}[\iota(\Omega)]&=\det[\iota]\mathsf{vol}[\Omega],\\
    \mathsf{vol}[\kappa(\mathcal{E})]&=\det[\kappa]\mathsf{vol}[\mathcal{E}],
\end{align}  
since the determinant of a linear map specifies the scaling factor of volume for any measurable full-dimensional convex set(see~\cite[Lemma 40.4, p. 389]{Lesbegue}).\footnote{Here, our state and effect spaces are convex, and therefore measurable with respect to Lebesgue measure \cite[Thm.~2]{lang1986note}.} The conditions $\iota(\Omega)\subseteq \Omega$ and $\kappa(\mathcal{E})\subseteq\mathcal{E}$ imply that the volumes can only decrease, so $\det[\iota]\leq 1$ and $\det[\kappa]\leq 1$. 
Now,  recall that $p$ is tomographic, which means that it constitutes a nondegenerate bilinear pairing, $p:\mathsf{Span}[\Omega]\times\mathsf{Span}[\mathcal{E}]\to\mathds{R}$. That is,  
\begin{align}
    p(s,e)=0 \quad  \forall e \quad  \iff \, s=0,
\end{align}  
and  
\begin{align}
    p(s,e)=0\quad \forall s \quad \iff \, e=0.
\end{align}  
We can therefore define a transpose relative to $p$, $*_p$, such that for any linear map $f:\mathsf{Span}[\Omega]\to\mathsf{Span}[\Omega]$, there exists another linear map $f^{*_p}:\mathsf{Span}[\mathcal{E}]\to\mathsf{Span}[\mathcal{E}]$ such that  
\begin{align}
    p(f(s),e)=p(s,f^{*_p}(e)) \quad \forall \, s,e.
\end{align}  
Tomography of $p$ and probability preservation of embeddings lead to
\begin{align}
p(s,e)=p(\iota(s),\kappa(e))=p(s,\iota^{*_p}(\kappa(e))) \quad \forall \, s,e,
\end{align}  
which implies that  
\begin{align}
 \iota^{*_p}(\kappa(e))=e\,, \quad  \forall \, e\,.   
\end{align}  
From this it follows that  
\begin{equation} \label{invAdj}
\iota^{*_p}\circ\kappa=\mathsf{Id}_{\mathsf{Span}(\mathcal{E})}, \quad \text{and} \quad \kappa=(\iota^{*_p})^{-1}.
\end{equation}  
Therefore,  $\det[\kappa]=\det[(\iota^{*_p})^{-1}]=\det[\iota^{*_p}]^{-1}$, which together with the fact that $\det[\iota^{*_p}]=\det[\iota]$ (see Ref.~\cite[Theorem 3.21, p. 21]{LinAlg}) implies that $\det[\kappa]=\det[\iota]^{-1}$. Together with $\det[\iota]\leq 1$ and $\det[\kappa]\leq 1$, we get $\det[\iota]=\det[\kappa]=1$. So indeed $\mathsf{vol}[\iota(\Omega)]=\mathsf{vol}[\Omega]$ and $\mathsf{vol}[\kappa(\mathcal{E})]=\mathsf{vol}[\mathcal{E}]$.

The second step is to note that $\iota(\Omega)\supseteq\mathsf{int}(\Omega)$, where the interior is taken within $\mathsf{Span}[\Omega]$ (i.e., the relative interior). Indeed, if any relative interior point $x\in\Omega$ were not in $\iota(\Omega)$, there would exist a sufficiently small full-dimensional ball around $x$ contained in $\Omega$ but disjoint from $\iota(\Omega)$. This would imply a strictly smaller volume for $\iota(\Omega)$, contradicting the equality above. Thus every interior point of $\Omega$ lies in $\iota(\Omega)$. Moreover, since both $\iota(\Omega)$ and $\Omega$ are closed we have that $\iota(\Omega)=\overline{\iota(\Omega)}\supseteq \overline{\mathsf{int}(\Omega)}=\Omega$, where $\overline{\iota(\Omega)}$ and $\overline{\mathsf{int}(\Omega)}$ denote the closures of these sets.
Finally, recalling that  $\iota(\Omega)\subseteq\Omega$ (since $\iota$ is a state map from $\Omega$ to itself), we conclude $\iota(\Omega)=\Omega$. A similar argument shows that $\kappa(\mathcal{E})=\mathcal{E}$.

 Therefore, $\iota$ and $\kappa$ are surjective state and effect maps, and the pair $(\iota,\kappa)$ constitutes a GPT isomorphism.  

\section{Proof of Theorem \ref{shadow-equiv}} \label{secshadowequiv}

First, we state and prove three useful lemmas.

\begin{lemma}\label{lem:LinearityFromEmpAd}
Shadow maps are necessarily linear.
\end{lemma}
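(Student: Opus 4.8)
The plan is to unpack what a "shadow map" is according to Definition~\ref{defn:shadow}: it is a GPT fragment embedding $(\sigma,\tau): f \to \mathcal{S}(f)$ where $\mathcal{S}(f)$ is a proper (tomographic) GPT and the embedding is probability-preserving. So strictly speaking, once we adopt Definition~\ref{defn:shadow} verbatim, linearity of $\sigma$ and $\tau$ is built into the definition (state maps and effect maps are defined as linear maps in Definitions~\ref{def:statemaps} and~\ref{def:effectmaps}). Hence the content of this lemma must be the following: \emph{if one is merely handed the set-theoretic maps on the convexly extremal states/effects — or on the data-table level, as arises in theory-agnostic tomography — that reproduce all the probabilities into a tomographic GPT, then these maps extend uniquely to linear maps}. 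In other words, the lemma says that probability-preservation into a tomographic codomain \emph{forces} linearity; one does not need to assume it. This is the version worth proving, and it is exactly what is needed for the proof sketch of Theorem~\ref{shadow-equiv} (``$\sigma$ and $\tau$ are necessarily linear maps'') and for Theorem~\ref{thm:TheoryAgnosticTomography}.

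First I would set up the situation cleanly: let $f = (\Omega_f, \mathcal{E}_f, p_f)$ be a fragment, and suppose $\mathcal{S}(f) = (\Omega', \mathcal{E}', p')$ is a tomographic GPT together with maps $\sigma : \Omega_f \to \Omega'$ and $\tau : \mathcal{E}_f \to \mathcal{E}'$ (a priori just functions) satisfying $p'(\sigma(s),\tau(e)) = p_f(s,e)$ for all $s \in \Omega_f$, $e \in \mathcal{E}_f$. The key step is to use tomographic-ness of $p'$ as a \emph{separation / cancellation} property. Concretely, for states: suppose $\sum_i \alpha_i s_i = \mathbf{0}$ is an operational identity in $\Omega_f$ (a linear relation holding in $\mathsf{Span}[\Omega_f]$). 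Then for every $e \in \mathcal{E}_f$ we have $\sum_i \alpha_i p_f(s_i, e) = 0$ by bilinearity of $p_f$, hence $\sum_i \alpha_i p'(\sigma(s_i), e') = 0$ for all $e' \in \tau(\mathcal{E}_f)$. Here is the one subtlety: to conclude $\sum_i \alpha_i \sigma(s_i) = \mathbf{0}$ I need $\tau(\mathcal{E}_f)$ to span $\mathsf{Span}[\mathcal{E}']$ (or at least to be separating for $\mathsf{Span}[\Omega']$); this holds because $\mathcal{S}(f)$ is \emph{tomographic} — its effects separate its states — combined with the fact that $\tau(\mathcal{E}_f)$ reproduces all the probabilities that any effect of $\mathcal{S}(f)$ must, so $\mathsf{Span}[\tau(\mathcal{E}_f)] = \mathsf{Span}[\mathcal{E}']$ (otherwise one could add a further effect and the probability rule would not be tomographic on a genuinely larger space — one can either build this into how $\mathcal{S}(f)$ is presented, or argue it directly). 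Granting that, $\sigma$ preserves all linear relations among the $s_i$, and a map between (spanning sets of) vector spaces that preserves all linear dependences extends uniquely to a well-defined linear map on the span — this is the standard ``a consistent assignment on a spanning set defines a linear map'' argument. The symmetric argument with the roles of states and effects swapped gives linearity of $\tau$.

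The main obstacle — and the only place real care is needed — is the step establishing that $\tau(\mathcal{E}_f)$ spans the whole effect vector space of $\mathcal{S}(f)$ (dually for $\sigma$). In Definition~\ref{defn:method4} this is automatic (the shadow is literally a quotient, generated by the images of $\Omega_f$ and $\mathcal{E}_f$), so the cleanest route is: (i) prove the lemma for the quotient construction of Definition~\ref{defn:method4}, where the quotient maps are manifestly linear by the universal property of linear quotients~\cite[Sec.~13]{gamelin2013introduction} and their images span by construction; (ii) note that \emph{any} shadow in the sense of Definition~\ref{defn:shadow} has, by the hypothesis that it is a \emph{full} embedding of $f$ (the ``$\sigma$ and $\tau$ full'' characterization mentioned just after Theorem~\ref{shadow-equiv}), $\sigma(\Omega_f)$ spanning $U$ and $\tau(\mathcal{E}_f)$ spanning $V$ — so the cancellation argument above applies directly. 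I would present it in the second, direct form since that is what Theorem~\ref{shadow-equiv}'s proof cites. I expect the total proof to be short: about one paragraph for the state case, one sentence noting the effect case is identical by symmetry, and one remark pinning down the spanning hypothesis. No routine computation is involved beyond bilinearity bookkeeping.
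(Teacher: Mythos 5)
Your proposal is correct and takes essentially the same approach as the paper: both arguments apply bilinearity of $p_f$ together with probability preservation to show that the image of a linear combination satisfies the same inner-product equations against all effects $\tau(e)$, and then invoke tomographicness of $p_\mathcal{S}$ to cancel and conclude the linear combination is preserved by $\sigma$ (and dually for $\tau$). The one concern you flag — whether $\tau(\mathcal{E}_f)$ is separating for the shadow's state space — is in fact moot, since Definition~\ref{defn:shadow} \emph{defines} the shadow's effect space to be $\tau(\mathcal{E}_f)$ and defines $p_\mathcal{S}$ (required tomographic) on $\mathsf{Span}[\sigma(\Omega_f)]\times\mathsf{Span}[\tau(\mathcal{E}_f)]$, so separation by $\tau(\mathcal{E}_f)$ is immediate; the paper accordingly uses this directly without the detour through the quotient construction of Definition~\ref{defn:method4}.
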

 \begin{proof}
    Consider first a state $s\in\Omega_f$ of a GPT fragment $f$, such that
    \beq
        \btp
        \tikzstate{s}{f}
    \etp
    =
    \sum_i\alpha_i\ \btp\begin{pgfonlayer}{nodelayer}
		\node [style=Wsquareadj] (1) at (-0, -0.4) {$s_i$};
		\node [style=none] (2) at (-0, 0.5) {};
		\node [style=right label] (3) at (0, 0.5) {$f$};
	\end{pgfonlayer}
	\begin{pgfonlayer}{edgelayer}
		\draw [qWire] (2.center) to (1.center);
\end{pgfonlayer}\etp,
    \eeq
     where $\{\alpha_i\}$ are real numbers  and $s_i$ are arbitrary states in $\Omega_f$. Then for any effect $e\in\mathcal{E}_f$, 
    \beq
        \btp
		\begin{pgfonlayer}{nodelayer}
			\node [style=Wsquareadj] (1) at (-0, -1.4) {$s$};
			\node [style=Wsquare] (2) at (-0, 1.4) {$e$};
			\node [style=right label] (4) at (0, 0.9) {$f$};
                \node [style=empty circle, fill=white] (5) at (0,0) {$p_f$};
			\node [style=right label] (6) at (0, -0.7) {$f$};
		\end{pgfonlayer}
		\begin{pgfonlayer}{edgelayer}
			\draw [qWire] (2) to (1.center);
		\end{pgfonlayer}
		\etp
  \ =\ 
  \btp
			\begin{pgfonlayer}{nodelayer}
				\node [style=Wsquareadj] (1) at (-0, -2.6) {$s$};
				\node [style=Wsquare] (2) at (-0, 2.6) {$e$};
				\node [style=right label] (4) at (0,2.2) {$f$};
				\node [style=small box,inner sep=1pt,fill=gray!30] (5) at (0, 1.4) {$\tau$};
				\node [style=small box,inner sep=1pt,fill=gray!30] (7) at (0, -1.4) {$\sigma$};
				\node [style=empty circle,fill=gray!30] (8) at (0,0) {$p_\mathcal{S}$};
				\node [style=right label] (6) at (0, -2) {$f$};
                \node [style=right label] (10) at (0,0.8) {$\mathcal{S}(f)$};
                \node [style=right label] (9) at (0,-0.8) {$\mathcal{S}(f)$};
			\end{pgfonlayer}
			\begin{pgfonlayer}{edgelayer}
				\draw [qWire] (2) to (5);
				\draw [sWire] (5) to (7);
				\draw [qWire] (7) to (1);
			\end{pgfonlayer}
			\etp.
    \eeq

Now, we appeal to the bilinearity of the map $p_f$. It will follow that
\begin{eqnarray}
    \btp
			\begin{pgfonlayer}{nodelayer}
				\node [style=Wsquareadj] (1) at (-0, -2.6) {$s$};
				\node [style=Wsquare] (2) at (-0, 2.6) {$e$};
				\node [style=right label] (4) at (0,2.2) {$f$};
				\node [style=small box,inner sep=1pt,fill=gray!30] (5) at (0, 1.4) {$\tau$};
				\node [style=small box,inner sep=1pt,fill=gray!30] (7) at (0, -1.4) {$\sigma$};
				\node [style=empty circle,fill=gray!30] (8) at (0,0) {$p_\mathcal{S}$};
				\node [style=right label] (6) at (0, -2) {$f$};
                \node [style=right label] (10) at (0,0.8) {$\mathcal{S}(f)$};
                \node [style=right label] (9) at (0,-0.8) {$\mathcal{S}(f)$};
			\end{pgfonlayer}
			\begin{pgfonlayer}{edgelayer}
				\draw [qWire] (2) to (5);
				\draw [sWire] (5) to (7);
				\draw [qWire] (7) to (1);
			\end{pgfonlayer}
			\etp
   &=&\quad\quad
    \btp
		\begin{pgfonlayer}{nodelayer}
			\node [style=Wsquareadj] (1) at (-0, -1.5) {$s$};
			\node [style=Wsquare] (2) at (-0, 1.5) {$e$};
			\node [style=right label] (4) at (0, 0.9) {$f$};
                \node [style=empty circle, fill=white] (5) at (0,0) {$p_f$};
			\node [style=right label] (6) at (0, -0.8) {$f$};
		\end{pgfonlayer}
		\begin{pgfonlayer}{edgelayer}
			\draw [qWire] (2) to (1.center);
		\end{pgfonlayer}
		\etp\\
  &=&
  \btp
		\begin{pgfonlayer}{nodelayer}
			\node [style=Wsquareadj] (1) at (-0, -1.7) {$s_i$};
			\node [style=Wsquare] (2) at (-0, 1.7) {$e$};
			\node [style=right label] (4) at (0, 1.2) {$f$};
                \node [style=empty circle,fill=white] (5) at (0,0.3) {$p_f$};
			\node [style=right label] (6) at (0,-0.4) {$f$};
            \node [style=none] (7) at (-1.8,-1.6) {$\sum_i\alpha_i$};
            \node [style=none] (8) at (-3,-.75) {};
            \node [style=none] (9) at (1,-.75) {};
            \node [style=none] (10) at (1,-2.75){};
            \node [style=none] (11) at (-3,-2.75) {};
		\end{pgfonlayer}
		\begin{pgfonlayer}{edgelayer}
			\draw [qWire] (2) to (1.center);
                \draw [thick gray dashed edge] (8.center) to (9.center) to (10.center) to (11.center) to cycle;
		\end{pgfonlayer}
		\etp\\
  &=& \sum_i\alpha_i\ \btp
		\begin{pgfonlayer}{nodelayer}
			\node [style=Wsquareadj] (1) at (-0, -1.6) {$s_i$};
			\node [style=Wsquare] (2) at (-0, 1.6) {$e$};
			\node [style=right label] (4) at (0, 0.9) {$f$};
                \node [style=empty circle,fill=white] (5) at (0,0) {$p_f$};
			\node [style=right label] (6) at (0, -0.8) {$f$};
		\end{pgfonlayer}
		\begin{pgfonlayer}{edgelayer}
			\draw [qWire] (2) to (1.center);
		\end{pgfonlayer}
		\etp\\
  &=&\sum_i\alpha_i\ \btp
			\begin{pgfonlayer}{nodelayer}
				\node [style=Wsquareadj] (1) at (-0, -2.8) {$s_i$};
				\node [style=Wsquare] (2) at (-0, 2.8) {$e$};
				\node [style=right label] (4) at (0,2.2) {$f$};
				\node [style=small box,inner sep=1pt,fill=gray!30] (5) at (0, 1.4) {$\tau$};
				\node [style=small box,inner sep=1pt,fill=gray!30] (7) at (0, -1.4) {$\sigma$};
				\node [style=empty circle,fill=gray!30] (8) at (0,0) {$p_\mathcal{S}$};
				\node [style=right label] (6) at (0, -2) {$f$};
                \node [style=right label] (10) at (0,0.8) {$\mathcal{S}(f)$};
                \node [style=right label] (9) at (0,-0.8) {$\mathcal{S}(f)$};
			\end{pgfonlayer}
			\begin{pgfonlayer}{edgelayer}
				\draw [qWire] (2) to (5);
				\draw [sWire] (5) to (7);
				\draw [qWire] (7) to (1);
			\end{pgfonlayer}
			\etp,
\end{eqnarray}
and since this holds for any $\tau(e)\in\mathcal{S}(\mathcal{E}_f)$, it follows from the fact that $p_\mathcal{S}$ is tomographic  that
\beq
   \begin{tikzpicture}
	\begin{pgfonlayer}{nodelayer}
		\node [style=none] (0) at (0, 1.7) {};
		\node [style=small box, inner sep=1pt, fill={gray!30}] (1) at (0, 0.8) {$\sigma$};
		\node [style=Wsquareadj] (2) at (0, -1.2) {$s_i$};
		\node [style=right label] (3) at (0, 1.7) {$\mathcal{S}(f)$};
		\node [style=right label] (4) at (0, -0.1) {$f$};
		\node [style=none] (5) at (-1.7, -1.3) {$\sum_i\alpha_i$};
		\node [style=none] (10) at (-2.75, -0.5) {};
		\node [style=none] (11) at (1, -0.5) {};
		\node [style=none] (12) at (1, -2.5) {};
		\node [style=none] (13) at (-2.75, -2.5) {};
	\end{pgfonlayer}
	\begin{pgfonlayer}{edgelayer}
		\draw [sWire] (0.center) to (1);
		\draw [qWire] (1) to (2);
		\draw [thick gray dashed edge] (11.center)
			 to (10.center)
			 to (13.center)
			 to (12.center)
			 to cycle;
	\end{pgfonlayer}
\end{tikzpicture}
\ =\ 
\sum_i\alpha_i\ 
\btp
    \begin{pgfonlayer}{nodelayer}
        \node [style=none] (0) at (0,1.2) {};
        \node [style=small box,inner sep=1pt,fill=gray!30] (1) at (0,0.4) {$\sigma$};
        \node [style=Wsquareadj] (2) at (0,-1.2) {$s_i$};
        \node [style=right label] (3) at (0,1.3) {$\mathcal{S}(f)$};
        \node [style=right label] (4) at (0,-0.4) {$f$};
    \end{pgfonlayer}
    \begin{pgfonlayer}{edgelayer}
        \draw [sWire] (0.center) to (1);
        \draw [qWire] (1) to (2);
    \end{pgfonlayer}
\etp .
\eeq
 This must be true for any $s\in \Omega_f$ and any decomposition of it in terms of some $s_i$, so $\sigma$ is a linear map. 
		
  The proof of the linearity of $\tau$ follows a similar reasoning, relying now on the fact that $\mathcal{S}(\Omega_f)$ spans the dual of $\mathcal{S}(\mathcal{E}_f)$.
\end{proof}

\begin{lemma}
\label{lemma:kerShadowmaps} Consider a fragment $f=(\Omega_f,\mathcal{E}_f,p_f)$. For any shadow $\mathcal{S}(f)=(\sigma(\Omega_f),\tau(\mathcal{E}_f),p_\mathcal{S})$ of the fragment, the maps $\sigma$ and $\tau$ obey
    \begin{align}
        \mathsf{Ker}[\sigma] = \mathsf{Ker}_L[p_f]\\
        \mathsf{Ker}[\tau] = \mathsf{Ker}_R[p_f].
    \end{align}
\end{lemma}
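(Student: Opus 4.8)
The plan is to prove the two kernel equalities by a pair of two-way inclusions, working with the defining property of a shadow map, namely Eq.~\eqref{probruleshadow}. I will carry out the argument for $\sigma$ in detail; the argument for $\tau$ is completely analogous by the symmetry between states and effects in the definition of a shadow. Throughout I will use that $\sigma$ and $\tau$ are linear (Lemma~\ref{lem:LinearityFromEmpAd}), so that the kernels in question are genuine linear subspaces, and I will use the bilinearity of $p_f$ and $p_\mathcal{S}$.

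First I would establish $\mathsf{Ker}[\sigma]\subseteq\mathsf{Ker}_L[p_f]$. Suppose $v\in\mathsf{Span}[\Omega_f]$ with $\sigma(v)=\mathbf{0}$. By the shadow condition, Eq.~\eqref{probruleshadow}, for every $e\in\mathcal{E}_f$ we have $p_f(v,e)=p_\mathcal{S}(\sigma(v),\tau(e))=p_\mathcal{S}(\mathbf{0},\tau(e))=0$, using bilinearity of $p_\mathcal{S}$. Since this holds for all $e\in\mathcal{E}_f$, and hence (by linearity) for all $e\in\mathsf{Span}[\mathcal{E}_f]$, we conclude $v\in\mathsf{Ker}_L[p_f]$. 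Conversely, for $\mathsf{Ker}_L[p_f]\subseteq\mathsf{Ker}[\sigma]$, suppose $v\in\mathsf{Ker}_L[p_f]$, i.e.\ $p_f(v,e)=0$ for all $e\in\mathcal{E}_f$. Then again by Eq.~\eqref{probruleshadow}, $p_\mathcal{S}(\sigma(v),\tau(e))=p_f(v,e)=0$ for all $e\in\mathcal{E}_f$; since $\tau$ is full onto $\mathcal{E}_{\mathcal{S}(f)}$ (by the definition of a shadow as a \emph{full} embedding), $\tau(e)$ ranges over all effects of $\mathcal{S}(f)$, so $p_\mathcal{S}(\sigma(v),\tilde e)=0$ for all $\tilde e\in\mathcal{E}_{\mathcal{S}(f)}$. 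Now I invoke that $p_\mathcal{S}$ is \emph{tomographic} (part of the definition of a shadow): the second half of Eq.~\eqref{eq:tomographic}, read in the state slot, tells us that an element of $\mathsf{Span}[\sigma(\Omega_f)]$ annihilated by every effect must be the zero vector, hence $\sigma(v)=\mathbf{0}$, i.e.\ $v\in\mathsf{Ker}[\sigma]$.

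Combining the two inclusions gives $\mathsf{Ker}[\sigma]=\mathsf{Ker}_L[p_f]$. The identical argument with the roles of states and effects exchanged---using that $\sigma$ is full and that $p_\mathcal{S}$ is tomographic in the effect slot (the first half of Eq.~\eqref{eq:tomographic})---yields $\mathsf{Ker}[\tau]=\mathsf{Ker}_R[p_f]$.

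I expect the only subtlety to be bookkeeping about which properties of a shadow map are being used where: the easy inclusions ($\mathsf{Ker}[\sigma]\subseteq\mathsf{Ker}_L[p_f]$ and $\mathsf{Ker}[\tau]\subseteq\mathsf{Ker}_R[p_f]$) need only probability preservation plus bilinearity, whereas the reverse inclusions crucially require both the fullness of the embedding (so that $\sigma(v)$, resp.\ $\tau(e)$, is tested against \emph{all} effects, resp.\ states, of $\mathcal{S}(f)$) and the tomographic property of $p_\mathcal{S}$ (so that vanishing against all of them forces the vector to be zero). Making that dependence explicit is the main thing to get right; there is no genuine computational obstacle.
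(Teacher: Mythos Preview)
Your proof is correct and follows essentially the same approach as the paper's: both use probability preservation (Eq.~\eqref{probruleshadow}) together with the tomographic property of $p_\mathcal{S}$ and the fact that the effects of $\mathcal{S}(f)$ are by definition exactly the $\tau(e)$. The paper packages these as a single chain of equivalences rather than two inclusions, but the logical content is identical. One minor remark: you need not invoke ``fullness'' as an extra hypothesis---since $\mathcal{E}_{\mathcal{S}(f)}:=\tau(\mathcal{E}_f)$ by Definition~\ref{defn:shadow}, the surjectivity of $\tau$ onto the shadow's effect space is automatic.
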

\begin{proof}
    The proof is as follows, where the first equivalence is from the definition of the kernel, the second is from the fact that the effects of $\mathcal{S}(f)$ are (by definition) all of the form $\tau(e)$ and these (by definition) span the right argument of $p_\mathcal{S}$, the third is from the fact that the shadow map preserves the probabilistic predictions of the fragment, the fourth is from the fact that the effects $e$ (by definition) span the left argument of $p_f$, and the final equivalence is the definition of the left kernel. 
    	\begin{eqnarray}
	v \in\mathsf{Ker}(\sigma)&\iff& 
        \btp
	\begin{pgfonlayer}{nodelayer}
		\node [style=Wsquareadj] (0) at (0,-1) {$v$};
        \node [style=none] (1) at (0,1) {};
        \node [style=small box,inner sep=1pt,fill=gray!30] (2) at (0,0.2) {$\sigma$};
        \node [style=right label] (3) at (0,-0.4) {$f$};
        \node [style=right label] (4) at (0,0.8) {$\mathcal{S}(f)$};
        \end{pgfonlayer}
        \begin{pgfonlayer}{edgelayer}
            \draw [qWire] (0) to (2);
            \draw [sWire] (2) to (1);
        \end{pgfonlayer}
	\etp = 0\\
	&\iff&  \btp
			\begin{pgfonlayer}{nodelayer}
				\node [style=Wsquareadj] (1) at (-0, -2.6) {$v$};
				\node [style=Wsquare] (2) at (-0, 2.6) {$e$};
				\node [style=right label] (4) at (0,2.2) {$f$};
				\node [style=small box,inner sep=1pt,fill=gray!30] (5) at (0, 1.4) {$\tau$};
				\node [style=small box,inner sep=1pt,fill=gray!30] (7) at (0, -1.4) {$\sigma$};
				\node [style=empty circle,fill=gray!30] (8) at (0,0) {$p_\mathcal{S}$};
				\node [style=right label] (6) at (0, -2) {$f$};
                \node [style=right label] (10) at (0,0.8) {$\mathcal{S}(f)$};
                \node [style=right label] (9) at (0,-0.8) {$\mathcal{S}(f)$};
			\end{pgfonlayer}
			\begin{pgfonlayer}{edgelayer}
				\draw [qWire] (2) to (5);
				\draw [sWire] (5) to (7);
				\draw [qWire] (7) to (1);
			\end{pgfonlayer}
			\etp =0,\quad\forall e\in\mathcal{E}_f\\
	&\iff& \btp
	\begin{pgfonlayer}{nodelayer}
		\node [style=Wsquareadj] (1) at (-0, -1.4) {$v$};
		\node [style=Wsquare] (2) at (-0, 1.4) {$e$};
		\node [style=right label] (4) at (0, 1) {$f$};
		\node [style=empty circle,fill=white] (8) at (0,0) {$p_f$};
		\node [style=right label] (6) at (0, -0.8) {$f$};
	\end{pgfonlayer}
	\begin{pgfonlayer}{edgelayer}
		\draw [qWire] (2) to (1.center);
	\end{pgfonlayer}
	\etp\ =0,\quad\forall e\in\mathcal{E}_f\\
	&\iff& \btp
	\begin{pgfonlayer}{nodelayer}
		  \node [style=Wsquareadj] (0) at (0,-1.2) {$v$};
            \node [style=none] (1) at (0,1.2) {};
            \node [style=right label] (2) at (0,1) {$f$};
            \node [style=empty circle,inner sep=1pt,fill=white] (3) at (0,0.1) {$p_f$};
            \node [style=right label] (4) at (0,-0.55) {$f$};
	\end{pgfonlayer}
 \begin{pgfonlayer}{edgelayer}
        \draw [qWire] (0) to (1);
 \end{pgfonlayer}
	\etp\ =0\\
        &\iff& v\in \mathsf{Ker}_L[p_f].
    \end{eqnarray}
    
        A similar argument shows that $\mathsf{Ker}[\tau]=\mathsf{Ker}_R[p_f]$. 
\end{proof}

 In particular, if we consider some $v\in\mathsf{Ker}[\sigma]$ such that $v=s-s'$ then this means that $\sigma(s)=\sigma(s')\iff s\backsim s'$ and similarly,  for $w \in \mathsf{Ker}[\tau]$ such that $w=e-e'$, one has  $\tau(e)=\tau(e')\iff$ $e\backsim e$ where the relations $\backsim$ are the ones in Def.~\ref{defn:method4}.

\begin{lemma}[Shadow maps split]
\label{lemma:ShadowMapsSplit} Consider a fragment $f=(\Omega_f,\mathcal{E}_f,p_f)$. Given any shadow $\mathcal{S}(f)=(\sigma(\Omega_f),\tau(\mathcal{E}_{f}),p_\mathcal{S})$ for $f$, the maps $\sigma$ and $\tau$ split as
    \begin{align}
    \label{shadpwmapssplit}
        \sigma = \tilde{\sigma}\circ\backsim,\\
        \tau = \tilde{\tau}\circ\backsim,
    \end{align}
    where $\backsim$ are the quotienting maps defining $f\ns/\ns\backsim$ (as in Def.~\ref{defn:method4}) and where  $\tilde{\sigma}:\Omega\ns/\ns\backsim \ \to U$ and $\tilde{\tau}: \mathcal{E}\ns/\ns\backsim \ \to V$  are unique \emph{injective} linear maps.  Diagrammatically,
    \beq
        \btp
            \begin{pgfonlayer}{nodelayer}
                \node[style=small box,inner sep=1pt,fill=gray!30] (0) at (0,0) {$\sigma$};
                \node [style=none] (1) at (0,0.8) {};
                \node [style=none] (2) at (0,-0.8) {};
                \node [style=right label] (3) at (0,0.7) {$\mathcal{S}(f)$};
                \node[style=right label] (4) at (0,-0.7) {$f$};
            \end{pgfonlayer}
            \begin{pgfonlayer}{edgelayer}
                \draw [qWire] (2) to (0);
                \draw [sWire] (0) to (1);
            \end{pgfonlayer}
        \etp
        \ =\ 
        \btp
            \begin{pgfonlayer}{nodelayer}
                \node [style=right label] (0) at (0,-0.1) {$f\ns/\ns\backsim$};
                \node [style=coinc,inner sep=1pt, fill=gray!30] (1) at (0,-0.6) {$\backsim$};
                \node [style=small box,inner sep=1pt, fill=gray!30] (2) at (0,0.5) {$\tilde{\sigma}$};
                \node [style=right label] (3) at (0,-1.2) {$f$};
                \node [style=right label] (4) at (0,1.2) {$\mathcal{S}(f)$};
                \node [style=none] (5) at (0,-1.3) {};
                \node [style=none] (6) at (0, 1.3) {};
            \end{pgfonlayer}
            \begin{pgfonlayer}{edgelayer}
                \draw [qWire] (5) to (1);
                \draw [sWire] (1) to (6);
            \end{pgfonlayer}
        \etp,
    \eeq
        \beq
        \btp
            \begin{pgfonlayer}{nodelayer}
                \node[style=small box,inner sep=1pt,fill=gray!30] (0) at (0,0) {$\tau$};
                \node [style=none] (1) at (0,0.8) {};
                \node [style=none] (2) at (0,-0.8) {};
                \node [style=right label] (3) at (0,-0.7) {$\mathcal{S}(f)$};
                \node[style=right label] (4) at (0,0.7) {$f$};
            \end{pgfonlayer}
            \begin{pgfonlayer}{edgelayer}
                \draw [qWire] (1) to (0);
                \draw [sWire] (0) to (2);
            \end{pgfonlayer}
        \etp
        \ =\ 
        \btp
            \begin{pgfonlayer}{nodelayer}
                \node [style=right label] (0) at (0,0.1) {$f\ns/\ns\backsim$};
                \node [style=inc,inner sep=1pt, fill=gray!30] (1) at (0,0.5) {$\backsim$};
                \node [style=small box,inner sep=1pt, fill=gray!30] (2) at (0,-0.6) {$\tilde{\tau}$};
                \node [style=right label] (3) at (0,1.2) {$f$};
                \node [style=right label] (4) at (0,-1.2) {$\mathcal{S}(f)$};
                \node [style=none] (5) at (0,-1.3) {};
                \node [style=none] (6) at (0, 1.3) {};
            \end{pgfonlayer}
            \begin{pgfonlayer}{edgelayer}
                \draw [qWire] (6) to (1);
                \draw [sWire] (1) to (5);
            \end{pgfonlayer}
        \etp.
    \eeq
\end{lemma}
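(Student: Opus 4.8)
\textbf{Proof proposal for Lemma~\ref{lemma:ShadowMapsSplit} (shadow maps split).}

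The plan is to derive the splitting directly from the universal property of the linear quotient, using Lemma~\ref{lemma:kerShadowmaps} as the essential input. First I would recall that the quotienting map $\backsim : \mathsf{Span}[\Omega_f] \to \mathsf{Span}[\Omega_f]/\mathsf{Ker}_L[p_f]$ is, by construction, the canonical projection onto the linear quotient by the subspace $\mathsf{Ker}_L[p_f]$; similarly for effects with $\mathsf{Ker}_R[p_f]$. By Lemma~\ref{lem:LinearityFromEmpAd}, the shadow map component $\sigma$ is a linear map, and by Lemma~\ref{lemma:kerShadowmaps} we have $\mathsf{Ker}[\sigma] = \mathsf{Ker}_L[p_f]$. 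The universal property of the quotient~\cite[Sec.~13]{gamelin2013introduction} then states precisely that any linear map whose kernel contains $\mathsf{Ker}_L[p_f]$ factors uniquely through the projection $\backsim$; that is, there exists a unique linear map $\tilde\sigma : \mathsf{Span}[\Omega_f]/\backsim\ \to U$ such that $\sigma = \tilde\sigma \circ \backsim$. The same argument applied to $\tau$ (linear by Lemma~\ref{lem:LinearityFromEmpAd}, with $\mathsf{Ker}[\tau] = \mathsf{Ker}_R[p_f]$ by Lemma~\ref{lemma:kerShadowmaps}) yields the unique linear $\tilde\tau$ with $\tau = \tilde\tau \circ \backsim$.

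It then remains to verify that $\tilde\sigma$ and $\tilde\tau$ are \emph{injective}. For $\tilde\sigma$: suppose $\tilde\sigma(\btilde{v}) = 0$ for some class $\btilde{v} \in \mathsf{Span}[\Omega_f]/\backsim$, and pick a representative $v$ with $\backsim(v) = \btilde{v}$. Then $\sigma(v) = \tilde\sigma(\backsim(v)) = \tilde\sigma(\btilde{v}) = 0$, so $v \in \mathsf{Ker}[\sigma] = \mathsf{Ker}_L[p_f]$, which means $\backsim(v) = 0$, i.e.\ $\btilde{v} = 0$. Hence $\tilde\sigma$ is injective. The argument for $\tilde\tau$ is identical, using $\mathsf{Ker}[\tau] = \mathsf{Ker}_R[p_f]$ and the definition of $\backsim$ on effects. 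Finally, I would observe that these factorizations are exactly the content of the two displayed diagrammatic equations, once one reads the $\backsim$-boxes as the quotient maps of Definition~\ref{defn:method4} and the $\tilde\sigma, \tilde\tau$ boxes as the induced maps; the diagrammatic equalities are just the equations $\sigma = \tilde\sigma \circ \backsim$ and $\tau = \tilde\tau \circ \backsim$ drawn vertically (states read bottom-to-top, effects top-to-bottom).

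I do not anticipate a serious obstacle here: the lemma is essentially a packaging of the universal property of quotient vector spaces together with the kernel computation already established in Lemma~\ref{lemma:kerShadowmaps}. The only point requiring a little care is making sure the domains and codomains match up correctly in the diagrammatic statement---in particular that $\backsim$ here denotes the \emph{shadow} quotient (the one internal to a fixed GPT, relative to the experimental scenario) and not the unrelated quotient $\sim$ taking operational theories to GPTs, and that the codomain $\mathcal{S}(f)$ of $\tilde\sigma$ (resp.\ $\tilde\tau$) is the state space (resp.\ effect space) of the shadow rather than of $f$. Beyond bookkeeping of this kind, the proof is immediate.
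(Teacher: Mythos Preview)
Your proposal is correct and follows essentially the same route as the paper: invoke the universal property of the linear quotient together with the kernel identification of Lemma~\ref{lemma:kerShadowmaps} to obtain the unique factorizations, then conclude injectivity. The only cosmetic difference is that the paper packages the injectivity step as the general fact $\mathsf{Ker}[\tilde\mu]=\mathsf{Ker}[\mu]/A$ (which vanishes when $A=\mathsf{Ker}[\mu]$), whereas you spell out the same argument elementwise; and the paper does not explicitly cite Lemma~\ref{lem:LinearityFromEmpAd} at this point, though your mention of it is apt.
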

\begin{proof}
      This is a consequence of the universal property of quotients. Consider a linear map $\mu: T\to W$ where $T$ and $W$ are vector spaces, and consider a subspace $A\subseteq T$ such that $A\subseteq \mathsf{Ker}[\mu]$. The universal property of the quotient (for vector spaces) ensures that there exists a unique $\tilde{\mu}:T/A\to W$ such that $\mu=\tilde{\mu}\circ \sim_A$, where $\sim_A$ is the quotienting map $\sim: T\to T/A$. Note that $\mathsf{Ker}[\tilde{\mu}] = \mathsf{Ker}[\mu]/A$. Therefore, if $A=\mathsf{Ker}[\mu]$, then $\mathsf{Ker}[\tilde{\mu}]=0_{T/A}$, so $\tilde{\mu}$ is injective. In other words, the universal property tells us that there exists a unique map $\tilde{\mu}$ such that the following diagram commutes:

      \beq
        \btp
            \begin{pgfonlayer}{nodelayer}
                \node[style=none] (0) at (-1.5,1.5) {$T$};
                \node [style=none] (1) at (1.5,1.5) {$W$};
                \node[style=none] (2) at (1.5,-1.5) {$T\ns/\ns A$};
                \node [style=none] (3) at (0,1.9) {$\mu$};
                \node [style=none] (4) at (1.9,0) {$\tilde{\mu}$};
                \node [style=none] (5) at (-0.5,-0.5) {$\sim_A$};
            \end{pgfonlayer}
            \begin{pgfonlayer}{edgelayer}
                \draw [->] (0) to (1);
                \draw [->] (2) to (1);
                \draw [->] (0) to (2);
            \end{pgfonlayer}
        \etp
      \eeq

      \sloppy Let us apply this theorem to our case, with the substitution $T=\mathsf{Span}[\Omega_f]$, $W=U$, and $A=\mathsf{Ker}[\sigma]$ for splitting $\sigma$. Since the quotienting map is the canonical projection onto $\mathsf{Span}[\Omega_{f}]/\mathsf{Ker}[\sigma]$, in our case this equals $\mathsf{Span}[\Omega_{f}]/\mathsf{Ker}_L[p_f]$, given Lemma~\ref{lemma:kerShadowmaps}. This is exactly the quotienting relation for states in Def.~\ref{defn:method4} (and therefore denoted by the backwards symbol $\backsim$). Next, we apply the universal property  with the substitution $T=\mathsf{Span}[\mathcal{E}_f]$, $W=V$, and $A=\mathsf{Ker}[\tau]$ in order to split $\tau$. Here, we get that $\sim$ is the quotienting of $\mathcal{E}_f$ with respect to  $\mathsf{Ker}_R[p_f]$  (again using Lemma~\ref{lemma:kerShadowmaps}), therefore also denoted by the backwards symbol `$\backsim$'. In other words, the quotienting identifies any pair of states $s,s'$ such that $p_f(s,e)=p_f(s',e)$ for all effects $e\in \mathcal{E}_f$ and similarly for effects. 
      
     The universal property of quotients then implies that there exists a unique map ${\tilde{\sigma}:\mathsf{Span}[\Omega_{f}/\backsim]\to U}$ such that $\sigma = \tilde{\sigma}\circ \backsim$ and, similarly, there exists a unique map ${\mathsf{Span}[\mathcal{E}_{ f}/\backsim]\to V}$ such that $\tau = \tilde{\tau}\circ \backsim$. Since the quotienting maps are implemented with respect to the kernels of $\sigma$ and $\tau$, both maps $\tilde{\sigma}$ and $\tilde{\tau}$ are injective.  
      \end{proof}

      Lemmas~\ref{lem:LinearityFromEmpAd} and ~\ref{lemma:ShadowMapsSplit} can be used to prove that any shadow with the form prescribed by Definition~\ref{defn:shadow} is equivalent to the  particular (canonical) shadow $f/\backsim$ constructed in Definition~\ref{defn:method4}.
      
      \begin{lemma}[All shadows $\mathcal{S}(f)$ are equivalent to $f/\sim$]
      \label{lemma:AllShadowsEquivQuotientShadow}
           \sloppy Consider a fragment $f=(\Omega_f,\mathcal{E}_f,p_f)$ and an arbitrary shadow $\mathcal{S}(f)=(\sigma(\Omega_f),\tau(\mathcal{E}_f),p_\mathcal{S})$ of that fragment (as defined in Definition~\ref{defn:shadow}).  Then, the quotient shadow $f\ns/\ns\backsim$ (as defined in Definition~\ref{defn:method4}) is equivalent to $\mathcal{S}(f)$ . That is,  there exists  an \emph{invertible} GPT fragment embedding of $f\ns/\ns\backsim$ into $\mathcal{S}(f)$.
      \end{lemma}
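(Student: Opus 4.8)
\emph{Proof strategy.} The plan is to show that the pair $(\tilde{\sigma},\tilde{\tau})$ extracted in Lemma~\ref{lemma:ShadowMapsSplit} is exactly an invertible GPT fragment embedding from the canonical quotient shadow $f\ns/\ns\backsim$ (Definition~\ref{defn:method4}) into the given shadow $\mathcal{S}(f)$ (Definition~\ref{defn:shadow}). All the substantive work has already been done in the three preceding lemmas---linearity (Lemma~\ref{lem:LinearityFromEmpAd}), the kernel identification (Lemma~\ref{lemma:kerShadowmaps}), and the splitting $\sigma=\tilde{\sigma}\circ\backsim$, $\tau=\tilde{\tau}\circ\backsim$ with $\tilde{\sigma},\tilde{\tau}$ \emph{injective} and linear (Lemma~\ref{lemma:ShadowMapsSplit})---so what remains is to check the defining conditions of a GPT fragment embedding (Definition~\ref{def:fragembeddings}) for $(\tilde{\sigma},\tilde{\tau})$ and then invoke the fact that a full and faithful embedding between proper GPTs is invertible.

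First I would observe that, since $\Omega_{f\ns/\ns\backsim}=\backsim(\Omega_f)$ and $\mathcal{E}_{f\ns/\ns\backsim}=\backsim(\mathcal{E}_f)$ by Definition~\ref{defn:method4}, linearity of $\tilde{\sigma}$ and $\tilde{\tau}$ gives $\tilde{\sigma}(\Omega_{f\ns/\ns\backsim})=\sigma(\Omega_f)=\Omega_{\mathcal{S}(f)}$ and $\tilde{\tau}(\mathcal{E}_{f\ns/\ns\backsim})=\tau(\mathcal{E}_f)=\mathcal{E}_{\mathcal{S}(f)}$; in particular $\tilde{\sigma}$ is a state map, $\tilde{\tau}$ is an effect map, and both are \emph{full}. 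Unit preservation follows because the shadow map $(\sigma,\tau)$ is itself a GPT fragment embedding, so $\tau(u_f)=u_{\mathcal{S}(f)}$, whence $\tilde{\tau}(u_{f\ns/\ns\backsim})=\tilde{\tau}(\backsim(u_f))=\tau(u_f)=u_{\mathcal{S}(f)}$. For probability preservation I would compute, writing $\btilde{s}=\backsim(s)$ and $\btilde{e}=\backsim(e)$,
\[
p_{\mathcal{S}}\!\big(\tilde{\sigma}(\btilde{s}),\tilde{\tau}(\btilde{e})\big)=p_{\mathcal{S}}\!\big(\sigma(s),\tau(e)\big)=p_f(s,e)=\btilde{p_f}(\btilde{e},\btilde{s}),
\]
where the middle equality is Eq.~\eqref{probruleshadow} (the defining property of a shadow map) and the last is the definition of $\btilde{p_f}$ in Eq.~\eqref{quotientprobrule}. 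This establishes that $(\tilde{\sigma},\tilde{\tau})$ is a GPT fragment embedding of $f\ns/\ns\backsim$ into $\mathcal{S}(f)$.

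For invertibility, the key point is that both $f\ns/\ns\backsim$ and $\mathcal{S}(f)$ are \emph{proper} GPTs (the shadow of any fragment is tomographic), so there is no gap between being an embedding and being faithful, and a full-and-faithful state/effect map is automatically invertible (Definitions~\ref{def:statemaps} and~\ref{def:effectmaps}): $\tilde{\sigma}$ and $\tilde{\tau}$ are injective by Lemma~\ref{lemma:ShadowMapsSplit} and full by the computation above, so $(\tilde{\sigma},\tilde{\tau})$ is an invertible GPT fragment embedding, giving $f\ns/\ns\backsim\cong\mathcal{S}(f)$. I expect the only step needing genuine care---the closest thing to an obstacle---is the bookkeeping around the representational freedom in Definition~\ref{defn:shadow}: $\mathcal{S}(f)$'s states and effects need not span the ambient spaces $U,V$, so one must corestrict $\tilde{\sigma}$ to $\mathsf{Span}[\sigma(\Omega_f)]$ (and $\tilde{\tau}$ to $\mathsf{Span}[\tau(\mathcal{E}_f)]$) and verify there that it is a linear bijection whose inverse is again a state (effect) map, which it is since $\backsim$ is surjective onto $\mathsf{Span}[\Omega_{f\ns/\ns\backsim}]$ and $\tilde{\sigma}$ is injective.
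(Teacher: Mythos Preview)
Your proposal is correct and follows essentially the same route as the paper: both use the pair $(\tilde{\sigma},\tilde{\tau})$ from Lemma~\ref{lemma:ShadowMapsSplit}, verify that it is a full and faithful state/effect map via the splitting $\sigma=\tilde{\sigma}\circ\backsim$, $\tau=\tilde{\tau}\circ\backsim$, and conclude invertibility. If anything, your version is slightly more explicit about unit preservation and the probability computation, while the paper compresses the latter into the remark that both shadows reproduce the fragment's probabilities.
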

      \begin{proof}
     \sloppy  We will show that $(\tilde{\sigma},\tilde{\tau})$ provides a faithful and full embedding of $f\ns/\ns\backsim$ into $\mathcal{S}(f)$. Recall that this implies that the embedding is invertible. 
     First, note that $\tilde{\sigma}$ inherits linearity from $\sigma$ (which must be linear, by Lemma~\ref{lem:LinearityFromEmpAd}). Second, given Lemma~\ref{lemma:ShadowMapsSplit}, the domain of $\tilde{\sigma}$ is $\mathsf{Span}[\Omega_{ f}/\backsim]$, and it is an injective state map, obeying $\tilde{\sigma}(\Omega_f/\backsim)\subseteq \sigma(\Omega_{f})$ and mapping any distinct pair of states $\btilde{s},\btilde{s}'\in\Omega_{f}/\backsim$ to distinct states in $\sigma(\Omega_{f})$. We now show  $\tilde{\sigma}$ is surjective (full). Any state in $\sigma(\Omega_f)$ can be written as $\sigma(s)$ with $s\in\Omega_f$, and using the splitting of $\sigma$, we get
    \beq
\btp
	\begin{pgfonlayer}{nodelayer}
		\node [style=Wsquareadj] (0) at (0,-1) {$s$};
        \node [style=none] (1) at (0,1) {};
        \node [style=small box,inner sep=1pt,fill=gray!30] (2) at (0,0.2) {$\sigma$};
        \node [style=right label] (3) at (0,-0.3) {$f$};
        \node [style=right label] (4) at (0,0.8) {$\mathcal{S}(f)$};
        \end{pgfonlayer}
        \begin{pgfonlayer}{edgelayer}
            \draw [qWire] (0) to (2);
            \draw [sWire] (2) to (1);
        \end{pgfonlayer}
	\etp
 \ =\ 
        \btp
        \begin{pgfonlayer}{nodelayer}
            \node [style=right label] (0) at (0,0) {$f\ns/\ns\backsim$};
            \node [style=Wsquareadj] (1) at (0,-1.6) {$s$};
            \node [style=none] (2) at (0,1.6) {};
            \node [style=small box,inner sep=1pt,fill=gray!30] (3) at (0,0.6) {$\tilde{\sigma}$};
            \node [style=coinc,inner sep=1pt,fill=gray!30] (4) at (0,-0.6) {$\backsim$};
            \node[style=right label] (5) at (0,-1) {$f$};
            \node [style=right label] (6) at (0,1.3) {$\mathcal{S}(f)$};
        \end{pgfonlayer}
        \begin{pgfonlayer}{edgelayer}
            \draw [qWire] (1) to (4);
            \draw [sWire] (4) to (2);
        \end{pgfonlayer}
        \etp
        \ =\ 
                \btp
	\begin{pgfonlayer}{nodelayer}
		\node [style=Wsquareadj] (0) at (0,-1) {$\tilde{s}$};
        \node [style=none] (1) at (0,1) {};
        \node [style=small box,inner sep=1pt,fill=gray!30] (2) at (0,0.2) {$\tilde{\sigma}$};
        \node [style=right label] (3) at (0,-0.3) {$f\ns/\ns\backsim$};
        \node [style=right label] (4) at (0,0.8) {$\mathcal{S}(f)$};
        \end{pgfonlayer}
        \begin{pgfonlayer}{edgelayer}
            \draw [sWire] (0) to (1);
        \end{pgfonlayer}
	\etp
        \forall \sigma(s)\in\sigma(\Omega_f),
    \eeq
    \sloppy which implies that ${\sigma(\Omega_f)\subseteq\tilde{\sigma}(\Omega_f/\backsim)}$. Therefore, ${\tilde{\sigma}(\Omega_f/\backsim)=\sigma(\Omega_f)}$, and $\tilde{\sigma}$ is a full state map.
  
    Analogously for effects, $\tilde{\tau}$ is linear (again as a consequence of Lemma~\ref{lem:LinearityFromEmpAd}), its domain is $\mathcal{E}_f/\backsim$, and it acts as an injective effect map, as a consequence of Lemma~\ref{lemma:ShadowMapsSplit}. Finally, it is also a full effect map, since for any $\tau(e)\in\tau(\mathcal{E}_f)$ one has 
    \beq
    \btp
	\begin{pgfonlayer}{nodelayer}
		\node [style=Wsquare] (0) at (0,1) {$e$};
        \node [style=none] (1) at (0,-1) {};
        \node [style=small box,inner sep=1pt,fill=gray!30] (2) at (0,-0.2) {$\tau$};
        \node [style=right label] (3) at (0,0.4) {$f$};
        \node [style=right label] (4) at (0,-0.8) {$\mathcal{S}(f)$};
        \end{pgfonlayer}
        \begin{pgfonlayer}{edgelayer}
            \draw [qWire] (0) to (2);
            \draw [sWire] (2) to (1);
        \end{pgfonlayer}
	\etp
 \ =\ 
 \btp
        \begin{pgfonlayer}{nodelayer}
            \node [style=right label] (0) at (0,0.1) {$f\ns/\ns\backsim$};
            \node [style=Wsquare] (1) at (0,1.6) {$e$};
            \node [style=none] (2) at (0,-1.6) {};
            \node [style=small box,inner sep=1pt,fill=gray!30] (3) at (0,-0.6) {$\tilde{\tau}$};
            \node [style=inc,inner sep=1pt,fill=gray!30] (4) at (0,0.6) {$\backsim$};
            \node[style=right label] (5) at (0,1.1) {$f$};
            \node [style=right label] (6) at (0,-1.3) {$\mathcal{S}(f)$};
        \end{pgfonlayer}
        \begin{pgfonlayer}{edgelayer}
            \draw [qWire] (1) to (4);
            \draw [sWire] (4) to (2);
        \end{pgfonlayer}
        \etp
        \ =\ 
                \btp
	\begin{pgfonlayer}{nodelayer}
		\node [style=Wsquare] (0) at (0,1) {$\tilde{e}$};
        \node [style=none] (1) at (0,-1) {};
        \node [style=small box,inner sep=1pt,fill=gray!30] (2) at (0,-0.2) {$\tilde{\tau}$};
        \node [style=right label] (3) at (0,0.4) {$f\ns/\ns\backsim$};
        \node [style=right label] (4) at (0,-0.8) {$\mathcal{S}(f)$};
        \end{pgfonlayer}
        \begin{pgfonlayer}{edgelayer}
            \draw [sWire] (0) to (1);
        \end{pgfonlayer}
	\etp.
    \eeq

    That $(\tilde{\sigma},\tilde{\tau})$ preserve the probabilities is a consequence of the fact that both $\mathcal{S}(f)$ and $f/\backsim$ reproduce the probabilities of the fragment $f$.
    
      \end{proof}
        Having established that any given shadow $\mathcal{S}(f)$ of a fragment $f$ is equivalent to the quotient shadow $f/\backsim$,
we can now prove 
Theorem~\ref{shadow-equiv}.

\shadowequiv*

\begin{proof}
If $\mathcal{S}_1(f)$ and $\mathcal{S}_2(f)$ are two shadows for $f$, Lemma~\ref{lemma:AllShadowsEquivQuotientShadow} tell us that  $\mathcal{S}_1(f)$ is equivalent to $f\ns/\ns\backsim$, as well as $\mathcal{S}_2(f)$ is equivalent to $f\ns/\ns\backsim$. Since equivalence is transitive and symmetric, we get that $\mathcal{S}_2(f)$ is equivalent to $\mathcal{S}_1(f)$.
\end{proof}

The following Lemma applies to any fragment, and will be useful below.
\begin{lemma}(Probability rules are unique)
\label{lemma:ProbRuleUnique}
    Consider a fragment $f:=(\Omega_f,\mathcal{E}_f,p_f)$, and suppose that $(\Omega_f,\mathcal{E}_f,p'_f)$ is such that $p'_f$ also reproduces the probabilities of the fragment. Then, $p'_f=p_f$. 
\end{lemma}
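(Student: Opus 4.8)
The plan is to deduce the equality from bilinearity together with the (tautological) fact that $\Omega_f$ spans $\mathsf{Span}[\Omega_f]$ and $\mathcal{E}_f$ spans $\mathsf{Span}[\mathcal{E}_f]$. First I would make precise what ``reproduces the probabilities of the fragment'' means here: namely, $p'_f(s,e) = p_f(s,e)$ for every $s\in\Omega_f$ and every $e\in\mathcal{E}_f$. Note that this is a priori only an agreement on the sets of states and effects themselves, whereas $p_f$ and $p'_f$ are bilinear maps defined on the (possibly larger) vector spaces $\mathsf{Span}[\Omega_f]$ and $\mathsf{Span}[\mathcal{E}_f]$, so there is a genuine (if easy) claim to prove.

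Next, for arbitrary $x\in\mathsf{Span}[\Omega_f]$ and $y\in\mathsf{Span}[\mathcal{E}_f]$, I would pick any decompositions $x=\sum_i\alpha_i s_i$ and $y=\sum_j\beta_j e_j$ with $s_i\in\Omega_f$, $e_j\in\mathcal{E}_f$ and $\alpha_i,\beta_j\in\mathds{R}$; such decompositions exist precisely because these sets span the respective spaces by definition. Then I would expand, using bilinearity of both maps and invoking the hypothesis in the middle step:
\begin{equation}
p_f(x,y)=\sum_{i,j}\alpha_i\beta_j\,p_f(s_i,e_j)=\sum_{i,j}\alpha_i\beta_j\,p'_f(s_i,e_j)=p'_f(x,y).
\end{equation}
Since $x$ and $y$ were arbitrary, this gives $p_f=p'_f$ as bilinear maps $\mathsf{Span}[\Omega_f]\times\mathsf{Span}[\mathcal{E}_f]\to\mathds{R}$, which is the claim.

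There is essentially no obstacle here; the statement is an immediate consequence of bilinearity, and the proof is a couple of lines. The only point worth flagging is that one should not worry about non-uniqueness of the decompositions of $x$ and $y$: the outer two equalities above hold for whatever decomposition one fixes, and the inner equality is manifestly independent of the decomposition, so the chain closes regardless. Equivalently, one can phrase the whole argument as: the bilinear form $p_f-p'_f$ vanishes on $\Omega_f\times\mathcal{E}_f$, hence vanishes on $\mathsf{Span}[\Omega_f]\times\mathsf{Span}[\mathcal{E}_f]$ by bilinearity, hence is the zero form. I would present it in this second, slightly more compact phrasing in the paper.
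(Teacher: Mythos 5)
Your proof is correct and takes essentially the same approach as the paper: both arguments extend the agreement of $p_f$ and $p'_f$ on $\Omega_f\times\mathcal{E}_f$ to all of $\mathsf{Span}[\Omega_f]\times\mathsf{Span}[\mathcal{E}_f]$ by bilinearity, using that the probability rule is defined exactly on these spans. Your version just spells out the decomposition step that the paper leaves implicit.
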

\begin{proof}
    Suppose $p'_f(s,e)=p_f(s,e)$ for all $s\in\Omega_f$ and $e\in\mathcal{E}_f$. Since probability rules are bilinear, then $p'_f(u,v)=p_f(u,v)$ for all $u\in \mathsf{Span}[\Omega_f]$ and $v\in\mathsf{Span}[\mathcal{E}_f]$. Since probability rules are  defined with domains $\mathsf{Span}[\Omega_f]\times\mathsf{Span}[\mathcal{E}_f]$ this implies $p'_f=p_f$. 
\end{proof}
As an instance of Lemma~\ref{lemma:ProbRuleUnique}, we get the following.

\begin{corollary}[The only freedom in a shadow map is given by $\sigma$ and $\tau$] 
\label{corollary:SigmaTaufixProbRule}
Consider a fragment $f=(\Omega_f,\mathcal{E}_f,p_f)$ and two shadows for this fragment, $\mathcal{S}_1(f)=(\sigma_1(\Omega_f),\tau_1(\mathcal{E}_f),p_{s_1})$ and $\mathcal{S}_2(f)=(\sigma_2(\Omega_f),\tau_2(\mathcal{E}_f),p_{s_2})$. If $\sigma_1=\sigma_2$ and $\tau_1=\tau_2$, then $p_{s_1}=p_{s_2}$. That is, given $\sigma$ and $\tau$, the probability rule $p_\mathcal{S}$ of the shadow $\mathcal{S}(f)$ is fixed.
\end{corollary}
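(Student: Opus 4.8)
The statement to prove is Corollary~\ref{corollary:SigmaTaufixProbRule}: given a fragment $f$ and two shadows $\mathcal{S}_1(f)=(\sigma_1(\Omega_f),\tau_1(\mathcal{E}_f),p_{s_1})$ and $\mathcal{S}_2(f)=(\sigma_2(\Omega_f),\tau_2(\mathcal{E}_f),p_{s_2})$ with $\sigma_1=\sigma_2=:\sigma$ and $\tau_1=\tau_2=:\tau$, then $p_{s_1}=p_{s_2}$.

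The plan is to reduce this directly to Lemma~\ref{lemma:ProbRuleUnique}. First I would observe that since $\sigma_1=\sigma_2$ and $\tau_1=\tau_2$, the two shadows $\mathcal{S}_1(f)$ and $\mathcal{S}_2(f)$ have literally the same state space $\sigma(\Omega_f)$ and the same effect space $\tau(\mathcal{E}_f)$; they are both triples of the form $(\sigma(\Omega_f),\tau(\mathcal{E}_f),p)$ differing only (a priori) in the probability rule. So $\mathcal{S}_1(f)$ plays the role of the fragment ``$f$'' in Lemma~\ref{lemma:ProbRuleUnique}, with probability rule $p_{s_1}$, and $p_{s_2}$ is another candidate probability rule on the \emph{same} state and effect spaces. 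It then remains only to check that $p_{s_2}$ reproduces the probabilities of $\mathcal{S}_1(f)$, i.e. that $p_{s_1}$ and $p_{s_2}$ agree on all pairs $(\sigma(s),\tau(e))$ with $s\in\Omega_f$, $e\in\mathcal{E}_f$ — equivalently, on all state-effect pairs of the common state/effect spaces, since every state of $\mathcal{S}_i(f)$ is of the form $\sigma(s)$ and every effect of the form $\tau(e)$ by definition of the shadow.

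The key step is the agreement on these pairs. By the defining property of a shadow map (Definition~\ref{defn:shadow}, Eq.~\eqref{probruleshadow}), each $p_{s_i}$ satisfies $p_{s_i}(\tau(e),\sigma(s)) = p_f(s,e)$ for all $s\in\Omega_f$, $e\in\mathcal{E}_f$. Hence for all such $s,e$,
\begin{equation}
p_{s_1}(\tau(e),\sigma(s)) = p_f(s,e) = p_{s_2}(\tau(e),\sigma(s)).
\end{equation}
Thus $p_{s_1}$ and $p_{s_2}$ agree on $\sigma(\Omega_f)\times\tau(\mathcal{E}_f)$, which is precisely the condition required to invoke Lemma~\ref{lemma:ProbRuleUnique} (bilinearity of both maps then extends this agreement from $\sigma(\Omega_f)\times\tau(\mathcal{E}_f)$ to $\mathsf{Span}[\sigma(\Omega_f)]\times\mathsf{Span}[\tau(\mathcal{E}_f)]$, which is exactly the common domain of $p_{s_1}$ and $p_{s_2}$). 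Therefore $p_{s_1}=p_{s_2}$, which is the claim.

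I do not expect a serious obstacle here; the corollary is essentially a bookkeeping consequence of Lemma~\ref{lemma:ProbRuleUnique}. The one point to be careful about is making explicit that the state space and effect space of a shadow $\mathcal{S}_i(f)$ really are $\sigma_i(\Omega_f)$ and $\tau_i(\mathcal{E}_f)$ (so that equality of $\sigma$'s and $\tau$'s forces the underlying state/effect spaces to coincide, not merely to be isomorphic), and that every element of these spaces is in the image of $\sigma$ (resp.\ $\tau$) so that the agreement of $p_{s_1}$ and $p_{s_2}$ on image pairs is genuinely agreement on all pairs of the common spaces. Both of these are immediate from Definition~\ref{defn:shadow}.
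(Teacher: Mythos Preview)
Your proposal is correct and matches the paper's approach: the paper simply states that the corollary is ``an instance of Lemma~\ref{lemma:ProbRuleUnique}'', and your argument spells out exactly that reduction, using the shadow property (Eq.~\eqref{probruleshadow}) to verify the hypothesis of the lemma.
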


    \section{Proof of Lemma~\ref{fragsub}}\label{LinearityFromEmpAd}

We now prove Lemma~\ref{fragsub}. 

\fragsub*
 
 \begin{proof}
To see that the shadow embedding $(\sigma,\tau)$ is injective if and only if the fragment is relatively tomographic, notice the following: if the fragment is relatively tomographic, then for $s_1,s_2\in\Omega_f$ one has
\begin{eqnarray}
    \btp\tikzstate{s_1}{f}\etp\ \neq\ \btp\tikzstate{s_2}{f}\etp &\iff& 
    \btp
		\begin{pgfonlayer}{nodelayer}
			\node [style=Wsquareadj] (1) at (-0, -1.6) {$s_1$};
			\node [style=Wsquare] (2) at (-0, 1.6) {$e$};
			\node [style=right label] (4) at (0, 1) {$f$};
			\node [style=empty circle, fill=white] (8) at (0,0) {$p_f$};
			\node [style=right label] (6) at (0, -0.8) {$f$};
		\end{pgfonlayer}
		\begin{pgfonlayer}{edgelayer}
			\draw [qWire] (2) to (1.center);
		\end{pgfonlayer}
		\etp
  \ \neq\ 
  \btp
		\begin{pgfonlayer}{nodelayer}
			\node [style=Wsquareadj] (1) at (-0, -1.6) {$s_2$};
			\node [style=Wsquare] (2) at (-0, 1.6) {$e$};
			\node [style=right label] (4) at (0, 1) {$f$};
			\node [style=empty circle, fill=white] (8) at (0,0) {$p_f$};
			\node [style=right label] (6) at (0, -0.8) {$f$};
		\end{pgfonlayer}
		\begin{pgfonlayer}{edgelayer}
			\draw [qWire] (2) to (1.center);
		\end{pgfonlayer}
		\etp,\quad\exists e\in\mathcal{E}_f\\
  &\iff& \btp
		\begin{pgfonlayer}{nodelayer}
			\node [style=Wsquareadj] (1) at (-0, -2.8) {$s_1$};
			\node [style=Wsquare] (2) at (-0, 2.8) {$e$};
                \node [style=right label] (12) at (0,2.3) {$f$};
                \node [style=right label] (13) at (0,-2) {$f$};
			\node [style=empty circle, fill=gray!30] (8) at (0,0) {$p_\mathcal{S}$};
			\node [style=right label] (8) at (0, -0.7) {$\mathcal{S}(f)$};
			\node [style=right label] (9) at (0, 0.9) {$\mathcal{S}(f)$};
                \node [style=small box,inner sep=1pt,fill=gray!30] (10) at (0,-1.5) {$\sigma$};
                \node [style=small box, inner sep=1pt, fill=gray!30] (11) at (0,1.5) {$\tau$};
		\end{pgfonlayer}
		\begin{pgfonlayer}{edgelayer}
			\draw [qWire] (2) to (11);
                \draw [qWire] (10) to (1);
                \draw [sWire] (10) to (11);
		\end{pgfonlayer}
		\etp
  \ \neq\ 
  \btp
		\begin{pgfonlayer}{nodelayer}
			\node [style=Wsquareadj] (1) at (-0, -2.8) {$s_2$};
			\node [style=Wsquare] (2) at (-0, 2.8) {$e$};
                \node [style=right label] (12) at (0,2.3) {$f$};
                \node [style=right label] (13) at (0,-2) {$f$};
			\node [style=empty circle, fill=gray!30] (8) at (0,0) {$p_\mathcal{S}$};
			\node [style=right label] (8) at (0, -0.7) {$\mathcal{S}(f)$};
			\node [style=right label] (9) at (0, 0.9) {$\mathcal{S}(f)$};
                \node [style=small box,inner sep=1pt,fill=gray!30] (10) at (0,-1.5) {$\sigma$};
                \node [style=small box,inner sep=1pt,fill=gray!30] (11) at (0,1.5) {$\tau$};
		\end{pgfonlayer}
		\begin{pgfonlayer}{edgelayer}
			\draw [qWire] (2) to (11);
                \draw [qWire] (10) to (1);
                \draw [sWire] (10) to (11);
		\end{pgfonlayer}
		\etp,\ \exists e\in\mathcal{E}_f\\
  &\iff& \btp
		\begin{pgfonlayer}{nodelayer}
			\node [style=Wsquareadj] (1) at (-0, -1.2) {$s_1$};
			\node [style=none] (2) at (-0, 1.2) {};
			\node [style=right label] (4) at (0, -0.4) {$f$};
			\node [style=small box, inner sep=1pt, fill=gray!30] (5) at (0,0.2) {$\sigma$};
			\node [style=right label] (6) at (0,1) {$\mathcal{S}(f)$};
		\end{pgfonlayer}
		\begin{pgfonlayer}{edgelayer}
			\draw [sWire] (2.center) to (5);
			\draw [qWire] (5) to (1.center);
		\end{pgfonlayer}
		\etp
  \ \neq\ 
  \btp
		\begin{pgfonlayer}{nodelayer}
			\node [style=Wsquareadj] (1) at (-0, -1.2) {$s_2$};
			\node [style=none] (2) at (-0, 1.2) {};
			\node [style=right label] (4) at (0, -0.4) {$f$};
			\node [style=small box, inner sep=1pt, fill=gray!30] (5) at (0,0.2) {$\sigma$};
			\node [style=right label] (6) at (0,1) {$\mathcal{S}(f)$};
		\end{pgfonlayer}
		\begin{pgfonlayer}{edgelayer}
			\draw [sWire] (2.center) to (5);
			\draw [qWire] (5) to (1.center);
		\end{pgfonlayer}
		\etp.
\end{eqnarray}
Here, the first equivalence follows from the fragment being relatively tomographic, the second from Eq.~\eqref{probruleshadow}, and the third from the fact that the probability rule of a shadow is tomographic. Consequently, the map $\sigma$ is injective.
A similar argument gives $e_1\neq e_2\implies \tau(e_1)\neq\tau(e_2)$ for any $e_1,e_2\in\mathcal{E}_f$ of a tomographic fragment, so $\tau$ is injective. Since $\sigma$ and $\tau$ are injective, the shadow map is injective. 

Conversely, notice that for any $s_1,s_2\in\Omega_f$ such that $\sigma(s_1)=\sigma(s_2)$, injectivity of $\sigma$ gives us that
\begin{eqnarray}
    \btp\tikzstate{s_1}{f}\etp\ =\ \btp\tikzstate{s_2}{f}\etp &\impliedby& \btp
		\begin{pgfonlayer}{nodelayer}
			\node [style=Wsquareadj] (1) at (-0, -1.2) {$s_2$};
			\node [style=none] (2) at (-0, 1.2) {};
			\node [style=right label] (4) at (0, -0.4) {$f$};
			\node [style=small box, inner sep=1pt, fill=gray!30] (5) at (0,0.2) {$\sigma$};
			\node [style=right label] (6) at (0,1) {$\mathcal{S}(f)$};
		\end{pgfonlayer}
		\begin{pgfonlayer}{edgelayer}
			\draw [sWire] (2.center) to (5);
			\draw [qWire] (5) to (1.center);
		\end{pgfonlayer}
		\etp\ =\ \btp
		\begin{pgfonlayer}{nodelayer}
			\node [style=Wsquareadj] (1) at (-0, -1.2) {$s_2$};
			\node [style=none] (2) at (-0, 1.2) {};
			\node [style=right label] (4) at (0, -0.4) {$f$};
			\node [style=small box, inner sep=1pt, fill=gray!30] (5) at (0,0.2) {$\sigma$};
			\node [style=right label] (6) at (0,1) {$\mathcal{S}(f)$};
		\end{pgfonlayer}
		\begin{pgfonlayer}{edgelayer}
			\draw [sWire] (2.center) to (5);
			\draw [qWire] (5) to (1.center);
		\end{pgfonlayer}
		\etp\\
  &\iff& \btp
		\begin{pgfonlayer}{nodelayer}
			\node [style=Wsquareadj] (1) at (-0, -2.8) {$s_1$};
			\node [style=Wsquare] (2) at (-0, 2.8) {$e$};
                \node [style=right label] (12) at (0,2.3) {$f$};
                \node [style=right label] (13) at (0,-2) {$f$};
			\node [style=empty circle, fill=gray!30] (8) at (0,0) {$p_\mathcal{S}$};
			\node [style=right label] (8) at (0, -0.7) {$\mathcal{S}(f)$};
			\node [style=right label] (9) at (0, 0.9) {$\mathcal{S}(f)$};
                \node [style=small box,inner sep=1pt,fill=gray!30] (10) at (0,-1.5) {$\sigma$};
                \node [style=small box, inner sep=1pt, fill=gray!30] (11) at (0,1.5) {$\tau$};
		\end{pgfonlayer}
		\begin{pgfonlayer}{edgelayer}
			\draw [qWire] (2) to (11);
                \draw [qWire] (10) to (1);
                \draw [sWire] (10) to (11);
		\end{pgfonlayer}
		\etp
  \ =\ 
  \btp
		\begin{pgfonlayer}{nodelayer}
			\node [style=Wsquareadj] (1) at (-0, -2.8) {$s_2$};
			\node [style=Wsquare] (2) at (-0, 2.8) {$e$};
                \node [style=right label] (12) at (0,2.3) {$f$};
                \node [style=right label] (13) at (0,-2) {$f$};
			\node [style=empty circle, fill=gray!30] (8) at (0,0) {$p_\mathcal{S}$};
			\node [style=right label] (8) at (0, -0.7) {$\mathcal{S}(f)$};
			\node [style=right label] (9) at (0, 0.9) {$\mathcal{S}(f)$};
                \node [style=small box,inner sep=1pt,fill=gray!30] (10) at (0,-1.5) {$\sigma$};
                \node [style=small box,inner sep=1pt,fill=gray!30] (11) at (0,1.5) {$\tau$};
		\end{pgfonlayer}
		\begin{pgfonlayer}{edgelayer}
			\draw [qWire] (2) to (11);
                \draw [qWire] (10) to (1);
                \draw [sWire] (10) to (11);
		\end{pgfonlayer}
		\etp,\ \forall e\in\mathcal{E}_f\\
  &\iff& \btp
		\begin{pgfonlayer}{nodelayer}
			\node [style=Wsquareadj] (1) at (-0, -1.6) {$s_1$};
			\node [style=Wsquare] (2) at (-0, 1.6) {$e$};
			\node [style=right label] (4) at (0, 1) {$f$};
			\node [style=empty circle, fill=white] (8) at (0,0) {$p_f$};
			\node [style=right label] (6) at (0, -0.8) {$f$};
		\end{pgfonlayer}
		\begin{pgfonlayer}{edgelayer}
			\draw [qWire] (2) to (1.center);
		\end{pgfonlayer}
		\etp
  \ =\ 
  \btp
		\begin{pgfonlayer}{nodelayer}
			\node [style=Wsquareadj] (1) at (-0, -1.6) {$s_2$};
			\node [style=Wsquare] (2) at (-0, 1.6) {$e$};
			\node [style=right label] (4) at (0, 1) {$f$};
			\node [style=empty circle, fill=white] (8) at (0,0) {$p_f$};
			\node [style=right label] (6) at (0, -0.8) {$f$};
		\end{pgfonlayer}
		\begin{pgfonlayer}{edgelayer}
			\draw [qWire] (2) to (1.center);
		\end{pgfonlayer}
		\etp,\quad\forall e\in\mathcal{E}_f,
\end{eqnarray}
with a similar argument for effects $e_1,e_2\in\mathcal{E}_f$ such that $\tau(e_1)=\tau(e_2)$. This is the definition of a relatively tomographic fragment.
\end{proof}

\section{Proof of Theorem~\ref{thm:Holevo}}\label{sec:Holevogeneral}

We now prove Theorem~\ref{thm:Holevo}.

\holevo*

 \begin{proof}
Consider an arbitrary polytopic GPT system $G$---that is, any GPT system in which the state space $\Omega_G$ is a polytope and the effect space $\mathcal{E}_G$ is a polytope.\footnote{In fact, the proof does not rely on the effect space being a polytope.}  As $G$ is a proper GPT system rather than a generic fragment, its states and effects are tomographic for each other. For simplicity, here we will work with the representation in which states span a vector space $V$,  effects live in the dual space $V^*$, and the probability rule is simply the evaluation map. The wire labels in the diagrams below describe these vector spaces instead of the GPT systems and fragments, since we consider maps that a priori might not be state/effect maps.

Because $\Omega_G$ is a polytope, it has a finite set of pure (extremal) states, $\{s_\lambda\}_{\lambda=1}^n\in\Omega_G$, which we order in some arbitrary way as $s_1,\ldots,s_n$. Denote the pure states of a classical (simplicial) system as $\{\lambda\}_{\lambda=1}^n$, a unit basis of $\mathds{R}^n$, and order them as well.  Then, consider the linear map $L:\mathds{R}^n\to V$ defined by the action
\beq \label{Lmapdefn}
\btp
    \begin{pgfonlayer}{nodelayer}
        \node [style=none] (0) at (0,1.2) {};
        \node [style=inc,fill=gray!30] (1) at (0,0.3) {$L$};
        \node [style=Wsquareadj] (2) at (0,-1.2) {$\lambda$};
        \node [style=right label] (3) at (0,1.2) {$V$};
        \node [style=right label] (4) at (0,-0.6) {$\mathds{R}^n$};
    \end{pgfonlayer}
    \begin{pgfonlayer}{edgelayer}
        \draw [qWire] (0.center) to (1);
        \draw (1) to (2);
    \end{pgfonlayer}
\etp
\quad:=\quad
\btp
        \begin{pgfonlayer}{nodelayer}
        \node [style=Wsquareadj] (0) at (0,-0.6) {$s_\lambda$};
        \node [style=none] (1) at (0,0.6) {};
        \node [style=right label] (2) at (0,0.4) {$V$};
    \end{pgfonlayer}
    \begin{pgfonlayer}{edgelayer}
        \draw [qWire] (0) to (1.center);
    \end{pgfonlayer}
\etp,
\quad \forall s_\lambda\in\Omega_G.
\eeq

Since the classical states are linearly independent---as they form a basis for $\mathds{R}^n$---this linear map always exists.
This transformation $L$ takes the classical pure states to the pure states in $\Omega_G$, and so can be thought of as a classically controlled preparation procedure with a setting variable that ranges over all of the pure states in the GPT system $G$. (Thus, it can be thought of as a physical transformation.)

One can represent $L$ in matrix form in an intuitive way using the fact that $\{\lambda\}_{\lambda}$ is a unit basis of $\mathds{R}^n$. Then, if $\mathbf{s}_{\lambda}$ is the GPT vector representing state $s_\lambda\in\Omega_G$ in some basis, one has
\begin{align}
\label{eq:LMatrix}
    L= \left[
  \begin{array}{cccc}
    \vertbar & \vertbar &        & \vertbar \\
  {\bf  s}_{1}    & {\bf s}_{2}    & \ldots & {\bf s}_{n}    \\
    \vertbar & \vertbar &        & \vertbar 
  \end{array}
\right].
\end{align}

Consider now the contravariant\footnote{Diagrammatically, this just means one reads the diagram top to bottom instead of bottom to top.} action of $L$ on the effects in the original GPT system:
\beq \label{xidefn}
    \btp
        \tikzeffect{e}{V}
    \etp
    \quad\mapsto\quad
    \btp
        \begin{pgfonlayer}{nodelayer}
            \node [style=none] (0) at (0,-0.8) {};
            \node [style=Wsquare] (1) at (0,0.8) {$\xi_e$};
            \node [style=right label] (2) at (0,0) {$\mathds{R}^n$};
        \end{pgfonlayer}
        \begin{pgfonlayer}{edgelayer}
            \draw (0.center) to (1);
        \end{pgfonlayer}
    \etp
    \quad:=\quad
    \btp
    \begin{pgfonlayer}{nodelayer}
        \node [style=Wsquare] (0) at (0,1.2) {$e$};
        \node [style=inc,fill=gray!30] (1) at (0,-0.1) {$L$};
        \node [style=none] (2) at (0,-1.2) {};
        \node [style=right label] (3) at (0,0.7) {$V$};
        \node [style=right label] (4) at (0,-0.9) {$\mathds{R}^n$};
    \end{pgfonlayer}
    \begin{pgfonlayer}{edgelayer}
        \draw [qWire] (0.center) to (1);
        \draw (1) to (2);
    \end{pgfonlayer}
    \etp.
\eeq
As $L$ is a physical transformation, it is both a state map and an effect map, and so the $\xi_e$ are necessarily valid effects on a classical system.

We can use this map to define a fragment of the simplicial GPT system  $\Lambda$ of dimension $n$, which we call the Holevo fragment for $G$ and which we denote by $\mathscr{h}_G$. It contains all the states in the simplicial theory's state space, namely ${\omega\in\Delta}$, but only contains the effects of the form $\xi_e=e\circ L$ for $e\in\mathcal{E}_G$. 
Diagrammatically \footnote{ The dashed box here is just the diagrammatic representation of the evaluation map (the probability rule).}:
\beq
    \mathscr{h}_G:=\left(\Delta,\quad
    \left\{\btp
        \begin{pgfonlayer}{nodelayer}
            \node [style=none] (0) at (0,-0.8) {};
            \node [style=Wsquare] (1) at (0,0.4) {$\xi_e$};
            \node [style=right label] (2) at (0,-.4) {$\mathds{R}^n$};
        \end{pgfonlayer}
        \begin{pgfonlayer}{edgelayer}
            \draw (0.center) to (1);
        \end{pgfonlayer}
    \etp\right\}_{e\in\mathcal{E}_G},
   \quad
    \btp
		\begin{pgfonlayer}{nodelayer}
			\node [style=none] (1) at (-0, -0.6) {};
			\node [style=none] (2) at (-0, 0.6) {};
			\node [style=right label] (4) at (0, -0.6) {$\mathds{R}^n$};
			\node [style=empty diagram small] (5) at (0,0) {};
			\node [style=right label] (6) at (0, 0.6) {$\mathds{R}^n$};
		\end{pgfonlayer}
		\begin{pgfonlayer}{edgelayer}
			\draw (2) to (1.center);
		\end{pgfonlayer}
		\etp\right).
    \eeq

Next we will construct the GPT shadow $\mathcal{S}(\mathscr{h}_G)$ of $\mathscr{h}_G$ and show that it is equal to $G$.  

For that, consider a right inverse of $L$, namely $L_r^{-1}:V\to \mathds{R}^n$ satisfying
\beq\label{eq:defn-inverse-L}
 \btp
    \begin{pgfonlayer}{nodelayer}
        \node [style=none] (0) at (0,2) {};
        \node [style=none] (1) at (0,-2) {};
        \node [style=inc,fill=gray!30] (2) at (0,0.8) {$L$};
        \node [style=proj,fill=gray!30] (3) at (0,-0.8) {$L_r^{-1}$};
        \node [style=right label] (4) at (0,0) {$\mathds{R}^n$};
        \node [style=right label] (5) at (0,1.7) {$V$};
        \node [style=right label] (6) at (0,-1.7) {$V$};
    \end{pgfonlayer}
    \begin{pgfonlayer}{edgelayer}
        \draw [qWire] (0.center) to (2);
        \draw [qWire] (1.center) to (3);
        \draw (2) to (3);
    \end{pgfonlayer}
 \etp
 \quad=\quad
 \btp
    \begin{pgfonlayer}{nodelayer}
        \node [style=none] (0) at (0,1.3) {};
        \node [style=none] (1) at (0,-1.3) {};
        \node [style=right label] (2) at (0,1) {$V$};
    \end{pgfonlayer}
    \begin{pgfonlayer}{edgelayer}
        \draw [qWire] (0.center) to (1.center);
    \end{pgfonlayer}
 \etp.
\eeq
This map always exists, due to the fact that $L$ is surjective (i.e., any element $v\in V$ can be written as $L(w)$ for some $w\in\mathds{R}^n$) and that all surjective functions have right inverses. To see that $L$ is surjective, note that since $G$ is a proper GPT system, $\mathsf{Span}(\Omega_G)=V$, so any vector $v\in V$ can be written as \begin{align}
v=\sum_{\lambda}r_{\lambda}s_{\lambda}=\sum_{\lambda}r_{\lambda}L(\lambda) = L\left(\sum_{\lambda}r_{\lambda}\lambda\right) = L(w),
\end{align}
where the last equality follows from the fact that the deterministic distributions $\{\lambda\}_{\lambda}$ live in $\mathds{R}^n$. 

We now claim that we can define a shadow $\mathcal{S}(\mathscr{h}_G)$ of $\mathscr{h}_G$ by taking $\tau:= L_r^{-1}$, taking $\sigma:= L$, and taking $p_\mathcal{S}$ to be the evaluation map in $V$. With these definitions, the shadow is
\beq\label{eq:defn-shadow-L}
\mathcal{H}:=\left(
    \left\{
\btp
    \begin{pgfonlayer}{nodelayer}
        \node [style=none] (0) at (0,1.2) {};
        \node [style=inc,fill=gray!30] (1) at (0,0.3) {$L$};
        \node [style=Wsquareadj] (2) at (0,-1.2) {$\omega$};
        \node [style=right label] (3) at (0,1.2) {$V$};
        \node [style=right label] (4) at (0,-0.6) {$\mathds{R}^n$};
    \end{pgfonlayer}
    \begin{pgfonlayer}{edgelayer}
        \draw [qWire] (0.center) to (1);
        \draw (1) to (2);
    \end{pgfonlayer}
\etp
    \right\}_{\omega\in\Delta},
    \left\{
        \btp
    \begin{pgfonlayer}{nodelayer}
        \node [style=Wsquare] (0) at (0,1.2) {$\xi_e$};
        \node [style=proj,fill=gray!30] (1) at (0,-0.2) {$L_r^{-1}$};
        \node [style=none] (2) at (0,-1.2) {};
        \node [style=right label] (3) at (0,0.55) {$\mathds{R}^n$};
        \node [style=right label] (4) at (0,-1) {$V$};
    \end{pgfonlayer}
    \begin{pgfonlayer}{edgelayer}
        \draw (0.center) to (1);
        \draw [qWire] (1) to (2);
    \end{pgfonlayer}
    \etp
    \right\}_{e\in\mathcal{E}_G},
    \quad
    \btp
		\begin{pgfonlayer}{nodelayer}
			\node [style=none] (1) at (-0, -1.2) {};
			\node [style=none] (2) at (-0, 1.2) {};
			\node [style=right label] (4) at (0, -1) {$V$};
			\node [style=empty diagram small] (5) at (0,0) {};
			\node [style=right label] (6) at (0, 1) {$V$};
		\end{pgfonlayer}
		\begin{pgfonlayer}{edgelayer}
			\draw [qWire] (2) to (1.center);
		\end{pgfonlayer}
		\etp
\right).
\eeq

It is not hard to see that $\mathcal{H} = G$. That the effect space of Eq.~\eqref{eq:defn-shadow-L} is just $\mathcal{E}_G$ follows from the definition of $\xi_e$ (Eq.~\eqref{xidefn}). That the state space of Eq.~\eqref{eq:defn-shadow-L} is just $\Omega_G$ follows from the fact that the extremal elements of the former are of the form $L \circ \lambda$, and hence are extremal elements of $\Omega_G$ (by Eq.~\eqref{Lmapdefn}). 

It follows that $\mathcal{H}$ is a valid GPT system. Consequently, to prove that it is indeed a shadow of $\mathscr{h}_G$, all that we need to demonstrate is that $L$ and $L_r^{-1}$ are full state and effect maps, respectively. Clearly, $L$ is a full state map, since by ranging over the extremal states in $\mathds{R}^n$, one can reach every extremal state in $\Omega_G$. Moreover, $L_r^{-1}$ is a full effect map, since varying over $\xi_e$ is the same as varying $e\circ L$ over $e$, and $L_r^{-1}$ takes this to $e$.

Thus we see that $\mathcal{H}$ is indeed the shadow of $\mathscr{h}_G$, and that it is indeed equal to $G$. That is, $\mathcal{H}=\mathcal{S}(\mathscr{h}_G)=G$. 

Since this construction is independent of any particularities of the polytopic GPT system $G$, this establishes Theorem~\ref{thm:Holevo}.
\end{proof}

In the rest of the section, we essentially provide an alternative proof to the above, one which makes use of the quotienting definition of a shadow, and so is more explicit and may be more insightful to some readers. 

First we prove that $L_r^{-1}$ identifies effects that are not distinguished by the states in $\Omega_{\mathscr{h}}$, and that $L$ identifies states that are not distinguished by the effects in $\mathcal{E}_{\mathscr{h}}$. For the case of $L^{-1}_r$, it should identify effects $\xi\in\mathcal{E}_{\mathscr{h}}$ that are not distinguished by the states in the fragment; since the fragment contains all states in the simplex $\Delta$, $L^{-1}_r$ should not identify any two different effects. Indeed, consider two Holevo effects $\xi_e$ and $\xi'_{e'}$ in $\mathcal{E}^{\mathscr{h}}$. Then:

\begin{eqnarray}
    \btp
    \begin{pgfonlayer}{nodelayer}
        \node [style=Wsquare] (0) at (0,1.2) {$\xi_e$};
        \node [style=proj,fill=gray!30] (1) at (0,-0.2) {$L_r^{-1}$};
        \node [style=none] (2) at (0,-1.2) {};
        \node [style=right label] (3) at (0,0.55) {$\mathds{R}^n$};
        \node [style=right label] (4) at (0,-1) {$V$};
    \end{pgfonlayer}
    \begin{pgfonlayer}{edgelayer}
        \draw (0.center) to (1);
        \draw [qWire] (1) to (2);
    \end{pgfonlayer}
    \etp
    \ =\ 
    \btp
    \begin{pgfonlayer}{nodelayer}
        \node [style=Wsquare] (0) at (0,1.4) {$\xi'_{e'}$};
        \node [style=proj,fill=gray!30] (1) at (0,-0.2) {$L_r^{-1}$};
        \node [style=none] (2) at (0,-1.2) {};
        \node [style=right label] (3) at (0,0.55) {$\mathds{R}^n$};
        \node [style=right label] (4) at (0,-1) {$V$};
    \end{pgfonlayer}
    \begin{pgfonlayer}{edgelayer}
        \draw (0.center) to (1);
        \draw [qWire] (1) to (2);
    \end{pgfonlayer}
    \etp
    &\iff& 
    \btp
    \begin{pgfonlayer}{nodelayer}
        \node [style=Wsquare] (0) at (0,2) {$e$};
        \node [style=none] (1) at (0,-2) {};
        \node [style=inc,fill=gray!30] (2) at (0,0.8) {$L$};
        \node [style=proj,fill=gray!30] (3) at (0,-0.8) {$L_r^{-1}$};
        \node [style=right label] (4) at (0,0) {$\mathds{R}^n$};
        \node [style=right label] (5) at (0,1.5) {$V$};
        \node [style=right label] (6) at (0,-1.7) {$V$};
    \end{pgfonlayer}
    \begin{pgfonlayer}{edgelayer}
        \draw [qWire] (0.center) to (2);
        \draw [qWire] (1.center) to (3);
        \draw (2) to (3);
    \end{pgfonlayer}
 \etp
 \ =\ 
 \btp
    \begin{pgfonlayer}{nodelayer}
        \node [style=Wsquare] (0) at (0,2) {$e'$};
        \node [style=none] (1) at (0,-2) {};
        \node [style=inc,fill=gray!30] (2) at (0,0.8) {$L$};
        \node [style=proj,fill=gray!30] (3) at (0,-0.8) {$L_r^{-1}$};
        \node [style=right label] (4) at (0,0) {$\mathds{R}^n$};
        \node [style=right label] (5) at (0,1.5) {$V$};
        \node [style=right label] (6) at (0,-1.7) {$V$};
    \end{pgfonlayer}
    \begin{pgfonlayer}{edgelayer}
        \draw [qWire] (0.center) to (2);
        \draw [qWire] (1.center) to (3);
        \draw (2) to (3);
    \end{pgfonlayer}
 \etp\\
 &\iff&
 \btp
 \tikzeffect{e}{V}
 \etp\ =\ 
 \btp\tikzeffect{e'}{V}\etp.    
\end{eqnarray}

That is, two effects in the Holevo fragment are mapped to the same effect in the purported shadow (i.e., mapped by $L_r^{-1}$ to the same effect) if and only if they came from the same effect $e=e'$ in the original theory. But for $e=e'$ one has $\xi_e = e\circ L = e'\circ L= \xi'_{e'}$, and so
\beq
    \btp
    \begin{pgfonlayer}{nodelayer}
        \node [style=Wsquare] (0) at (0,1.2) {$\xi_e$};
        \node [style=proj,fill=gray!30] (1) at (0,-0.2) {$L_r^{-1}$};
        \node [style=none] (2) at (0,-1.2) {};
        \node [style=right label] (3) at (0,0.55) {$\mathds{R}^n$};
        \node [style=right label] (4) at (0,-1) {$V$};
    \end{pgfonlayer}
    \begin{pgfonlayer}{edgelayer}
        \draw (0.center) to (1);
        \draw [qWire] (1) to (2);
    \end{pgfonlayer}
    \etp
    \ =\ 
    \btp
    \begin{pgfonlayer}{nodelayer}
        \node [style=Wsquare] (0) at (0,1.4) {$\xi'_{e'}$};
        \node [style=proj,fill=gray!30] (1) at (0,-0.2) {$L_r^{-1}$};
        \node [style=none] (2) at (0,-1.2) {};
        \node [style=right label] (3) at (0,0.55) {$\mathds{R}^n$};
        \node [style=right label] (4) at (0,-1) {$V$};
    \end{pgfonlayer}
    \begin{pgfonlayer}{edgelayer}
        \draw (0.center) to (1);
        \draw [qWire] (1) to (2);
    \end{pgfonlayer}
    \etp
    \ \iff\ 
    \btp
        \begin{pgfonlayer}{nodelayer}
            \node [style=none] (0) at (0,-0.8) {};
            \node [style=Wsquare] (1) at (0,0.8) {$\xi_e$};
            \node [style=right label] (2) at (0,0) {$\mathds{R}^n$};
        \end{pgfonlayer}
        \begin{pgfonlayer}{edgelayer}
            \draw (0.center) to (1);
        \end{pgfonlayer}
    \etp\ =\ 
    \btp
        \begin{pgfonlayer}{nodelayer}
            \node [style=none] (0) at (0,-0.8) {};
            \node [style=Wsquare] (1) at (0,1) {$\xi'_{e'}$};
            \node [style=right label] (2) at (0,0) {$\mathds{R}^n$};
        \end{pgfonlayer}
        \begin{pgfonlayer}{edgelayer}
            \draw (0.center) to (1);
        \end{pgfonlayer}
    \etp
\eeq
so no distinct effects in $\mathscr{h}(G)$ are mapped to the same effect in the shadow $\mathcal{H}$ by the action of $L^{-1}_r$. Now, for the case of states, $L$ must identify those states in $\Delta$ that are not distinguished by the effects in $\mathscr{h}(G)$. Let us show this is indeed the case. First, we have that for all $\xi\in\mathcal{E}_{\mathscr{h}}$,
\beq
    \btp
        \begin{pgfonlayer}{nodelayer}
            \node [style=Wsquare] (0) at (0,0.7) {$\xi_e$};
            \node [style=Wsquareadj] (1) at (0,-0.7) {$\omega$};
            \node [style=right label] (2) at (0,0) {$\mathbbm{R}^n$};
        \end{pgfonlayer}
        \begin{pgfonlayer}{edgelayer}
            \draw (0) to (1);
        \end{pgfonlayer}
    \etp
\ = \ 
    \btp
        \begin{pgfonlayer}{nodelayer}
            \node [style=Wsquare] (0) at (0,0.7) {$\xi_{e}$};
            \node [style=Wsquareadj] (1) at (0,-0.7) {$\gamma$};
            \node [style=right label] (2) at (0,0) {$\mathbbm{R}^n$};
        \end{pgfonlayer}
        \begin{pgfonlayer}{edgelayer}
            \draw (0) to (1);
        \end{pgfonlayer}
    \etp
    \ \implies\ 
    \btp
        \begin{pgfonlayer}{nodelayer}
            \node [style=Wsquare] (0) at (0,1.4) {$e$};
            \node [style=Wsquareadj] (1) at (0,-1.4) {$\omega$};
            \node [style=inc, fill=gray!30] (2) at (0,0) {$L$};
            \node [style=right label] (3) at (0,0.7) {$V$};
            \node [style=right label] (4) at (0,-0.7) {$\mathbbm{R}^n$};
        \end{pgfonlayer}
        \begin{pgfonlayer}{edgelayer}
            \draw [qWire] (0) to (2);
            \draw (2) to (1);
        \end{pgfonlayer}
    \etp
    \ =\ 
    \btp
        \begin{pgfonlayer}{nodelayer}
            \node [style=Wsquare] (0) at (0,1.4) {$e$};
            \node [style=Wsquareadj] (1) at (0,-1.4) {$\gamma$};
            \node [style=inc, fill=gray!30] (2) at (0,0) {$L$};
            \node [style=right label] (3) at (0,0.7) {$V$};
            \node [style=right label] (4) at (0,-0.7) {$\mathbbm{R}^n$};
        \end{pgfonlayer}
        \begin{pgfonlayer}{edgelayer}
            \draw [qWire] (0) to (2);
            \draw (2) to (1);
        \end{pgfonlayer}
    \etp,\quad\forall e\in\mathcal{E}_G.
\eeq
\sloppy Above, the condition $\forall e\in\mathcal{E}_{G}$ follows from the fact that ${\{e\circ L\}_{e\in\mathcal{E}_G}=\mathcal{E}_{\mathscr{h}}}$, given the definition of the Holevo fragment $\mathscr{h}(G)$. Now, since $G$ is tomographically complete (as it is a proper GPT system), we get 
\beq
\btp
        \begin{pgfonlayer}{nodelayer}
            \node [style=Wsquare] (0) at (0,1.4) {$e$};
            \node [style=Wsquareadj] (1) at (0,-1.4) {$\omega$};
            \node [style=inc, fill=gray!30] (2) at (0,0) {$L$};
            \node [style=right label] (3) at (0,0.7) {$V$};
            \node [style=right label] (4) at (0,-0.7) {$\mathbbm{R}^n$};
        \end{pgfonlayer}
        \begin{pgfonlayer}{edgelayer}
            \draw [qWire] (0) to (2);
            \draw (2) to (1);
        \end{pgfonlayer}
    \etp
    \ =\ 
    \btp
        \begin{pgfonlayer}{nodelayer}
            \node [style=Wsquare] (0) at (0,1.4) {$e$};
            \node [style=Wsquareadj] (1) at (0,-1.4) {$\gamma$};
            \node [style=inc, fill=gray!30] (2) at (0,0) {$L$};
            \node [style=right label] (3) at (0,0.7) {$V$};
            \node [style=right label] (4) at (0,-0.7) {$\mathbbm{R}^n$};
        \end{pgfonlayer}
        \begin{pgfonlayer}{edgelayer}
            \draw [qWire] (0) to (2);
            \draw (2) to (1);
        \end{pgfonlayer}
    \etp,\quad\forall e\in\mathcal{E}_G\implies\ 
    \btp
    \begin{pgfonlayer}{nodelayer}
        \node [style=none] (0) at (0,1.2) {};
        \node [style=inc,fill=gray!30] (1) at (0,0.3) {$L$};
        \node [style=Wsquareadj] (2) at (0,-1.2) {$\omega$};
        \node [style=right label] (3) at (0,1.2) {$V$};
        \node [style=right label] (4) at (0,-0.6) {$\mathds{R}^n$};
    \end{pgfonlayer}
    \begin{pgfonlayer}{edgelayer}
        \draw [qWire] (0.center) to (1);
        \draw (1) to (2);
    \end{pgfonlayer}
\etp
\ =\ 
\btp
    \begin{pgfonlayer}{nodelayer}
        \node [style=none] (0) at (0,1.2) {};
        \node [style=inc,fill=gray!30] (1) at (0,0.3) {$L$};
        \node [style=Wsquareadj] (2) at (0,-1.2) {$\gamma$};
        \node [style=right label] (3) at (0,1.2) {$V$};
        \node [style=right label] (4) at (0,-0.6) {$\mathds{R}^n$};
    \end{pgfonlayer}
    \begin{pgfonlayer}{edgelayer}
        \draw [qWire] (0.center) to (1);
        \draw (1) to (2);
    \end{pgfonlayer}
\etp
\eeq
showing that all states that are not distinguished by the effects in $\mathscr{h}(G)$ get identified by $L$. This implies that $\mathsf{Ker}_{L}[p_{\mathscr{h}}]\subseteq\mathsf{Ker}[L]$, where $p_{\mathscr{h}}$ is the probability rule of the Holevo fragment (the evaluation map). In other words, all states that are not distinguished by the effects in $\mathscr{h}$ get mapped to the same state by the map $L$. 

Finally, note that the probabilities predicted by $\mathcal{H}$ are the same as those predicted by $\mathscr{h}(G)$, since
\begin{eqnarray}
    \btp
        \begin{pgfonlayer}{nodelayer}
            \node [style=Wsquare] (0) at (0,2) {$\xi_e$};
            \node [style=Wsquareadj] (1) at (0,-2) {$\omega$};
            \node [style=proj,fill=gray!30] (2) at (0,0.6) {$L_r^{-1}$};
            \node [style=inc,fill=gray!30] (3) at (0,-0.8) {$L$};
            \node [style=right label] (4) at (0,-1.4) {$\mathbbm{R}^n$};
            \node [style=right label] (5) at (0, 1.3) {$\mathbbm{R}^n$};
            \node [style=right label] (6) at (0,-0.1) {$V$};
        \end{pgfonlayer}
        \begin{pgfonlayer}{edgelayer}
            \draw (0) to (2);
            \draw (1) to (3);
            \draw [qWire] (2) to (3);
        \end{pgfonlayer}
    \etp
    &=&
    \btp
        \begin{pgfonlayer}{nodelayer}
            \node [style=Wsquare] (0) at (0,2.5) {$e$};
            \node [style=Wsquareadj] (1) at (0,-2.5) {$\omega$};
            \node [style=inc,fill=gray!30] (2) at (0,1.3) {$L$};
            \node [style=proj,fill=gray!30] (3) at (0,0) {$L_r^{-1}$};
            \node [style=inc,fill=gray!30] (4) at (0,-1.3) {$L$};
            \node [style=right label] (5) at (0,0.7) {$\mathbbm{R}^n$};
            \node [style=right label] (6) at (0,-0.7) {$V$};
            \node [style=right label] (7) at (0,-1.9) {$\mathbbm{R}^n$};
            \node [style=right label] (8) at (0,2) {$V$};
        \end{pgfonlayer}
        \begin{pgfonlayer}{edgelayer}
            \draw [qWire] (0) to (2);
            \draw (2) to (3);
            \draw [qWire] (3) to (4);
            \draw (4) to (1);
        \end{pgfonlayer}
    \etp\\
    &=&
    \btp
        \begin{pgfonlayer}{nodelayer}
            \node [style=Wsquare] (0) at (0,1.4) {$e$};
            \node [style=Wsquareadj] (1) at (0,-1.4) {$\omega$};
            \node [style=inc, fill=gray!30] (2) at (0,0) {$L$};
            \node [style=right label] (3) at (0,0.7) {$V$};
            \node [style=right label] (4) at (0,-0.7) {$\mathbbm{R}^n$};
        \end{pgfonlayer}
        \begin{pgfonlayer}{edgelayer}
            \draw [qWire] (0) to (2);
            \draw (2) to (1);
        \end{pgfonlayer}
    \etp\\
    &=&
    \btp
        \begin{pgfonlayer}{nodelayer}
            \node [style=Wsquare] (0) at (0,0.7) {$\xi_e$};
            \node [style=Wsquareadj] (1) at (0,-0.7) {$\omega$};
            \node [style=right label] (2) at (0,0) {$\mathbbm{R}^n$};
        \end{pgfonlayer}
        \begin{pgfonlayer}{edgelayer}
            \draw (0) to (1);
        \end{pgfonlayer}
    \etp.
\end{eqnarray}
In the first and last equality, we used that $\xi_e=e\circ L$ and in the second equality we used that $L^{-1}_r\circ L=\mathds{1}_V$. 

 This also implies that any two states that are distinguished by the effects in $\mathscr{h}$ should be distinct in $\Omega_{\mathcal{H}}$. This means that $\mathsf{Ker}[L]\subseteq\mathsf{Ker}_{L}[p_{\mathscr{h}}]$ which then implies $\mathsf{Ker}[L]=\mathsf{Ker}_{ L}[p_{\mathscr{h}}]$. By the universal property of quotienting, $L=\tilde{L}\circ\backsim$, where $\backsim$ is the quotienting relation of Def.~\ref{defn:method4}, and $\tilde{L}$ is some injective map. What is important here is that if $L(\omega)\neq L(\gamma)$, then there exists some effect $f\in\mathcal{E}_{\mathcal{H}}$ that distinguishes them. Therefore, the left kernel of $p_{\mathcal{H}}$ is trivial. Given that the right kernel is also trivial, the evaluation map $p_{\mathcal{H}}$ is tomographic. Therefore, $\mathcal{H}$ is indeed a shadow of $\mathscr{h}(G)$.

We now prove that $\mathcal{H}=G$. We start by showing that the state space $\Omega_{\mathcal{H}}$ of the shadow is equal to $\Omega_G$ and that the effect space  $\mathcal{E}_{\mathcal{H}}$ of the shadow is equal to $\mathcal{E}_G$. 
First, note that all states $\nu\in\Omega_H$ can be written as $L(\omega)$ where $\omega\in\Delta$. Since a classical state $\omega\in\Delta$ is always a convex combination of the deterministic ones, $\{\lambda\}_{\lambda}\in\Delta$, we get
\beq
\btp
        \begin{pgfonlayer}{nodelayer}
        \node [style=Wsquareadj] (0) at (0,-0.6) {$\nu$};
        \node [style=none] (1) at (0,0.6) {};
        \node [style=right label] (2) at (0,0.4) {$V$};
    \end{pgfonlayer}
    \begin{pgfonlayer}{edgelayer}
        \draw [qWire] (0) to (1.center);
    \end{pgfonlayer}
\etp\ =\ 
\btp
    \begin{pgfonlayer}{nodelayer}
        \node [style=none] (0) at (0,1.2) {};
        \node [style=inc,fill=gray!30] (1) at (0,0.3) {$L$};
        \node [style=Wsquareadj] (2) at (0,-1.2) {$\omega$};
        \node [style=right label] (3) at (0,1.2) {$V$};
        \node [style=right label] (4) at (0,-0.6) {$\mathds{R}^n$};
    \end{pgfonlayer}
    \begin{pgfonlayer}{edgelayer}
        \draw [qWire] (0.center) to (1);
        \draw (1) to (2);
    \end{pgfonlayer}
\etp
\ =\ 
\sum_\lambda\alpha_\lambda 
\btp
    \begin{pgfonlayer}{nodelayer}
        \node [style=none] (0) at (0,1.2) {};
        \node [style=inc,fill=gray!30] (1) at (0,0.3) {$L$};
        \node [style=Wsquareadj] (2) at (0,-1.2) {$\lambda$};
        \node [style=right label] (3) at (0,1.2) {$V$};
        \node [style=right label] (4) at (0,-0.6) {$\mathds{R}^n$};
    \end{pgfonlayer}
    \begin{pgfonlayer}{edgelayer}
        \draw [qWire] (0.center) to (1);
        \draw (1) to (2);
    \end{pgfonlayer}
\etp
\ =\ 
\sum_\lambda\alpha_\lambda\ 
\btp
        \begin{pgfonlayer}{nodelayer}
        \node [style=Wsquareadj] (0) at (0,-0.6) {$s_\lambda$};
        \node [style=none] (1) at (0,0.6) {};
        \node [style=right label] (2) at (0,0.4) {$V$};
    \end{pgfonlayer}
    \begin{pgfonlayer}{edgelayer}
        \draw [qWire] (0) to (1.center);
    \end{pgfonlayer}
\etp
\in\Omega_G,
\eeq
\noindent where $\{\alpha_{\lambda}\}$ are coefficients of the convex combination. This tell us that every $\nu\in\Omega_{\mathcal{H}}$ belongs to $\Omega_G$. Now, any $\omega\in\Omega_G$ is a convex combination of the pure states $\{s_{\lambda}\}_{\lambda}$, so

\beq
\btp
        \begin{pgfonlayer}{nodelayer}
        \node [style=Wsquareadj] (0) at (0,-0.6) {$\omega$};
        \node [style=none] (1) at (0,0.6) {};
        \node [style=right label] (2) at (0,0.4) {$V$};
    \end{pgfonlayer}
    \begin{pgfonlayer}{edgelayer}
        \draw [qWire] (0) to (1.center);
    \end{pgfonlayer}
\etp\ =\ 
\sum_\lambda\beta\lambda\ 
\btp
        \begin{pgfonlayer}{nodelayer}
        \node [style=Wsquareadj] (0) at (0,-0.6) {$s_\lambda$};
        \node [style=none] (1) at (0,0.6) {};
        \node [style=right label] (2) at (0,0.4) {$V$};
    \end{pgfonlayer}
    \begin{pgfonlayer}{edgelayer}
        \draw [qWire] (0) to (1.center);
    \end{pgfonlayer}
\etp\ =\ 
\sum_\lambda\beta_\lambda 
\btp
    \begin{pgfonlayer}{nodelayer}
        \node [style=none] (0) at (0,1.2) {};
        \node [style=inc,fill=gray!30] (1) at (0,0.3) {$L$};
        \node [style=Wsquareadj] (2) at (0,-1.2) {$\lambda$};
        \node [style=right label] (3) at (0,1.2) {$V$};
        \node [style=right label] (4) at (0,-0.6) {$\mathds{R}^n$};
    \end{pgfonlayer}
    \begin{pgfonlayer}{edgelayer}
        \draw [qWire] (0.center) to (1);
        \draw (1) to (2);
    \end{pgfonlayer}
\etp\ =\ 
\btp
    \begin{pgfonlayer}{nodelayer}
        \node [style=none] (0) at (0,1.2) {};
        \node [style=inc,fill=gray!30] (1) at (0,0.3) {$L$};
        \node [style=Wsquareadj] (2) at (0,-1.2) {$\gamma$};
        \node [style=right label] (3) at (0,1.2) {$V$};
        \node [style=right label] (4) at (0,-0.6) {$\mathds{R}^n$};
    \end{pgfonlayer}
    \begin{pgfonlayer}{edgelayer}
        \draw [qWire] (0.center) to (1);
        \draw (1) to (2);
    \end{pgfonlayer}
\etp\in\Omega_\mathcal{H},
\eeq
where we have defined
\beq
 \btp
    \begin{pgfonlayer}{nodelayer}
        \node [style=Wsquareadj] (0) at (0,-0.6) {$\gamma$};
        \node [style=none] (1) at (0,0.6) {};
        \node [style=right label] (2) at (0,0.4) {$\mathds{R}^n$};
    \end{pgfonlayer}
    \begin{pgfonlayer}{edgelayer}
        \draw (0) to (1.center);
    \end{pgfonlayer}
 \etp
 \quad:=\quad
 \sum_\lambda\beta_\lambda\ 
 \btp
    \begin{pgfonlayer}{nodelayer}
        \node [style=Wsquareadj] (0) at (0,-0.6) {$\lambda$};
        \node [style=none] (1) at (0,0.6) {};
        \node [style=right label] (2) at (0,0.4) {$\mathds{R}^n$};
    \end{pgfonlayer}
    \begin{pgfonlayer}{edgelayer}
        \draw (0) to (1.center);
    \end{pgfonlayer}
 \etp.
\eeq
This shows that every $\omega\in\Omega_G$ is also an element of $\Omega_H$. Putting both facts together gives $\Omega_H=\Omega_G$. 

Now, let us turn to the effects. 
First, using Eq.~\ref{eq:defn-inverse-L}, any $e\in\mathcal{E}_{G}$ can be written as
\beq
    \btp
        \tikzeffect{e}{V}
    \etp
    \ =\ 
    \btp
    \begin{pgfonlayer}{nodelayer}
        \node [style=Wsquare] (0) at (0,2) {$e$};
        \node [style=none] (1) at (0,-2) {};
        \node [style=inc,fill=gray!30] (2) at (0,0.8) {$L$};
        \node [style=proj,fill=gray!30] (3) at (0,-0.8) {$L_r^{-1}$};
        \node [style=right label] (4) at (0,0) {$\mathds{R}^n$};
        \node [style=right label] (5) at (0,1.5) {$V$};
        \node [style=right label] (6) at (0,-1.7) {$V$};
    \end{pgfonlayer}
    \begin{pgfonlayer}{edgelayer}
        \draw [qWire] (0.center) to (2);
        \draw [qWire] (1.center) to (3);
        \draw (2) to (3);
    \end{pgfonlayer}
 \etp
 \ =\ 
  \btp
    \begin{pgfonlayer}{nodelayer}
        \node [style=Wsquare] (0) at (0,1.2) {$\xi_e$};
        \node [style=proj,fill=gray!30] (1) at (0,-0.2) {$L_r^{-1}$};
        \node [style=none] (2) at (0,-1.2) {};
        \node [style=right label] (3) at (0,0.55) {$\mathds{R}^n$};
        \node [style=right label] (4) at (0,-1) {$V$};
    \end{pgfonlayer}
    \begin{pgfonlayer}{edgelayer}
        \draw (0.center) to (1);
        \draw [qWire] (1) to (2);
    \end{pgfonlayer}
    \etp\in\mathcal{E}_{\mathcal{H}}.
\eeq
Conversely, for every effect $f\in\mathcal{E}_\mathcal{H}$ in the shadow,  there is (by Eq.~\ref{eq:defn-shadow-L}) a unique effect $e\in\mathcal{E}_{G}$ in the original GPT system such that $f = \xi_e\circ L^{-1}_r$, so we denote the shadow effects as $f_e$.  Given the definition of $\xi_e$ as $e\circ L$, this implies that
\beq
    \btp
        \tikzeffect{f_e}{V}
    \etp
    \ =\ 
    \btp
    \begin{pgfonlayer}{nodelayer}
        \node [style=Wsquare] (0) at (0,1.2) {$\xi_e$};
        \node [style=proj,fill=gray!30] (1) at (0,-0.2) {$L_r^{-1}$};
        \node [style=none] (2) at (0,-1.2) {};
        \node [style=right label] (3) at (0,0.55) {$\mathds{R}^n$};
        \node [style=right label] (4) at (0,-1) {$V$};
    \end{pgfonlayer}
    \begin{pgfonlayer}{edgelayer}
        \draw (0.center) to (1);
        \draw [qWire] (1) to (2);
    \end{pgfonlayer}
    \etp\ =\ 
    \btp
    \begin{pgfonlayer}{nodelayer}
        \node [style=Wsquare] (0) at (0,2) {$e$};
        \node [style=none] (1) at (0,-2) {};
        \node [style=inc,fill=gray!30] (2) at (0,0.8) {$L$};
        \node [style=proj,fill=gray!30] (3) at (0,-0.8) {$L_r^{-1}$};
        \node [style=right label] (4) at (0,0) {$\mathds{R}^n$};
        \node [style=right label] (5) at (0,1.5) {$V$};
        \node [style=right label] (6) at (0,-1.7) {$V$};
    \end{pgfonlayer}
    \begin{pgfonlayer}{edgelayer}
        \draw [qWire] (0.center) to (2);
        \draw [qWire] (1.center) to (3);
        \draw (2) to (3);
    \end{pgfonlayer}
 \etp\ =\ 
 \btp
        \tikzeffect{e}{V}
    \etp\in\mathcal{E}_{G},
\eeq
which proves that $f\in\mathcal{E}_{\mathcal{H}}\implies f\in\mathcal{E}_{G}$. 
This proves that $\mathcal{E}_{\mathcal{H}}=\mathcal{E}_{G}$ and that $f_u = u$.

In addition to having identical state spaces and effect spaces and unit effects, the probability rules for $\mathcal{H}$ and $G$ are identical, so they are in fact identical GPT systems.

\section{Proof of Theorem~\ref{thm:TheoryAgnosticTomography}}\label{sec:TheoryAgnosticTomographyProof}

We now prove Theorem~\ref{thm:TheoryAgnosticTomography}. 

\theoryagnostictomography*

\begin{proof}
	Note that in this proof, the diagram wires are labelled by the domain (codomain) of the map instead of by the GPT systems and fragments.
	
		In theory-agnostic tomography, one factorizes the matrix of empirical data $D\in M_{m\times n}(\mathds{R})$ generated by the states and effects in one's fragment $f$ by finding the smallest integer $k$ such that there exist $E\in M_{m\times k}(\mathds{R})$ and $S\in M_{k\times n}(\mathds{R})$ satisfying
		
		\beq \label{probruleTATdecomp}
\begin{tikzpicture}
	\begin{pgfonlayer}{nodelayer}
		\node [style=small map] (0) at (-0, 0) {$D$};
		\node [style=none] (1) at (-0, 1) {};
		\node [style=none] (2) at (-0, -1) {};
		\node [style=right label] (0) at (-0, 1) {$\mathbb{R}^m$};
		\node [style=right label] (4) at (-0, -1) {$\mathbb{R}^n$};
	\end{pgfonlayer}
	\begin{pgfonlayer}{edgelayer}
		\draw (1.center) to (2.center);
	\end{pgfonlayer}
\end{tikzpicture}}  \quad = \quad %
\begin{tikzpicture}
	\begin{pgfonlayer}{nodelayer}
		\node [style=proj,fill=gray!30] (0) at (0, -0.75) {$S$};
		\node [style=none] (1) at (0, 1.5) {};
		\node [style=none] (2) at (0, -1.5) {};
		\node [style=right label] (3) at (0, 1.5) {$\mathbb{R}^m$};
		\node [style=right label] (4) at (0, -1.5) {$\mathbb{R}^n$};
		\node [style=inc,fill=gray!30] (5) at (0,0.75) {$E$};
		\node [style=right label] (6) at (0,0) {$\mathbb{R}^k$};
	\end{pgfonlayer}
	\begin{pgfonlayer}{edgelayer}
		\draw (1.center) to (5);
		\draw (5) to (0);
		\draw (0) to (2.center);
	\end{pgfonlayer}
\end{tikzpicture}}.
		\eeq
	
Consider now the states $\Theta:=\{s_1,\dots,s_n\}$ and effects $\Phi:=\{e_1,\dots,e_m\}$ defining\footnote{Recall from Section~\ref{sec:TheoryAgnosticTomography} that these may just be taken to be the convexly extremal states and effects in the fragment.} the original (unknown) fragment $f$, and define a map $d:\Theta\to\mathds{R}^n$ mapping every $s_i\in\Theta$ to one of the $n$ orthonormal basis vectors in $\mathds{R}^n$:
	\beq
\begin{tikzpicture}
	\begin{pgfonlayer}{nodelayer}
		\node [style=small map, inner sep=1pt, fill=gray!30] (0) at (-0, 0) {$d$};
		\node [style=none] (1) at (-0, 1) {};
		\node [style=Wsquareadj] (2) at (-0, -1.5) {$s_i$};
		\node [style=right label] (3) at (-0, 1) {$\mathbb{R}^n$};
		\node [style=right label] (4) at (0, -0.75) {$\Theta$};
	\end{pgfonlayer}
	\begin{pgfonlayer}{edgelayer}
		\draw (1.center) to (0);
        \draw [qWire] (0) to (2.center);
	\end{pgfonlayer}
\end{tikzpicture}}
	 \ :=\quad
}.
	\eeq
Note that this map has a set ($\Theta$), not a vector space, as its domain; consequently, one cannot speak about whether or not it is linear. The map $D\circ d:\Theta\to\mathbbm{R}^m$ therefore takes states in $\Theta$ to columns of the data table $D$.

Due to the minimality of $k$, the process $E$ must admit of a left-inverse, i.e., there is $E^{-1}_l$ such that
	\beq
\begin{tikzpicture}
	\begin{pgfonlayer}{nodelayer}
		\node [style=inc,fill=gray!30] (0) at (0, -0.75) {$E$};
		\node [style=none] (1) at (0, 1.5) {};
		\node [style=none] (2) at (0, -1.5) {};
		\node [style=right label] (3) at (0, 1.5) {$\mathbb{R}^k$};
		\node [style=right label] (4) at (0, -1.5) {$\mathbb{R}^k$};
		\node [style=proj,fill=gray!30] (5) at (0,0.75) {$E^{-1}_l$};
		\node [style=right label] (6) at (0,0) {$\mathbb{R}^m$};
	\end{pgfonlayer}
	\begin{pgfonlayer}{edgelayer}
		\draw (1.center) to (5);
		\draw (5) to (0);
		\draw (0) to (2.center);
	\end{pgfonlayer}
\end{tikzpicture}}
		\quad = \quad
\begin{tikzpicture}
	\begin{pgfonlayer}{nodelayer}
		\node [style=none] (1) at (0, 1.5) {};
		\node [style=none] (2) at (0, -1.5) {};
		\node [style=right label] (3) at (0, 1.5) {$\mathbb{R}^k$};
	\end{pgfonlayer}
	\begin{pgfonlayer}{edgelayer}
		\draw (1.center) to (2.center);
	\end{pgfonlayer}
\end{tikzpicture}}.
	\eeq
Consequently, let us define the map
	\beq
\begin{tikzpicture}
	\begin{pgfonlayer}{nodelayer}
		\node [style=none] (1) at (0, 1) {};
		\node [style=none] (2) at (0, -1) {};
		\node [style=right label] (3) at (0, 1) {$\mathbb{R}^k$};
		\node [style=right label] (4) at (0,-1) {$\Theta$};
		\node [style=small map,inner sep=1pt,fill=gray!30] (5) at (0,0) {$\hat{\sigma}$};
	\end{pgfonlayer}
	\begin{pgfonlayer}{edgelayer}
		\draw (1.center) to (5);
		\draw [qWire] (5) to (2.center);
	\end{pgfonlayer}
\end{tikzpicture}}
	\ :=\quad
\begin{tikzpicture}
	\begin{pgfonlayer}{nodelayer}
		\node [style=none] (1) at (0,2.4) {};
		\node [style=none] (2) at (0, -2.4) {};
		\node [style=right label] (3) at (0, 2.4) {$\mathbb{R}^k$};
		\node [style=right label] (4) at (0,-2.3) {$\Theta$};
		\node [style=small map] (5) at (0,0) {$D$};
		\node [style=right label] (6) at (0,0.8) {$\mathbb{R}^m$};
		\node [style= right label] (7) at (0,-0.8) {$\mathbb{R}^n$};
		\node [style= small map,inner sep=1pt, fill=gray!30] (8) at (0,-1.6) {$d$};
		\node [style=proj,fill=gray!30] (9) at (0,1.6) {$E^{-1}_l$};
	\end{pgfonlayer}
	\begin{pgfonlayer}{edgelayer}
		\draw (1.center) to (8);
		\draw [qWire] (8) to (2.center);
	\end{pgfonlayer}
\end{tikzpicture}}
        \ =\quad
\begin{tikzpicture}
	\begin{pgfonlayer}{nodelayer}
		\node [style=proj, fill={gray!30}] (0) at (0, 0.75) {$S$};
		\node [style=none] (1) at (0, -1.5) {};
		\node [style=none] (2) at (0, 1.5) {};
		\node [style=right label] (3) at (0, -1.5) {$\Theta$};
		\node [style=right label] (4) at (0, 1.5) {$\mathbb{R}^k$};
		\node [style=small map, inner sep=1pt, fill={gray!30}] (5) at (0, -0.75) {$d$};
		\node [style=right label] (6) at (0, 0) {$\mathbb{R}^n$};
		\node [style=none] (7) at (0, 1.75) {};
	\end{pgfonlayer}
	\begin{pgfonlayer}{edgelayer}
		\draw (1.center) to (5);
		\draw (5) to (0);
		\draw [qWire] (5) to (1.center);
		\draw (7.center) to (0);
	\end{pgfonlayer}
\end{tikzpicture}},
	\eeq
which takes states $s_i\in\Omega$ to columns of the matrix $S$.

We now define a map $\sigma$, which (as we will show later) is the linear extension of $\hat{\sigma}$, acting on $\mathsf{Span}[\Theta]$. For each vector $s\in\mathsf{Span}[\Theta]$, pick one specific linear combination $s=\sum_i\alpha_is_i$ of that vector in terms of the states $s_i$, and define the action of $\sigma$ by
\beq
\begin{tikzpicture}
	\begin{pgfonlayer}{nodelayer}
		\node [style=none] (0) at (0, 1.7) {};
		\node [style=small box, fill={gray!30}] (1) at (0, 0.8) {$\sigma$};
		\node [style=Wsquareadj] (2) at (0, -1.2) {$s_i$};
		\node [style=right label] (3) at (0, 1.7) {$\mathbbm{R}^k$};
		\node [style=right label] (4) at (0, -0.1) {$S_\Theta$};
		\node [style=none] (5) at (-1.7, -1.3) {$\sum_i\alpha_i$};
		\node [style=none] (6) at (-2.7, -2.4) {};
		\node [style=none] (7) at (-2.7, -0.45) {};
		\node [style=none] (8) at (1, -0.45) {};
		\node [style=none] (9) at (1, -2.4) {};
	\end{pgfonlayer}
	\begin{pgfonlayer}{edgelayer}
		\draw (0.center) to (1);
		\draw [qWire] (1) to (2);
		\draw [thick gray dashed edge] (8.center)
			 to (9.center)
			 to (6.center)
			 to (7.center)
			 to cycle;
	\end{pgfonlayer}
\end{tikzpicture}
\ :=\ 
\sum_i\alpha_i\ 
\btp
    \begin{pgfonlayer}{nodelayer}
        \node [style=none] (0) at (0,1.2) {};
        \node [style=small box,fill=gray!30] (1) at (0,0.4) {$\hat{\sigma}$};
        \node [style=Wsquareadj] (2) at (0,-1.2) {$s_i$};
        \node [style=right label] (3) at (0,1.3) {$\mathbbm{R}^k$};
        \node [style=right label] (4) at (0,-0.4) {$\Theta$};
    \end{pgfonlayer}
    \begin{pgfonlayer}{edgelayer}
        \draw (0.center) to (1);
        \draw [qWire] (1) to (2);
    \end{pgfonlayer}
\etp,\quad\alpha_i\in\mathbbm{R},\ \forall i
\eeq
In principle, this map could be non-linear, as it might depend on the particular choice of decomposition for each vector $s$. However, below we will show that it forms part of a valid shadow map taking $f$ to $G_{D_f}$, so
by Lemma~\ref{lem:LinearityFromEmpAd}, it  is indeed linear.

Similarly to what was done for the states, one can define a map $d'$ taking effects $e_i\in\Phi$ to one of the $m$ orthonormal basis vectors in $\mathbbm{R}^m$, such that the composition $d'\circ D$ maps the effects $e_i\in\Phi$ to rows of the matrix $D$:
	\beq
\begin{tikzpicture}
	\begin{pgfonlayer}{nodelayer}
		\node [style=small map,inner sep=1pt, fill=gray!30] (0) at (-0, 0) {$d'$};
		\node [style=none] (1) at (-0, -1) {};
		\node [style=Wsquare] (2) at (-0, 1.5) {$e_i$};
		\node [style=right label] (3) at (-0, -1) {$\mathbb{R}^m$};
		\node [style=right label] (4) at (0, 0.8) {$\Phi$};
	\end{pgfonlayer}
	\begin{pgfonlayer}{edgelayer}
		\draw (1.center) to (0);
        \draw [qWire] (0) to (2.center);
	\end{pgfonlayer}
\end{tikzpicture}}
	\ :=\quad
\begin{tikzpicture}
	\begin{pgfonlayer}{nodelayer}
		\node [style=none] (1) at (-0, -0.5) {};
		\node [style=Wsquare] (2) at (-0, 0.5) {$i$};
		\node [style=right label] (3) at (-0, -0.5) {$\mathbb{R}^m$};
	\end{pgfonlayer}
	\begin{pgfonlayer}{edgelayer}
		\draw (1.center) to (2.center);
	\end{pgfonlayer}
\end{tikzpicture}}.
	\eeq
And by minimality of $k$, the map $S$ must admit of a right-inverse, $S^{-1}_r$, such that
	\beq
\begin{tikzpicture}
	\begin{pgfonlayer}{nodelayer}
		\node [style=proj,fill=gray!30] (0) at (0, 0.75) {$S$};
		\node [style=none] (1) at (0, -1.5) {};
		\node [style=none] (2) at (0, 1.5) {};
		\node [style=right label] (3) at (0, -1.5) {$\mathbb{R}^k$};
		\node [style=right label] (4) at (0, 1.5) {$\mathbb{R}^k$};
		\node [style=inc,fill=gray!30] (5) at (0,-0.75) {$S^{-1}_r$};
		\node [style=right label] (6) at (0,0) {$\mathbb{R}^n$};
	\end{pgfonlayer}
	\begin{pgfonlayer}{edgelayer}
		\draw (1.center) to (5);
		\draw (5) to (0);
		\draw (0) to (2.center);
	\end{pgfonlayer}
\end{tikzpicture}}
	\quad =\quad
}.
	\eeq
So, analogous to the above, we define the map
	\beq
\begin{tikzpicture}
	\begin{pgfonlayer}{nodelayer}
		\node [style=none] (1) at (0, 1) {};
		\node [style=none] (2) at (0, -1) {};
		\node [style=right label] (3) at (0, -1) {$\mathbb{R}^k$};
		\node [style=right label] (4) at (0,1) {$\Phi$};
		\node [style=small map,inner sep=1pt,fill=gray!30] (5) at (0,0) {$\hat{\tau}$};
	\end{pgfonlayer}
	\begin{pgfonlayer}{edgelayer}
		\draw [qWire] (1.center) to (5);
		\draw (5) to (2.center);
	\end{pgfonlayer}
\end{tikzpicture}}
	\ :=\quad
\begin{tikzpicture}
	\begin{pgfonlayer}{nodelayer}
		\node [style=none] (1) at (0, -2.5) {};
		\node [style=none] (2) at (0,2.5) {};
		\node [style=right label] (3) at (0, -2.5) {$\mathbb{R}^k$};
		\node [style=right label] (4) at (0,2.5) {$\Phi$};
		\node [style=small map] (5) at (0,0) {$D$};
		\node [style=right label] (6) at (0,0.8) {$\mathbb{R}^m$};
		\node [style= right label] (7) at (0,-0.8) {$\mathbb{R}^n$};
		\node [style= small map,inner sep=1pt,fill=gray!30] (8) at (0,1.6) {$d'$};
		\node [style=proj,fill=gray!30] (9) at (0,-1.6) {$S^{-1}_r$};
	\end{pgfonlayer}
	\begin{pgfonlayer}{edgelayer}
		\draw [qWire] (2.center) to (8);
		\draw (8) to (1.center);
	\end{pgfonlayer}
\end{tikzpicture}}
	\eeq
which takes effects $e_i\in\Phi$ to rows of the matrix $E$. 
Finally, we define the linear extension of $\hat{\tau}$, namely $\tau:\mathsf{Span}(\Phi)=:S_\Phi\to\mathds{R}^k$, just as we did for the map $\sigma$ above. (Here too, defining $\tau$ by a particular choice for the decomposition of each vector in its domain might give the appearance that the map could be nonlinear, but we will justify below that it is in fact linear.)

It now follows that $\sigma(\mathsf{Conv}[\Theta])$ and $\tau(\mathsf{Conv}[\Phi])$, together with the evaluation map on $\mathds{R}^k$, constitutes a shadow for the fragment $f$ that generated the data table. Indeed, the evaluation map in $\mathds{R}^k$ is tomographic, due to minimality of $k$, and the probabilities of $f$ (encoded in $D$) are reproduced:
\beq
   \btp
			\begin{pgfonlayer}{nodelayer}
				\node [style=Wsquareadj] (1) at (-0, -2.4) {$s_i$};
				\node [style=Wsquare] (2) at (-0, 2.4) {$e_j$};
				\node [style=right label] (4) at (0, 1.7) {$S_{\Phi}$};
				\node [style=small box,fill=gray!30] (5) at (0, 0.8) {$\tau$};
				\node [style=small box, fill=gray!30] (7) at (0, -0.8) {$\sigma$};
				\node [style=right label] (8) at (0,0) {$\mathbbm{R}^k$};
				\node [style=right label] (6) at (0, -1.7) {$S_{\Theta}$};
			\end{pgfonlayer}
			\begin{pgfonlayer}{edgelayer}
				\draw [qWire] (2) to (5);
				\draw (5) to (7);
				\draw [qWire] (7) to (1);
			\end{pgfonlayer}
			\etp
   \ =\ 
   \btp
			\begin{pgfonlayer}{nodelayer}
				\node [style=Wsquareadj] (1) at (-0, -3.2) {$s_i$};
				\node [style=Wsquare] (2) at (-0, 3.2) {$e_j$};
				\node [style=right label] (4) at (0, 2.5) {$\Phi$};
				\node [style=small box,fill=gray!30] (5) at (0, 1.6) {$d'$};
				\node [style=small box, fill=gray!30] (7) at (0, -1.6) {$d$};
				\node [style=right label] (8) at (0,0.8) {$\mathbbm{R}^m$};
				\node [style=right label] (6) at (0, -2.4) {$\Theta$};
                \node [style=right label] (9) at (0,-0.8) {$\mathbbm{R}^n$};
                \node [style=small map] (10) at (0,0) {$D$};
			\end{pgfonlayer}
			\begin{pgfonlayer}{edgelayer}
				\draw [qWire] (2) to (5);
				\draw (5) to (7);
				\draw [qWire] (7) to (1);
			\end{pgfonlayer}
			\etp
   \ =\ 
   \btp
			\begin{pgfonlayer}{nodelayer}
				\node [style=Wsquareadj] (1) at (-0, -1.4) {$i$};
				\node [style=Wsquare] (2) at (-0, 1.4) {$j$};
				\node [style=right label] (4) at (0, 0.8) {$\mathbbm{R}^m$};
				\node [style=small map] (8) at (0,0) {$D$};
				\node [style=right label] (6) at (0, -0.8) {$\mathbbm{R}^n$};
			\end{pgfonlayer}
			\begin{pgfonlayer}{edgelayer}
				\draw (2) to (1);
			\end{pgfonlayer}
			\etp
   \ =\ D(i,j).
   \eeq
Since $\sigma$ and $\tau$ constitute shadow maps, Lemma~\ref{lem:LinearityFromEmpAd} tell us that they are linear. Therefore, the particular choice of linear combination for a vector $s\in\mathsf{Span}[\Theta]$ in terms of $\{s_i\}$, made when defining $\sigma$, is irrelevant (and similarly for $\tau$).

Therefore, we have established that the output of theory-agnostic tomography applied to the data table generated by a GPT fragment gives the shadow of that fragment as its output.

\end{proof}

\end{document}